\documentclass[11pt,letterpaper]{article}
\usepackage[letterpaper,margin=1in]{geometry}

\usepackage[utf8]{inputenc}
\usepackage[T1]{fontenc}
\usepackage{microtype}
\usepackage{amssymb}
\usepackage{amsmath}
\usepackage[fleqn,thmmarks]{ntheorem}
\usepackage{mathtools}
\usepackage{fancyhdr}
\usepackage{ifthen}
\usepackage{cite}
\usepackage{listings}
\RequirePackage{caption}
\usepackage{tikz}
\usetikzlibrary{decorations.pathreplacing,arrows.meta,calc}\usepackage{titling}
\usepackage{booktabs}
\usepackage{authblk}

\usepackage{algorithm}
\usepackage{algpseudocode}

\usepackage{aliascnt}
\usepackage{bm}

\RequirePackage{hyperref}
\RequirePackage[nameinlink]{cleveref}

\definecolor{ETHblue}{HTML}{1F407A}
\definecolor{ETHlightblue}{HTML}{1269B0}
\definecolor{ETHcyan}{HTML}{007A96}
\definecolor{ETHred}{HTML}{A8322D}
\definecolor{ETHgreen}{HTML}{485A2C}
\definecolor{ETHlightgreen}{HTML}{72791C}\usetikzlibrary{decorations.pathreplacing,arrows.meta,calc}
\definecolor{ETHviolette}{HTML}{91056A}
\definecolor{ETHgray}{HTML}{6F6F6F}
\definecolor{ETHochery}{HTML}{956013}

\hypersetup{
  linktocpage,
  colorlinks,
  linkcolor=ETHblue,
  citecolor=ETHblue,
  urlcolor=ETHblue,
  filecolor=ETHblue,
  plainpages=false
}

\pretitle{\centering\LARGE\bfseries}
\posttitle{\par\vskip 0.5em}

\preauthor{\centering}
\postauthor{\par\vskip 0.5em}

\predate{}
\postdate{}
\date{}

\newcommand{\qedsymb}{\ensuremath{\Box}}

\theoremstyle{plain}
\theoremseparator{.}

\newtheorem{theorem}{Theorem}[section] \newaliascnt{lemma}{theorem} \newtheorem{lemma}[lemma]{Lemma} \aliascntresetthe{lemma} \newaliascnt{corollary}{theorem} \newtheorem{corollary}[corollary]{Corollary} \aliascntresetthe{corollary} \newaliascnt{observation}{theorem}  \aliascntresetthe{observation}  \newaliascnt{conjecture}{theorem}  \aliascntresetthe{conjecture} \newaliascnt{definition}{theorem}  \aliascntresetthe{definition}

\makeatletter
\newtheoremstyle{proofplain}
  {\item[\theorem@headerfont\hskip\labelsep ##1\theorem@separator]}
  {\item[\theorem@headerfont\hskip\labelsep ##3\theorem@separator]}
\makeatother

\theoremstyle{proofplain}
\theoremseparator{.}
\theorembodyfont{\normalfont}
\theoremheaderfont{\normalfont\itshape}
\theoremsymbol{\qedsymb}
\newtheorem{proof}{Proof}

\crefname{chapter}{Chapter}{Chapters}\crefname{section}{Section}{Sections}\crefname{subsection}{Subsection}{Subsections}
\crefname{definition}{Definition}{Definitions}\crefname{example}{Example}{Examples}\crefname{figure}{Figure}{Figures}
\crefname{table}{Table}{Tables}\crefname{theorem}{Theorem}{Theorems}\crefname{lemma}{Lemma}{Lemmata}
\crefname{corollary}{Corollary}{Corollaries}\crefname{observation}{Observation}{Observations}\crefname{fact}{Fact}{Facts}
\crefname{conjecture}{Conjecture}{Conjectures}\crefname{equation}{}{}\crefname{enumi}{}{}

\newcommand{\eps}{\ensuremath{\varepsilon}}

\usepackage[
    createShortEnv,
    commandRef=Cref,
    conf={no link to proof, restate}
]{proof-at-the-end}

\author[1]{Kostas Lakis}
\author[1]{Johannes Lengler}
\author[1]{Adeline Pittet}
\affil[1]{Department of Computer Science, ETH Zurich, Switzerland}

\title{Distributed Colouring with \texorpdfstring{\boldmath{$4/3\chi$}}{4/3 chi} Colours\\ for Hyperbolic Random Graphs}

\begin{document}

\maketitle

\begin{abstract}

We study distributed vertex colouring on Hyperbolic Random Graphs (HRGs), a geometric random graph model capturing key structural features of real-world networks. This provides a natural setting for analysing distributed algorithms beyond worst-case general graphs. We introduce \emph{Sequential Radial Colouring}, a CONGEST algorithm using only efficient local computation. The algorithm achieves a near-optimal palette, colouring HRGs with $\tfrac{4}{3}\chi$ colours and running in $O((\log\log n)^2)$ rounds a.a.s. We also give a variant that speeds this up to $O(\log\log n)$ rounds a.a.s., at the price of using $O(\chi\log\log n)$ colours. Finally, for every constant $\varepsilon>0$, it runs in $O(1)$ rounds a.a.s. when $\chi^{1+\varepsilon}$ colours are used. This greatly reduces the number of colours over the previous constant-round algorithm of Maus and Ruff~\cite{mausruff2026distributed} by a factor of at least $n^{1/6}$.

Our analysis contains a phase in which we consider a classical randomised colouring protocol on a (large) clique of the graph. We also delve deeper into this part of the analysis and improve upon previous results for colouring a clique $\mathcal C$, bounding the number of rounds required as a function of the additive slack $s = |\Psi| - \chi$, where $\Psi$ is the set of colours used. In particular, constant-round colouring is possible if and only if $s=|\mathcal C|^{1+\Omega(1)}$, while \(s=|\mathcal C|/\log |\mathcal C|\) already gives the optimal \(\Theta(\log\log |\mathcal C|)\) round complexity.
\end{abstract}
\thispagestyle{empty}



\section{Introduction}

Vertex colouring is one of the most central problems in the area of distributed computation. 
In the LOCAL model, each vertex of a graph is considered as an entity with unlimited computational capabilities. Algorithms proceed in synchronous rounds, wherein any two adjacent vertices can communicate information of arbitrary size. In the CONGEST model each edge can only transmit $O(\log n)$ bits per round.
A vast body of literature has been dedicated to distributed colouring in the LOCAL model, mostly for a number of colours close to the maximum degree $\Delta=\Delta(G)$ of the underlying graph $G$, ideally with $\Delta+1$ colours. Results also exist for the CONGEST model, but usually for restricted graph classes.
We discuss some of these results in~\Cref{sec:related_work_colouring} but we also refer the reader to~\cite{grebik2025descriptive} which includes a recent survey. The rough picture is that these problems are solvable in $\log^{O(1)}n$ rounds deterministically and $(\log \log n)^{O(1)}$ rounds using randomness.

Through the classical worst-case lens, it is natural to parametrize the number of colours in terms of $\Delta$ since colouring with fewer colours is hard for worst-case instances. However, many real networks have a heavy-tailed degree distribution, which means that $\Delta$ is very large, for example $n^{\Omega(1)}$ if the degree sequence follows a power law. For such instances, the chromatic number $\chi=\chi(G)$ may be asymptotically much smaller than $\Delta$. In this case, guarantees in terms of $\Delta$ are rather weak.

For this reason, Maus and Ruff~\cite{mausruff2026distributed} have started to study the problem of distributed colouring in models of complex networks, specifically in the model of \emph{Hyperbolic Random Graphs (HRG)}~\cite{krioukov2010hyperbolic}. This choice is motivated by the fact that this model has been found to be a good qualitative and empirical match for real networks with heavy-tailed degree distributions, including social, technological, and biological networks~\cite{boguna2010sustaining,serrano2008self,papadopoulos2012popularity}. In a nutshell, the HRG model samples vertices as points in a hyperbolic disk (with intensity depending on $\alpha$, a parameter controlling the degree distribution) and connects pairs of vertices with hyperbolic distance at most the radius of the disk. Vertices closer to the center have exponentially higher degrees. See~\Cref{sec:hyperbolic_random_graphs} for a definition and~\Cref{fig:both-figures} for an example.

The approach of~\cite{mausruff2026distributed} is related to the average-case analysis that was popular for a short time in the 70s~\cite{Karp1976probabilistic}. However, the results obtained for graph algorithms during this period failed to transfer into practice because the graph models were not representative of real networks. Notably, this has changed in the recent years. HRG and its generalisation of \emph{Geometric Inhomogeneous Random Graphs (GIRG)}~\cite{bringmann2019geometric} are statistically a much better match for real-world networks~\cite{faloutsos1999power,gugelmann2012random,blasius2016hyperbolic,boguna2010sustaining,papadopoulos2012popularity,krioukov2010hyperbolic,boguna2021network,papadopoulos2014network}. More importantly, theoretical and empirical runtime results for HRG and GIRG have been shown to transfer to real-world networks for several fundamental algorithms. The most prominent examples are local routing algorithms~\cite{boguna2010sustaining,blasius2021strongly,papadopoulos2010greedy,bringmann2022greedy} and algorithms for shortest paths via bidirectional breadth-first search~\cite{blasius2024external,cerf2024balanced}. Other examples include computation of the diameter, vertex cover, enumeration of maximal cliques, or the Louvain algorithm for graph clustering~\cite{blasius2024external}.

Arguably, for real-world networks with heavy-tailed degrees it is important to use the CONGEST model instead of the LOCAL model. The reason is that average distances are ultra-small in such networks. In the models, a $(1-o(1))$ fraction of all pairs of vertices in the giant component have distance $O(\log \log n)$, and this includes \emph{all} pairs of vertices of degree larger than $\log^c n$ for sufficiently large $c$~\cite{bringmann2025average}. Those small distances are not an artifact of the models, but are instead quite realistic. For example, the Facebook network with $\approx 1.6\cdot 10^9$ vertices has a staggeringly small average distance of $4.57$ hops between vertices in its giant component~\cite{edunov2016three}, with hardly any distance being larger than $7$~\cite{backstrom2012four}. As a result, such networks permit an abusive LOCAL algorithm: In $O(\log\log n)$ rounds, the vertices broadcast the subgraph induced by vertices of degree at least $\log^c n$. Once this subgraph is known, its vertices can locally compute a $\chi$-colouring; the remaining task of colouring vertices of degree less than $\log^c n=\chi^{o(1)}$ is straightforward.

In the HRG model, Maus and Ruff~\cite{mausruff2026distributed} recently gave two main results. They gave an algorithm in the LOCAL model which uses $(1+o(1))\chi$ colours and requires $O(1)$ rounds. However, as they frankly admit, this uses the above somewhat abusive method on a large subgraph with constant diameter, and it is not clear how to transfer the result to the CONGEST model. In the main content of their paper they describe efficient (in fact requiring constantly many or even 2 rounds in some cases) algorithms using substantially fewer than $\Delta$ colours in the CONGEST model, in particular $\Delta^{1-\delta}$ colours suffice for a suitable constant $\delta>0$. However, the number of colours required by their algorithm is still substantially larger than the chromatic number \(\chi\), by more than a factor \(n^{1/6}\). 

\textbf{Our contribution.} We show in this paper that colouring with $O(\chi)$ colours in $O((\log \log n)^2)$ rounds is possible for HRG in the CONGEST model. The hidden constant for the number of colours depends on the tail of the degree distribution and lies in the interval $(\sqrt{4/3},4/3)$. 
Moreover, we give faster algorithms if the number of colours is slightly larger, in particular we show that a constant number of rounds suffices if we have $\chi^{1+\eps}$ colours, for any constant $\eps>0$. Compared to the previous constant-round CONGEST algorithm of Maus and Ruff, this decreases the number of colours by a factor of at least \(n^{1/6}\), for \(\eps\) chosen sufficiently small.

Apart from the improved algorithms, we also provide some structural insights into why it is possible to use much fewer than $\Delta$ colours. Instead of $\Delta$, we propose the \emph{degeneracy} $\kappa$ as a more natural bottleneck for HRG. The degeneracy $\kappa=\kappa(G)$ of a graph $G$ is the largest minimum degree in any induced subgraph of $G$. Equivalently, a graph has degeneracy $\kappa$ if there is an ordering of the vertices such that any vertex has at most $\kappa$ preceding neighbours in this ordering. If the ordering is known, a sequential greedy algorithm can trivially colour the vertices in that order with $\kappa+1$ colours. For distributed algorithms, there are two challenges: computing the ordering of vertices and overcoming the sequential nature of the greedy algorithm. Keep in mind that the degeneracy, clique number and chromatic number are of the same order $\Theta(n^{1 - \alpha})$ in HRG~\cite{baguley2025hrg}. For the first point, we show that in HRG the degrees give an almost optimal ordering: every vertex has at most $(1+o(1))\kappa$ neighbours of larger degree. Since $\kappa \le (\tfrac{4}{3} + o(1))\chi$ for HRG a.a.s.~\cite{baguley2025hrg}, this implies that we need only $O(\chi)$ colours. Indeed, we show that our algorithms still work with $(1+\eps)\kappa$ colours for any constant $\eps>0$. While $\kappa \le (\tfrac{4}{3} + o(1))\chi$ is only an upper bound, we conjecture that there is indeed a non-trivial constant factor gap between $\kappa$ and $\chi$. In fact, such a constant factor gap was shown in~\cite{baguley2025hrg} between $\kappa$ and the clique number $\omega = \omega(G)$ for HRG, and we conjecture that $\chi = (1+o(1))\omega$ in HRG. Hence, we believe that it is hard to colour an HRG with $(1+o(1))\chi$ colours, but we also discuss in detail what our results imply depending on whether this conjecture is true or false (\Cref{subsec:proof-near-chromatic-theorem}). 

Even though a good ordering of vertices is given for free in HRG by the degrees, it is still non-trivial to exploit the ordering to obtain a fast algorithm in the CONGEST model. In order to explain our algorithm, we first revisit the \emph{Random Colour Trial (RCT)} algorithms proposed in~\cite{mausruff2026distributed}.
In the basic RCT, each node picks a uniformly random candidate colour from its palette of available colours, i.e., the whole colour palette initialized at the start without the colours used by its already
permanently coloured neighbours, and sends this colour to its neighbours while receiving their
colours. If no neighbour tries to get coloured with the same candidate colour, the node permanently
adopts the colour, otherwise it retries in the next iteration with a
new random candidate colour from its updated palette.
This process repeats until all nodes have a valid colour.
Note that an iteration of RCT actually requires two rounds of communication, as the vertices must inform their neighbours if they have permanently adopted the colour they chose in the current iteration.

The paper~\cite{mausruff2026distributed} also introduced two new variants, RCTID and RCTDEG, which assign a priority order among vertices according to random IDs or according to the degrees respectively, where in the latter case ties are broken arbitrarily. In those variants, a node adopts its colour permanently if there is no conflict with neighbours of higher priority, while it ignores neighbours of lower priority. While the basic RCT and RCTID a.a.s.\! find a colouring efficiently with $\eps \Delta$ colours for any constant $\eps >0$, RCTDEG can even find a colouring with $n^{-\delta} \Delta$ colours for an explicit constant $\delta >0$. However, in all cases, the number of colours needed for the algorithms to succeed is more than $n^{1/6}\chi$. Moreover, all three algorithms are shown to fail with $n^{\delta'} \chi$ colours for some constant $\delta' > 0$. So, while they can beat the maximum degree $\Delta$, they cannot approach $\chi$ colours.

\textbf{Our improvement.} The reason why RCTDEG is unable to approach $\chi$ colours is the absence of recourse, i.e.\! the fact that vertices permanently stick to a colour once they have been able to claim it in a round. As a consequence (as shown in~\cite{mausruff2026distributed}), many low-degree vertices claim colours prematurely, which then blocks the whole colour palette for some vertices if only $n^{\delta'}\chi$ colours are available. In order to avoid this problem we \emph{delay} colouring of lower-degree vertices.\footnote{Another natural idea would be that vertices of lower degree can always be ``kicked out'' of their colour by higher-degree vertices in subsequent rounds. However, this turns out to be more difficult to analyse.} More precisely, we partition the set of degrees into intervals $[L_i, U_i]$, which are disjoint and decreasing, so $U_{i+1} = L_i-1$. Our algorithm, \emph{Sequential Radial Colouring (SRC)}, then proceeds in phases: in the $i$th phase only vertices with degrees in the interval $[L_i, U_i]$ are active, and those simply run RCTDEG during that phase. Each phase lasts long enough such that the active vertices manage to colour themselves.

For a colour set $\Psi$, the intervals are chosen such that each vertex with degree in $[L_i,U_i]$ has fewer than $|\Psi|$ neighbours with degrees in $[L_i,\infty)$. Thus, the decomposition itself depends on the palette size, with larger palettes permitting coarser intervals and near-optimal palettes requiring finer ones. Note that such intervals can only exist for $|\Psi| > \kappa$, since their existence implies in particular that every vertex has fewer than $|\Psi|$ neighbours of larger degree. Indeed, we show that $(1+\eps)\kappa$ colours suffice for any constant $\eps >0$. After fixing the degree-tail parameter, \(\kappa\) is a.a.s. strictly smaller than \(\frac{4}{3}\chi\) by a fixed margin, so choosing \(\eps>0\) sufficiently small makes the \((1+\eps)\kappa\)-colouring use fewer than \(\frac{4}{3}\chi\) colours.
We suspect that $\kappa$ is a structural obstacle for efficient distributed colouring. 

With $(1+\eps)\kappa$ colours, our decomposition yields $O(\log \log n)$ intervals and we show that RCTDEG needs $O(\log \log n)$ rounds per interval. This leads to an overall runtime of $O((\log \log n)^2)$ for the SRC algorithm.  For larger values $|\Psi| = \Theta(\kappa \log \log n)$, we still use the same intervals as before. But now we assign a distinct set of $O(\kappa)$ colours to each interval\footnote{This works immediately if the hidden constant is sufficiently large. Otherwise, we use every colour palette for constantly many intervals and colour those sequentially.} and colour them in parallel, so the algorithm terminates in $O(\log \log n)$ rounds. We call this variant \emph{Parallel Radial Colouring (PRC)}. 
Finally, if $|\Psi|=\Theta(\kappa^{1+\eps})$, then a constant number of intervals suffices to cover the vertex set. On each such interval, the multiplicative slack of $\kappa^\eps$ between $|\Psi|$ and the number of preceding neighbours ensures that RCTDEG terminates in a constant number of rounds. Hence, SRC colours the graph in a constant number of rounds. In summary, we prove the following theorem.

\begin{theorem}\label{thm:summary}
Consider colouring an HRG with degeneracy $\kappa$ with $|\Psi|$ colours. For any $\eps >0$, a.a.s.\!
    \begin{enumerate}
        \item If $|\Psi| \ge (1+\eps)\kappa$ then SRC finds a colouring in $O((\log \log n)^2)$ rounds.
        \item If $|\Psi| =\Omega(\kappa \log \log n)$ then PRC finds a colouring in $O(\log \log n)$ rounds.
        \item If $|\Psi| \ge \kappa^{1+\eps}$ then SRC finds a colouring in $O(1)$ rounds.
    \end{enumerate}
Both SRC and PRC run in the CONGEST model and only require efficient computation per node.
\end{theorem}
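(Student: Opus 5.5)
The plan is to separate the argument into a structural part about HRGs and an algorithmic part about RCTDEG run on one degree interval.

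\textbf{Structural step.} First I would show that a.a.s.\ every vertex $v$ has at most $(1+o(1))\kappa$ neighbours of degree at least $\deg(v)$. In the HRG, degree is essentially $e^{(R-r_v)/2}$ up to concentration, so neighbours of larger degree are essentially neighbours of smaller radius. The expected number of such neighbours is computed by integrating the ball-intersection measure over radii $r\le r_v$. This is maximised in the same radial regime that determines $\kappa$ in~\cite{baguley2025hrg}. Chernoff bounds plus a union bound give concentration for vertices of large degree. For vertices of degree at most $\log^c n$ the bound is trivial, because $\kappa=\Theta(n^{1-\alpha})$.

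Next I would construct the intervals greedily from the top. Given $U_i$, choose $L_i$ as small as possible such that every vertex with degree in $[L_i,U_i]$ has fewer than $|\Psi|/(1+\eps/2)$ neighbours of degree in $[L_i,\infty)$. Such a vertex $v$ has at most $(1+o(1))\kappa$ neighbours above $\deg(v)$. Its neighbours with degree in $[L_i,\deg(v)]$ then cost only an additional $O(\deg(v)\cdot(1-(L_i/\deg(v))^{\Theta(1)}))$ in expectation. So taking $L_i=U_i^{1-\delta}$ for a small $\delta=\delta(\eps)$ should be admissible whenever $|\Psi|\ge(1+\eps)\kappa$. Since degrees range from $\Theta(1)$ to $n^{O(1)}$, shrinking the exponent geometrically gives $O(\log\log n)$ intervals. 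If instead $|\Psi|\ge\kappa^{1+\eps}$, then $L_i=U_i^{c}$ for a constant $c<1$ still leaves a polynomial slack $\kappa^{\Omega(\eps)}$, and only $O(1)$ intervals are needed.

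\textbf{Algorithmic step.} Within interval $i$, the previously coloured vertices remove at most as many colours as they have coloured neighbours. So each active vertex $v$ keeps a palette that exceeds its number of uncoloured active neighbours of higher priority by a factor $(1+\Omega(\eps))$. For RCTDEG with this priority-based slack I would argue as follows. In each iteration, $v$ fails with probability at most $d^+_{\mathrm{unc}}(v)/|\text{palette}(v)|$, where $d^+_{\mathrm{unc}}(v)$ is the number of its uncoloured higher-priority neighbours. The slack gives a constant success probability initially. Once the number of uncoloured higher-priority neighbours becomes small relative to the palette, the failure probability drops quadratically, as in the standard analysis of a clique $\mathcal C$. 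That gives $O(\log\log n)$ iterations per interval with high probability.

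The part I expect to be the main obstacle is making this rigorous without independence between neighbours. I would handle it by processing vertices in decreasing priority, using martingale or Chernoff bounds on the number of uncoloured higher-priority neighbours, and taking a union bound over the $n$ vertices. Vertices of low degree need separate treatment: since their degree is $\mathrm{polylog}\,n$, a shattering argument or direct bounds on their degree suffice. With polynomial slack, the failure probability per iteration is $\kappa^{-\Omega(\eps)}$. After $O(1/\eps)$ iterations it is at most $n^{-2}$, which gives a constant number of rounds per interval. Multiplying by the number of intervals yields items 1 and 3.

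For item 2 (PRC), I would give each interval its own disjoint block of $(1+\eps)\kappa$ colours and run all intervals in parallel, since colours from different blocks never conflict. This needs $O(\kappa\log\log n)$ colours and takes $O(\log\log n)$ rounds. With a smaller constant, I would reuse each block for $O(1)$ intervals processed sequentially. Finally, CONGEST compliance holds because each message carries a single colour or a degree ($O(\log n)$ bits), and each node's local computation is a single random palette sample.
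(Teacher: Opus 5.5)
Your algorithmic step is essentially the paper's band analysis and is fine. The genuine gap is in the structural step: the claim that $L_i=U_i^{1-\delta}$ is admissible when $|\Psi|\ge(1+\eps)\kappa$ is false, and your $O(\log\log n)$ count of intervals rests on it.

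Take a vertex $v$ at radius $R/2+x$ with $x\ge 0$, and write $\beta=\alpha-\tfrac12$. Its expected number of neighbours with radius in $[R/2+x,\,R/2+x+w]$ is $\Theta\bigl(n^{1-\alpha}e^{-(1-\alpha)x}(e^{\beta w}-1)\bigr)$. In terms of the degree $L$ at the outer radius this is $\Theta\bigl(\deg(v)^{2(1-\alpha)}\bigl((\deg(v)/L)^{2\alpha-1}-1\bigr)\bigr)$, not $O(\deg(v)(1-(L/\deg(v))^{\Theta(1)}))$.

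Near the core boundary degrees are $\Theta(\sqrt n)$, so an interval $[U^{1-\delta},U]$ there has radial width $\Theta(\delta\log n)$. Hence the interval containing $\bar d(R/2)$, or the one right after it, contains a vertex with $n^{1-\alpha+\Omega(\delta)}$ neighbours of degree at least $L_i$. That is polynomially more than $|\Psi|=\Theta(n^{1-\alpha})$, for every fixed $\delta>0$. Even your own estimate only gives $O(\sqrt n)$ extra neighbours there, far above the slack $\eps\kappa$.

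Admissible intervals near the core must have radial width $O(1)$, i.e. bounded ratio $U_i/L_i$; where the previous degree is close to $\kappa$ this ratio must be $1+O(\eps)$. Farther out the width can grow only linearly in the offset $x=r-R/2$, roughly $w\approx\beta^{-1}\log(1+ke^{(1-\alpha)x})$. The $O(\log\log n)$ count then comes from the offsets growing geometrically from $\Theta(1)$ to $\Theta(\log n)$. The right scale is $\log(\sqrt n/\deg)$, not $\log\deg$, and this is the paper's fine/coarse decomposition.

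Proving admissibility needs two separate bounds:
\begin{enumerate}
\item $d_{\mathrm{pre}}(v)\le(1+o(1))\kappa$. This follows because the minimum degree of $G[V(G)\cap B_0(q)]$ is at least $(1-o(1))$ times the expected number of neighbours inside $B_0(q)$ of a vertex at radius $q$. This is also the rigorous form of your ``same radial regime'' claim.
\item A band integral showing that the in-band degree is at most a small fraction of $\chi$.
\end{enumerate}
PRC (item 2) uses the same intervals, so it inherits the same gap.

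Item 3 has a milder version of the same problem. With $L_i=U_i^{c}$ for fixed $c<1$, descending from degree $n^{\Theta(1)}$ to polylogarithmic degree takes $\Theta(\log\log n)$ intervals, not $O(1)$. You must stop at a polynomial threshold and colour everything beyond it in one phase. The paper stops at radius $2\alpha\log n$, where all degrees are $O(n^{1-\alpha})=O(|\Psi|\,n^{-\eps(1-\alpha)})$. Near the core you also need $1-c=O(\eps)$ so that the in-band degree stays $o(|\Psi|)$.

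Intervals chosen greedily from the realised graph cannot be computed locally in CONGEST. The paper instead fixes the endpoints deterministically from $n,\alpha,C$ via the inverse expected-degree function. It then uses degree concentration to show the pseudo-bands differ from the true radial bands only in an $o(1)$-wide boundary layer.

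Your algorithmic step (sequential domination in priority order, a geometric then a quadratic phase, then $O(1)$ clean-up rounds) works once the intervals are corrected. Note that the invariant you need is an \emph{additive} slack of order $\eps\kappa$ over all neighbours of degree at least $L_i$. A multiplicative slack over higher-priority uncoloured neighbours is not enough, because lower-priority neighbours in the active interval can also fix colours.
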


The first interval is chosen such that we obtain a set $\mathcal C$ that is close to a clique.\footnote{We omit some technical details here since there may be small deviations from the clique structure, but a.a.s.\! the induced subgraph contains at most $(1+o(1))\omega$ vertices and contains a clique of size $(1-o(1))\omega$.} There is some priority order between the vertices depending on the edges outwards from the clique, but the precise ordering is irrelevant for our proof. That is, our proof would also work if the graph were itself just this clique.

We believe that the colouring process on the clique is an interesting process in its own right. For this reason, we analyse this part in more detail in terms of the excess number of colours $s = |\Psi|-|\mathcal C|$. We show that the number of rounds is constant if and only if $s = |\mathcal C|^{1 + \Omega(1)}$, and that it is $O(\log(|\mathcal C|/s) +\log\log n)$ for general $1 \le s = o(|\mathcal C|)$. The case $s=1$ recovers the classical $O(\log |\mathcal C|)$ bound by Johansson~\cite{johansson1999simple}. For slack $s=|\mathcal C|/\log n$, which corresponds to a palette of size $(1+o(1))|\mathcal C|$, the upper and lower bounds coincide at $\Theta(\log\log n)$ rounds. 

\begin{table}[t]
\centering
\renewcommand{\arraystretch}{1.25}
\begin{tabular}{lll}
\toprule
\textbf{Palette size} & \textbf{Upper bound} & \textbf{Lower bound} \\
\midrule
$|\Psi|=\left(1+1/\log n\right)|\mathcal C|$
&
$O(\log\log n)$
&
$\Omega(\log\log n)$
\\
$|\Psi|=|\mathcal C|^{1+\varepsilon_n}$, $|\mathcal C|^{\varepsilon_n}\to\infty$
&
$O(1/\varepsilon_n)$
&
$\Omega\!\left(\log(1+1/\varepsilon_n)\right)$
\\
\bottomrule
\end{tabular}
\caption{Upper and lower bounds for clique colouring with RCTDEG. 
Note that the number of rounds is $O(1)$ if and only if the number of colours is $|\mathcal C|^{1+\Omega(1)}$.}
\label{tab:first-interval-round-bounds}
\end{table}

\subsection{Properties of the HRG model}\label{sec:related_works_HRG}

The HRG model was introduced in~\cite{krioukov2010hyperbolic} and has been studied intensively in its own right. It was later generalized to a broader class of random graph models called Geometric Inhomogeneous Random Graphs (GIRG)~\cite{bringmann2019geometric}, and many of the following results have been obtained in this broader framework.

\textbf{Degree distribution and clustering.} The degree distribution of a random vertex is a power-law~\cite{gugelmann2012random} with exponent $\tau = 2\alpha + 1 \in (2, 3)$. This means that the probability for a random vertex to have degree $k$ is $\Theta(k^{-\tau})$. Likewise, the probability to have degree \emph{at least} $k$ is $\Theta(k^{1-\tau})$. The fact that $\tau \in (2, 3)$ guarantees that the expected degree is constant (i.e.\ the graph is sparse) and at the same time the variance diverges with $n$. For many real-world networks that have been analysed in this manner, the power-law parameter has been found to lie in the range $(2,3)$ as well~\cite{boccaletti2006complex,jeong2000large,broder2000graph,clauset2009power}. In particular, this implies a large maximum degree of $\Delta = \Theta(n^{1/(2\alpha)}) = \Omega(n^{1/2})$. 
Moreover, HRG were shown to exhibit a constant clustering coefficient~\cite{gugelmann2012random}, which is the probability that two random neighbours of a random vertex are themselves connected by an edge.

\textbf{Typical distances and diameter.} The giant component contains $\Theta(n)$ vertices and all other components are of at most polylogarithmic size~\cite{blasius_et_al:LIPIcs.ESA.2023.20,kiwi2019second,bringmann2019geometric}. In accordance with the empirical observation that real-world networks have very small distances, it has been shown in~\cite{bringmann2025average} for HRG that typical distances are of the order $O(\log \log n)$. That is, two vertices chosen uniformly at random from the largest connected component of the graph (the so-called \emph{giant}) have a.a.s.\! such a graph distance. The diameter of the giant (and in fact of any component) was shown to be $O(\log n)$~\cite{muller2019diameter,benjert_et_al:LIPIcs.STACS.2026.11}. Moreover, there exists a constant $c>0$ such that a.a.s.\! all vertices of degree at least $(\log n)^c$ belong to the giant.

\textbf{Algorithmic properties.} The HRG model exhibits sublinear treewidth and 
small separators, of the order $n^{1 - \alpha}$~\cite{blasius2016hyperbolic}. Note that this is the same order as the largest clique in the graph, as seen in the next paragraph. Additionally, samples from the HRG model can be obtained in expected linear time~\cite{bringmann2015sampling}. 
A main motivation for introducing HRG was to explain why local routing algorithms empirically work well when real networks are embedded into a hyperbolic space~\cite{boguna2010sustaining,blasius2021strongly,papadopoulos2010greedy,bringmann2022greedy}. Other algorithms with the reputation of working empirically well on real-world networks, but without worst-case guarantees, were studied empirically on the GIRG model (generalizing HRG) in~\cite{blasius2024external}. Among those, bidirectional breadth-first search has also been theoretically analysed on GIRG~\cite{blasius2024external,cerf2024balanced}.

\textbf{Largest clique, chromatic number, and degeneracy.} Consider the set $\mathcal{C}$ of vertices which satisfy $r \le R / 2$. We will refer to this set as the \emph{core}. Since the hyperbolic distance to the origin is $r$, by triangle inequality these vertices form a clique (because vertices with distance at most $R$ are connected), and thus $|\mathcal{C}| \le \omega \le \chi$. It has been shown~\cite{baguley2025hrg} that $|\mathcal{C}| \le \omega \le ((4/3)^{\alpha/2} + o(1))|\mathcal{C}| = \Theta(n^{1 - \alpha})$. They also show that $\chi \le ((4/3)^{\alpha} + o(1))|\mathcal{C}|$, therefore $\chi = \Theta(n^{1 - \alpha})$. However, since $1/2 < \alpha < 1$, the maximum degree $\Delta$ is of much larger order, namely $\Theta(n^{1/(2\alpha)})$. The previous upper bound on the chromatic number comes directly from upper-bounding the degeneracy $\kappa$ by $((4/3)^{\alpha} + o(1))|\mathcal{C}|$. Interestingly, it was also shown that there exists a constant $c > 1$ such that $\kappa \ge c\omega$.




\begin{figure}[t]
\centering

\begin{minipage}[t]{0.495\textwidth}
    \centering
    \includegraphics[width=0.85\textwidth]{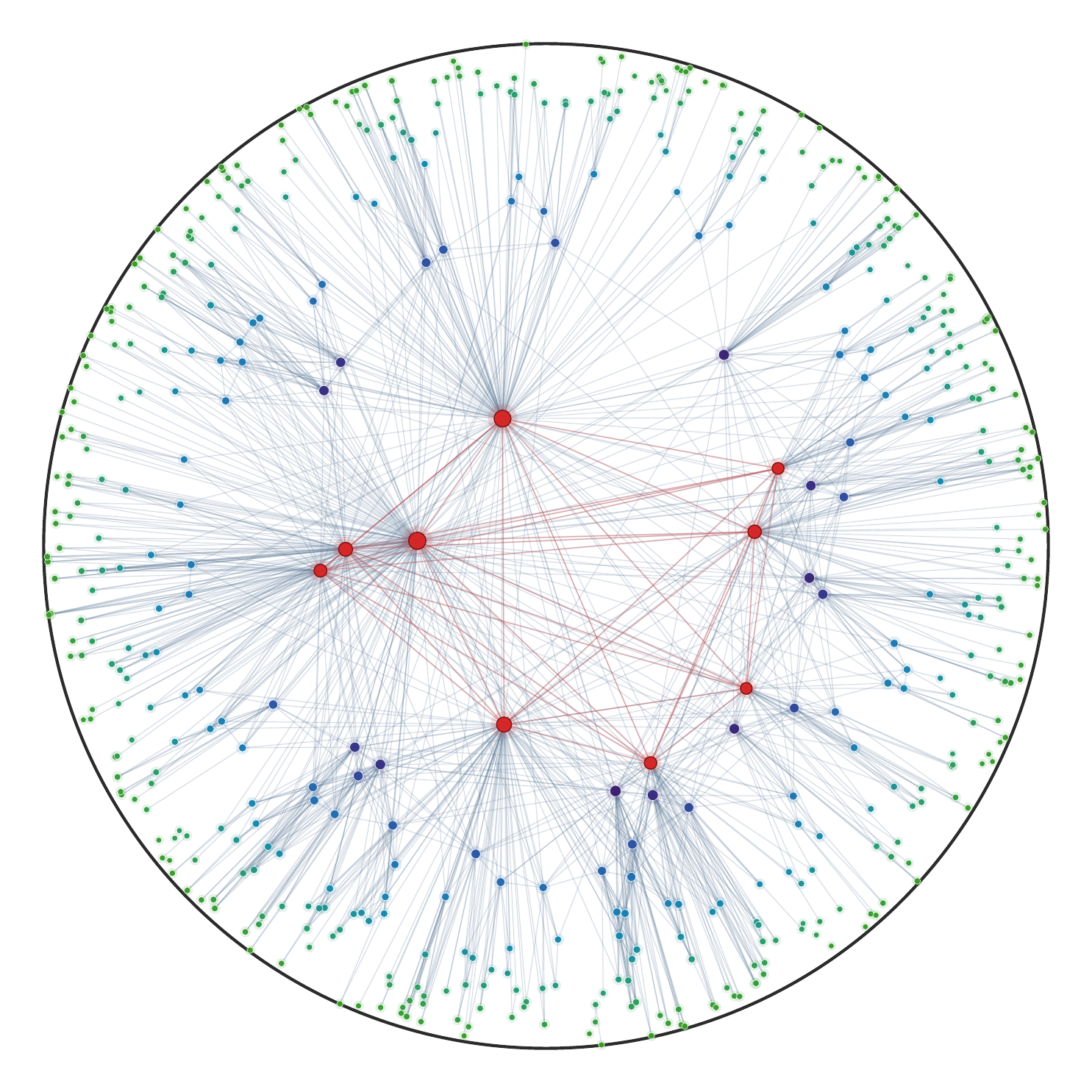}
\end{minipage}
\hfill
\begin{minipage}[t]{0.495\textwidth}
    \centering
\begin{tikzpicture}[
    scale=0.82,
    vertex/.style={circle,fill=black,inner sep=1.2pt},
    prev/.style={circle,fill=blue!75!black,inner sep=1.8pt},
    internal/.style={circle,fill=green!60!black,inner sep=1.8pt},
    future/.style={circle,fill=gray!60,inner sep=1.6pt},
    edge/.style={black!70, line width=0.5pt},
    band/.style={dashed, black!45, line width=0.5pt}
]

\coordinate (v) at (1.75,0.25);
\def\rinner{1.25}
\def\router{2.35}
\def\rbig{3.9}

\fill[blue!8] (0,0) circle (\rinner);
\fill[green!10, even odd rule] (0,0) circle (\router) (0,0) circle (\rinner);
\draw[band] (0,0) circle (\rinner);
\draw[band] (0,0) circle (\router);
\draw[line width=0.8pt] (0,0) circle (\rbig);

\draw[dashed, black!18] (0,0) circle (3.0);

\coordinate (p1) at (0.15,0.90);
\coordinate (p2) at (-0.55,0.40);
\coordinate (p3) at (-0.20,-0.75);

\foreach \p in {p1,p2,p3}{
    \draw[edge] (v)--(\p);
    \node[prev] at (\p) {};
}

\coordinate (i1) at (1.10,1.70);
\coordinate (i2) at (2.45,1.00);
\coordinate (i3) at (2.35,-0.85);
\coordinate (i4) at (1.05,-1.55);

\foreach \p in {i1,i4}{
    \draw[edge] (v)--(\p);
    \node[internal] at (\p) {};
}



\foreach \x/\y in {
    -0.6/1.9, 0.7/2.0, 2.0/1.9, -1.5/0.8, -1.7/-0.8,
    -0.9/-1.8, 0.2/-2.0, 2.0/-1.9, 3.0/0.2, -2.7/1.8,
    -3.0/0.2, -2.5/-1.9, -0.8/3.0, 0.8/3.1, 2.3/2.9,
    3.0/1.0, 3.15/-0.2, 2.75/-2.35, 0.6/-3.0, -1.2/-3.0
}{
    \node[vertex] at (\x,\y) {};
}

\node[font=\scriptsize, blue!60!black] at (-0.1,0.0) {processed};
\node[font=\scriptsize, green!50!black] at (-0.55,1.55) {active band};


\node[circle,fill=black,inner sep=2.4pt] at (v) {};
\node[font=\small, anchor=west] at ($(v)+(0.08,-0.02)$) {$v$};

\path[use as bounding box]
    (-3.55,-4.25) rectangle (3.55,3.55);

\end{tikzpicture}
\end{minipage}
\captionsetup{
        width=0.98\textwidth,
        justification=justified,
        singlelinecheck=false
    }
\caption{
        \textbf{Left:} An example hyperbolic random graph for
        \(n=500\), \(C=1\), and \(\alpha=0.6\). The red vertices
        form a large clique, referred to as the \emph{core}.
        \textbf{Right:} Blue vertices lie in already processed inner
        bands and contribute to \(d_{\mathrm{pre}}(v)\), while green
        vertices lie in the active band and contribute to
        \(d_{\mathrm{int}}(v)\). The sizes of the bands in the illustration are not to scale. There are other edges incident to $v$, which are not depicted.
    }
    \label{fig:both-figures}
\end{figure}

\subsection{Formal definition of Hyperbolic Random Graphs}\label{sec:hyperbolic_random_graphs}

We now formally introduce the HRG model. We adopt the Poissonized version as in~\cite{kiwi2019second} and~\cite{mausruff2026distributed}, as opposed to a version with fixed number of vertices. See~\Cref{fig:both-figures} for an example.

\textbf{Vertex set.} Given $n \in \mathbb{N}, \alpha \in (1/2, 1), C \in \mathbb{R}$, we set $R = 2\log n + C$ and define the vertex set $V$ as the result of an inhomogeneous Poisson point process~\cite{kingman1992poisson} on the hyperbolic disk with radius $R$. The intensity of the process is non-zero only at radius $r \in [0, R)$ and angle $\theta \in [0, 2\pi)$ and is given by
\begin{align*}
    f(r, \theta) = \frac{1}{2\pi}e^{(R - C)/2}\frac{\alpha\sinh{\alpha r}}{\cosh{\alpha R} - 1}.
\end{align*}

The intensity of the Poisson point process above scales roughly exponentially with the radial coordinate, i.e.\ most points are near the boundary. Intuitively, the radial coordinate of a vertex can be thought of as its ``importance'' (and controls the degree) and the angular coordinate as its ``type''.

The expected number of points is exactly $n$. Note also that the numbers of vertices in disjoint regions are independent Poisson random variables. We will frequently identify a vertex with the pair $(r,\theta)$ of its radial and angular coordinates.

\textbf{Edge set.} There is an edge between two vertices $u = (r, \theta), v = (r', \theta')$ if and only if the hyperbolic distance $d$ between $u$ and $v$ is at most $R$. By the hyperbolic law of cosines, the distance satisfies: $\cosh{d} = \cosh{(r)}\cosh{(r')} - \sinh{(r)}\sinh{(r')}\cos{(\theta - \theta')}$.

\section{Roadmap of the analysis}

We give a high-level overview of the analysis and explain how the geometry of HRGs enables colouring with a near-optimal number of colours. The central idea is to exploit the strong relation between degree and radial position. Vertices close to the centre have very large degree and may lose too many available colours if lower-degree neighbours are coloured too early. Sequential Radial Colouring (SRC) therefore uses degree as priority order. Moreover, it delays colouring of lower-degree vertices by splitting the vertices into batches by degree and then processing those batches sequentially. More precisely, the vertices are grouped into radial pseudo-regions, which are processed from the centre outwards; within the currently active region, all uncoloured vertices run RCTDEG in parallel.

The decomposition is chosen so that, when a band becomes active, all relevant colouring obstructions have the right scale. The analysis uses three degree parameters, two of which are illustrated in~\Cref{fig:both-figures}. The \emph{previous degree} $d_{\mathrm{pre}}(v)$ counts neighbours in previous bands; these neighbours have already fixed their colours, and we pessimistically assume that they use distinct colours.\footnote{In particular, we effectively show that the algorithm solves the list-colouring problem.} The \emph{internal degree} $d_{\mathrm{int}}(v)$ counts neighbours inside the active band, which may further restrict the palette of $v$ during the active phase. The \emph{higher-priority degree} $\deg^+(v)$ counts neighbours of degree at least $\deg(v)$; it upper-bounds the number of neighbours that can block $v$ in one round of RCTDEG, and is bounded by 
\begin{align}\label{eq:previous-and-internal-degree}
    d_{\mathrm{pre}}(v) \le \deg^+(v) \le d_{\mathrm{pre}}(v)+d_{\mathrm{int}}(v),
\end{align}
regardless of our choice of bands. Note that $\max_{v\in V} \deg^+(v) \ge \kappa$, because the former is the maximum number of preceding neighbours in an ordering induced by degrees. We may assume worst-case tie breaking here, but the number of neighbours of the same degree as $v$ is negligibly small. Hence, there is at least one vertex for which the right hand side of~\eqref{eq:previous-and-internal-degree} is at least $\kappa$. This is why our approach is inherently limited to colour palettes larger than $\kappa$. However, we will show that we can choose the band widths small enough that the right hand side of~\eqref{eq:previous-and-internal-degree} is at most $(1+\eps)\kappa$ for all vertices $v\in V$. 

Below we give some intuition on how the three parts of Theorem~\ref{thm:summary} are proved. Throughout the analysis, there are two sources of randomness: the generation of the HRG and the random choices of the colouring algorithm. In our proofs, we typically first show structural pseudo-randomness results about \emph{typical} HRGs, meaning that a.a.s.\! such a graph satisfies the required structural conditions. Typical examples are the maximum and minimum values that $d_{\mathrm{int}}(v)$ and $d_{\mathrm{pre}}(v)$ take over all vertices in a band. We systematically collect such structural statements in \Cref{sec:preliminaries} and, for the two SRC regimes, at the beginnings of \Cref{sec:sequential-radial-colouring,sec:seq-col-constant-rounds}. In the second part, starting from~\Cref{subsec:colouring-pseudo-clique} we then analyse the algorithmic randomness using those pseudo-randomness conditions.

\subsection{Sequential Radial Colouring}

We subdivide the disk into radial bands whose relevant boundaries lie between a lower cutoff near the core and an upper cutoff near the boundary of the disk. In this range, degrees concentrate and the expected degree is strictly decreasing in the radius. Thus, although vertices only observe their degrees locally, they can estimate their radial positions by inverting the expected-degree function, recovering the true radius up to an additive $o(1)$ error; each pseudo-region therefore differs from its true radial counterpart only within an $o(1)$ boundary neighbourhood.  This allows the algorithm to implement a radial decomposition using only degree information.


\textbf{Band-decomposition for $\bm{|\Psi|\geq (1+\varepsilon)\kappa}$.} In the near-degeneracy regime $|\Psi|=(1+\varepsilon)\kappa$, SRC uses the finest decomposition (see~\Cref{fig:band-decomposition} for an illustration). After the pseudo-clique, the intermediate part consists of \(O(\log\log n)\) fine pseudo-bands of width \(1/\log\log n\) close to the core, followed by \(O(\log\log n)\) coarse pseudo-bands whose widths increase essentially geometrically as the distance from the centre increases. The widths are chosen so that
\begin{align*}
d_{\mathrm{pre}}(v)
\le
\begin{cases}
(1+o(1))\kappa, & \text{fine region},\\
\chi, & \text{coarse region},
\end{cases}
\qquad
d_{\mathrm{int}}(v)
\le
\begin{cases}
\chi/\sqrt{\log\log n}, & \text{fine region},\\
\eta\chi, & \text{coarse region},
\end{cases}
\end{align*}
for a sufficiently small constant $\eta>0$. Consequently, upon activation, every vertex $v$ has at least $C_{\mathrm{int}}M_{\mathrm{int}}(v)$ available colours\footnote{Equivalently, after colours used in previous regions have been removed, $|\Psi|-d_{\mathrm{pre}}(v)\ge C_{\mathrm{int}}M_{\mathrm{int}}(v)$.}, for some fixed constant $C_{\mathrm{int}}>2$, where
\begin{align}
M_{\mathrm{int}}(v)
:=
\max\left\{d_{\mathrm{int}}(v),\frac{\chi}{\sqrt{\log\log n}}\right\}.
\label{eq:active-band-slack}
\end{align}

The maximum with \(\chi/\sqrt{\log\log n}\) keeps this scale polynomially large, namely \(n^{\Omega(1)}\), even as the internal degrees decrease near the disk boundary. During the active phase, internal neighbours can remove at most $d_{\mathrm{int}}(v)\le M_{\mathrm{int}}(v)$ further colours, even if they all choose distinct colours. Hence, throughout the phase, each active vertex retains at least \((C_{\mathrm{int}}-1)M_{\mathrm{int}}(v)\) available colours. Since $C_{\mathrm{int}}>2$, this is at least $cM_{\mathrm{int}}(v)$ colours for some constant $c>1$. After the intermediate bands, the pseudo-outer region begins at a cutoff where the relevant degrees are already small enough compared with the palette to give high constant-round success probability.

\begin{figure}[t]
    \centering
    \begin{tikzpicture}[
        x=0.9cm,
        y=1cm,
        bandtick/.style={line width=0.7pt},
        label/.style={font=\small},
        regionlabel/.style={font=\small}
    ]


        \def\axisend{16}

        \fill[red!30]
            (0,-0.28) rectangle (1.6,0.28);
        \fill[orange!30]
            (1.6,-0.28) rectangle (5.6,0.28);
        \fill[blue!30]
            (5.6,-0.28) rectangle (11.3,0.28);
        \fill[green!30]
            (11.3,-0.28) rectangle (\axisend,0.28);


        \draw[dashed,line width=0.8pt]
            (0.5,0) -- (1.2,0);

        \draw[-,line width=0.8pt]
            (1.2,0) -- (\axisend,0);

        \draw[-,line width=0.8pt]
            (0,0) -- (0.35,0);

        \node[anchor=north] at (0,-0.3)
            {$0$};
        \node[anchor=north] at (\axisend,-0.3)
            {$R$};

        \draw[bandtick] (0,-0.4) -- (0,0.4);
        \draw[bandtick] (1.6,-0.4) -- (1.6,0.4);

        \foreach \x in {2.4,3.2,4.0,4.8,5.6}
            \draw[bandtick] (\x,-0.4) -- (\x,0.4);

        \foreach \x in {6.8,8.6,11.3}
            \draw[bandtick] (\x,-0.4) -- (\x,0.4);

        \draw[bandtick]
            (\axisend,-0.4) -- (\axisend,0.4);

        \node[regionlabel,align=center] at (0.8,0.72)
            {pseudo-clique};

        \node[regionlabel,align=center] at (13.65,0.72)
            {pseudo-outer region};

        \draw[
            decorate,
            decoration={brace,mirror,amplitude=5pt},
            line width=0.6pt
        ]
            (1.7,-0.48) -- (5.5,-0.48)
            node[midway,below=7pt,label]
            {fine region};

        \draw[
            decorate,
            decoration={brace,mirror,amplitude=5pt},
            line width=0.6pt
        ]
            (5.7,-0.48) -- (11.2,-0.48)
            node[midway,below=7pt,label]
            {coarse region};

    \end{tikzpicture}
    \captionsetup{
        width=0.98\textwidth,
        justification=justified,
        singlelinecheck=false
    }
    \caption{The structure of the bands. Near the pseudo-clique, the fine pseudo-bands have equal width $O(1/\log\log n)$. Farther from the centre, the coarse pseudo-bands increase in width essentially geometrically until the pseudo-outer region begins.}
    \label{fig:band-decomposition}
\end{figure}

\textbf{Band-decomposition for $\bm{|\Psi|\ge \kappa^{1+\varepsilon}}$.} In the polynomial-slack regime $|\Psi|\ge \kappa^{1+\varepsilon}$, we use a
constant-band decomposition. The inner region consists of the core clique together with a small radial buffer of width \(c_{\mathrm{in}}\log\log n\), for a sufficiently large constant \(c_{\mathrm{in}}>0\). The remaining intermediate range up to the outer region is split into constantly many pseudo-bands, each of width\footnote{The last band might be truncated.} \(\varepsilon(1-\alpha)\alpha^{-1}\log n\). This width is large enough to have $O(1)$ intervals, but still small enough that the inner degree is small compared to the palette size.

For both decompositions the inner region is handled separately in the clique-colouring section, leaving the intermediate pseudo-bands as the main part of the SRC analysis. 

\textbf{Intermediate pseudo-bands for $\bm{|\Psi|\geq(1+\varepsilon)\kappa}$.} We analyse an intermediate pseudo-band through its \emph{internal uncoloured degree}. Let $D_0$ be the maximum internal uncoloured degree when the band becomes active, and for $i\ge1$ let $D_i$ be the corresponding maximum after round $i$ of the active phase. Initially, $D_0$ is bounded by the internal degree of the band. We show that, a.a.s., the internal uncoloured degree drops to $O(\log n)$ within $O(\log\log n)$ active rounds.

Condition on the set of uncoloured vertices at the beginning of round $i$. Recall that, throughout the active phase, every uncoloured vertex in the active pseudo-band has at least $c\cdot M_{\mathrm{int}}$ available colours, for some constant $c>1$. Moreover, it has at most $D_{i-1}$ uncoloured neighbours inside the active pseudo-band, and hence at most $D_{i-1}$ higher-priority neighbours that can block it. Therefore, the probability that a fixed vertex remains uncoloured in round $i$ is at most $D_{i-1}/(c \cdot M_{\mathrm{int}})$.

To bound the number of uncoloured neighbours of a fixed vertex, we sum the survival indicators of its neighbours in the active band. Each such indicator satisfies the one-vertex bound above, but the indicators are dependent. We overcome this by exposing the colour choices in degree-priority order and applying the sequential domination lemma\footnote{The aim is to apply a concentration bound to the sum of survival indicators. Since these indicators are dependent, this cannot be done directly; the stochastic domination step replaces them by independent Bernoulli variables to which Chernoff bounds apply.}: under any conditioning on the choices already exposed, the same probability bound remains valid, since it already assumes the worst case that higher-priority vertices forbid as many colours as possible. Thus, the number $U_i$ of uncoloured neighbours of any fixed vertex after round $i$ is stochastically dominated by a random variable $Z_i$ with
\begin{align*}
U_i
\preceq
Z_i 
\sim
\operatorname{Bin}\big(D_{i-1},\tfrac{D_{i-1}}{c\cdot M_{\mathrm{int}}}\big).
\end{align*}

The expectation of this dominating binomial variable is $(D_{i-1}^2)/(c\cdot M_{\mathrm{int}})$. Before using the quadratic recurrence, we first obtain a constant-factor decrease. Since $D_{i-1}\le M_{\mathrm{int}}$, this expectation is at most $D_{i-1}/c$ for $c>1$. A Chernoff bound and a union bound then give a fixed-factor decrease for $D_i$. Hence, after a constant number $t_0=O(1)$ of preliminary rounds, the maximum internal uncoloured degree satisfies $D_{t_0}\le\delta M_{\mathrm{int}}$ for a sufficiently small constant $\delta\in(0,1)$. An induction over the rounds $i\geq 0$ after the preliminary phase gives
\begin{align*}
D_{t_0+i}
\le
\delta M_{\mathrm{int}}\,2^{-(2^i-1)}.
\end{align*}



Thus, the maximum internal uncoloured degree drops doubly exponentially in the number of active rounds, and reaches $O(\log n)$ after $O(\log\log n)$ rounds. The domination by independent Bernoulli variables is what makes this recurrence hold with sufficiently high probability.

After that point, constantly many further rounds wrap up the proof. Indeed, in each such round the probability that a fixed remaining vertex stays uncoloured is already of order $\log n/n^{\Omega(1)}$. Choosing a sufficiently large constant number $q$ of additional rounds makes this probability $(\log n/n^{\Omega(1)})^q$, and a union bound over all vertices in the band then shows that a.a.s.\! all remaining vertices are coloured.

\textbf{Intermediate pseudo-bands for $\bm{|\Psi|\ge \kappa^{1+\varepsilon}}$.}
In this case we can only afford a constant number of pseudo-bands. Thus, each band is much larger, and the vertices inside of a single band are more inhomogeneous. Therefore, the bound~\eqref{eq:previous-and-internal-degree} is no longer strong enough. Instead, we take the prioritisation by degree into account and bound the \emph{higher-priority degree} $\deg^+(v)$ directly (we really mean in the entire graph, not just the band). For every vertex outside the inner region, \(\deg^+(u)=o(n^{1-\alpha})\) w.e.h.p. Moreover, the pseudo-band widths ensure that, when a band becomes active, colours forbidden by previous and internal neighbours amount to only \(o(|\Psi|)\). Thus, every active vertex retains \((1-o(1))|\Psi|\) available colours throughout the phase.

Conditioned on the state at the start of the round, an active vertex \(u\) remains uncoloured only if a higher-priority neighbour chooses the same colour. Since \(|\Psi|\ge\kappa^{1+\varepsilon}=\Theta(n^{(1-\alpha)(1+\varepsilon)})\),
\begin{align*}
\Pr(u\text{ remains uncoloured in given round})
\le
\frac{\deg^+(u)}{(1-o(1))|\Psi|}
=
o(n^{-\varepsilon(1-\alpha)}).
\end{align*}

Since the same conditional estimate holds in every round, choosing a sufficiently large constant number of rounds allows a union bound over all vertices in the band, and hence the whole band is coloured a.a.s. after a constant number of rounds.

\textbf{Pseudo-outer band in both SRC regimes.} For the near-degeneracy palette, the outer cutoff ensures degree $O(\log^2 n)$ in the pseudo-outer region. Since $|\Psi|\ge\kappa=\Theta(n^{1-\alpha})$, each active vertex has $(1-o(1))|\Psi|$ available colours, and its one-round failure probability is at most $O(\log^2 n/n^{1-\alpha})$. After constantly many rounds, a union bound over all vertices shows that all vertices in the pseudo-outer region are coloured a.a.s. For the polynomial-slack palette, the outer boundary ensures degree \(O(n^{1-\alpha})\), which is negligible compared with \(|\Psi|\ge \kappa^{1+\varepsilon}\). Thus, each vertex again keeps $(1-o(1))|\Psi|$ available colours, its one-round failure probability is $O(n^{-\varepsilon(1-\alpha)+o(1)})$, and any fixed number $t>1/(\varepsilon(1-\alpha))$ of rounds suffices by a union bound.

\subsection{Parallel Radial Colouring}

The second algorithm, Parallel Radial Colouring, trades number of colours used for runtime by assigning disjoint palettes to the pseudo-regions. Once the palettes are disjoint, colours chosen in one pseudo-region do not restrict colour choices in another region, so the previous-degree obstruction disappears and the regions can be coloured simultaneously. The local analysis is unchanged: the pseudo-clique and the intermediate pseudo-bands each require $O(\log\log n)$ rounds, while the pseudo-outer region requires only constantly many rounds. Thus the total runtime is $O(\log\log n)$ a.a.s., at the price of an additional $O(\log\log n)$ factor in the number of colours.

\subsection{Clique colouring}

With the band analysis out of the way, the remaining ingredient is the colouring of the pseudo-clique. We isolate the corresponding process as RCTDEG on a clique and prove matching upper and lower bounds for the two regimes: palettes of size \((1+1/\log n)|\mathcal C|\) result in \(\Theta(\log\log n)\) rounds,\footnote{More generally, if a clique \(\mathcal C\) is coloured with \(|\mathcal C|+s\) colours, where \(1\le s=o(n)\), then RCTDEG finishes in \(O(\log(|\mathcal C|/s)+\log\log n)\) rounds a.a.s.} while constant-round colouring is possible if and only if the multiplicative slack is of order \(|\mathcal C|^{\Omega(1)}\).

\textbf{Clique upper bounds.} The two upper bounds are proven for an arbitrary graph on $n$ vertices, which is why they can be invoked for the pseudo-clique. The proof closely follows the intermediate-band analysis for SRC with a near-degeneracy palette. Instead of tracking the internal uncoloured degree $D_i$, we track the number $U_i$ of uncoloured vertices remaining in the clique after round $i$. If the clique has size $n$ and the palette has additive slack $s_n$, then, at the beginning of round $i$, every uncoloured vertex has $s_n+U_{i-1}$ available colours. Conditioned on the state at the start of the round, a vertex with $j$ higher-priority uncoloured neighbours therefore remains uncoloured with probability at most $j/(s_n+U_{i-1})$. Summing over the priority order, the expected number of vertices surviving round $i$ is thus of order
\vspace{-0.7em}
\begin{align*}
\sum\nolimits_{j=0}^{U_{i-1}-1}\frac{j}{s_n+U_{i-1}}
\leq
\frac{U_{i-1}^2}{2(s_n+U_{i-1})}.
\end{align*}

Exposing choices in the priority order and applying a Chernoff bound then gives the following evolution. As long as $U_{i-1}$ is larger than a fixed fraction of the slack $s_n$, the expectation $U_{i-1}^2/(2(s_n+U_{i-1}))\leq U_{i-1}/2$ is still linear in $U_{i-1}$, so each round reduces the number of uncoloured vertices by a constant factor. This gives a geometric phase of length $O(\log(n/s_n))$, until $U_i=O(s_n)$. From then on the denominator is of order $s_n$, and the recurrence becomes quadratic, with expectation of order $U_{i-1}^2/s_n$. This drives $U_i$ down doubly exponentially, such that after $O(\log\log(n))$ further rounds $U_i=O(\log(n))$. A final pair-collision argument colours the remaining $O(\log n)$ vertices. Hence $(1+o(1))n$ colours suffice in $O(\log(n/s_n)+\log\log n)$ rounds.

For the polynomial-slack regime we use a different general upper bound. More precisely, for any graph on $n$ vertices, if $|\Psi|=\lceil n^{1+\varepsilon_n}\rceil$ with $\varepsilon_n>0$ and $n^{\varepsilon_n}\to\infty$, then RCTDEG finishes in $(1+o(1))/\varepsilon_n$ rounds a.a.s. Indeed, every uncoloured vertex has at least $|\Psi|-n\ge n(n^{\varepsilon_n}-1)$ available colours in every round, so the probability that a fixed vertex remains uncoloured for $t$ rounds is at most $(n^{\varepsilon_n}-1)^{-t}$. A union bound over all vertices gives completion after $(1+o(1))/\varepsilon_n$ rounds.

\textbf{Clique lower bounds.} The lower bounds show that the clique analysis is essentially tight. For palettes of size $\Theta(n)$, RCTDEG needs $\Omega(\log\log n)$ rounds a.a.s.; moreover, if $|\Psi|=n^{1+o(1)}$, then for every fixed number of rounds, $\omega(1)$ vertices remain uncoloured a.a.s. Thus a polynomial multiplicative slack is necessary for constant-round colouring.

The main idea is to prove a recursive lower bound on the number $U_i$ of uncoloured vertices after round $i$. If $|\Psi|=\Theta(nf(n))$, then the expected number of surviving vertices in the next round has the same quadratic scale as in the upper bound, namely roughly $U_i^2/|\Psi|$. Iterating this suggests a lower bound of order $n/(Df(n))^{2^i-1}$ for a suitable constant $D>0$. The difficulty is making this recursion hold with high probability, as there is no independence across vertices.


To overcome this dependence, we prove a lower-tail concentration bound for a single clique round. After conditioning on the previous history, the only remaining randomness comes from the independent colour choices made in that round. The number of vertices that remain uncoloured can then be expressed as a function of these choices by summing the colour excesses: if a colour is chosen by $m$ vertices, it contributes $m-1$ surviving vertices. This function is self-bounding. Changing the colour chosen by one vertex changes the number of surviving vertices by at most one. A concentration inequality for self-bounding functions therefore yields a Chernoff-type lower-tail estimate without requiring the individual survival events to be independent. We then iterate this estimate to obtain a general recursive lower bound on $U_i$. The two lower-bound corollaries follow by applying the recursion with $f(n)=\Theta(1)$ and with $f(n)=n^{o(1)}$ respectively.

\section{Discussion and open problems}\label{sec:discussion}

We have shown that, for every \(\alpha\in(1/2,1)\), HRGs can be coloured with \(((4/3)^\alpha+\eps)\chi\) colours in \(O((\log\log n)^2)\) rounds a.a.s. in the CONGEST model. 

Unlike for worst-case graph instances, the bottleneck for the number of colours is not the maximum degree, but rather the degeneracy of the graph. In the case of HRG, the degeneracy agrees up to lower-order terms with the parameter $\kappa_{\mathrm{deg}}$, defined as the maximum number of preceding neighbours of any vertex in the descending sorting of vertices by degree. We believe that this is an interesting parameter, and that it should be explored further which classes of graphs can efficiently be coloured in the CONGEST model with $O(\kappa_{\text{deg}})$ or even $(1+o(1))\kappa_{\text{deg}}$ colours. For models of real networks, it would be interesting to see whether the constraints of HRG can be relaxed. For example, can the results be generalised to all graphs in which the degree of a random vertex follows a power-law, possibly with further conditions?

Our algorithm proceeds in phases, where in each phase only vertices within a certain range of degrees are active. It is open whether the algorithm can be simplified further to work without phases. In particular, consider the very simple algorithm in which each vertex tentatively decides for a colour. If in any round a vertex has a conflict with a higher-degree neighbour (ties broken by random IDs), it redraws its colour, avoiding all colours currently used by higher-degree neighbours. Note that this decision may induce conflicts with lower-degree neighbours, which will have to resolve the conflict in the next round. This algorithm is guaranteed to find a proper colouring for any graph if at least $\kappa_{\text{deg}}+1$ colours are available, because after $i$ rounds, the $i$ highest-degree vertices are conflict-free and have thus terminated.\footnote{We assume here that the ordering defining $\kappa_{\text{deg}}$ uses the same tie-breaking IDs as the algorithm.} But it is open whether this algorithm is efficient for $O(\kappa_{\text{deg}})$ colours in HRG or on other classes of graphs.

\section{Further related work on distributed colouring}\label{sec:related_work_colouring}
We give here a selection of results for distributed colouring. For a more complete and general recent survey, see~\cite{grebik2025descriptive}. 

\paragraph{Details on the results by \texorpdfstring{Maus and Ruff~\cite{mausruff2026distributed}}{Maus and Ruff}.} We have already briefly summarized the important results from that paper, but give more details here. Recall that in the RCT algorithm, in case of a conflict both vertices discard their colour, while in RCTID conflicts are resolved by ID and in RCTDEG conflicts are resolved by degree, with the higher degree having higher priority. Maus and Ruff show that RCT succeeds after 2 rounds of communication if given $\eps \Delta$ colours for any constant $\eps > 0$.\footnote{The 2 rounds are actually necessary for a single step of the protocol, i.e.\ one round to communicate the attempted colour and another to inform neighbours whether the colour was permanently adopted or not. Vertices which did not manage to secure a colour in the first round simply sample a new one from the set of now available colours and do not even communicate it to their neighbours; with high probability, it will not be the same as theirs.}  They also show that RCT fails if only given $n^{\delta}\chi$ colours for some fixed constant $\delta$, i.e.\ the protocol is unable to approach $\chi$ colours. Interestingly, the same results are shown for RCTID, with the difference that the failure surfaces even with $\Delta \log^{-3}n$ colours. The underlying reason is that the neighbourhood of a large-degree vertex may exhaust its colour space, because the random-ID ordering places a constant fraction of its neighbours ahead of it.


The study of RCTDEG is motivated by exactly this failure. For this algorithm, it is shown that a constant number of rounds suffices if $\Delta^{1 - \delta_1 + o(1)}$ colours are allowed, but that the algorithm never terminates if instead there are $\Delta^{1 - \delta_2 - o(1)}$ colours. The constants $\delta_1$ and $\delta_2$ given in~\cite{mausruff2026distributed} match if $\alpha \le 3/4$, otherwise they don't match. This corresponds, up to lower order terms, to the cases $\Delta \ge n^{2/3}$ and $\Delta<n^{2/3}$, respectively. In any case, the number of colours necessary for the algorithm to succeed is at least $n^{1/6}\chi$.\footnote{In particular their algorithm requires $n^{1/(2 \alpha + 1/2)}$ colours if $\alpha < 3/4$ and $n^{1/(4\alpha - 1)}$ colours otherwise. The deviation from $\chi =\Theta(n^{1-\alpha})$ is minimized when $\alpha$ approaches $1/2$, in which case $\chi \approx n^{1/2}$ and their palette has size at least $n^{2/3} = n^{1/2 + 1/6}$.} For a larger number of colours, they show that even 2 rounds of the CONGEST protocol suffice.


\paragraph{Colouring with approximately \texorpdfstring{$\chi$}{χ} colours.}
The number of colours used to colour HRG in~\cite{mausruff2026distributed} is smaller than $\Delta$, but still a polynomial factor larger than $\chi$. Our algorithm can colour HRGs with $O(\chi n^{\eps})$ colours in constantly many rounds in the CONGEST model, for any $\eps > 0$.

The latter is already known to be possible for general graphs, but only in the (randomized) LOCAL model~\cite{barenboim2018fast}. Again for general graphs and constant $c, \chi$, the round complexity of $c$-colouring a $\chi$-colourable graph is $\Tilde{\Theta}(n^{1/\rho})$,\footnote{The tilde in our notation hides polylogarithmic factors \emph{in the argument}. So for example $\Tilde{O}(\log n)$ would mean $O(\log n (\log \log n)^c)$ for some constant $c$.} where $\rho = \lfloor \frac{c - 1}{\chi - 1} \rfloor$, as shown in~\cite{coiteux2024no}. In contrast, our algorithm only requires $O((\log \log n)^2)$ rounds to colour HRGs with $O(\chi)$ colours. An approximation ratio of $O(\frac{\log n}{\log \log n})$ can be guaranteed in polylogarithmically many rounds for general graphs using low diameter decompositions~\cite{linial1993low, barenboim2018fast, balliu2021improved}. The above general graph results only work in the LOCAL model, whereas our algorithm works in the CONGEST model. Moreover, our algorithm uses efficient local computation, rather than relying on the unbounded local computation permitted by the LOCAL and CONGEST models.

Results regarding colouring with roughly $\chi$ colours in the CONGEST model for other specific graph classes exist. In~\cite{halldorsson2014distributed}, the authors show how to colour interval graphs with $O(\chi)$ colours in an optimal $O(\log^* n)$ rounds.\footnote{The number of times one needs to iteratively take the logarithm of $n$ until we reach a number at most $1$ is $\log^*n$.} Later, this was improved to $(1 + \eps)\chi$ colours~\cite{halldorsson2020improved}, with an optimal dependency on $1/\eps$. These two algorithms only work in the CONGEST model if the vertices are aware of their interval representations.

Coming back to the LOCAL model, a similar $(1+\eps)$-approximation in $O(\eps^{-1} \log n)$ rounds was later achieved for chordal graphs~\cite{konrad2022distributed}. Related results for (re)colouring of interval and chordal graphs are derived in~\cite{bousquet2021distributed}. Another special case that has been studied is the one of unit disk graphs.\footnote{Unit disk graphs are graphs where each vertex is a point in Euclidean space and there is an edge if two vertices are within unit distance of each other. The results are stated in terms of the clique number \(\omega\), which is within a constant factor of \(\chi\) for unit disk graphs.}
There exists a location-oblivious distributed 3-approximation in an older distributed model~\cite{couture2007location}. Later, a constant round $4 \omega$-colouring algorithm was given in the location-aware setting, along with a different $(5.68 \omega + 1)$-colouring algorithm that works in $O(\log^* n)$ rounds which does not require location knowledge~\cite{esperet2023distributed}.

There exists a deterministic LOCAL algorithm that colours graphs of arboricity \(t\) with \(O(t^{1+\varepsilon})\) colours in \(O(\log t \log n)\) rounds~\cite{barenboim2010deterministic}. Since arboricity is within a factor two of degeneracy for every graph, and HRGs satisfy \(\kappa=\Theta(\chi)\) w.e.h.p.~\cite{baguley2025hrg}, this implies an \(O(\log^2 n)\)-round algorithm that colours an HRG with \(O(\chi^{1+\varepsilon})\) colours.

\paragraph{Colouring with \texorpdfstring{$\Delta$}{Δ} or \texorpdfstring{$\Delta + 1$}{Δ + 1} colours.} 
There have been many works studying the problems of distributed colouring with $\Delta$ (the maximum degree) or $\Delta + 1$ colours, with the focus being mainly on the latter. A $(\Delta + 1)$-colouring always exists and a $\Delta$-colouring exists for all graphs except cliques and odd cycles by Brooks' theorem. The current state of the art for deterministic LOCAL $\Delta$-colouring is a $\Tilde{O}(\log^{5/3} n)$-round algorithm employing a reduction to Maximal Independent Set~\cite{bourreau2026faster}. For constant $\Delta$, the bound becomes a tight $\Theta(\log n)$. For the randomized LOCAL model, the former paper shows that $\Tilde{O}((\log \log n)^{5/3})$ rounds suffice~\cite{bourreau2026faster}. Remarkably, the current bounds coincide with those for $(\Delta + 1)$-colouring, which had been established in earlier work~\cite{ghaffari2024near}. 

For the case of the CONGEST model, the state of the art for \((\Delta+1)\)-colouring without assumptions on \(\Delta\) is an \(O(\log^3 n)\)-round deterministic algorithm~\cite{ghaffari2022deterministic} and an \(O((\log\log n)^3)\)-round randomized algorithm~\cite{flin2023coloring}. For $\Delta$-colouring, there exists a randomized CONGEST algorithm~\cite{halldorsson2024distributed} which uses a number of rounds polynomial in $\log \log n$.

\paragraph{Colouring with \texorpdfstring{$\Delta - \sqrt{\Delta}$}{Δ - √Δ} colours.} 
An interesting body of works concerns itself with a somewhat intermediate regime between the previous two. Let $k_{\Delta}$ be the largest integer $k$ such that $(k + 1)(k + 2) \le \Delta$ and assume for the following that $\Delta$ is at least a large enough constant. Outside of distributed computing, deciding whether a graph $G$ is $c$-colourable is tractable (even linear time algorithms exist) if $c \ge \Delta - k_{\Delta}$ but NP-hard otherwise~\cite{molloy2014colouring}. Recently~\cite{bamas2018distributed, flin2026sublogarithmic}, a similar characterisation has been obtained for the round complexity of the distributed version of this problem. If $c \ge \Delta - k_\Delta + 1$,\footnote{Note that this threshold is by 1 larger than for NP-hardness.} then a randomized LOCAL algorithm can solve it in $\Tilde{O}((\log \log n)^4)$ rounds. When $\Delta \ge \log^{50}n$, the bound is improved to $O(\log^* n)$. On the other hand, for smaller $c$, at least $\Omega(n/\Delta)$ rounds are required.

\section{Preliminaries}\label{sec:preliminaries}
\paragraph{Notation and terminology.} 
The hyperbolic disk of radius $R$ is denoted by $\mathcal D_R$. We write $G \sim \mathcal{G}(n, \alpha, C)$ for a hyperbolic random graph $G$ sampled with parameters $n, \alpha$ and $C$. We use $B_x(r)$ to denote the ball of (hyperbolic) radius $r$ centred at point $x$. By $\mu(A)$ we denote the probability mass of a geometric region $A$ (almost always a ball or combination thereof) under the normalised vertex-position distribution. Thus the number of vertices in $A$ is Poisson with mean $n\mu(A)$. We use $d(r,s,\varphi)$ to denote the hyperbolic distance of two points at radial coordinates $r$ and $s$ which are separated by an angle of $\varphi$. We also define $\theta_{R}(r, s) :=\max\{\varphi\in[0,\pi]:d(r,s,\varphi)\le R\}$, i.e.\ the maximum angle at which two such vertices connect. We write \(\mathcal C:=V(G)\cap B_0(R/2)\) for the core clique and \(\sigma=\sigma(G):=|\mathcal C|\) for its size. Further, we write $\chi=\chi(G)$ for the chromatic number, and $\kappa=\kappa(G)$ for the degeneracy.

We use \emph{asymptotically almost surely} (a.a.s.) to denote probability $1-o(1)$, \emph{with high probability} (w.h.p.) for probability $1-O(1/n)$, and \emph{with extremely high probability} (w.e.h.p.) for probability $1-n^{-\omega(1)}$. Throughout the paper, in conditional probabilities $\Pr[\mathcal A | \mathcal B]$ we omit stating the condition $\Pr[\mathcal B] > 0$ explicitly. Further, whenever we take a union bound over all vertices, we do so on the event \(|V(G)|\le 2n\), which holds with probability \(1-e^{-\Omega(n)}\) since \(|V(G)|\sim\operatorname{Po}(n)\).

We now collect several auxiliary estimates used in the analysis, some of which are proved in existing works and some of which are proved in our appendix. We begin with the measure estimate for centred balls. This will be used throughout to estimate the number of vertices in radial regions.

\begin{lemma}[Measure of a centred ball {\cite[Equation~(2)]{baguley2025hrg}}]
\label{lem:centered-ball-measure}
Let $G\sim \mathcal G(n,\alpha,C)$ be an HRG with $\alpha\in(1/2,1)$. For every $0\le r\le R$, the measure of the centred ball $B_0(r)$ is
\begin{align*}
\mu(B_0(r))
=
\frac{\cosh(\alpha r)-1}{\cosh(\alpha R)-1}.
\end{align*}

In particular, if $r=r(n)=\omega(1)$, then $\mu(B_0(r))=(1+o(1))e^{-\alpha(R-r)}$. If instead $r=O(1)$, then $\mu(B_0(r))=O(e^{-\alpha (R-r)})$.
\end{lemma}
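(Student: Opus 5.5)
The plan is to integrate the density directly. Since the number of vertices in a region $A$ is Poisson with mean $n\mu(A)$, we read $\mu$ as the normalised density
\begin{align*}
g(r,\theta)=\frac{1}{2\pi}\,\frac{\alpha\sinh(\alpha r)}{\cosh(\alpha R)-1}
\end{align*}
on $[0,R)\times[0,2\pi)$. The intensity $f$ is $n\,g$, because $e^{(R-C)/2}=e^{\log n}=n$ for $R=2\log n+C$. As a sanity check, $g$ integrates to $1$ over the whole disk, since $\int_0^R \alpha\sinh(\alpha r)\,dr=\cosh(\alpha R)-1$; this matches the fact that the expected number of points is exactly $n$.

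For the exact formula, integrate over the ball $B_0(r)=\{(s,\theta): s\le r\}$. The angular integral contributes a factor $2\pi$, which cancels the $\frac{1}{2\pi}$. The radial integral is
\begin{align*}
\int_0^r \alpha\sinh(\alpha s)\,ds=\cosh(\alpha r)-1.
\end{align*}
Dividing by the normaliser $\cosh(\alpha R)-1$ gives $\mu(B_0(r))=\frac{\cosh(\alpha r)-1}{\cosh(\alpha R)-1}$, as claimed.

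For the asymptotics, note that $R\to\infty$, so $\cosh(\alpha R)-1=\tfrac12 e^{\alpha R}(1+o(1))$. If $r=\omega(1)$, then likewise $\cosh(\alpha r)-1=\tfrac12 e^{\alpha r}(1+o(1))$, and the ratio is $(1+o(1))e^{-\alpha(R-r)}$. If instead $r=O(1)$, the numerator satisfies $\cosh(\alpha r)-1\le \cosh(\alpha r)\le e^{\alpha r}$, so the ratio is at most $(2+o(1))e^{\alpha r}e^{-\alpha R}=O(e^{-\alpha(R-r)})$.

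No step is a real obstacle. The only point needing care is interpreting $\mu$ as the normalised density, i.e.\ the intensity divided by $n$. Note also that the second asymptotic claim is only an upper bound: for $r\to 0$ the numerator is $\Theta(r^2)$ rather than $\Theta(e^{\alpha r})$, so the bound need not be tight.
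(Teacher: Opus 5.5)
Your proof is correct and is the standard derivation: the paper gives no proof of its own and cites Equation~(2) of Baguley et al., which is obtained, as in your argument, by integrating the normalised radial density $\alpha\sinh(\alpha s)/(\cosh(\alpha R)-1)$ over $[0,r]$. Your asymptotic estimates are also right, and your upper bound in fact holds for every $r\in[0,R]$, not only for $r=O(1)$.
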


One of the properties of the Poisson point process is that the number of vertices in any given geometric area follows a Poisson distribution. In particular, if we fix the position of a vertex $u$, then its degree is also Poisson distributed, because it is given by the number of vertices that fall in the ball $B_u(R)$ of radius $R$ around $u$. The expectation depends on the radius of $u$. 



We denote this expectation by $\bar d(r) := \mathbb E[\deg(u)\mid r(u)=r]$. To estimate neighbourhoods restricted to radial regions, we use the angular connection threshold. For two vertices with radii $r$ and $s$ satisfying $r+s>R$, the largest angular separation that still gives an edge is $\theta_R(r, s)$, which by the hyperbolic law of cosines is $\theta_R(r, s)=\arccos\left((\cosh r\cosh s-\cosh R)/(\sinh r\sinh s)\right)$. If $r+s\le R$, then all angular positions are adjacent and we set $\theta_R(r, s):=\pi$.

\begin{lemma}[Angular threshold asymptotics {\cite[Lemma~6]{gugelmann2012random}}]
\label{lem:angular-threshold-asymptotics}
Let $G\sim \mathcal G(n,\alpha,C)$ be an HRG with $\alpha\in(1/2,1)$ and let $0\le r\le R$ and let $s> R-r$. Then, 
\begin{align*}
\theta_R(r, s)=2e^{(R-r-s)/2} \left(1+\Theta\left(e^{R-r-s}\right)\right).
\end{align*}
\end{lemma}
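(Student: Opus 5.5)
The plan is to start from the exact identity coming from the hyperbolic law of cosines and expand it with fully explicit error terms. The statement has two directions: the upper bound $|\text{error}| = O(e^{R-r-s})$, and the lower bound, which is the genuine difficulty. I expect the lower bound to hold only in the signed sense in which \cite[Lemma~6]{gugelmann2012random} uses $\Theta(\cdot)$, and I locate below exactly where this enters.

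\textbf{Exact identity.} Write $\theta=\theta_R(r,s)$ and assume without loss of generality $r\le s$. From $\cos\theta=(\cosh r\cosh s-\cosh R)/(\sinh r\sinh s)$ and $\cosh r\cosh s-\sinh r\sinh s=\cosh(s-r)$ we get
\begin{align*}
\sin^2(\theta/2)=\frac{1-\cos\theta}{2}=\frac{\cosh R-\cosh(s-r)}{2\sinh r\sinh s}.
\end{align*}
Factoring out the leading exponentials gives $\sin^2(\theta/2)=x^2$, where
\begin{align*}
x^2:=e^{R-r-s}A,\qquad A:=\frac{1+e^{-2R}-e^{s-r-R}-e^{r-s-R}}{(1-e^{-2r})(1-e^{-2s})}.
\end{align*}
Since $s>R-r$ we have $e^{R-r-s}<1$. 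The validity conditions $r+s>R$ and $s\le R$ also give the key comparisons
\begin{align*}
e^{-2s}\le e^{-2r}\le e^{R-r-s},\qquad e^{s-r-R}=e^{R-r-s}\,e^{2(s-R)}\le e^{R-r-s}.
\end{align*}

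\textbf{Upper bound.} I first show $A=1+O(e^{R-r-s})$. When $r$ is bounded away from $0$ this follows from the comparisons above by expanding numerator and denominator. When $r\to 0$ we have $s\to R$, and both numerator and denominator of $A$ are of order $r$; here I expand them jointly to first order in $r$, so that their ratio is $1+O(r)+O(e^{-2s})$, which is again $1+O(e^{R-r-s})$. Next, $\theta=2\arcsin x$ with $x\in[0,1]$, and $\arcsin x=x(1+x^2/6+O(x^4))$ for $x$ bounded away from $1$. It follows that
\begin{align*}
\theta=2e^{(R-r-s)/2}\bigl(1+g\,e^{R-r-s}\bigr),\qquad |g|=O(1).
\end{align*}
It remains to cover the region where $e^{R-r-s}$ is bounded below by a constant. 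There $\theta$ and $2e^{(R-r-s)/2}$ are both $\Theta(1)$, so the ratio of the two sides is $1+O(1)=1+O(e^{R-r-s})$ after adjusting constants.

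\textbf{Lower bound: the main obstacle.} For the matching lower bound I would compute $g$ explicitly. In the regime $e^{R-r-s}\to 0$,
\begin{align*}
g=\tfrac16-\tfrac12 e^{2(s-R)}+o(1),
\end{align*}
where $\tfrac16$ comes from the $x^2/6$ term of $\arcsin$, the $-\tfrac12 e^{2(s-R)}$ from $e^{s-r-R}$ in the numerator of $A$, and the $e^{-2r},e^{-2s}$ terms are lower order relative to $e^{R-r-s}$ in this regime. This shows that $|g|$ is bounded below by a positive constant whenever $e^{2(s-R)}$ stays away from $1/3$. However, $g$ can vanish to leading order near $s=R-\tfrac12\ln 3$. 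So a uniform bound $|g|\ge c>0$ is not available as a pointwise statement.

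I therefore read the $\Theta$ as the signed bounded error term of \cite[Lemma~6]{gugelmann2012random}: the correction equals $g\,e^{R-r-s}$ with $g$ bounded, and of exact order $e^{R-r-s}$ away from that degenerate curve. This is all that is ever used downstream. The step I expect to be delicate is making the explicit expansion of $g$ uniform, including the corner $r\to 0$, $s\to R$, where the numerator and denominator of $A$ must be expanded jointly.
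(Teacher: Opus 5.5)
The paper does not prove this lemma; it is imported from \cite[Lemma~6]{gugelmann2012random}, so there is no in-paper argument to compare against. Your route is a sound, self-contained substitute: the exact identity $\sin^2(\theta/2)=e^{R-r-s}A$, followed by the expansion of $\arcsin$. It delivers what the paper actually uses. That is the form $1+O(e^{R-r-s})$ (in the expected-degree asymptotics) and $\theta_R(r,s)=\Theta(e^{(R-r-s)/2})$ (upper bounds in the band lemmas, both directions in the boundary-degree lemma). You are also right that the two-sided $\Theta$ on the relative error is false as literally stated. For $r=R/2$, $s=R-\tfrac12\log 3$, an exact computation gives $\theta_R(r,s)=2e^{(R-r-s)/2}(1+O(e^{-R}))$, whereas $e^{R-r-s}=\sqrt3\,e^{-R/2}$.

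Two of your computations are wrong, although neither breaks the argument.

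First, your formula for $g$ drops the numerator term $e^{r-s-R}=e^{R-r-s}e^{2(r-R)}$, which is not negligible when $R-r=O(1)$. The correct expansion is
\[ g=\tfrac16-\tfrac12\bigl(e^{2(s-R)}+e^{2(r-R)}\bigr)+o(1)\qquad\text{as } e^{R-r-s}\to0. \]
For example, for $r=s=R$ one has exactly $\sin(\theta/2)=1/(2\cosh(R/2))$, so $g\to-\tfrac56$, not $-\tfrac13$. Hence the degenerate set is the curve $e^{2(s-R)}+e^{2(r-R)}=\tfrac13$; your point $s=R-\tfrac12\log 3$ is degenerate only if also $R-r\to\infty$. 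Since $g$ takes both signs, a one-sided reading of $\Theta$ fails as well.

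Second, the corner claim $A=1+O(r)$ is false: for $s=R$, $r\to0$ one gets $A\to\tfrac12$ (indeed $\theta\to\pi/2$). This is harmless. The conditions $r+s>R$ and $s\le R$ force $e^{R-r-s}\ge e^{-r}$, so the corner lies where $e^{R-r-s}$ is bounded below, and there only $A=\Theta(1)$ is needed. Conversely, $e^{-2r}\le e^{R-r-s}$ shows that $r$ is automatically bounded below once $e^{R-r-s}\le u_0$. So split by the size of $e^{R-r-s}$ from the outset, and the step you flag as delicate disappears.

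What you do still owe is the lower bound on $A$ behind your assertion $\theta=\Theta(1)$ in the bounded-below region. It follows from the factorisation $1+e^{-2R}-e^{s-r-R}-e^{r-s-R}=(1-e^{-\epsilon})(1-e^{\epsilon-2R})$ with $\epsilon:=R-s+r$, where $r\le\epsilon<2r$ and $\epsilon\le R$. This gives $\tfrac12(1-e^{-R})\le A\le(1-e^{-R})^{-1}$ uniformly, corner included.
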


We prove a uniform expected-degree asymptotic on a radial range containing all boundaries used in the decomposition, keeping an explicit $o(1)$ term. Together with the explicit $o(1)$ multiplicative degree concentration on this range (Lemma~\ref{lem:degree-concentration-above-cutoff}), this yields the desired additive deviation between the true and estimated radii
(Lemma~\ref{lem:radius-estimate-accuracy}).

\begin{lemmaE}[Uniform expected-degree asymptotic][category=degree]
\label{lem:uniform-expected-degree-asymptotic}
Let $G\sim \mathcal G(n,\alpha,C)$ be an HRG with $\alpha\in(1/2,1)$ and let $a,b>0$ be fixed constants. Uniformly for all
$r\in[R/2-a,\,R-4\log\log n+b]$,
\begin{align*}
\bar d(r)
=
\left(\frac{2\alpha}{\pi(\alpha-1/2)}\pm o(\rho_n)\right)ne^{-r/2},
\end{align*}
where $\rho_n:=\left(48/(K\log n)\right)^{1/4}$ for any constant $K>0$\footnote{We keep \(K\) explicit because the later degree cutoff is chosen as \(K\log^2 n\), and the radius-estimation error is calibrated to this choice with an explicit constant.}.
\end{lemmaE}
\begin{proofE}
Consider a vertex $u$ with radius $r(u)=r\in[R/2-a,\,R-4\log\log n+b]$. Recall that $\bar d(r)$ denotes the expected degree of a vertex at radius $r$. Then $\bar d(r) = n\mu(B_u(R)\cap \mathcal D_R)$.

By rotational symmetry, we may assume that $u$ has angular coordinate $0$. For $r>0$, vertices with radius $s\le R-r$ are adjacent to $u$ for all angular coordinates, while for $s>R-r$ the admissible angular interval has length $2\theta_R(r,s)$. Hence, we have
\begin{align*}
\bar d(r)
&=
n\int_0^{R-r}
\frac{\alpha\sinh(\alpha s)}{\cosh(\alpha R)-1}\,ds
+
n\int_{R-r}^{R}
\frac{2\theta_R(r,s)}{2\pi}
\frac{\alpha\sinh(\alpha s)}{\cosh(\alpha R)-1}\,ds\\
&=
\frac{n}{\cosh(\alpha R)-1}
\left(
\cosh(\alpha(R-r))-1
+
\frac{\alpha}{\pi}
\int_{R-r}^{R}
\theta_R(r,s)\sinh(\alpha s)\,ds
\right).
\end{align*}

Recall that $\beta:=\alpha-\frac12>0$. We first show that the first term is negligible. Using $\cosh x=(e^x+e^{-x})/2$, we have $\cosh(\alpha R)-1=\Theta(e^{\alpha R})$ and $\cosh(\alpha(R-r))-1=O(e^{\alpha(R-r)})$. Hence
\begin{align*}
\frac{n(\cosh(\alpha(R-r))-1)}{\cosh(\alpha R)-1}
=
O(ne^{-\alpha r}) 
=
o(\rho_n ne^{-r/2}),
\end{align*}
where we used that $\alpha>1/2$, and $\rho_n^{-1}$ is polylogarithmic.

It remains to estimate the integral term. Let $m_n:=\lceil 3\log(1/\rho_n)\rceil$. We first remove the endpoint interval $[R-r,R-r+m_n]$. Since $\theta_R(r,s)\le \pi$ and $\sinh(\alpha s)=O(e^{\alpha s})$, its contribution to $\bar d(r)$ is at most
\begin{align*}
O\left(
n e^{-\alpha R}
\int_{R-r}^{R-r+m_n} e^{\alpha s}\,ds
\right)
&=
O\left(ne^{-\alpha r+\alpha m_n}\right)
=
o(\rho_n ne^{-r/2}),
\end{align*}
where we used that $\alpha>1/2$ and that $e^{\alpha m_n}$ is only polylogarithmic in $n$. Thus the endpoint interval contributes $o(\rho_n ne^{-r/2})$.

On the remaining interval $s\in[R-r+m_n,R]$, Lemma~\ref{lem:angular-threshold-asymptotics} gives
\begin{align*}
\theta_R(r,s)
=
2e^{(R-r-s)/2}\left(1+O(e^{R-r-s})\right)
=
2e^{(R-r-s)/2}(1+o(\rho_n)),
\end{align*} 
since $e^{R-r-s}\le e^{-m_n}=o(\rho_n)$. Further, since $\sinh(\alpha s)=\frac12e^{\alpha s}(1-e^{-2\alpha s})$ and $e^{-2\alpha s}=o(\rho_n)$ uniformly on this interval, we have $\sinh(\alpha s)=\frac12e^{\alpha s}(1-o(\rho_n))$.

Therefore,
\begin{align*}
\int_{R-r+m_n}^{R}
\theta_R(r,s)\sinh(\alpha s)\,ds
&=
(1+o(\rho_n))(1-o(\rho_n))
e^{(R-r)/2}
\int_{R-r+m_n}^{R} e^{\beta s}\,ds \\
&=
(1\pm o(\rho_n))
\frac{1}{\beta}
e^{(R-r)/2}e^{\beta R}.
\end{align*}

Here the last step used that $e^{\beta(R-r+m_n)}=o(\rho_n e^{\beta R})$. Finally, since 
\begin{align*}
\cosh(\alpha R)-1=\frac12 e^{\alpha R}
\left(1-2e^{-\alpha R}+e^{-2\alpha R}\right)
=\frac12 e^{\alpha R}(1-o(\rho_n)),
\end{align*}
the main integral contribution is, 
\begin{align*}
\frac{n}{\cosh(\alpha R)-1}
\frac{\alpha}{\pi}
\int_{R-r+m_n}^{R}
\theta_R(r,s)\sinh(\alpha s)\,ds
&=
\left(\frac{2\alpha}{\pi\beta}\pm o(\rho_n)\right)
ne^{-r/2}.
\end{align*}

The omitted endpoint interval and the first term are both $o(\rho_n ne^{-r/2})$. Since $\beta=\alpha-\frac12$, this proves the desired claim uniformly for all $r\in[R/2-a,\,R-4\log\log n+b]$.
\end{proofE}

The next lemma shows that $\bar d$ can be inverted uniquely on $(0,R]$.

\begin{lemmaE}[Expected degree monotonicity][category=degree]
\label{lem:expected-degree-monotone}
Let $G\sim \mathcal G(n,\alpha,C)$ be an HRG with $\alpha\in(1/2,1)$. The function $\bar d$ is strictly decreasing on $(0,R]$.
\end{lemmaE}
\begin{proofE}
Let $u$ be a vertex with radial coordinate $r \in (0,R]$. By rotational symmetry, we may assume that its angular coordinate is $0$. We denote the radial density by
\(\displaystyle
f(s) \coloneqq
\frac{\alpha \sinh(\alpha s)}{\cosh(\alpha R)-1}
\).
For a point of radius $s \in [0,R]$, let $p_R(r,s)$ be the probability that its angular coordinate lies in the connection interval of $u$. Then
\begin{align*}
p_R(r,s)
=
\begin{cases}
1, & s\le R-r,\\[1mm]
\displaystyle
\frac{1}{\pi}
\arccos\left(
\frac{\cosh r\cosh s-\cosh R}{\sinh r\sinh s}
\right),
& s>R-r.
\end{cases}
\end{align*}

We first show that, for every fixed $s \in [0,R]$, the function $r\mapsto p_R(r,s)$ is non-increasing. The claim is immediate on the part of the domain where $r+s\le R$. On the part where $r+s>R$, necessarily $s>0$, and we define 
\begin{align*}
q(r,s)
:=
\frac{\cosh r\cosh s-\cosh R}{\sinh r\sinh s}.
\end{align*}

A direct calculation gives
\begin{align*}
\frac{\partial q}{\partial r}
=
\frac{\cosh R\cosh r-\cosh s}{\sinh^2 r\,\sinh s}.
\end{align*}

Here the denominator is positive. Moreover, since $s\le R$ and $r>0$, $\cosh R\cosh r > \cosh R \ge \cosh s,$ which yields $\partial q/\partial r>0$. In the regime $r+s>R$, we also have $|r-s|<R<r+s$, and hence $-1<q(r,s)<1$. Since $\arccos$ is strictly decreasing on $(-1,1)$, it follows that $p_R(r,s)$ is strictly decreasing in $r$ whenever $r+s>R$.

Now let $0<r_1<r_2\le R$. By the preceding paragraph, $p_R(r_2,s)\le p_R(r_1,s)$ for all $s\in[0,R]$. Moreover, for every $s\in(R-r_2,R)$, the strict inequality holds. This interval has positive measure, and $f(s)>0$ on $(0,R)$. Therefore
\begin{align*}
\bar d(r_2)
=
n\int_0^R p_R(r_2,s)f(s)\,ds 
<
n\int_0^R p_R(r_1,s)f(s)\,ds
=
\bar d(r_1).
\end{align*}

Thus $\bar d$ is strictly decreasing on $(0,R]$.
\end{proofE}

If a vertex is placed at the origin, then its ball of radius $R$ coincides with the whole hyperbolic disk $\mathcal D_R$. Hence its expected degree is the expected number of other vertices in the disc, which is $n$ in the Poissonized model. We therefore set $\bar d(0)=n$. Together with Lemma~\ref{lem:expected-degree-monotone}, this gives a continuous strictly decreasing extension of $\bar d$ to $[0,R]$, whose image is $\bar d([0,R])=[\bar d(R),n]$. It remains to determine the lower endpoint of this image.

\begin{lemmaE}[Boundary expected degree][category=degree]
\label{lem:boundary-expected-degree}
Let $G\sim\mathcal G(n,\alpha,C)$ be an HRG with $\alpha\in(1/2,1)$. We have $\bar d(R)=\Theta(1)$.
\end{lemmaE}

\begin{proofE}
Let $u$ be a vertex at radius $R$. Since there is no positive radial range on which all angular coordinates are adjacent to $u$, the decomposition of the expected degree gives
\begin{align*}
\bar d(R)
=
\frac{n}{\cosh(\alpha R)-1}
\frac{\alpha}{\pi}
\int_0^R \theta_R(R,s)\sinh(\alpha s)\,ds.
\end{align*}

The part of the integral over $s\in[0,1]$ contributes at most
$O(ne^{-\alpha R})=o(1)$. On $s\in[1,R]$, the angular-threshold bounds from
Lemma~\ref{lem:angular-threshold-asymptotics} give
$\theta_R(R,s)=\Theta(e^{-s/2})$. Hence
\begin{align*}
\bar d(R)
=
\Theta\left(
ne^{-\alpha R}
\int_1^R e^{-s/2}e^{\alpha s}\,ds
\right)
=
\Theta\left(
ne^{-\alpha R}
\int_1^R e^{(\alpha-1/2)s}\,ds
\right)
=
\Theta(ne^{-R/2}).
\end{align*}

Since $R=2\log n+C$, we have $ne^{-R/2}=e^{-C/2}$, and therefore
$\bar d(R)=\Theta(1)$.
\end{proofE}

We write $\deg^+(u):=|\{v\in N(u):\deg(v)\ge \deg(u)\}|$ for the number of
neighbours of $u$ whose degree is at least $\deg(u)$.

\begin{lemma}[Larger degree neighbourhood {\cite[Lemma~13]{mausruff2026distributed}}]
\label{lem:larger-degree-neighbourhood}
Let $G\sim \mathcal G(n,\alpha,C)$ be an HRG with $\alpha\in(1/2,1)$, and let $u$ be a vertex with $r(u)\in (R-\ell-1, R-\ell]$. Then, w.e.h.p.,
\begin{align*}
\deg^+(u)
\in
\begin{cases}
O(e^{\ell/2}+\log n),
& 0\le \ell\le \left\lfloor \frac{2}{1-\alpha}\log\log n\right\rfloor,\\
O(e^{\ell(1-\alpha)}),
& \left\lceil \frac{2}{1-\alpha}\log\log n\right\rceil
\le \ell\le \left\lceil R/2\right\rceil,\\
O(ne^{-\alpha\ell}),
& \left\lceil R/2\right\rceil
\le \ell\le
\left\lceil \alpha^{-1}(\log n-2\log\log n)\right\rceil.
\end{cases}
\end{align*}
\end{lemma}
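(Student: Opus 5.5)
To prove the three-regime bound on $\deg^+(u)$, I would split the higher-degree neighbours of $u$ by radius. Write $r=r(u)\in(R-\ell-1,R-\ell]$. A neighbour $v$ with $\deg(v)\ge\deg(u)$ should typically have radius at most about $r$, up to an additive error coming from degree fluctuations. So I would bound
\[
\deg^+(u)\le |N(u)\cap B_0(r+\Delta)| + |\{v\in N(u): r(v)>r+\Delta,\ \deg(v)\ge \deg(u)\}|,
\]
with a buffer $\Delta$ chosen per regime. The first term is a Poisson variable, because $N(u)\cap B_0(r+\Delta)$ is the set of points in a fixed region once $u$ is placed. The second term should vanish w.e.h.p., provided degrees concentrate enough that a vertex at radius larger than $r+\Delta$ cannot reach degree $\deg(u)$.

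\textbf{Expected size of the main region.} By Lemma~\ref{lem:angular-threshold-asymptotics} and Lemma~\ref{lem:centered-ball-measure}, I would compute
\[
n\mu\bigl(B_u(R)\cap B_0(r+\Delta)\bigr)\approx n\,\mu(B_0(R-r))+ n\!\int_{R-r}^{r+\Delta}\tfrac{1}{\pi}e^{(R-r-s)/2}\,\alpha e^{\alpha(s-R)}\,ds .
\]
The first term is $\Theta(ne^{-\alpha r})$. The integral scales as $ne^{-\alpha R}e^{(R-r)/2}e^{\beta(r+\Delta)}$ with $\beta=\alpha-\tfrac12$.

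\textbf{The three regimes.} Substituting $r=R-\ell$ and $n=e^{(R-C)/2}$ gives the following.
\begin{itemize}
\item \emph{Large $\ell$ (third case).} The centred ball dominates, giving $\Theta(ne^{-\alpha(R-\ell)})$. Here I may need to treat $r<R/2$ separately: then $B_0(R-r)$ is essentially the whole relevant region and the count is $O(ne^{-\alpha\ell})$ after rewriting.
\item \emph{Middle range.} The integral gives $e^{\ell/2}e^{\beta(R-\ell)}e^{-\alpha R}n$, which simplifies to $\Theta(e^{\ell(1-\alpha)})$.
\item \emph{Small $\ell$.} Degrees there are only polylogarithmic.
\end{itemize}
In all cases a Poisson tail bound gives w.e.h.p.\ concentration once the mean is at least $\log^2 n$; otherwise we get an $O(\log n)$ additive bound. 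This is the source of the $+\log n$ term in the first regime and of the cutoff $\tfrac{2}{1-\alpha}\log\log n$, which is where $e^{\ell(1-\alpha)}$ reaches $\log^2 n$.

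\textbf{Separating degree from radius.} I would use that $\deg(v)\sim\mathrm{Po}(\bar d(r(v)))$, with $\bar d(s)=\Theta(ne^{-s/2})$ from Lemma~\ref{lem:uniform-expected-degree-asymptotic} and Lemma~\ref{lem:expected-degree-monotone}. Where $\bar d(r)\ge\log^2 n$, Chernoff bounds show that all degrees are within factor $1\pm o(1)$ of their means w.e.h.p. A constant buffer $\Delta$ then suffices, costing only a constant factor $e^{\beta\Delta}$. In the low-degree regime (small $\ell$), vertices far outside $r$ can have larger degree than $u$, so this argument breaks down.

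\textbf{Main obstacle: the small-$\ell$ regime.} The difficulty is there, and I would handle it in two steps.
\begin{itemize}
\item Take $\Delta$ large enough, of order $\log\log n$, so that $\bar d(r+\Delta)$ is much smaller than $\deg(u)$. The main region then contributes $O(e^{\ell/2}e^{\beta\Delta})$, which may need care to stay within $O(e^{\ell/2}+\log n)$.
\item Bound the remaining neighbours differently: the number of neighbours of $u$ whose degree exceeds a $\log n$-scale threshold is $O(\log n)$ w.e.h.p. This uses a Poisson tail for each far vertex together with the fact that $u$'s neighbourhood is small.
\end{itemize}

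I expect the bookkeeping of this mixed regime to be the delicate part. The other two regimes are direct measure computations plus Chernoff and union bounds over the at most $2n$ vertices.
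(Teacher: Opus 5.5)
The paper does not prove this lemma; it imports it from Maus and Ruff. So there is no in-paper argument to compare with, and I assess your plan on its own.

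\textbf{Second and third regimes.} Your scheme works here. In these regimes $\bar d(r(u))=\omega(\log n)$, so $\deg(u)\ge(1-\delta)\bar d(r(u))$ w.e.h.p. A constant buffer $\Delta$ gives $\bar d(r+\Delta)\le c\,\bar d(r)$ with $c<1$, so w.e.h.p.\ no vertex beyond $r+\Delta$ outranks $u$. The Poisson count $|N(u)\cap B_0(r+\Delta)|$ has mean $\Theta(e^{(1-\alpha)\ell})$, respectively $O(ne^{-\alpha\ell})$, and both are at least of order $\log^2 n$. There are two slips:
\begin{itemize}
\item In the third regime the term $\Theta(ne^{-\alpha(R-\ell)})=n\mu(B_0(R-r))$ should not appear at all. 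For $r\le R/2$ you simply bound the count by $|V(G)\cap B_0(r+\Delta)|$, whose mean is $\Theta(ne^{-\alpha(\ell-\Delta)})$.
\item Lemma~\ref{lem:uniform-expected-degree-asymptotic} is only stated for radii at least $R/2-a$. The ratio bound $\bar d(r+\Delta)\le c\,\bar d(r)$ for $r$ down to about $(2-1/\alpha)\log n$ must be rederived; the same computation works.
\end{itemize}

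\textbf{First regime: the genuine gap.} You leave this regime open and misdiagnose its difficulty.

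You never use $\deg^+(u)\le\deg(u)$, where $\deg(u)$ is Poisson with mean $\Theta(e^{\ell/2})$. This already settles the first regime w.e.h.p.\ in two cases: when $e^{\ell/2}\ge(\log n)^{1+\eps}$ (constant-factor concentration), and when $e^{\ell/2}\le(\log n)^{1-\eps}$ (then $\Pr[\deg(u)\ge K\log n]\le n^{-\Omega(K\eps\log\log n)}$).

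Your first bullet is also wrong in two ways. First, $|N(u)\cap B_0(r+\Delta)|$ has mean $\Theta(\min\{e^{(1-\alpha)\ell+\beta\Delta},e^{\ell/2}\})$, never $e^{\ell/2}e^{\beta\Delta}$; the latter would overshoot by a polylogarithmic factor. Second, choosing $\Delta$ so that $\bar d(r+\Delta)\ll\deg(u)$ presupposes a typical $\deg(u)$, which you cannot assume w.e.h.p.\ here.

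The only delicate window is $e^{\ell/2}\in[(\log n)^{1-\eps},(\log n)^{1+\eps}]$. There $\Pr[\deg(u)\ge K\log n]$ is only $n^{-O(1)}$, so your claim that a mean below $\log^2 n$ still gives an ``$O(\log n)$ additive bound'' w.e.h.p.\ is false.

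Your second bullet is the right idea for this window, but it needs an explicit case split on $\deg(u)$:
\begin{itemize}
\item If $\deg(u)<K\log n$, you are done.
\item Otherwise, every neighbour counted by $\deg^+(u)$ has degree at least $K\log n$. By a union bound over all vertices, w.e.h.p.\ no vertex with $\bar d(r(v))\le(\log n)^{1-\eps}$ has such a degree.
\item The neighbours of $u$ with $\bar d(r(v))>(\log n)^{1-\eps}$ are those with $r(v)\le R-2(1-\eps)\log\log n+O(1)$. They form a Poisson count of mean $O(e^{\ell/2}(\log n)^{-2\beta(1-\eps)})=(\log n)^{1-\Omega(1)}$, for $\eps$ small in terms of $\beta$. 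This count is below $K\log n$ w.e.h.p.
\end{itemize}

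For the paper's only use of the lemma, in Lemma~\ref{lem:small-higher-priority-degree}, any polylogarithmic bound in this regime suffices, so there $\deg^+(u)\le\deg(u)$ alone would do.
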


Recall that the previous degree of a vertex is the number of neighbours that lie in the already processed inner regions. Since the decomposition is radial, this quantity is controlled by the number of neighbours in centred balls. We first record the monotonicity of the corresponding expectation.

\begin{lemmaE}[Internal degree in centred balls][category=degree]
\label{lem:internal-ball-degree-monotone}
Let $G\sim \mathcal G(n,\alpha,C)$ be an HRG with $\alpha\in(1/2,1)$ and let $q\in(0,R]$. For $r\in[0,q]$ define $D_q(r):=\mathbb E[|N(u)\cap B_0(q)|\mid r(u)=r]$. Then $D_q(r)$ is non-increasing in $r$. 
\end{lemmaE}
\begin{proofE}
By rotational symmetry, assume that $u$ has angular coordinate $0$. Let $f(s)$ be the radial density, and let $p_R(r,s)$ be the probability that a vertex of radius $s$ is adjacent to $u$. Then
\begin{align*}
D_q(r)=n\int_0^q p_R(r,s)f(s)\,ds.
\end{align*}

As shown in the proof of Lemma~\ref{lem:expected-degree-monotone}, for every fixed $s\in[0,q]$ the function $r\mapsto p_R(r,s)$ is non-increasing. Hence, if $0\le r_1<r_2\le q$, then $p_R(r_2,s)\le p_R(r_1,s)$ for every $s\in[0,q]$. Integrating against the non-negative density $f(s)$ gives $D_q(r_2)\le D_q(r_1)$, as claimed.
\end{proofE}

This monotonicity is used in Lemma~\ref{lem:centred-ball-degeneracy-lower-bound} and, through that lemma, in Corollary~\ref{cor:previous-degree-degeneracy-refined}, where we bound the previous degree in the refined bands by $(1+o(1))\kappa$.

\begin{lemmaE}[Centred-ball lower bound on degeneracy][category=degree]
\label{lem:centred-ball-degeneracy-lower-bound}
Let $G\sim \mathcal G(n,\alpha,C)$ be an HRG with $\alpha\in(1/2,1)$ and let $q\geq R/2$. Suppose that $D_q(q)=\omega(\log n)$  (where $D_q(q)$ is as in~\Cref{lem:internal-ball-degree-monotone}) and consider the minimum degree $\delta(G[V(G) \cap B_0(q)])$ of the subgraph induced by $V(G) \cap B_0(q)$. Then, with high
probability,
\begin{align*}
\delta\left(G[V(G)\cap B_0(q)]\right)
\ge
(1-o(1))D_q(q).
\end{align*}

Consequently, $\kappa\ge (1-o(1))D_q(q)$ w.h.p.
\end{lemmaE}

\begin{proofE}
Let $G_q:=G[V(G)\cap B_0(q)]$. Choose $\xi_n=o(1)$ such that
$\xi_n^2D_q(q)\ge 4\log n$, which is possible since
$D_q(q)=\omega(\log n)$.

We use the following Poisson Chernoff bound \cite[Theorem~5.4]{mitzenmacher2005probabilit}: if $X$ is Poisson with mean $\mu$ and $x<\mu$, then $\Pr(X\le x)\le e^{-\mu}(e\mu)^x/x^x$. Conditional on the position of a vertex $u$ with $r(u)=r\le q$, the random variable $|N(u)\cap B_0(q)|$ is Poisson with mean $D_q(r)$. By Lemma~\ref{lem:internal-ball-degree-monotone}, $D_q(r)\ge D_q(q)$ for every $r\le q$. Therefore,
\begin{align*}
\Pr\left(
|N(u)\cap B_0(q)|<(1-\xi_n)D_q(q)\mid r(u)=r
\right)
&\le
\Pr\left(
|N(u)\cap B_0(q)|\leq(1-\xi_n)D_q(r)\mid r(u)=r
\right) \\
&\le
\exp\left(
-D_q(r)\left(\xi_n+(1-\xi_n)\log(1-\xi_n)\right)
\right) \\
&\le
\exp\left(-\frac{\xi_n^2D_q(r)}{2}\right)
\le
\exp\left(-\frac{\xi_n^2D_q(q)}{2}\right)
\le
n^{-2},
\end{align*}
where we used $\xi+(1-\xi)\log(1-\xi)\ge \frac{\xi^2}{2}$ for $\xi\in(0,1)$. On the standard event $|V(G)|=O(n)$, a union bound over all vertices gives that the probability that some vertex in $V(G)\cap B_0(q)$ has fewer than $(1-\xi_n)D_q(q)$ neighbours inside $B_0(q)$ is at most $O(n)\cdot n^{-2}=o(1)$. Hence w.h.p. no such vertex exists, and therefore $\delta(G_q)\ge(1-\xi_n)D_q(q)=(1-o(1))D_q(q)$.

Finally, degeneracy is the maximum minimum degree over all induced subgraphs. Since $G_q$ is an induced subgraph of $G$, we have
$\kappa
\ge
\delta(G_q)
\ge
(1-o(1))D_q(q)$, which proves the claim.
\end{proofE}

We use the core size $\sigma$ as a common reference scale for comparing degeneracy and chromatic number. The next two estimates imply in particular that w.e.h.p. $\kappa$ is only a constant factor larger than $\chi$.

\begin{lemma}[Bounds on the degeneracy {\cite[Theorem~9]{baguley2025hrg}}]
\label{lem:degeneracy-bounds}
Let $G\sim \mathcal G(n,\alpha,C)$ be an HRG with $\alpha\in(1/2,1)$. Then w.e.h.p. its degeneracy $\kappa$ satisfies
\begin{align*}
\left(
\frac{4-o(1)}{\pi}
\left(
\frac{2(1-\alpha)}
{\pi/2-\alpha(\pi-2)}
\right)^{\frac{1-\alpha}{2\alpha-1}}
\right)
\sigma
\le
\kappa
\le
\left(
\left(\frac{4}{3}\right)^\alpha+o(1)
\right)
\sigma.
\end{align*}
\end{lemma}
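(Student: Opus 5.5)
The statement is cited from \cite[Theorem~9]{baguley2025hrg}. The plan is to prove the two inequalities separately, each by comparing $\kappa$ with a deterministic scale. Both sides are normalised by the core size $\sigma$, which is Poisson with mean $n\mu(B_0(R/2)) = (1+o(1))ne^{-\alpha R/2} = \Theta(n^{1-\alpha})$ by \Cref{lem:centered-ball-measure}. Hence $\sigma$ is concentrated to within a factor $1\pm o(1)$ w.e.h.p., and it suffices to compare $\kappa$ with $ne^{-\alpha R/2}$.

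\textbf{Lower bound.} I would apply \Cref{lem:centred-ball-degeneracy-lower-bound} with a well-chosen radius $q = R/2 + x$, where $x$ is a constant to be optimised. The first step is to compute $D_q(q)$, the expected number of neighbours inside $B_0(q)$ of a vertex at radius exactly $q$. These neighbours split into two groups.
\begin{itemize}
\item Vertices of radius $s \le R-q$ are adjacent for every angle.
\item Vertices of radius $s \in (R-q, q]$ are adjacent within angle $\theta_R(q,s)$.
\end{itemize}
In this regime $r+s-R = O(1)$, so the asymptotics of \Cref{lem:angular-threshold-asymptotics} are not accurate enough. I would instead use the exact $\arccos$ formula, rescaled as $e^{s/2}\approx$ an explicit function of $s-R/2$. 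This gives
\[
D_q(q) = \sigma\cdot F(x)\,(1+o(1)),
\]
where $F$ is an explicit integral. Optimising $F$ over $x$ should produce the constant
\[
\frac{4}{\pi}\left(\frac{2(1-\alpha)}{\pi/2-\alpha(\pi-2)}\right)^{(1-\alpha)/(2\alpha-1)}.
\]
Since $D_q(q) = \Theta(n^{1-\alpha}) = \omega(\log n)$, \Cref{lem:centred-ball-degeneracy-lower-bound} applies. Its high-probability statement is upgraded to w.e.h.p. by taking $\xi_n$ with $\xi_n^2 D_q(q) = \omega(\log n)$, which is possible because $D_q(q)$ is polynomial.

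\textbf{Upper bound.} I would exhibit an ordering in which every vertex has at most $((4/3)^\alpha + o(1))\sigma$ preceding neighbours: order vertices by increasing radius, which is essentially decreasing degree. For a vertex $u$ at radius $r$, its preceding neighbours lie in $B_0(r)\cap B_u(R)$. I would compute
\[
g(r) := n\,\mu\bigl(B_0(r)\cap B_u(R)\bigr).
\]
\begin{itemize}
\item For $r \le R/2$, this count is at most $|B_0(r)| \le \sigma$, so these vertices are harmless.
\item For $r > R/2$, write $r = R/2 + y$. Then $g$ combines a full inner ball of radius $R/2 - y$ with an angular cap, and the cap integral is handled with the exact angle formula as above.
\item For $r \ge R/2 + \omega(1)$, \Cref{lem:larger-degree-neighbourhood} gives $o(\sigma)$ directly.
\end{itemize}
So only bounded $y$ matters. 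Maximising $g(R/2+y)/\sigma$ over $y$ should give $(4/3)^\alpha$. A plausible route is to bound $\theta_R \le \pi$-type estimates cleverly, or to use $\arccos(z)\le \pi(1-z)/2$-type inequalities, so that the maximum is attained where $e^{y}$ equals $4/3$.

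Concentration then follows from a Poisson Chernoff bound for each vertex, combined with a union bound over a fine discretisation of radii; mean $\Theta(n^{1-\alpha})$ gives w.e.h.p. The degree-based ordering versus radius ordering discrepancy, and ties, contribute only $o(\sigma)$. Alternatively, one can use the radius ordering directly, since degeneracy only requires the existence of some ordering.

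\textbf{Main obstacle.} The hard step is the exact evaluation and optimisation of the angular-cap integrals in the $r+s-R=O(1)$ window. In that window the usual asymptotic $\theta_R\approx 2e^{(R-r-s)/2}$ fails, and the precise constants, especially $(4/3)^\alpha$ and the $\arccos$-dependent lower constant, come from these integrals. My first attempt would be the substitution $s = R/2 + t$, which turns everything into integrals of $e^{\alpha t}\arccos(\cdot)$ with $n$-free integrands.
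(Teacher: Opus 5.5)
The paper does not prove this lemma. It is imported from \cite[Theorem~9]{baguley2025hrg}, so your argument has to stand on its own.

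Your two reductions are sound. The radius ordering gives $\kappa\le\max_u|N(u)\cap B_0(r(u))|$, and \Cref{lem:centred-ball-degeneracy-lower-bound} gives $\kappa\ge(1-o(1))D_q(q)$. For $r=R/2+y$ with $y$ bounded, both expected counts equal $(1+o(1))\,\mathbb E[\sigma]\,F_\alpha(y)$, using $\theta_R=(1+o(1))\,2\arcsin(e^{(R-r-s)/2})$ in the window, where
\[
F_\alpha(y)=e^{-\alpha y}\Bigl(1+\frac{2\alpha}{\pi}\int_0^{2y}\arcsin\bigl(e^{-u/2}\bigr)e^{\alpha u}\,du\Bigr).
\]
So your framework in fact gives $\kappa=(1+o(1))\sigma\sup_yF_\alpha(y)$.

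\textbf{Main gap: the upper-bound constant.} Nothing in the proposal proves $\sup_yF_\alpha(y)\le(4/3)^\alpha$, and the mechanism you sketch is wrong.
\begin{itemize}
\item The supremum is not $(4/3)^\alpha$ attained at $e^y=4/3$. The lens $B_0(r)\cap B_u(R)$ is a proper subset of $B_0(r)$, so $F_\alpha(y)<e^{\alpha y}$ for every $y>0$.
\item For instance, in the limit $\alpha\to1$ one gets $F_1(y)=\frac{2}{\pi}\bigl(e^{y}\arcsin(e^{-y})+\sqrt{1-e^{-2y}}\bigr)$, which increases to $4/\pi$, not $4/3$.
\item $\theta_R\le\pi$ only yields the unbounded bound $e^{\alpha y}$.
\item The inequality $\arccos z\le\frac{\pi}{2}(1-z)$ is false on $(0,1)$: arccos is concave there and lies above that chord. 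This is exactly where $z=1-2e^{-(y+t)}$ lies once $y+t>\log 2$.
\item The valid linear bound $\arcsin w\le\frac{\pi}{2}w$, i.e.\ $\theta_R\le\pi e^{(R-r-s)/2}$, gives, with $\beta=\alpha-\frac12$,
\[
\sup_y\Bigl(\frac{\alpha}{\beta}e^{-(1-\alpha)y}-\frac{1}{2\beta}e^{-\alpha y}\Bigr)=\frac{(2(1-\alpha))^{\alpha/(2\alpha-1)}}{1-\alpha}.
\]
This exceeds $(4/3)^\alpha$ on all of $(1/2,1)$ and tends to $2$ as $\alpha\to1$.
\end{itemize}

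\textbf{One way to close it.} Relative to hyperbolic area, the vertex intensity is proportional to $\sinh(\alpha s)/\sinh s$, which is non-increasing in $s$. By the bathtub principle, the lens therefore has $\mu$-measure at most that of the centred ball of equal hyperbolic area. The lens area is $(1+o(1))F_1(y)\,\mathrm{area}(B_0(R/2))\le(4/\pi+o(1))\,\mathrm{area}(B_0(R/2))$. Hence $F_\alpha(y)\le(4/\pi)^\alpha+o(1)<(4/3)^\alpha$ for bounded $y$.

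\textbf{Lower-bound constant (minor).} Optimising the exact $F_\alpha$ does not produce the stated constant but something strictly larger. That is harmless, but you should say how the stated expression arises. Insert $\arcsin w\ge w$ to get
\[
F_\alpha(y)\ge e^{-\alpha y}+\frac{4\alpha}{\pi(2\alpha-1)}\bigl(e^{-(1-\alpha)y}-e^{-\alpha y}\bigr).
\]
This is maximised at $e^{-(2\alpha-1)y}=\frac{2(1-\alpha)}{\pi/2-\alpha(\pi-2)}$, where it equals exactly the constant in the statement.

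\textbf{Misapplied lemma (minor).} \Cref{lem:larger-degree-neighbourhood} bounds neighbours of larger \emph{degree}, not of smaller radius. It therefore does not directly control the radius-ordering count when $y=\omega(1)$. There, compute directly that the expected count is $O(\sigma e^{-(1-\alpha)y})$ and apply a Poisson Chernoff bound.
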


The corresponding chromatic-number estimate uses the same reference scale.

\begin{lemma}[Bounds on the chromatic number {\cite[Corollary~10]{baguley2025hrg}}]
\label{lem:chromatic-number-bound}
Let $G\sim \mathcal G(n,\alpha,C)$ be an HRG with $\alpha\in(1/2,1)$. Then w.e.h.p. the chromatic number $\chi$ satisfies
\begin{align*}
\sigma\le \chi\le \bigl((4/3)^\alpha+o(1)\bigr)\sigma.
\end{align*}

In particular, $\chi=\Theta(n^{1-\alpha})$ w.e.h.p.
\end{lemma}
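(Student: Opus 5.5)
The plan is to combine the clique structure of the core with the degeneracy upper bound of Lemma~\ref{lem:degeneracy-bounds}, and then to pin down the order of $\sigma$ by a Poisson concentration argument.

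\textbf{Lower bound.} This holds deterministically. Any two vertices $u,v\in\mathcal C$ satisfy $r(u),r(v)\le R/2$. By the triangle inequality through the origin, $d(u,v)\le r(u)+r(v)\le R$, so $u$ and $v$ are adjacent. Hence $\mathcal C$ is a clique, and $\chi\ge\omega\ge|\mathcal C|=\sigma$.

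\textbf{Upper bound.} I will use the standard fact that $\chi\le\kappa+1$. Take an ordering in which every vertex has at most $\kappa$ preceding neighbours. Greedy colouring along this ordering then never needs more than $\kappa+1$ colours. By Lemma~\ref{lem:degeneracy-bounds}, w.e.h.p.\ $\kappa\le((4/3)^\alpha+o(1))\sigma$. It remains to absorb the additive $+1$ into the $o(1)\sigma$ term, which only requires $\sigma\to\infty$ w.e.h.p.; this follows from the next step.

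\textbf{Order of $\sigma$.} The core size $\sigma$ is Poisson with mean $n\mu(B_0(R/2))$. By Lemma~\ref{lem:centered-ball-measure} with $r=R/2=\omega(1)$,
\begin{align*}
\mathbb E[\sigma]=(1+o(1))\,n e^{-\alpha R/2}=(1+o(1))\,e^{-\alpha C/2}n^{1-\alpha}.
\end{align*}
A Poisson Chernoff bound with relative deviation $\xi_n:=n^{-(1-\alpha)/4}$ gives a failure probability of at most
\begin{align*}
2\exp\bigl(-\xi_n^2\,\mathbb E[\sigma]/3\bigr)=\exp\bigl(-n^{\Omega(1)}\bigr).
\end{align*}
Hence $\sigma=(1+o(1))e^{-\alpha C/2}n^{1-\alpha}$ w.e.h.p., and in particular $\sigma\to\infty$.

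Combining the three steps, $\sigma\le\chi\le((4/3)^\alpha+o(1))\sigma$ and $\chi=\Theta(n^{1-\alpha})$ w.e.h.p. The intersection of finitely many w.e.h.p.\ events is again w.e.h.p., so the two bounds hold simultaneously.

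There is no real obstacle here; all the substance is carried by the imported degeneracy bound. The only point needing care is that the concentration of $\sigma$ must be w.e.h.p.\ rather than merely a.a.s. This holds because the mean of $\sigma$ is polynomially large.
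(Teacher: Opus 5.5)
Your proposal is correct and takes essentially the same route as the paper. The paper does not prove this statement itself but imports it from \cite{baguley2025hrg}, describing it as the core-clique lower bound $\sigma\le\omega\le\chi$ combined with the degeneracy upper bound via $\chi\le\kappa+1$, with Poisson concentration of $\sigma$ around $\Theta(n^{1-\alpha})$; your w.e.h.p.\ concentration of $\sigma$, used to absorb the additive $+1$, is exactly the care needed.
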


Indeed, since $\sigma\le\chi$, Lemma~\ref{lem:degeneracy-bounds} gives $\kappa \le \bigl((4/3)^\alpha+o(1)\bigr)\sigma \le \bigl((4/3)^\alpha+o(1)\bigr)\chi$ w.e.h.p. Thus a palette of size proportional to $\kappa$ is also within a constant factor of the chromatic number, with factor at most $(4/3)^\alpha+o(1) \in (\sqrt{(4/3)}, 4/3)$.

\subsection{One-round domination tool}
\label{subsec:one-round-domination}

Consider one round of an active region, and let $\mathcal U$ be the set of uncoloured vertices at the beginning of this round, with $U:=|\mathcal U|$. Order the vertices of $\mathcal U$ by decreasing priority as $v_1,\dots,v_U$, and let $I_j$ be the indicator that $v_j$ remains uncoloured in this round. Suppose that $v_j$ has at least $a_j$ available colours and at most $b_j\le a_j$ higher-priority neighbours in $\mathcal U$. We have the following lemma, which will allow us to argue that the number of uncoloured vertices drops predictably over rounds of the algorithm.

\begin{lemmaE}[One-step failure bound][category=oneRoundDom]
\label{lem:one-step-failure-bound}
In the setting above, for every $j\in\{1,\dots,U\}$ and every $i_1,\dots,i_{j-1}\in\{0,1\}$, we have
\begin{align*}
\Pr\left(I_j=1\mid \mathcal U,I_1=i_1,\dots,I_{j-1}=i_{j-1}\right)
\le
\frac{b_j}{a_j}.
\end{align*}
\end{lemmaE}

\begin{proofE}
Fix $j\in\{1,\dots,U\}$ and a realisation $i_1,\dots,i_{j-1}$ of the indicators for the vertices $v_1,\dots,v_{j-1}$. Let $\Gamma_{<j}$ denote the vector of colours chosen\footnote{Here and below, $\Gamma_{<j}$ is understood only over realisations that are compatible with the prescribed values $I_1=i_1,\dots,I_{j-1}=i_{j-1}$.} in this round by $v_1,\dots,v_{j-1}$. We condition on $\mathcal U$, on $I_1=i_1,\dots,I_{j-1}=i_{j-1}$, and on the colour choices $\Gamma_{<j}$. Under this conditioning, only higher-priority neighbours of $v_j$ in $\mathcal U$ can prevent $v_j$ from keeping its colour. There are at most $b_j$ such higher-priority neighbours, and each forbids at most one colour. Since $v_j$ chooses uniformly from at least $a_j$ available colours, for every realization $\Gamma_{<j}$ we have
\begin{align*}
\Pr\left(I_j=1\mid
\mathcal U,I_1=i_1,\dots,I_{j-1}=i_{j-1},\Gamma_{<j}
\right)
\le
\frac{b_j}{a_j}.
\end{align*}

Taking total probability over all such realisations $\gamma$ gives
\begin{align*}
\Pr\left(I_j=1\mid \mathcal U,I_1=i_1,\dots,I_{j-1}=i_{j-1}\right)
&=
\sum_{\gamma}
\Pr\left(I_j=1\mid \mathcal U,I_1=i_1,\dots,I_{j-1}=i_{j-1},\Gamma_{<j}=\gamma\right) \\
&\qquad\cdot
\Pr\left(\Gamma_{<j}=\gamma\mid \mathcal U,I_1=i_1,\dots,I_{j-1}=i_{j-1}\right) \\
&\le
\frac{b_j}{a_j}.
\end{align*}
\end{proofE}

This allows us to couple the indicators of uncoloured vertices to independent Bernoulli random variables.




\begin{corollaryE}[Sequential stochastic domination for subsets][category=oneRoundDom]
\label{cor:sequential-stochastic-domination-sub}
In the setting above, let $S\subseteq\{1,\dots,U\}$. Let $(Z_j)_{j\in S}$ be conditionally independent Bernoulli random variables given $\mathcal U$, with $\Pr(Z_j=1\mid \mathcal U)=b_j/a_j$ for all $j \in S$. Then, conditional on $\mathcal U$,
\begin{align*}
\sum_{j\in S} I_j
\preceq
\sum_{j\in S} Z_j .
\end{align*}
\end{corollaryE}

\begin{proofE}
Write $S=\{s_1<\cdots<s_m\}$. We first show that, for every $\ell\in\{1,\dots,m\}$ and every realisation $i_1,\dots,i_{\ell-1}\in\{0,1\}$ satisfying $\Pr(I_{s_1}=i_1,\dots,I_{s_{\ell-1}}=i_{\ell-1}\mid\mathcal U)>0$, we have
\begin{align*}
\Pr\left(I_{s_\ell}=1\mid \mathcal U,I_{s_1}=i_1,\dots,I_{s_{\ell-1}}=i_{\ell-1}\right)
\le
\frac{b_{s_\ell}}{a_{s_\ell}} .
\end{align*}

Indeed, condition further on a full realisation of $I_1,\dots,I_{s_\ell-1}$ that is compatible with $I_{s_1}=i_1,\dots,I_{s_{\ell-1}}=i_{\ell-1}$. Lemma~\ref{lem:one-step-failure-bound} gives the same upper bound under each such full conditioning. Averaging over these full realisations preserves the bound.

Thus, conditional on $\mathcal U$, the sequence $I_{s_1},\dots,I_{s_m}$ satisfies the required sequential domination hypothesis. Hence, by \cite[Lemma~1.8.7]{doerr2020probabilistic}, we get the desired claim.
\end{proofE}

\section{Sequential Radial Colouring with \texorpdfstring{$\frac{4}{3}\chi$}{4/3χ(G)} colours}\label{sec:sequential-radial-colouring}

In this section we analyse Sequential Radial Colouring with $(1+\varepsilon)\kappa$ many colours. The algorithm is based on recovering the radial structure of the hyperbolic random graph from vertex degrees: the degrees induce approximate radial bands, which are activated sequentially from the core outwards. We define this degree-based decomposition and the resulting colouring process in \Cref{subsec:sequential-radial-colouring}. In \Cref{subsec:pseudo-band-size-estimates}, we show that the approximate regions are close to the corresponding true radial regions. 

The colouring analysis then proceeds through the three types of regions: \Cref{subsec:colouring-pseudo-clique} analyses the pseudo-clique, \Cref{subsec:colouring-pseudo-bands} analyses the intermediate pseudo-bands, and \Cref{subsec:colouring-pseudo-outer-region} analyses the pseudo-outer band. Combining these three colouring guarantees proves that SRC colours $G$ a.a.s. in $O((\log\log n)^2)$ rounds with a palette giving a constant-factor approximation\footnote{Where in the worst-case this constant factor is $\frac{4}{3}$.} of $\chi$ and a $(1+\varepsilon)$-approximation of $\kappa$ (\Cref{subsec:proof-near-chromatic-theorem}).

\subsection{Sequential Radial Colouring algorithm}
\label{subsec:sequential-radial-colouring}

The input is a graph $G\sim\mathcal G(n,\alpha,C)$ with $\alpha\in(1/2,1)$, together with a colour palette $\Psi$. The algorithm first uses the graph structure to recover an approximate radial decomposition. Each vertex estimates its radial coordinate from its degree, and these estimates determine whether the vertex belongs to the pseudo-clique, one of the intermediate pseudo-bands, or the pseudo-outer region.\footnote{The exact decomposition depends on the colour palette size.} The colouring is then carried out on these pseudo-regions from the centre outwards, using random colour trials with degree-priority conflict resolution (i.e.\ exactly RCTDEG as in~\cite{mausruff2026distributed}) inside each currently active region. That is, in the beginning only the vertices of the pseudo-clique attempt to colour themselves for a fixed time period (of order $O(\log \log n)$), then the next time period only vertices in the first pseudo-band are active and so on.

\subsubsection*{Radius estimation}
\label{subsubsec:radius-estimation}

To recover the radial structure of the hyperbolic random graph from vertex
degrees, Sequential Radial Colouring uses the expected-degree function. It turns
each observed degree into an estimated radial position. For a vertex $u$ with radius $r(u)=r$, let
$\bar d(r):=\mathbb E[\deg(u)\mid r(u)=r]$ denote the expected degree of a
vertex at radius $r$. Recall that at the origin, we have $\bar d(0)=n$, because a vertex placed at the origin is adjacent to every vertex in the disc, whose expected number is $n$. Since $\bar d(r)$ is strictly decreasing in $r$ on $(0,R]$ by
\Cref{lem:expected-degree-monotone}, this defines an inverse on the image
$[\bar d(R),n]$. 
To assign an estimated radius to every possible
observed degree\footnote{We assume that the vertices know $n$ and $\alpha$, which is possible to estimated.}, we extend the inverse to all of $[0,\infty)$ by setting
\begin{align*}
\bar d_{\mathrm{ext}}^{-1}(x)
:=
\begin{cases}
R, & x\le \bar d(R),\\[1mm]
\bar d^{-1}(x), & \bar d(R)<x<n,\\[1mm]
0, & n \leq x.
\end{cases}
\end{align*}

For every vertex $v$, we define the estimated radius
$\widehat r(v):=\bar d_{\mathrm{ext}}^{-1}(\deg(v))$ and the estimated offset
from the core boundary $\widehat x(v):=\widehat r(v)-R/2$. The estimated offsets now induce the radial decomposition used by the algorithm.

\subsubsection*{Pseudo-regions}
\label{subsubsec:pseudo-regions}

The algorithm assigns vertices to radial regions using the estimated offsets. We first specify a radial interval $[r_{\mathrm L},r_{\mathrm U}]$, with lower cutoff radius $r_{\mathrm L}$ and upper cutoff radius $r_{\mathrm U}$, chosen so that all relevant pseudo-region boundaries lie inside this interval. In this range, vertex degrees are highly concentrated around their expectations (Lemma~\ref{lem:degree-concentration-above-cutoff}), which implies that the estimated radius differs from the true radius by only $o(1)$ (Lemma~\ref{lem:radius-estimate-accuracy}). Thus, vertices can cross a pseudo-region boundary only if their true radius lies within this small error window around the boundary. 

The lower cutoff is $r_{\mathrm L}:=R/2-1$, giving a constant buffer before the pseudo-clique boundary $R/2$. For the upper cutoff, fix a sufficiently large constant $K>0$; the next lemma defines a radius $r_{\mathrm U}$ beyond which all vertices have degree at most $K\log^2 n$ w.h.p.

\begin{lemma}[Upper degree cutoff]
\label{lem:right-degree-cutoff}
For every sufficiently large constant $K>0$, there exists a radius
$r_{\mathrm U}=R-4\log\log n+O(1)$ such that, w.h.p., every vertex $v$ with
$r(v)\ge r_{\mathrm U}$ satisfies $\deg(v)\le K\log^2 n$.
\end{lemma}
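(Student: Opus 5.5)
We will take $r_{\mathrm U}:=R-4\log\log n+b$ for a suitable constant $b$, chosen so that $\bar d(r_{\mathrm U})\le \tfrac{K}{2}\log^2 n$. By \Cref{lem:uniform-expected-degree-asymptotic} (with $a$ arbitrary and this $b$), we have $\bar d(r)=\Theta(1)\,ne^{-r/2}$ at the endpoint $r_{\mathrm U}$. Since $ne^{-R/2}=e^{-C/2}$, this gives
\begin{align*}
\bar d(r_{\mathrm U})=\Theta(1)\,e^{-C/2}e^{2\log\log n-b/2}=\Theta\bigl(e^{-b/2}\log^2 n\bigr).
\end{align*}
Thus, for fixed $K$, taking $b$ large enough (depending on $K$, $\alpha$, $C$) gives $\bar d(r_{\mathrm U})\le \tfrac K2\log^2 n$. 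Alternatively, we may fix $b=0$ and let $K$ be large, which matches ``every sufficiently large constant $K$''. In either case $r_{\mathrm U}=R-4\log\log n+O(1)$. By \Cref{lem:expected-degree-monotone}, $\bar d(r)\le \bar d(r_{\mathrm U})\le\tfrac K2\log^2 n$ for all $r\ge r_{\mathrm U}$.

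Next, we condition on a vertex at radius $r\ge r_{\mathrm U}$. Its degree is Poisson with mean $\bar d(r)\le \mu:=\tfrac K2\log^2 n$, and it is stochastically dominated by $\mathrm{Po}(\mu)$. The Poisson Chernoff upper tail (\cite[Theorem~5.4]{mitzenmacher2005probabilit}) gives $\Pr(X\ge 2\mu)\le (e/4)^{\mu}=\exp(-\Omega(K\log^2 n))=n^{-\omega(1)}$, which is far stronger than needed. We then union bound over all vertices. Formally, we use the Mecke formula (or, equivalently, the Palm-distribution property of the Poisson process): the expected number of vertices $v$ with $r(v)\ge r_{\mathrm U}$ and $\deg(v)>K\log^2 n$ equals $n\int \Pr(\mathrm{Po}(\bar d(r))>K\log^2 n)\,f(r)\,dr\le n\cdot n^{-\omega(1)}$. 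Markov's inequality then shows that no such vertex exists with probability $1-n^{-\omega(1)}$, in particular w.h.p. Alternatively, we can work on the event $|V(G)|\le 2n$ and union bound as elsewhere in the paper.

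The only delicate point is applying \Cref{lem:uniform-expected-degree-asymptotic} exactly at the edge $r=R-4\log\log n+b$ of its range, and making sure the constants are consistent: $b$ must be fixed before invoking the lemma, and $K$ must be allowed to depend on it. If we want the statement for every large $K$ with a single $r_{\mathrm U}$ formula, the simplest route is to fix $b=0$ and require $K\ge 4\max\{\bar d(r_{\mathrm U})/\log^2 n\}$, which is a constant by the computation above. Beyond this bookkeeping, the argument is a straightforward combination of monotonicity, the expected-degree asymptotic, and a Poisson tail bound.
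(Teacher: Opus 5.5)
Your proof is correct and follows essentially the same route as the paper. The uniform expected-degree asymptotic controls $\bar d$ near $R-4\log\log n$, monotonicity of $\bar d$ extends the bound to all $r\ge r_{\mathrm U}$, and a Poisson Chernoff bound plus a union bound finishes; your Mecke-formula phrasing of the union bound is a cleaner way to say it.

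The one difference is how $r_{\mathrm U}$ is chosen. The paper defines it implicitly by $\bar d(r_{\mathrm U})=(K/4)\log^2 n$ and then sandwiches it between $R-4\log\log n+c_1$ and $R-4\log\log n+c_2$. Your explicit choice $R-4\log\log n+b$ only gives an upper bound on $\bar d(r_{\mathrm U})$. That suffices for this lemma, but the exact identity, used as a lower bound on $\bar d(r)$ for $r\le r_{\mathrm U}$, is what \Cref{lem:degree-concentration-above-cutoff} relies on to calibrate $\delta_n$.
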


\begin{proof}
Choose $r_{\mathrm U}$ as the unique radius satisfying $\bar d(r_{\mathrm U})=(K/4)\log^2 n$. The radius $r_{\mathrm U}$ is well-defined for all sufficiently large $n$. Indeed, $\bar d(R)=\Theta(1)$ by Lemma~\ref{lem:boundary-expected-degree}, whereas Lemma~\ref{lem:uniform-expected-degree-asymptotic} gives $\bar d(R/2)=\Theta(ne^{-R/4})=\Theta(n^{1/2})$. Since $\bar d$ is continuous and strictly decreasing on $[R/2,R]$ by Lemma~\ref{lem:expected-degree-monotone}, there is a unique $r_{\mathrm U}\in(R/2,R)$ with $\bar d(r_{\mathrm U})=(K/4)\log^2 n$.

By Lemma~\ref{lem:uniform-expected-degree-asymptotic}, for every fixed constant $c_0$,
\begin{align*}
\bar d(R-4\log\log n+c_0)
=
\left(\frac{2\alpha}{\pi(\alpha-1/2)}e^{-C/2}e^{-c_0/2}+o(1)\right)\log^2 n.
\end{align*}

Choosing fixed constants $c_1<c_2$ so that the leading constant is respectively larger and smaller than $K/4$, monotonicity implies $r_{\mathrm U}\in[R-4\log\log n+c_1,\,R-4\log\log n+c_2]$. Hence $r_{\mathrm U}=R-4\log\log n+O(1)$. It remains to prove the degree bound. Conditional on the position of a vertex $v$ with $r(v)\ge r_{\mathrm U}$, the degree of $v$ is Poisson with mean $\bar d(r(v))\le\bar d(r_{\mathrm U})=(K/4)\log^2 n$. Therefore, by a Chernoff bound, there is a constant $c_3>0$ such that
\begin{align*}
\Pr\left(\deg(v)>K\log^2 n\mid r(v)\ge r_{\mathrm U}\right)
\le
\exp(-c_3K\log^2 n).
\end{align*}

A union bound over all vertices gives the claim.
\end{proof}

The innermost pseudo-region is the pseudo-clique $\widehat{\mathcal C}:=\{v\in V(G):\widehat r(v)\le R/2\}$, the set of vertices whose estimated radius is at most $R/2$. This boundary lies a constant distance above the lower cutoff $r_{\mathrm L}=R/2-1$. Hence the $o(1)$ error in the radius estimate cannot move vertices from below the lower cutoff across the pseudo-clique boundary.

We next choose a provisional boundary for the pseudo-outer region. Fix a constant $K'>K$, and let $r_{\mathrm{out}}'$ be the unique radius satisfying $\bar d(r_{\mathrm{out}}')=K'\log^2 n$. Since $K'>K/4$ and $\bar d$ is strictly decreasing, we have $r_{\mathrm{out}}'<r_{\mathrm U}$. The same calculation as in the proof of Lemma~\ref{lem:right-degree-cutoff} gives $r_{\mathrm{out}}'=R-4\log\log n+O(1)$. Thus, this provisional outer radius lies below the upper cutoff radius $r_{\mathrm U}$. Hence the $o(1)$ error in the radius estimate cannot move vertices from beyond the upper cutoff across the provisional outer boundary. Later, the final outer boundary may be shifted by at most a constant amount to avoid a last intermediate band that is too thin.

We next define the intermediate pseudo-bands between the pseudo-clique and the provisional pseudo-outer boundary. The decomposition has two parts. Close to the core, we use fine bands of width $1/\log\log n$ in order to keep the internal degree inside each active band small. After the constant offset $x_{\mathrm{crit}}$, we use coarse bands whose offsets, and hence also widths, grow essentially geometrically. We choose $x_{\mathrm{crit}}>0$ large enough so that the previous-degree bound in Lemma~\ref{lem:degree-previous-pseudo-bands} is at most $\chi$ for all offsets at least $x_{\mathrm{crit}}$. The pseudo-band boundaries are defined by a sequence of offsets from the core boundary $R/2$. We set $\beta:=\alpha-1/2$ and $w_n:=1/\log\log n$, and we fix a sufficiently small constant $k>0$.

Starting with $x_0:=0$, we first use steps of size $w_n$ until the already defined offset $x_{\mathrm{crit}}$ is reached. More precisely, while $x_h<x_{\mathrm{crit}}$, we set $x_{h+1}:=x_h+w_n$, and we define $h_{\mathrm{crit}}:=\min\{h:x_h\ge x_{\mathrm{crit}}\}$. From this point on, we use the coarser recursive decomposition. For $h\ge h_{\mathrm{crit}}$, set
\begin{align*}
x_{h+1}
:=
x_h+\frac1\beta\log\left(1+k e^{(1-\alpha)x_h}\right)
\end{align*}
as long as $x_{h+1}< r_{\mathrm{out}}'-R/2$. Let $J$ be the last index obtained.

It remains to decide whether the remaining interval up to the provisional pseudo-outer boundary should form one more intermediate pseudo-band. This technical case distinction avoids creating a final intermediate pseudo-band of too small width: if the remaining gap is smaller than the minimum width of a coarser band, it is absorbed into the pseudo-outer region; otherwise, one last intermediate pseudo-band is added up to the provisional pseudo-outer boundary. Thus we set: 
\begin{align*}
L
:=
\begin{cases}
J, & \text{if } r_{\mathrm{out}}'-R/2-x_J<\beta^{-1}\log(1+k),\\
J+1, & \text{otherwise,}
\end{cases}
\end{align*}
and in the second case we additionally set $x_L:=r_{\mathrm{out}}'-R/2$. The offsets $x_0,\dots,x_L$ from the core boundary $R/2$ define the intermediate pseudo-bands. For each $h\in\{0,\dots,L-1\}$, let
\begin{align*}
\widehat A_h
:=
\{v\in V(G):x_h<\widehat x(v)\le x_{h+1}\}.
\end{align*}

The remaining vertices form the pseudo-outer region. We now denote its final radial boundary by $r_{\mathrm{out}}:=R/2+x_L$. By the preceding case distinction, this boundary either coincides with the provisional boundary $r_{\mathrm{out}}'$ or differs from it by less than the minimum coarser band width, which is $O(1)$. Hence, we have $r_{\mathrm{out}}=R-4\log\log n+O(1)$. We define the pseudo-outer region as 
\begin{align*}
\widehat{\mathcal O}
:=
\{v\in V(G):\widehat x(v) >x_L\}.
\end{align*}

We have now defined a subdivision of the disk into the pseudo-clique, the intermediate pseudo-bands, and the pseudo-outer region. The next corollary records that the total number of pseudo-regions in this subdivision is $O(\log\log n)$.

\begin{corollary}[Number of pseudo-regions]
\label{cor:number-of-radial-bands}
The number of pseudo-regions is $O(\log\log n)$.
\end{corollary}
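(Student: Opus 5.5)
The plan is to count the three kinds of pseudo-regions separately: the pseudo-clique, the intermediate pseudo-bands $\widehat A_0,\dots,\widehat A_{L-1}$, and the pseudo-outer region $\widehat{\mathcal O}$. The pseudo-clique and the pseudo-outer region contribute two regions, so it suffices to show $L=O(\log\log n)$. I would split the offsets $x_0,\dots,x_L$ into the fine part $h<h_{\mathrm{crit}}$ and the coarse part $h\ge h_{\mathrm{crit}}$. At most one extra band comes from the final case distinction defining $L$, which is irrelevant for the asymptotics.

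\textbf{Fine part.} Here $x_{h+1}=x_h+w_n$ with $w_n=1/\log\log n$, and the fine steps stop at the first index with $x_h\ge x_{\mathrm{crit}}$. Since $x_{\mathrm{crit}}$ is a fixed constant (chosen so that the previous-degree bound is at most $\chi$; it does not depend on $n$), we get $h_{\mathrm{crit}}=\lceil x_{\mathrm{crit}}/w_n\rceil=O(\log\log n)$. One minor point is that the fine steps may already be stopped earlier by $r_{\mathrm{out}}'-R/2$. This can only decrease the count, and in any case $r_{\mathrm{out}}'-R/2=R/2-4\log\log n+O(1)\gg x_{\mathrm{crit}}$.

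\textbf{Coarse part.} This is the only step requiring an argument. For $h\ge h_{\mathrm{crit}}$ we have $x_{h+1}-x_h=\beta^{-1}\log(1+ke^{(1-\alpha)x_h})$. My first approach is to bound this increment from below in two regimes.

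\emph{First regime.} While $ke^{(1-\alpha)x_h}\le 1$, i.e.\ $x_h\le (1-\alpha)^{-1}\log(1/k)=O(1)$, each increment is at least $\beta^{-1}\log(1+ke^{(1-\alpha)x_{h_{\mathrm{crit}}}})\ge \beta^{-1}\log(1+k)$, a positive constant. Hence only $O(1)$ steps fall in this regime.

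\emph{Second regime.} Once $ke^{(1-\alpha)x_h}\ge1$, we have $\log(1+ke^{(1-\alpha)x_h})\ge (1-\alpha)x_h+\log k$. This gives
\[
x_{h+1}\ge \Bigl(1+\tfrac{1-\alpha}{\beta}\Bigr)x_h+\tfrac{\log k}{\beta}.
\]
Write $\lambda:=1+(1-\alpha)/\beta>1$. Shifting by the fixed point $c^*:=-\log k/(\beta(\lambda-1))$ (a constant) gives $x_{h+1}-c^*\ge\lambda(x_h-c^*)$. To make this shifted quantity positive and grow geometrically, I would first run $O(1)$ further steps of constant size so that $x_h\ge 2c^*$. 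Thereafter $x_h$ grows at least like $c\lambda^{m}$ after $m$ steps.

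The coarse recursion only continues while $x_{h+1}<r_{\mathrm{out}}'-R/2\le R=O(\log n)$. Therefore the number of coarse steps is at most $O(1)+\log_\lambda O(\log n)=O(\log\log n)$.

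Adding the fine and coarse counts, plus the possible extra band $J+1$, the pseudo-clique and $\widehat{\mathcal O}$, gives $O(\log\log n)$ pseudo-regions. The main (mild) obstacle is handling the additive constant $\log k/\beta$, which is negative for small $k$, in the geometric growth. The plan above resolves it via the initial $O(1)$ constant-increment steps.
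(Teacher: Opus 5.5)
Your proof is correct and follows the paper's argument. Both count $O(\log\log n)$ fine bands of width $1/\log\log n$ up to the constant $x_{\mathrm{crit}}$, then $O(1)$ coarse steps of width at least $\beta^{-1}\log(1+k)$, and finally geometric growth of the offsets up to $\Theta(\log n)$. The only cosmetic difference is that the paper absorbs the additive $\log k$ by using $\log(1+ke^{(1-\alpha)x})\ge(1-\alpha)x/2$ for $x$ beyond a constant $x_\ast$, which gives a purely multiplicative recursion, whereas you keep the full slope and shift by the fixed point $c^\ast$.
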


\begin{proof}
The fine part covers the fixed offset interval $[0,x_{\mathrm{crit}}]$.
Since $x_{\mathrm{crit}}$ is a constant and the step size there is $w_n=1/\log\log n$, this creates $O(\log\log n)$ thin bands.

It remains to count the bands in the coarse part. Since $k>0$ is fixed, there is a constant $x_\ast\ge x_{\mathrm{crit}}$ such that, for every $x\ge x_\ast$,
$\log(1+k e^{(1-\alpha)x})\ge (1-\alpha)x/2$. Before reaching $x_\ast$, each recursive step has width at least $\beta^{-1}\log(1+k)$, so only constantly many additional bands are created. Once $x_h\ge x_\ast$, the recursion gives
\begin{align*}
x_{h+1}
=
x_h+\frac1\beta\log(1+k e^{(1-\alpha)x_h})
\ge
\left(1+\frac{1-\alpha}{2\beta}\right)x_h .
\end{align*}

Thus, the offsets grow geometrically until they reach $r_{\mathrm{out}}-R/2=\Theta(\log n)$. Starting from a constant offset, this takes only $O(\log\log n)$ further steps.
Therefore $L=O(\log\log n)$.
\end{proof}

The corresponding true regions are defined in the same way, but using the true offset $x(v):=r(v)-R/2$ instead of the estimated offset $\widehat x(v)$. They are denoted by $\mathcal C$, $A_h$ for $h\in\{0,\dots,L-1\}$, and $\mathcal O$.

\subsubsection*{Colouring process}
\label{subsubsec:sequential-colouring-process}

The pseudo-regions are processed sequentially in the order $\widehat{\mathcal C},\widehat A_0,\dots,\widehat A_{L-1},\widehat{\mathcal O}$. Each pseudo-region is active for $T=O(\log\log n)$ rounds, with a sufficiently large hidden constant. The colouring lemmas in the subsequent sections show that this choice ensures that, a.a.s., all vertices in the active pseudo-region are coloured within these $T$ rounds\footnote{Here, one round refers to one iteration of RCTDEG in an active band. Such an iteration is implemented by two CONGEST communication rounds: first, each active vertex sends its sampled colour to its neighbours; second, vertices communicate whether the sampled colour was kept permanently. This constant factor is absorbed in all round-complexity bounds.}. The algorithm then proceeds to the next pseudo-region. 
 
For a vertex $v$, at the end of round $t$, we write $\psi_t(v)\in\Psi$ for its permanent colour, and $\psi_t(v)=\bot$ if $v$ is still uncoloured. Conflicts inside an active pseudo-region are resolved by degree priority. For vertices $u$ and $v$, we set
\begin{align*}
u\prec v
\quad\Longleftrightarrow\quad
(\deg(u),\operatorname{id}(u))
<_{\mathrm{lex}}
(\deg(v),\operatorname{id}(v)).
\end{align*}

Thus, vertices of larger degree have higher priority, and identifiers break ties. When a pseudo-region $\mathcal R$ becomes active, each vertex $v\in\mathcal R$ first removes all colours already used by coloured neighbours in previously active pseudo-regions, yielding its initial palette $\Psi^{1}(v)$. At the beginning of round $t\ge 2$, every uncoloured vertex $v\in\mathcal R$ further removes colours already kept permanently by neighbours in the same pseudo-region:
\begin{align*}
\Psi^{t}(v)
:=
\Psi^{1}(v)
\setminus
\{\psi_{t-1}(u):u\in N(v)\cap\mathcal R,\ \psi_{t-1}(u)\neq\bot\}.
\end{align*}

During round $t$ of an active phase, the uncoloured vertices in $\mathcal R$ run RCTDEG as in~\cite{mausruff2026distributed}, using the current palettes $\Psi^t(v)$ and the degree-priority order restricted to $\mathcal R$. Each active vertex $v$ samples a colour uniformly from $\Psi^t(v)$ and sends it to its neighbours. The sampled colour becomes permanent unless some higher-priority neighbour in $\mathcal R$ samples the same colour in that round; in that case, $v$ stays uncoloured and participates again in the next round.

All messages have size $O(\log|\Psi|)=O(\log n)$ bits, so the algorithm is CONGEST-compliant. By Corollary~\ref{cor:number-of-radial-bands}, there are $O(\log\log n)$ pseudo-regions, and each is processed for $O(\log\log n)$ rounds. Hence the total number of rounds is $O((\log\log n)^2)$.

\subsection{Accuracy of pseudo-regions}
\label{subsec:pseudo-band-size-estimates}

In this subsection we prove the basic estimates used in the pseudo-band analysis. We first show that observed degrees concentrate around their expectations for all vertices with radius at most $r_{\mathrm U}$. This gives an additive $o(1)$ error bound for the estimated radius on $[r_{\mathrm L},r_{\mathrm U}]$. In the subsequent comparison lemmas, this radius-estimation bound is used to show that each pseudo-band has size within a factor $1\pm o(1)$ of the expected size of its corresponding true radial region.


\subsubsection*{Degree concentration and radius estimates}

The next estimate establishes a uniform concentration bound for degrees. More precisely, it shows that every vertex with radius up to the upper cutoff $r_{\mathrm U}$ has observed degree within a small multiplicative error of its expected degree. Recall that $r_{\mathrm U}$ was chosen so that $\bar d(r_{\mathrm U})=(K/4)\log^2 n$, for a sufficiently large constant $K>0$. We set $\delta_n:=\sqrt{48/(K\log n)}=o(1)$.

\begin{lemma}[Degree concentration]
\label{lem:degree-concentration-above-cutoff}
With high probability, every vertex $v$ with $r(v)\le r_{\mathrm U}$ satisfies $\deg(v)=(1\pm\delta_n)\bar d(r(v))$.
\end{lemma}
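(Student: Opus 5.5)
The plan is to condition on the position of a single vertex, use that its degree is then Poisson, apply a two-sided Poisson Chernoff bound, and finish with a union bound. Fix a vertex $v$ with radius $r=r(v)\le r_{\mathrm U}$. By the Mecke/Palm property of the Poisson point process, conditional on the position of $v$ the number of other vertices in $B_v(R)\cap\mathcal D_R$ is Poisson with mean $\bar d(r)$. By Lemma~\ref{lem:expected-degree-monotone} (with $\bar d(0)=n$ at the origin), $\bar d$ is non-increasing on $[0,R]$. Hence every such vertex has
\begin{align*}
\bar d(r)\ge \bar d(r_{\mathrm U})=(K/4)\log^2 n .
\end{align*}

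Next we apply the standard two-sided Poisson Chernoff bound: for $X\sim\operatorname{Po}(\mu)$ and $0<\delta\le 1$,
\begin{align*}
\Pr(|X-\mu|\ge\delta\mu)\le 2\exp(-\delta^2\mu/3).
\end{align*}
The lower tail is exactly the computation already carried out in the proof of Lemma~\ref{lem:centred-ball-degeneracy-lower-bound}. We use it with $\mu=\bar d(r)$ and $\delta=\delta_n=\sqrt{48/(K\log n)}$. The exponent then satisfies
\begin{align*}
\frac{\delta_n^2\bar d(r)}{3}\ge \frac{48}{3K\log n}\cdot\frac{K}{4}\log^2 n = 4\log n .
\end{align*}
So the failure probability for a vertex at a given radius is at most $2n^{-4}$. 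This is uniform in $r\le r_{\mathrm U}$; the constant $48$ in $\delta_n$ was chosen precisely so that this works out.

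Finally, we union bound over vertices. The cleanest way to do this is again via Mecke's formula. The expected number of vertices $v$ with $r(v)\le r_{\mathrm U}$ whose degree deviates by more than a factor $1\pm\delta_n$ equals $n\int \Pr(\text{bad}\mid \text{position})\,d\mu \le n\cdot 2n^{-4}$. Markov's inequality then shows that no such vertex exists with probability $1-O(n^{-3})$. Alternatively, one can work on the event $|V(G)|\le 2n$ as the paper does, at cost $e^{-\Omega(n)}$.

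The main obstacle is a subtle point rather than a computation. Because the vertex set is random, conditioning on ``$v$ is a vertex at radius $r$'' must be justified; Mecke's formula, i.e.\ the Palm distribution of a Poisson process, handles this rigorously. A second detail is vertices very near the origin, where $\bar d(r)$ is close to $n$. There the bound only improves, so monotonicity covers the whole range $[0,r_{\mathrm U}]$.
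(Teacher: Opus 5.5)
Your proposal is correct and follows the paper's proof essentially step for step: monotonicity of $\bar d$ gives $\bar d(r(v))\ge (K/4)\log^2 n$, and the two-sided Poisson Chernoff bound with exponent $\delta_n^2\mu/3$ then gives failure probability $2n^{-4}$ per vertex, followed by a union bound. Your use of the Mecke formula to justify conditioning on a vertex's position and to take the union bound over the random vertex set is a slightly more careful version of the step the paper simply calls ``a union bound over all vertices'' on the event $|V(G)|\le 2n$.
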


\begin{proof}
Let $v$ be a vertex with $r(v)\le r_{\mathrm U}$, and set $\mu_v:=\bar d(r(v))$. By the definition of $r_{\mathrm U}$ and the monotonicity of $\bar d$ from Lemma~\ref{lem:expected-degree-monotone}, we have $\mu_v\ge \bar d(r_{\mathrm U})=(K/4)\log^2 n$.

Conditioned on its radius, the degree of $v$ is Poisson distributed with mean $\mu_v$. We apply the standard Chernoff bounds for Poisson random variables \cite[Theorem~5.4]{mitzenmacher2005probabilit}: if $X$ is Poisson with mean $\mu$, then $\Pr[X\ge x]\le e^{-\mu}(e\mu/x)^x$ for $x>\mu$, and $\Pr[X\le x]\le e^{-\mu}(e\mu/x)^x$ for $x<\mu$. For the upper tail, apply this with $X=\deg(v)$, $\mu=\mu_v$, and
$x=(1+\delta_n)\mu_v$. Then
\begin{align*}
\Pr[\deg(v)\ge (1+\delta_n)\mu_v\mid r(v)]
&\le
e^{-\mu_v}
\left(\frac{e}{1+\delta_n}\right)^{(1+\delta_n)\mu_v} \\
&=
\exp\left(
-\mu_v\left((1+\delta_n)\log(1+\delta_n)-\delta_n\right)
\right) \\
&\le
\exp\left(-\frac{\delta_n^2\mu_v}{3}\right),
\end{align*}
where the last inequality uses
$(1+\delta)\log(1+\delta)-\delta\ge \delta^2/3$ for $\delta\in(0,1)$. 
The lower tail is bounded analogously with $x=(1-\delta_n)\mu_v$. Combining the
two tails gives
\begin{align*}
\Pr\left[
|\deg(v)-\mu_v|>\delta_n\mu_v
\mid r(v)
\right]
\le
2\exp\left(-\frac{\delta_n^2\mu_v}{3}\right)
\le
2\exp(-4\log n)
=
2n^{-4},
\end{align*}
because $\delta_n^2=48/(K\log n)$ and
$\mu_v\ge (K/4)\log^2 n$. A union bound over all vertices gives the claim.
\end{proof}

We next establish the deterministic separation estimate that converts degree concentration into radius concentration. 

\begin{lemma}[Radius sensitivity of expected degree]
\label{lem:local-change-expected-degree}
For all $r\in[r_{\mathrm L},r_{\mathrm U}]$, we have
\begin{align*}
\bar d(r-\sqrt{\delta_n})>(1+\delta_n)\bar d(r)
\qquad\text{and}\qquad
\bar d(r+\sqrt{\delta_n})<(1-\delta_n)\bar d(r).
\end{align*}
\end{lemma}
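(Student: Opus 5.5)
The plan is to use the uniform expected-degree asymptotic, Lemma~\ref{lem:uniform-expected-degree-asymptotic}, which gives $\bar d(r)=(c_\alpha\pm o(\rho_n))\,ne^{-r/2}$ with $c_\alpha:=\frac{2\alpha}{\pi(\alpha-1/2)}$, uniformly on $[R/2-a,\,R-4\log\log n+b]$. We first check that the shifted radii $r\pm\sqrt{\delta_n}$ stay in this range for all $r\in[r_{\mathrm L},r_{\mathrm U}]$. We have $r_{\mathrm L}=R/2-1$ and $\sqrt{\delta_n}=o(1)$, so $a=2$ suffices. By Lemma~\ref{lem:right-degree-cutoff}, $r_{\mathrm U}=R-4\log\log n+O(1)$, so a suitable fixed $b$ covers $r_{\mathrm U}+\sqrt{\delta_n}$.

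Next we take ratios. For $r$ in the range, set $h:=\sqrt{\delta_n}$. Then
\begin{align*}
\frac{\bar d(r-h)}{\bar d(r)}=\frac{c_\alpha\pm o(\rho_n)}{c_\alpha\pm o(\rho_n)}\,e^{h/2}=(1\pm o(\rho_n))\,e^{h/2}\ge (1-o(\rho_n))\Bigl(1+\frac{h}{2}\Bigr).
\end{align*}
Now we compare scales. We have $\delta_n=\sqrt{48/(K\log n)}$, so $h=\delta_n^{1/2}=(48/(K\log n))^{1/4}=\rho_n$. This is exactly why $\rho_n$ was defined with the fourth root and the same constant $K$. The ratio is therefore at least $1+\rho_n/2-o(\rho_n)$. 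Since $\delta_n=\rho_n^2=o(\rho_n)$, this exceeds $1+\delta_n$ for large $n$. The second inequality works symmetrically:
\begin{align*}
\frac{\bar d(r+h)}{\bar d(r)}\le (1+o(\rho_n))\,e^{-h/2}\le 1-\frac{\rho_n}{2}+O(\rho_n^2)+o(\rho_n)<1-\delta_n.
\end{align*}

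The main obstacle is making sure the error term in Lemma~\ref{lem:uniform-expected-degree-asymptotic} is genuinely $o(\rho_n)$ relative to the leading constant, uniformly in $r$, so that it is dominated by the gain $e^{\pm h/2}-1\approx\pm\rho_n/2$. The lemma is stated exactly in this form, with $\rho_n$ tied to the same $K$, so the comparison goes through, and the statement holds for all sufficiently large $n$. Monotonicity of $\bar d$ from Lemma~\ref{lem:expected-degree-monotone} is not needed here, because the asymptotic controls both shifted points directly.
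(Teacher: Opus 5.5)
Your proposal is correct and follows the paper's proof essentially verbatim. Like the paper, you apply Lemma~\ref{lem:uniform-expected-degree-asymptotic} on a constant buffer around $[r_{\mathrm L},r_{\mathrm U}]$, write the ratio as $e^{\mp\sqrt{\delta_n}/2}(1\pm o(\sqrt{\delta_n}))$, and conclude from $\delta_n=o(\sqrt{\delta_n})$; you also spell out two points the paper leaves implicit, namely the identity $\sqrt{\delta_n}=\rho_n$ and the check that $r\pm\sqrt{\delta_n}$ stays in the lemma's range.
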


\begin{proof}
By Lemma~\ref{lem:uniform-expected-degree-asymptotic}, applied with a constant buffer around $[r_{\mathrm L},r_{\mathrm U}]$, we have uniformly for $r\in[r_{\mathrm L},r_{\mathrm U}]$,
\begin{align*}
\frac{\bar d(r+\sqrt{\delta_n})}{\bar d(r)}
=
e^{-\sqrt{\delta_n}/2}(1\pm o(\sqrt{\delta_n}))
=
1-\frac{\sqrt{\delta_n}}{2}\pm o(\sqrt{\delta_n}).
\end{align*}

Since $\delta_n=o(\sqrt{\delta_n})$, this is smaller than $1-\delta_n$ for all sufficiently large $n$. Hence $\bar d(r+\sqrt{\delta_n})<(1-\delta_n)\bar d(r)$. The estimate for $r-\sqrt{\delta_n}$ follows analogously from $\bar d(r-\sqrt{\delta_n})/\bar d(r)=e^{\sqrt{\delta_n}/2}(1\pm o(\sqrt{\delta_n}))$, which is larger than $1+\delta_n$ for all sufficiently large $n$.
\end{proof}

The preceding results give the desired accuracy guarantee for the inverse-degree radius estimator: on the interval $[r_{\mathrm L},r_{\mathrm U}]$, the estimated radius $\widehat r(v)$ of a vertex $v$ differs from the true radius $r(v)$ by at most $o(1)$.


\begin{lemma}[Accuracy of the radius estimate]
\label{lem:radius-estimate-accuracy}
With high probability, every vertex $v$ with $r(v)\in[r_{\mathrm L},r_{\mathrm U}]$ satisfies $|\widehat r(v)-r(v)|<\sqrt{\delta_n}$.
\end{lemma}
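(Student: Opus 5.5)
The plan is to work on the high-probability event of Lemma~\ref{lem:degree-concentration-above-cutoff}, on which every vertex $v$ with $r(v)\le r_{\mathrm U}$ satisfies $(1-\delta_n)\bar d(r(v))\le \deg(v)\le(1+\delta_n)\bar d(r(v))$. On this event the claim follows deterministically from the monotonicity of $\bar d$ (Lemma~\ref{lem:expected-degree-monotone}) and the separation estimate of Lemma~\ref{lem:local-change-expected-degree}. No further union bound is needed.

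Fix $v$ with $r:=r(v)\in[r_{\mathrm L},r_{\mathrm U}]$. By Lemma~\ref{lem:local-change-expected-degree},
\begin{align*}
\bar d(r+\sqrt{\delta_n})<(1-\delta_n)\bar d(r)\le\deg(v)\le(1+\delta_n)\bar d(r)<\bar d(r-\sqrt{\delta_n}).
\end{align*}
The only bookkeeping is to check that $\deg(v)$ lies in the range where $\widehat r(v)=\bar d^{-1}(\deg(v))$, i.e.\ strictly inside $(\bar d(R),n)$. Since $r\ge r_{\mathrm L}=R/2-1$, Lemma~\ref{lem:uniform-expected-degree-asymptotic} gives $\bar d(r)=O(n^{1/2})$, so $\deg(v)<n$. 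Since $r\le r_{\mathrm U}$, we have $\bar d(r)\ge (K/4)\log^2 n$, so $\deg(v)\gg\bar d(R)=\Theta(1)$ by Lemma~\ref{lem:boundary-expected-degree}. Both inequalities hold for large $n$, so the ordinary inverse applies.

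The points $r\pm\sqrt{\delta_n}$ lie in $(0,R)$, because $r_{\mathrm U}=R-4\log\log n+O(1)$ and $\delta_n=o(1)$. Hence $\bar d$ is continuous and strictly decreasing on a neighbourhood containing them. Applying $\bar d^{-1}$, which is strictly decreasing, to the displayed chain yields $r-\sqrt{\delta_n}<\widehat r(v)<r+\sqrt{\delta_n}$, which is the claim.

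The main subtlety is that Lemma~\ref{lem:local-change-expected-degree} is stated only for $r\in[r_{\mathrm L},r_{\mathrm U}]$. Its use of Lemma~\ref{lem:uniform-expected-degree-asymptotic} at the shifted radii $r\pm\sqrt{\delta_n}$ is covered by the constant buffer allowed in that lemma, since $\sqrt{\delta_n}=o(1)$. The rest is simply inverting monotone inequalities.
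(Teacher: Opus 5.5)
Your proposal is correct and is essentially the paper's proof: you work on the event of Lemma~\ref{lem:degree-concentration-above-cutoff}, sandwich $\deg(v)$ between $\bar d(r(v)\pm\sqrt{\delta_n})$ via Lemma~\ref{lem:local-change-expected-degree}, and invert the strictly decreasing $\bar d$. The only difference is minor bookkeeping. You verify $\deg(v)\in(\bar d(R),n)$ with explicit magnitude bounds, whereas the paper reads it directly off the displayed chain using monotonicity, namely $\bar d(R)\le\bar d(r(v)+\sqrt{\delta_n})$ and $\bar d(r(v)-\sqrt{\delta_n})\le\bar d(0)=n$.
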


\begin{proof}
Let $v$ be a vertex with $r(v)\in[r_{\mathrm L},r_{\mathrm U}]$. Since $\delta_n=o(1)$, we have $r(v)-\sqrt{\delta_n},r(v)+\sqrt{\delta_n}\in(0,R)$ for all sufficiently large $n$. We work on the high-probability event from Lemma~\ref{lem:degree-concentration-above-cutoff}, which gives $\deg(v)=(1\pm\delta_n)\bar d(r(v))$. Moreover, Lemma~\ref{lem:local-change-expected-degree} shows that shifting the radius by $\sqrt{\delta_n}$ changes the expected degree by more than this multiplicative error. Hence,
\begin{align*}
\bar d(r(v)+\sqrt{\delta_n})
<
(1-\delta_n)\bar d(r(v))
\le
\deg(v)
\le
(1+\delta_n)\bar d(r(v))
<
\bar d(r(v)-\sqrt{\delta_n}).
\end{align*}

By Lemma~\ref{lem:expected-degree-monotone}, the function $\bar d$ is strictly decreasing on $(0,R]$. Therefore $\deg(v)\in(\bar d(R),\bar d(0))$, and by the definition of $\bar d_{\mathrm{ext}}^{-1}$ we have
$\widehat r(v)=\bar d^{-1}(\deg(v))$. Applying the inverse to the inequalities above gives
$\widehat r(v)\in(r(v)-\sqrt{\delta_n},r(v)+\sqrt{\delta_n})$, and thus
$|\widehat r(v)-r(v)|<\sqrt{\delta_n}$.
\end{proof}

\subsubsection*{Cutoff vertices}

We first separate the vertices beyond the upper cutoff $r_{\mathrm U}$. By definition, they have degree at most $K\log^2 n$ w.h.p. Since the pseudo-outer boundary has radius smaller than $r_{\mathrm U}$ and uses the larger threshold $K'\log^2 n$, with $K'>K$, they are assigned to the pseudo-outer region w.h.p.

\begin{lemma}[Assignment of upper cutoff vertices]
\label{lem:assignment-outer-cutoff-vertices}
With high probability, every vertex \(v\) with \(r(v)\ge r_{\mathrm U}\) satisfies \(\widehat r(v)>r_{\mathrm{out}}\).
\end{lemma}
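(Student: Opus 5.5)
The plan is to combine the degree bound from Lemma~\ref{lem:right-degree-cutoff} with the monotonicity of $\bar d$, and to compare the threshold $K\log^2 n$ against the value of $\bar d$ at the final outer boundary $r_{\mathrm{out}}$. We work on the high-probability event of Lemma~\ref{lem:right-degree-cutoff}, on which every vertex $v$ with $r(v)\ge r_{\mathrm U}$ satisfies $\deg(v)\le K\log^2 n$. Since $\widehat r(v)=\bar d_{\mathrm{ext}}^{-1}(\deg(v))$ and $\bar d_{\mathrm{ext}}^{-1}$ is non-increasing, it suffices to show the deterministic inequality $K\log^2 n<\bar d(r_{\mathrm{out}})$. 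We also need $r_{\mathrm{out}}<R$, so that the inverse is strictly below $r_{\mathrm{out}}$ exactly when its argument exceeds $\bar d(r_{\mathrm{out}})$.

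The cases of the extension are handled as follows. If $\deg(v)\le \bar d(R)$, then $\widehat r(v)=R>r_{\mathrm{out}}$, because $r_{\mathrm{out}}=R-4\log\log n+O(1)<R$. Otherwise $\deg(v)\in(\bar d(R),K\log^2 n]$, and $\widehat r(v)=\bar d^{-1}(\deg(v))$. Since $\deg(v)<\bar d(r_{\mathrm{out}})$ and $\bar d$ is strictly decreasing by Lemma~\ref{lem:expected-degree-monotone}, this gives $\widehat r(v)>r_{\mathrm{out}}$.

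The key step, and the only real obstacle, is proving $\bar d(r_{\mathrm{out}})>K\log^2 n$ given that $r_{\mathrm{out}}$ may differ from $r_{\mathrm{out}}'$.

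In the case $L=J$, we have $r_{\mathrm{out}}=R/2+x_J<r_{\mathrm{out}}'$ by the construction of $J$. Monotonicity then gives $\bar d(r_{\mathrm{out}})>\bar d(r_{\mathrm{out}}')=K'\log^2 n>K\log^2 n$.

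In the case $L=J+1$, we have $r_{\mathrm{out}}=r_{\mathrm{out}}'$, so $\bar d(r_{\mathrm{out}})=K'\log^2 n>K\log^2 n$, since $K'>K$.

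So in both cases the final boundary lies at or below the provisional one. The worry in the text that the boundary "may be shifted by a constant amount" only ever shifts it inward, towards the centre, and this is what makes the argument work. If this inward direction were unclear, I would instead use Lemma~\ref{lem:uniform-expected-degree-asymptotic}: it shows that $\bar d(r_{\mathrm{out}})=\Theta(\log^2 n)$ with an explicit constant, and that constant could then be compared against $K$.
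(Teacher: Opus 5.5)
Your proof is correct and follows the paper's argument: you take the w.h.p.\ bound $\deg(v)\le K\log^2 n$ from Lemma~\ref{lem:right-degree-cutoff}, compare it with $\bar d(r_{\mathrm{out}}')=K'\log^2 n$ where $K'>K$, and use that the final boundary satisfies $r_{\mathrm{out}}\le r_{\mathrm{out}}'$. Your explicit check of $r_{\mathrm{out}}\le r_{\mathrm{out}}'$ through the cases $L=J$ and $L=J+1$, and your split on $\deg(v)\le\bar d(R)$ to get a strict inequality from strict monotonicity of $\bar d$, spell out details that the paper asserts in a single line.
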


\begin{proof}
By the definition of the upper cutoff $r_{\mathrm U}$, w.h.p. every vertex $v$ with $r(v)\ge r_{\mathrm U}$ satisfies $\deg(v)\le K\log^2 n$. Since $K'>K$, we have $\deg(v)<K'\log^2 n=\bar d(r_{\mathrm{out}}')$. As $\bar d_{\mathrm{ext}}^{-1}$ is non-increasing, this implies $\widehat r(v)>r_{\mathrm{out}}'$. The final pseudo-outer boundary $r_{\mathrm{out}}$ is at most the provisional boundary $r_{\mathrm{out}}'$, and hence $\widehat r(v)>r_{\mathrm{out}}$, as claimed.
\end{proof}

\subsubsection*{Size of pseudo-clique}

We next analyse the innermost pseudo-region, the pseudo-clique. With high probability, it is squeezed between two centred balls around the core boundary: $B_0(R/2-o(1))\subseteq\widehat{\mathcal C}$ and $\widehat{\mathcal C}\subseteq B_0(R/2+o(1))$. Hence, the pseudo-clique differs from the true core only within an $o(1)$ neighbourhood of $R/2$, and has size $(1\pm o(1))$ times the expected core size. The proof uses a direct degree comparison for $r(v)\le r_{\mathrm L}$ and Lemma~\ref{lem:radius-estimate-accuracy} on $[r_{\mathrm L},r_{\mathrm U}]$.



\begin{lemma}[Assignment of inner cutoff vertices]
\label{lem:assignment-inner-cutoff-vertices}
With high probability, every vertex \(v\) with \(r(v)\le r_{\mathrm L}\) satisfies \(\widehat r(v)\le R/2\).
\end{lemma}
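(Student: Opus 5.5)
Since $\widehat r(v)=\bar d_{\mathrm{ext}}^{-1}(\deg(v))$ and $\bar d_{\mathrm{ext}}^{-1}$ is non-increasing, it suffices to show that w.h.p.\ every vertex $v$ with $r(v)\le r_{\mathrm L}$ satisfies $\deg(v)\ge \bar d(R/2)$. Indeed, if $\deg(v)\ge n$ then $\widehat r(v)=0\le R/2$. Otherwise $\bar d(R/2)\le \deg(v)<n$, and since $\bar d(R/2)=\Theta(n^{1/2})>\bar d(R)$, we get $\widehat r(v)=\bar d^{-1}(\deg(v))\le R/2$ by strict monotonicity (Lemma~\ref{lem:expected-degree-monotone}). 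So the plan is a degree lower bound for all vertices inside $B_0(r_{\mathrm L})$.

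\textbf{Step 1.} I would compare expectations. By Lemma~\ref{lem:expected-degree-monotone}, every $r\le r_{\mathrm L}=R/2-1$ satisfies $\bar d(r)\ge \bar d(r_{\mathrm L})$. Applying Lemma~\ref{lem:uniform-expected-degree-asymptotic} with $a=1$ (so that both $R/2-1$ and $R/2$ lie in the range) gives
\begin{align*}
\frac{\bar d(r_{\mathrm L})}{\bar d(R/2)} = e^{1/2}(1\pm o(1)).
\end{align*}
Hence $\bar d(r)\ge (1+c)\,\bar d(R/2)$ for some constant $c>0$, for instance $c=1/2$, and all $r\le r_{\mathrm L}$. (For $r=0$ the value $\bar d(0)=n$ is consistent with the continuous extension.)

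\textbf{Step 2.} I would then apply a Poisson Chernoff bound. Conditional on $r(v)=r$, the degree $\deg(v)$ is Poisson with mean $\mu=\bar d(r)\ge (1+c)\bar d(R/2)=\Omega(n^{1/2})$. The event $\deg(v)<\bar d(R/2)$ is contained in $\deg(v)\le (1-c')\mu$ for the constant $c'=c/(1+c)$. By the lower-tail bound of \cite[Theorem~5.4]{mitzenmacher2005probabilit}, as in the proof of Lemma~\ref{lem:degree-concentration-above-cutoff}, this has probability at most $\exp(-c'^2\mu/2)=\exp(-\Omega(n^{1/2}))$.

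\textbf{Step 3.} A union bound over all vertices, on the event $|V(G)|\le 2n$, then gives failure probability $O(n)e^{-\Omega(n^{1/2})}=o(1/n)$. This yields the w.h.p.\ statement.

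\textbf{Main obstacle.} The only delicate point is the uniformity of Step 1 down to $r=0$. Lemma~\ref{lem:uniform-expected-degree-asymptotic} only covers $r\ge R/2-a$, so I would not use it for small $r$. Instead I would rely purely on monotonicity of $\bar d$ to reduce to the single point $r_{\mathrm L}$, and check that the constant-ratio gap $e^{1/2}$ there survives the $o(\rho_n)$ error term. Since that error is $o(1)$, this is immediate.
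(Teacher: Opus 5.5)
Your proof is correct and follows the paper's argument: it uses monotonicity of $\bar d$ to reduce to $r_{\mathrm L}=R/2-1$, then the $e^{1/2}(1\pm o(1))$ ratio from Lemma~\ref{lem:uniform-expected-degree-asymptotic}, then a degree lower bound for all such vertices. The only difference is that the paper directly invokes the concentration event of Lemma~\ref{lem:degree-concentration-above-cutoff} (which already covers all $r(v)\le r_{\mathrm U}$ with $\deg(v)\ge(1-\delta_n)\bar d(r(v))$) rather than redoing a constant-deviation Poisson Chernoff bound, and it does not spell out the $\deg(v)\ge n$ case of $\bar d_{\mathrm{ext}}^{-1}$ that you handle explicitly.
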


\begin{proof}
A vertex belongs to $\widehat{\mathcal C}$ whenever $\widehat r(v)\le R/2$, or equivalently whenever $\deg(v)\ge \bar d(R/2)$. By Lemma~\ref{lem:uniform-expected-degree-asymptotic}, uniformly for $r\in[R/2-1,R/2]$, $\bar d(r)=c_\alpha ne^{-r/2}(1\pm o(1))$, where $c_\alpha:=2\alpha/(\pi(\alpha-1/2))$. Hence $\bar d(R/2-1)=e^{1/2}(1\pm o(1))\bar d(R/2)$, and so for sufficiently large $n$ there exists a constant $\eta>0$ such that $\bar d(R/2-1)\ge(1+\eta)\bar d(R/2)$.

Let $v$ satisfy $r(v)\le r_{\mathrm L}=R/2-1$. By the monotonicity of $\bar d$ from Lemma~\ref{lem:expected-degree-monotone},
$\bar d(r(v))\ge\bar d(R/2-1)$. On the high-probability event of Lemma~\ref{lem:degree-concentration-above-cutoff},
\begin{align*}
\deg(v)
\ge
(1-\delta_n)\bar d(r(v))
\ge
(1-\delta_n)(1+\eta)\bar d(R/2)
>
\bar d(R/2),
\end{align*}
since $\delta_n=o(1)$. Therefore $\widehat r(v)\le R/2$, as claimed.
\end{proof}

For later use, let \(\mathcal E\) be the event that the conclusions of Lemmas~\ref{lem:radius-estimate-accuracy}, \ref{lem:assignment-outer-cutoff-vertices}, and~\ref{lem:assignment-inner-cutoff-vertices} hold and, consequently, for every \(h\in\{0,\dots,L-1\}\), every vertex in a pseudo-region preceding \(\widehat A_h\) has true radius at most \(R/2+x_h+\sqrt{\delta_n}\), while \(\widehat A_h\subseteq A_h^+\), where \(A_h^+:=\{w\in V(G):x_h-\sqrt{\delta_n}<x(w)\le x_{h+1}+\sqrt{\delta_n}\}\). By the above lemmas, \(\mathcal E\) holds w.h.p.

\begin{lemma}[Size of the pseudo-clique]
\label{lem:pseudoclique-size}
W.h.p. we have that $|\widehat{\mathcal C}|=(1\pm o(1))\mathbb E[|\mathcal C|]$.
\end{lemma}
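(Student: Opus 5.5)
The plan is to sandwich $\widehat{\mathcal C}$ between two centred balls and then count Poisson points in each. Work on the high-probability event $\mathcal E$, which includes Lemmas~\ref{lem:radius-estimate-accuracy}, \ref{lem:assignment-outer-cutoff-vertices} and~\ref{lem:assignment-inner-cutoff-vertices}. Set $\gamma_n:=\sqrt{\delta_n}=o(1)$.

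We first show $V(G)\cap B_0(R/2-\gamma_n)\subseteq\widehat{\mathcal C}$. A vertex with $r(v)\le r_{\mathrm L}$ lies in $\widehat{\mathcal C}$ by Lemma~\ref{lem:assignment-inner-cutoff-vertices}. A vertex with $r(v)\in[r_{\mathrm L},R/2-\gamma_n]$ has $\widehat r(v)<r(v)+\gamma_n\le R/2$ by Lemma~\ref{lem:radius-estimate-accuracy}.

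Conversely, we show $\widehat{\mathcal C}\subseteq B_0(R/2+\gamma_n)$. Take a vertex with $r(v)>R/2+\gamma_n$. If $r(v)\ge r_{\mathrm U}$, then Lemma~\ref{lem:assignment-outer-cutoff-vertices} gives $\widehat r(v)>r_{\mathrm{out}}>R/2$. If $r(v)\in(R/2+\gamma_n,r_{\mathrm U}]$, then Lemma~\ref{lem:radius-estimate-accuracy} gives $\widehat r(v)>r(v)-\gamma_n>R/2$. In both cases $v\notin\widehat{\mathcal C}$. Together these yield
\[
|V(G)\cap B_0(R/2-\gamma_n)|\le|\widehat{\mathcal C}|\le|V(G)\cap B_0(R/2+\gamma_n)|.
\]

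Next we compute the expectations via Lemma~\ref{lem:centered-ball-measure}. Since $R/2\pm\gamma_n=\omega(1)$,
\[
n\mu(B_0(R/2\pm\gamma_n))=(1+o(1))\,ne^{-\alpha(R/2\mp\gamma_n)}=(1+o(1))e^{\pm\alpha\gamma_n}\,\mathbb E[|\mathcal C|].
\]
Because $\gamma_n=o(1)$, this equals $(1\pm o(1))\mathbb E[|\mathcal C|]$. In particular, $\mathbb E[|\mathcal C|]=\Theta(ne^{-\alpha R/2})=\Theta(n^{1-\alpha})$, which is polynomially large.

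Finally, both outer counts are Poisson with mean $\Theta(n^{1-\alpha})$. A Poisson Chernoff bound (as in the proof of Lemma~\ref{lem:degree-concentration-above-cutoff}) with deviation $\xi_n=n^{-(1-\alpha)/3}$ shows that each count lies within a factor $1\pm\xi_n$ of its mean, with probability $1-\exp(-\Omega(n^{(1-\alpha)/3}))$. Combining this with the sandwich and the expectation estimate gives $|\widehat{\mathcal C}|=(1\pm o(1))\mathbb E[|\mathcal C|]$ w.h.p.

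The only delicate point is the sandwich. The subtlety is that $\mathcal E$ is a statement about the random graph, whereas the balls are deterministic. This causes no trouble: the two ball counts are deterministic-region Poisson variables, so their concentration is applied directly and then intersected with $\mathcal E$. No conditioning on $\mathcal E$ is needed.
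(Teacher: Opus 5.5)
Your proof is correct and is essentially the paper's argument. You use the same sandwich $V(G)\cap B_0(R/2-\sqrt{\delta_n})\subseteq\widehat{\mathcal C}\subseteq V(G)\cap B_0(R/2+\sqrt{\delta_n})$ from the inner-cutoff, outer-cutoff and radius-accuracy lemmas, then Lemma~\ref{lem:centered-ball-measure} and Poisson concentration of the two deterministic ball counts; the only difference is that you make the Chernoff deviation $\xi_n=n^{-(1-\alpha)/3}$ explicit where the paper just writes $(1+o(1))$.
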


\begin{proof}
Define $\mathcal B_-:=V(G)\cap B_0(R/2-\sqrt{\delta_n})$ and $\mathcal B_+:=V(G)\cap B_0(R/2+\sqrt{\delta_n})$. We first show that, w.h.p.,
\begin{align}
\mathcal B_-
\subseteq
\widehat{\mathcal C}
\subseteq
\mathcal B_+ .
\label{eq:pseudoclique-sandwich}
\end{align}

For the lower inclusion, let $v\in\mathcal B_-$. If $r(v)\le r_{\mathrm L}$, then $v\in\widehat{\mathcal C}$ by Lemma~\ref{lem:assignment-inner-cutoff-vertices}. Otherwise, $r(v)\in[r_{\mathrm L},R/2-\sqrt{\delta_n}]$, and Lemma~\ref{lem:radius-estimate-accuracy} gives $\widehat r(v)\le r(v)+\sqrt{\delta_n}\le R/2$. Hence $v\in\widehat{\mathcal C}$. For the upper inclusion, Lemma~\ref{lem:assignment-outer-cutoff-vertices} implies that w.h.p. every vertex with $r(v)\ge r_{\mathrm U}$ belongs to $\widehat{\mathcal O}$, and hence not to $\widehat{\mathcal C}$. Thus, if $v\in\widehat{\mathcal C}$ and $r(v)>R/2+\sqrt{\delta_n}$, then necessarily $r(v)\in[r_{\mathrm L},r_{\mathrm U}]$. By Lemma~\ref{lem:radius-estimate-accuracy}, $\widehat r(v)\ge r(v)-\sqrt{\delta_n}>R/2$, contradicting $v\in\widehat{\mathcal C}$. This proves \eqref{eq:pseudoclique-sandwich}.

It remains to compare the sizes. By Lemma~\ref{lem:centered-ball-measure}, applied with $r=R/2+\sqrt{\delta_n}$, and since $\delta_n=o(1)$, we have
\begin{align*}
\mathbb E[|\mathcal B_+|]
\le
e^{\alpha\sqrt{\delta_n}}(1+o(1))\mathbb E[|\mathcal C|]
=
(1+o(1))\mathbb E[|\mathcal C|].
\end{align*}


Moreover, since $\mathbb E[|\mathcal C|]=\Theta(n^{1-\alpha})$ and $|\mathcal B_+|$ is a Poisson random variable, a Chernoff bound gives, w.h.p., $|\mathcal B_+|\le (1+o(1))\mathbb E[|\mathcal B_+|]$. Hence, w.h.p., $|\mathcal B_+|\le (1+o(1))\mathbb E[|\mathcal C|]$. The same argument with $\mathcal B_-$ gives, w.h.p., $|\mathcal B_-|\ge (1-o(1))\mathbb E[|\mathcal C|]$. Together with \eqref{eq:pseudoclique-sandwich}, this implies the claim.
\end{proof}

\subsubsection*{Size of pseudo-bands}

We next consider the intermediate pseudo-bands. By construction, their boundary radii lie in the range where Lemma~\ref{lem:radius-estimate-accuracy} applies. We first prove a single-interval estimate: a pseudo-band has the correct asymptotic size whenever the corresponding true interval is wider than the estimation error and has polynomially large expected size.

\begin{lemma}[Size of pseudo-bands]
\label{lem:pseudoband-size}
Let \(I=(a,b]\subseteq (R/2,r_{\mathrm {out}})\) be a radial interval, and define
\begin{align*}
\mathcal B(I)
&:=\{v\in V(G): r(v)\in I\},
&
\widehat{\mathcal B}(I)
&:=\{v\in V(G): \widehat r(v)\in I\}.
\end{align*}

Suppose that \(b-a=\omega(\sqrt{\delta_n})\) and
\(\mathbb E[|\mathcal B(I)|]=n^{\Omega(1)}\). Then, with high probability,
\(|\widehat{\mathcal B}(I)|=(1\pm o(1))\mathbb E[|\mathcal B(I)|]\).
\end{lemma}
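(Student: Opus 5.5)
The plan mirrors the proof of Lemma~\ref{lem:pseudoclique-size}: sandwich the pseudo-band between two true radial bands, then control the sizes of those true bands. Let $I_-:=(a+\sqrt{\delta_n},\,b-\sqrt{\delta_n}]$ and $I_+:=(a-\sqrt{\delta_n},\,b+\sqrt{\delta_n}]$. We first show that, w.h.p.,
\begin{align*}
\mathcal B(I_-)\subseteq\widehat{\mathcal B}(I)\subseteq\mathcal B(I_+).
\end{align*}

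For the lower inclusion, take $v$ with $r(v)\in I_-$. Since $a>R/2>r_{\mathrm L}$ and $b<r_{\mathrm{out}}<r_{\mathrm U}$, we have $r(v)\in[r_{\mathrm L},r_{\mathrm U}]$. Lemma~\ref{lem:radius-estimate-accuracy} then gives $\widehat r(v)\in I$.

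For the upper inclusion, take $v$ with $\widehat r(v)\in I$. If $r(v)\le r_{\mathrm L}$, then Lemma~\ref{lem:assignment-inner-cutoff-vertices} gives $\widehat r(v)\le R/2<a$, a contradiction. If $r(v)\ge r_{\mathrm U}$, then Lemma~\ref{lem:assignment-outer-cutoff-vertices} gives $\widehat r(v)>r_{\mathrm{out}}>b$, again a contradiction. Hence $r(v)\in[r_{\mathrm L},r_{\mathrm U}]$, and Lemma~\ref{lem:radius-estimate-accuracy} yields $r(v)\in I_+$.

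Next we compare expectations. The mean of $|\mathcal B(J)|$ for a radial interval $J=(c,d]$ is $n\bigl(\mu(B_0(d))-\mu(B_0(c))\bigr)$, and by Lemma~\ref{lem:centered-ball-measure} this equals $(1+o(1))n e^{-\alpha R}\bigl(e^{\alpha d}-e^{\alpha c}\bigr)$, since all radii involved are $\omega(1)$. We need $\mathbb E|\mathcal B(I_\pm)|=(1\pm o(1))\mathbb E|\mathcal B(I)|$. The extra mass in $I_+\setminus I$ is at most of order $ne^{-\alpha R}e^{\alpha b}\bigl(e^{\alpha\sqrt{\delta_n}}-1\bigr)+ne^{-\alpha R}e^{\alpha a}\bigl(1-e^{-\alpha\sqrt{\delta_n}}\bigr)=O\bigl(\sqrt{\delta_n}\,ne^{-\alpha R}e^{\alpha b}\bigr)$. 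It remains to compare this with $\mathbb E|\mathcal B(I)|\asymp ne^{-\alpha R}e^{\alpha b}\bigl(1-e^{-\alpha(b-a)}\bigr)$. If $b-a\ge1$, the last factor is $\Theta(1)$. If $b-a<1$, it is $\Theta(b-a)=\omega(\sqrt{\delta_n})$. In both cases the boundary layers are a $o(1)$ fraction, and the same computation handles $I_-$ (which is nonempty since $b-a=\omega(\sqrt{\delta_n})$).

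This ratio estimate is the only step needing care. The width hypothesis $b-a=\omega(\sqrt{\delta_n})$ enters exactly here, through $e^{\alpha x}-1=\Theta(x)$ for small $x$. The $(1+o(1))$ errors from Lemma~\ref{lem:centered-ball-measure} must also be handled uniformly. If they are not precise enough when $b-a$ is small, I would instead integrate the exact density $\alpha\sinh(\alpha s)/(\cosh(\alpha R)-1)$ over the boundary layers directly and compare it with the integral over $I$.

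Finally, $|\mathcal B(I_\pm)|$ are Poisson random variables with means $(1\pm o(1))\mathbb E|\mathcal B(I)|=n^{\Omega(1)}$. A Poisson Chernoff bound with deviation $\xi_n=n^{-c}$, for small $c>0$, gives concentration within a factor $1\pm o(1)$ with failure probability $\exp(-n^{\Omega(1)})$. Combining this with the sandwich, whose failure probability is $O(1/n)$, gives
\begin{align*}
(1-o(1))\,\mathbb E|\mathcal B(I)|\le|\widehat{\mathcal B}(I)|\le(1+o(1))\,\mathbb E|\mathcal B(I)|
\end{align*}
w.h.p.
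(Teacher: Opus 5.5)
Your proof is correct and follows the paper's route: the same sandwich $\mathcal B(I_-)\subseteq\widehat{\mathcal B}(I)\subseteq\mathcal B(I_+)$ from the radius-accuracy and cutoff-assignment lemmas, then an expectation comparison via Lemma~\ref{lem:centered-ball-measure} and Poisson concentration. The uniformity worry you flag about subtracting two $(1+o(1))$ approximations goes away with the exact formula in that lemma. Indeed, $\cosh(\alpha d)-\cosh(\alpha c)=\tfrac12\bigl(e^{\alpha d}-e^{\alpha c}\bigr)\bigl(1-e^{-\alpha(c+d)}\bigr)$, and since $c+d=\omega(1)$ this gives band mass $(1+o(1))e^{-\alpha R}\bigl(e^{\alpha d}-e^{\alpha c}\bigr)$ uniformly; the paper uses this step without comment.
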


\begin{proof}
We define $I_-:=(a+\sqrt{\delta_n},b-\sqrt{\delta_n}]$ and $I_+:=(a-\sqrt{\delta_n},b+\sqrt{\delta_n}]$. We first show that $\mathcal B(I_-) \subseteq \widehat{\mathcal B}(I) \subseteq \mathcal B(I_+)$. For the lower inclusion, let $v\in\mathcal B(I_-)$. Since $I\subseteq(R/2,r_{\mathrm {out}})$, we have $r(v)\in[r_{\mathrm L},r_{\mathrm U}]$ for all sufficiently large $n$. Hence, Lemma~\ref{lem:radius-estimate-accuracy} gives that w.h.p. $|\widehat r(v)-r(v)|<\sqrt{\delta_n}$, and therefore $\widehat r(v)\in I$. Thus $v\in\widehat{\mathcal B}(I)$.

For the upper inclusion, let $v\in\widehat{\mathcal B}(I)$. Lemma~\ref{lem:assignment-inner-cutoff-vertices} implies that w.h.p. $r(v)>r_{\mathrm L}$, since otherwise $v\in\widehat{\mathcal C}$, which is disjoint from $\widehat{\mathcal B}(I)$. Similarly, Lemma~\ref{lem:assignment-outer-cutoff-vertices} implies that w.h.p. $r(v)<r_{\mathrm U}$, since otherwise $v\in\widehat{\mathcal O}$, which is disjoint from $\widehat{\mathcal B}(I)$. Hence $r(v)\in[r_{\mathrm L},r_{\mathrm U}]$. By Lemma~\ref{lem:radius-estimate-accuracy}, $|r(v)-\widehat r(v)|<\sqrt{\delta_n}$. As $\widehat r(v)\in I$, this implies $r(v)\in I_+$, and hence $v\in\mathcal B(I_+)$.

Set $\mu_I:=\mathbb E[|\mathcal B(I)|]$. By Lemma~\ref{lem:centered-ball-measure}, applied to the endpoints $a$ and $b$, which satisfy $a,b\ge r_{\mathrm L}=\omega(1)$, we have
\begin{align*}
\mu_I
=
(1+o(1))n e^{-\alpha(R-b)}
\bigl(1-e^{-\alpha(b-a)}\bigr).
\end{align*}

Applying the same estimate to $I_-$ and $I_+$, and using $b-a=\omega(\sqrt{\delta_n})$ and $\delta_n=o(1)$, gives $\mathbb E[|\mathcal B(I_-)|]\geq(1-o(1))\mu_I$ and $\mathbb E[|\mathcal B(I_+)|]\leq(1+o(1))\mu_I$.

Moreover, $|\mathcal B(I_-)|$ and $|\mathcal B(I_+)|$ are Poisson random variables with expectations $n^{\Omega(1)}$. Chernoff concentration therefore gives, w.h.p., $|\mathcal B(I_-)|\ge(1-o(1))\mu_I$ and $|\mathcal B(I_+)|\le(1+o(1))\mu_I$. Combining these estimates with the inclusion above yields $|\widehat{\mathcal B}(I)|=(1\pm o(1))\mu_I$, as claimed.
\end{proof}

Applying Lemma~\ref{lem:pseudoband-size} to the intervals defining the intermediate pseudo-bands gives the following simultaneous estimate.

\begin{corollary}[Sizes of the pseudo-bands]
\label{cor:pseudoband-sizes}
With high probability, $|\widehat A_h|=(1\pm o(1))\mathbb E[|A_h|]$ for all $h\in\{0,\dots,L-1\}$.
\end{corollary}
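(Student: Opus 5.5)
The plan is to apply Lemma~\ref{lem:pseudoband-size} to each interval $I_h:=(R/2+x_h,\,R/2+x_{h+1}]$, $h\in\{0,\dots,L-1\}$, and then take a union bound. We have $\widehat A_h=\widehat{\mathcal B}(I_h)$ and $A_h=\mathcal B(I_h)$. Each $I_h$ lies in $(R/2,r_{\mathrm{out}}]$. The last band ends exactly at $r_{\mathrm{out}}$, whereas Lemma~\ref{lem:pseudoband-size} is stated for the open interval $(R/2,r_{\mathrm{out}})$. However, its proof only uses $I_+\subseteq[r_{\mathrm L},r_{\mathrm U}]$, and this still holds because $r_{\mathrm{out}}+\sqrt{\delta_n}\le r_{\mathrm{out}}'+o(1)<r_{\mathrm U}$, as $r_{\mathrm U}-r_{\mathrm{out}}'$ is a positive constant up to $o(1)$. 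So the endpoint causes no problem.

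Next I check the two hypotheses of the lemma. For the width, the fine bands have width $w_n=1/\log\log n$. Since $\sqrt{\delta_n}=\Theta((\log n)^{-1/4})$, we get $w_n=\omega(\sqrt{\delta_n})$. The coarse bands have width at least $\beta^{-1}\log(1+k)=\Theta(1)$. The final band, if it is added, also has width at least $\beta^{-1}\log(1+k)$ by the case distinction defining $L$. For the expected size, every band has inner radius at least $R/2$ and width $\Omega(w_n)$. Lemma~\ref{lem:centered-ball-measure} then gives
\[
\mathbb E[|A_h|]=(1+o(1))\,n e^{-\alpha(R-b)}\bigl(1-e^{-\alpha(b-a)}\bigr)\ge (1+o(1))\,n e^{-\alpha R/2}\cdot\Omega(w_n)=n^{1-\alpha-o(1)}=n^{\Omega(1)},
\]
where $b-a$ is the band width and $R-b\le R/2$ is used.

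The only real obstacle is that Lemma~\ref{lem:pseudoband-size} is stated "w.h.p." for a single interval, and we need the conclusion for all $L=O(\log\log n)$ bands at once (Corollary~\ref{cor:number-of-radial-bands}). The failure events in its proof are of two kinds. The first is the global event $\mathcal E$ (radius accuracy and cutoff assignment), which does not depend on the interval and so is paid for once. The second is the Poisson Chernoff bounds for $|\mathcal B(I_{h,\pm})|$. Since these Poisson variables have means $n^{\Omega(1)}$, each deviation of relative size $o(1)$, chosen for instance as $n^{-c}$ for a small constant $c$, fails with probability $\exp(-n^{\Omega(1)})$. A union bound over the $O(\log\log n)$ intervals therefore costs only $O(\log\log n)\,e^{-n^{\Omega(1)}}=o(1/n)$, and the corollary holds w.h.p. simultaneously for all $h$.
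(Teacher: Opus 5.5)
Your proposal is correct and follows the same route as the paper's proof. Like the paper, you apply Lemma~\ref{lem:pseudoband-size} to each $I_h$ after checking $\Delta_h=\omega(\sqrt{\delta_n})$ and $\mathbb E[|A_h|]=n^{\Omega(1)}$, and then take a union bound over the $L=O(\log\log n)$ bands. Your two extra points are sound refinements of details the paper passes over: the closed right endpoint $r_{\mathrm{out}}$ of the last band, and paying for the global event $\mathcal E$ only once so that the union bound stays $O(1/n)$.
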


\begin{proof}
For $h\in\{0,\dots,L-1\}$, let $I_h:=(R/2+x_h,R/2+x_{h+1}]$, so that $A_h=\mathcal B(I_h)$, and set $\Delta_h:=x_{h+1}-x_h$. In the fine range, $\Delta_h=w_n=1/\log\log n$, while in the coarse range, including a possible final truncated band, $\Delta_h\ge\beta^{-1}\log(1+k)=\Omega(1)$. Since $\sqrt{\delta_n}=O((\log n)^{-1/4})=o(1/\log\log n)$, every band satisfies $\Delta_h=\omega(\sqrt{\delta_n})$.

It remains to verify that the expected band sizes are polynomially large. Set $a_h:=R/2+x_h$ and $b_h:=R/2+x_{h+1}$. By Lemma~\ref{lem:centered-ball-measure},
\begin{align*}
\mathbb E[|A_h|]
=
n\bigl(\mu(B_0(b_h))-\mu(B_0(a_h))\bigr)
=
(1+o(1))n e^{-\alpha(R-b_h)}
\bigl(1-e^{-\alpha\Delta_h}\bigr).
\end{align*}

Since \(b_h\ge R/2\), the first factor is \(\Omega(n^{1-\alpha})\). Moreover, \(1-e^{-\alpha\Delta_h}=\Theta(1/\log\log n)\) in the fine range and is bounded away from zero in the coarse range. Hence, uniformly over all bands,
\begin{align*}
\mathbb E[|A_h|]
=
\Omega\left(\frac{n^{1-\alpha}}{\log\log n}\right)
=
n^{\Omega(1)}.
\end{align*}

Thus Lemma~\ref{lem:pseudoband-size} applies to every $I_h$. Since $L=O(\log\log n)$, a union bound proves the claim simultaneously for all $h\in\{0,\dots,L-1\}$.
\end{proof}

\subsubsection*{Size of pseudo-outer band}

It remains to estimate the size of the pseudo-outer region, whose radial boundary is $r_{\mathrm{out}}$. By Lemma~\ref{lem:assignment-outer-cutoff-vertices}, all vertices with radius at least $r_{\mathrm U}$ belong to $\widehat{\mathcal O}$ with high probability, while vertices with radius in $[r_{\mathrm{out}},r_{\mathrm U})$ can be controlled using Lemma~\ref{lem:radius-estimate-accuracy}.

\begin{lemma}[Size of the pseudo-outer region]
\label{lem:size-pseudo-outer-region}
With high probability, $|\widehat{\mathcal O}|= (1\pm o(1))\mathbb E[|\mathcal O|]$.
\end{lemma}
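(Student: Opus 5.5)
The argument follows Lemma~\ref{lem:pseudoclique-size} and Lemma~\ref{lem:pseudoband-size}. Let $\mathcal O=\{v: r(v)>r_{\mathrm{out}}\}$ be the true outer region. Define $\mathcal O_-:=\{v\in V(G): r(v)>r_{\mathrm{out}}+\sqrt{\delta_n}\}$ and $\mathcal O_+:=\{v\in V(G): r(v)>r_{\mathrm{out}}-\sqrt{\delta_n}\}$. The plan is to show that w.h.p.
\begin{align*}
\mathcal O_-\subseteq\widehat{\mathcal O}\subseteq\mathcal O_+ ,
\end{align*}
and then compare the sizes of the two outer sets with $\mathbb E[|\mathcal O|]$.

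\textbf{Lower inclusion.} Let $v\in\mathcal O_-$ and split into two cases. If $r(v)\ge r_{\mathrm U}$, Lemma~\ref{lem:assignment-outer-cutoff-vertices} gives $\widehat r(v)>r_{\mathrm{out}}$, so $v\in\widehat{\mathcal O}$. Otherwise $r(v)\in(r_{\mathrm{out}}+\sqrt{\delta_n},r_{\mathrm U})\subseteq[r_{\mathrm L},r_{\mathrm U}]$. Lemma~\ref{lem:radius-estimate-accuracy} then gives $\widehat r(v)>r(v)-\sqrt{\delta_n}>r_{\mathrm{out}}$.

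\textbf{Upper inclusion.} Let $v\in\widehat{\mathcal O}$. If $r(v)\le r_{\mathrm L}$, Lemma~\ref{lem:assignment-inner-cutoff-vertices} puts $v$ in $\widehat{\mathcal C}$, which is disjoint from $\widehat{\mathcal O}$, so this case is excluded. If $r(v)\ge r_{\mathrm U}$, then trivially $v\in\mathcal O_+$, since $r_{\mathrm U}>r_{\mathrm{out}}$. In the remaining case $r(v)\in[r_{\mathrm L},r_{\mathrm U}]$, Lemma~\ref{lem:radius-estimate-accuracy} gives $r(v)>\widehat r(v)-\sqrt{\delta_n}>r_{\mathrm{out}}-\sqrt{\delta_n}$.

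\textbf{Sizes.} By Lemma~\ref{lem:centered-ball-measure}, for any $r=\omega(1)$ the mass outside $B_0(r)$ is $1-(1+o(1))e^{-\alpha(R-r)}$. Here $R-r_{\mathrm{out}}=4\log\log n+O(1)$, so $e^{-\alpha(R-r_{\mathrm{out}}\pm\sqrt{\delta_n})}=(\log n)^{-\Theta(1)}=o(1)$. Hence all three regions $\mathcal O$, $\mathcal O_-$, $\mathcal O_+$ have measure $1-o(1)$, and their expected sizes are all $(1-o(1))n$. In particular $\mathbb E[|\mathcal O_\pm|]=(1\pm o(1))\mathbb E[|\mathcal O|]$. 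This step is actually easier than in the pseudo-band lemma, because we do not need the $e^{\pm\alpha\sqrt{\delta_n}}$ refinement; the outer region simply contains almost all mass. Both $|\mathcal O_\pm|$ are Poisson with mean $\Theta(n)$, so a Chernoff bound gives concentration within a factor $1\pm o(1)$ with probability $1-e^{-\Omega(n^{c})}$. Combining this with the sandwich proves the claim.

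The only delicate point is the bookkeeping that $r_{\mathrm{out}}\pm\sqrt{\delta_n}$ lies inside $[r_{\mathrm L},r_{\mathrm U})$, so that Lemma~\ref{lem:radius-estimate-accuracy} applies on the whole relevant window. This holds because $r_{\mathrm{out}}\le r_{\mathrm{out}}'<r_{\mathrm U}$, and the gap $r_{\mathrm U}-r_{\mathrm{out}}'$ is a positive constant by the choice $K'>K/4$ together with Lemma~\ref{lem:uniform-expected-degree-asymptotic}.
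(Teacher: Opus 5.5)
Your proposal is correct and follows essentially the same route as the paper. Both arguments use the sandwich $\mathcal O_-\subseteq\widehat{\mathcal O}\subseteq\mathcal O_+$, obtained from the two cutoff-assignment lemmas together with Lemma~\ref{lem:radius-estimate-accuracy}, and then note that all three regions have expected size $(1-o(1))n$ and apply Poisson concentration. The differences are cosmetic: you prove the upper inclusion directly from $v\in\widehat{\mathcal O}$ rather than contrapositively, and your closing bookkeeping remark is unnecessary because your case split already applies the radius-accuracy lemma only on $[r_{\mathrm L},r_{\mathrm U}]$.
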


\begin{proof}
We first prove that
\begin{align}
|\mathcal O_-|:=|V(G)\setminus B_0(r_{\mathrm{out}}+\sqrt{\delta_n})|\le|\widehat{\mathcal O}|\le|V(G)\setminus B_0(r_{\mathrm{out}}-\sqrt{\delta_n})|=:|\mathcal O_+|.
\label{eq:pseudo-outer-sandwich}
\end{align}

For the lower bound, let $v$ satisfy $r(v)\ge r_{\mathrm{out}}+\sqrt{\delta_n}$. If $r(v)\ge r_{\mathrm U}$, then w.h.p. $v\in\widehat{\mathcal O}$ by Lemma~\ref{lem:assignment-outer-cutoff-vertices}. Otherwise, $r(v)\in[r_{\mathrm{out}}+\sqrt{\delta_n},r_{\mathrm U})$, and w.h.p. Lemma~\ref{lem:radius-estimate-accuracy} gives $\widehat r(v)>r(v)-\sqrt{\delta_n}\ge r_{\mathrm{out}}$. Hence $v\in\widehat{\mathcal O}$.

For the upper bound, let $v$ satisfy $r(v)\le r_{\mathrm{out}}-\sqrt{\delta_n}$. If $r(v)\le r_{\mathrm L}$, then w.h.p. $v\in\widehat{\mathcal C}$ by Lemma~\ref{lem:assignment-inner-cutoff-vertices}, and hence $v\notin\widehat{\mathcal O}$. Otherwise, $r(v)\in(r_{\mathrm L},r_{\mathrm{out}}-\sqrt{\delta_n}]$, and w.h.p. Lemma~\ref{lem:radius-estimate-accuracy} gives $\widehat r(v)<r(v)+\sqrt{\delta_n}\le r_{\mathrm{out}}$. Thus $v\notin\widehat{\mathcal O}$.

Recall that $\mathcal O:=V(G)\setminus B_0(r_{\mathrm{out}})$. Since $r_{\mathrm{out}}=R-4\log\log n+O(1)$, Lemma~\ref{lem:centered-ball-measure} gives
\begin{align*}
\mathbb E[|\mathcal O|]=n\bigl(1-\mu(B_0(r_{\mathrm{out}}))\bigr)=O(n)\left(1-O\left((\log n)^{-4\alpha}\right)\right)=(1-o(1))n.
\end{align*}







Since $\sqrt{\delta_n}=o(1)$, Lemma~\ref{lem:centered-ball-measure} applied at $r_{\mathrm{out}}\pm\sqrt{\delta_n}$ yields $\mathbb E[|\mathcal O_-|]\ge(1-o(1))\mathbb E[|\mathcal O|]$ and $\mathbb E[|\mathcal O_+|]\le(1+o(1))\mathbb E[|\mathcal O|]$. Finally, $|\mathcal O_-|$ and $|\mathcal O_+|$ are Poisson random variables with expectations $\Theta(n)$. Chernoff concentration therefore gives, w.h.p., $|\mathcal O_-|\ge(1-o(1))\mathbb E[|\mathcal O|]$ and $|\mathcal O_+|\le(1+o(1))\mathbb E[|\mathcal O|]$. Combining this with \eqref{eq:pseudo-outer-sandwich} proves $|\widehat{\mathcal O}|=(1\pm o(1))\mathbb E[|\mathcal O|]$.
\end{proof}

\subsubsection*{Comparison with true regions}

We conclude this subsection by combining the pseudo-region estimates with concentration of the corresponding true regions.

\begin{lemma}[Concentration of true regions]
\label{lem:true-region-concentration}
With high probability, we have
\begin{align*}
|\mathcal C|=(1\pm o(1))\mathbb E[|\mathcal C|],\quad
|A_h|=(1\pm o(1))\mathbb E[|A_h|]\ \forall h\in\{0,\dots,L-1\},
\quad
|\mathcal O|=(1\pm o(1))\mathbb E[|\mathcal O|].
\end{align*}
\end{lemma}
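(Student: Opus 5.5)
The plan is to treat each true region separately: in the Poissonized model the number of vertices in a fixed measurable region is Poisson with mean $n\mu(\cdot)$, so each of the listed quantities is a single Poisson random variable. The core $\mathcal C=V(G)\cap B_0(R/2)$, the bands $A_h=\{w: x_h<x(w)\le x_{h+1}\}$ and the outer region $\mathcal O=V(G)\setminus B_0(r_{\mathrm{out}})$ are all determined by deterministic radii. This is because the offsets $x_0,\dots,x_L$ and $r_{\mathrm{out}}$ depend only on $n,\alpha,C,k$ and the fixed constants, not on the graph. Hence no conditioning issue arises and we may apply a Poisson Chernoff bound to each region directly.

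The first step is to record lower bounds on the means. By Lemma~\ref{lem:centered-ball-measure}, $\mathbb E[|\mathcal C|]=(1+o(1))ne^{-\alpha R/2}=\Theta(n^{1-\alpha})$. The computation in the proof of Corollary~\ref{cor:pseudoband-sizes} gives $\mathbb E[|A_h|]=\Omega(n^{1-\alpha}/\log\log n)$ uniformly in $h$. The proof of Lemma~\ref{lem:size-pseudo-outer-region} gives $\mathbb E[|\mathcal O|]=(1-o(1))n$. In all cases the mean $\mu$ is at least $n^{c}$ for some constant $c>0$ (e.g.\ $c=(1-\alpha)/2$).

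The second step is the concentration itself. For a Poisson variable $X$ with mean $\mu$ and $\xi\in(0,1)$, the same Chernoff computation as in Lemma~\ref{lem:degree-concentration-above-cutoff} yields $\Pr(|X-\mu|>\xi\mu)\le 2\exp(-\xi^2\mu/3)$. Choose $\xi_n:=n^{-c/4}=o(1)$. Then $\xi_n^2\mu\ge n^{c/2}$, and each failure probability is at most $2\exp(-n^{c/2}/3)$, which is far smaller than $n^{-2}$.

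Finally we take a union bound over the $L+2=O(\log\log n)$ regions, using Corollary~\ref{cor:number-of-radial-bands}. The total failure probability is $O(\log\log n)\exp(-n^{\Omega(1)})=O(1/n)$, so all three statements hold simultaneously w.h.p. with a common $o(1)$ error $\xi_n$. The only point needing care, rather than a genuine obstacle, is checking that the band means are polynomially large \emph{uniformly} in $h$, including the thin fine bands and a possibly truncated last band. This uniform bound is exactly what Corollary~\ref{cor:pseudoband-sizes} already established, so I would cite it rather than redo the measure computation.
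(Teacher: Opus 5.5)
Your proposal is correct and follows the paper's proof: both note that the true regions are deterministic radial regions, so their sizes are Poisson with polynomially large means (citing the same computations for the core, the bands and the outer region), then apply a Poisson Chernoff bound to each and union-bound over the $O(\log\log n)$ regions. Your explicit choice $\xi_n=n^{-c/4}$ and your remark that the band offsets do not depend on the graph only spell out what the paper leaves implicit.
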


\begin{proof}
The sizes $|\mathcal C|$, $|A_h|$, and $|\mathcal O|$ are Poisson random variables with means equal to their expected sizes. By Lemma~\ref{lem:centered-ball-measure}, $\mathbb E[|\mathcal C|]=\Theta(n^{1-\alpha})$, while the proof of Corollary~\ref{cor:pseudoband-sizes} shows that $\mathbb E[|A_h|]=n^{\Omega(1)}$ uniformly for all $h\in\{0,\dots,L-1\}$. Moreover, since $r_{\mathrm{out}}=R-4\log\log n+O(1)$,
\begin{align*}
\mathbb E[|\mathcal O|]=n\bigl(1-\mu(B_0(r_{\mathrm{out}}))\bigr)=(1-o(1))n.
\end{align*}




Chernoff concentration therefore gives the claimed estimate for each region. Since $L=O(\log\log n)$, a union bound over all intermediate bands proves that the estimates hold simultaneously for all bands w.h.p.
\end{proof}

Combining the preceding pseudo-region estimates with Lemma~\ref{lem:true-region-concentration} yields the following corollary.

\begin{corollary}[Pseudo-region size comparison]
\label{cor:pseudo-region-size-comparison}
With high probability, the pseudo- and true regions have asymptotically the same sizes: $|\widehat{\mathcal C}|=(1\pm o(1))|\mathcal C|$, $|\widehat A_h|=(1\pm o(1))|A_h|$ for all $h\in\{0,\dots,L-1\}$, and $|\widehat{\mathcal O}|=(1\pm o(1))|\mathcal O|$.
\end{corollary}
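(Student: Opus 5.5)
The plan is to obtain each ratio by dividing two $(1\pm o(1))$ approximations of the same deterministic quantity, namely the expected size of the true region. We work on the intersection of the high-probability events from Lemma~\ref{lem:pseudoclique-size}, Corollary~\ref{cor:pseudoband-sizes}, Lemma~\ref{lem:size-pseudo-outer-region} and Lemma~\ref{lem:true-region-concentration}. This is a finite union bound over four events, each failing with probability $O(1/n)$ or $o(1)$ in the w.h.p.\ sense used, so the intersection still holds w.h.p. The band statements are already simultaneous over all $h\in\{0,\dots,L-1\}$, since $L=O(\log\log n)$ was handled inside those results.

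On this event, for the pseudo-clique we have $|\widehat{\mathcal C}|=(1\pm o(1))\mathbb E[|\mathcal C|]$ and $|\mathcal C|=(1\pm o(1))\mathbb E[|\mathcal C|]$. Since $\mathbb E[|\mathcal C|]=\Theta(n^{1-\alpha})>0$, we may divide the two estimates to get
\begin{align*}
\frac{|\widehat{\mathcal C}|}{|\mathcal C|}=\frac{1\pm o(1)}{1\pm o(1)}=1\pm o(1).
\end{align*}
The pseudo-outer region is handled identically, using $\mathbb E[|\mathcal O|]=(1-o(1))n$.

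For the bands, one point needs care: the $o(1)$ terms must be uniform in $h$. This holds because each band estimate was derived from a common bound. The sandwich inclusions use the same $\sqrt{\delta_n}$ for every band, and the Poisson Chernoff bounds have expectations at least $\Omega(n^{1-\alpha}/\log\log n)$ uniformly over bands. We can therefore take a single error function $\epsilon_n=o(1)$ that is valid for all $h$ at once. Dividing $|\widehat A_h|=(1\pm\epsilon_n)\mathbb E[|A_h|]$ by $|A_h|=(1\pm\epsilon_n)\mathbb E[|A_h|]$ then gives $(1\pm O(\epsilon_n))$ uniformly in $h$.

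The only mild obstacle is this uniformity of the error term across the $O(\log\log n)$ bands. If it were not already explicit in the earlier statements, we would handle it by choosing the deviation $\epsilon_n$ in the Chernoff bounds so that each failure probability is $n^{-2}$. Such a choice is possible because every expectation is $n^{\Omega(1)}$. A union bound over the $O(\log\log n)$ regions then completes the argument.
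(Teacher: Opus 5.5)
Your proposal is correct and follows the paper's route: the paper obtains this corollary simply by combining Lemma~\ref{lem:pseudoclique-size}, Corollary~\ref{cor:pseudoband-sizes} and Lemma~\ref{lem:size-pseudo-outer-region} with Lemma~\ref{lem:true-region-concentration}, then dividing the two $(1\pm o(1))$ approximations of each expected region size, as you do. Your additional check that the $o(1)$ error can be taken uniform over the $O(\log\log n)$ bands is a useful detail that the paper leaves implicit.
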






Recall that an HRG is called \emph{typical} if it satisfies all structural properties that we have shown to hold a.a.s. Thus, from this point on, when analysing the colouring algorithm, we fix a typical HRG and only the randomness of the algorithm remains.

\subsection{Colouring the pseudo-clique}
\label{subsec:colouring-pseudo-clique}

We first analyse the pseudo-clique \(\widehat{\mathcal C}\). Although \(\widehat{\mathcal C}\) need not itself be a clique, its size is asymptotically bounded by the size of the core clique, and hence by \(\chi\). We therefore apply the general-graph near-optimal \textsc{RCTDEG} guarantee from Corollary~\ref{cor:near-tight-general-graph-colouring} to the induced graph \(G[\widehat{\mathcal C}]\): a palette of size $(1+o(1))\chi$ suffices for SRC to colour it within $O(\log\log n)$ rounds a.a.s.

\begin{corollary}[Colouring the pseudo-clique]
\label{cor:near-tight-pseudo-clique-colouring}
There exists a deterministic sequence \(\varepsilon_n=o(1)\) such that, if \(|\Psi|\ge(1+\varepsilon_n)\chi\), then SRC colours \(\widehat{\mathcal C}\) within \(O(\log\log n)\) rounds a.a.s.
\end{corollary}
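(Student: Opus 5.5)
The plan is to reduce the statement to the general-graph guarantee of Corollary~\ref{cor:near-tight-general-graph-colouring}, applied to the induced subgraph $G[\widehat{\mathcal C}]$. While $\widehat{\mathcal C}$ is active, every other pseudo-region is inactive and no vertex has been coloured yet. So the initial palette of each $v\in\widehat{\mathcal C}$ is all of $\Psi$, and SRC restricted to this phase is exactly RCTDEG on $G[\widehat{\mathcal C}]$ with the global degree order. By the paper's remark, the order only needs to be a fixed priority order; the proof does not use its specific form. The statement therefore follows once the palette has size $|\widehat{\mathcal C}|$ plus a slack $s$ that is large enough for the general corollary to give $O(\log(|\widehat{\mathcal C}|/s)+\log\log n)=O(\log\log n)$ rounds. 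This holds for instance with $s\ge |\widehat{\mathcal C}|/\log n$, which is the $(1+1/\log n)$ regime in the table.

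The first step is to compare $|\widehat{\mathcal C}|$ with $\chi$ on a typical HRG. Lemma~\ref{lem:pseudoclique-size} and Lemma~\ref{lem:true-region-concentration} (equivalently Corollary~\ref{cor:pseudo-region-size-comparison}) give w.h.p.
\[
|\widehat{\mathcal C}|\le(1+\xi_n)|\mathcal C|=(1+\xi_n)\sigma\le(1+\xi_n)\chi
\]
for some $\xi_n=o(1)$, where the last inequality is Lemma~\ref{lem:chromatic-number-bound}. The subtle point is that $\xi_n$ must be deterministic. I would make it explicit rather than leave it as an unspecified $o(1)$. The sandwich $\mathcal B_+\supseteq\widehat{\mathcal C}$ has expected size at most $e^{\alpha\sqrt{\delta_n}}(1+O(n^{-c}))\mathbb E|\mathcal C|$. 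Poisson concentration of $|\mathcal B_+|$ and $|\mathcal C|$, with deviation $n^{-(1-\alpha)/3}$, then yields an explicit $\xi_n=O(\sqrt{\delta_n})=O((\log n)^{-1/4})$.

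Next I set $\varepsilon_n:=2\xi_n+1/\log n$, which is deterministic and $o(1)$. On the typical event, $|\Psi|\ge(1+\varepsilon_n)\chi$ gives
\[
|\Psi|-|\widehat{\mathcal C}|\ge(\varepsilon_n-\xi_n)\chi\ge\chi/\log n\ge|\widehat{\mathcal C}|/(2\log n)
\]
for large $n$. This slack lies exactly in the regime where Corollary~\ref{cor:near-tight-general-graph-colouring} applies to any graph on $|\widehat{\mathcal C}|$ vertices. If that corollary is phrased with the number of vertices $N$ of the graph in place of $n$, I would note that $N=|\widehat{\mathcal C}|=n^{\Theta(1)}$. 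Then $\log\log N=\Theta(\log\log n)$, and a.a.s.\ in $N$ is a.a.s.\ in $n$, because $N\to\infty$ on the typical event. Combining the algorithmic a.a.s.\ with the w.h.p.\ structural event gives the claim, provided the constant hidden in the phase length $T$ is taken large enough.

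I expect the main obstacle to be the bookkeeping in this last step, namely matching the hypotheses of the general corollary. That corollary needs at least $|\widehat{\mathcal C}|$ plus slack colours, and this must hold for every vertex's palette at the start of the phase. Here it holds trivially because nothing outside the phase is coloured yet. Beyond that, one has to confirm that its $\log\log$ bound is stated uniformly over graphs, so that conditioning on the random but typical graph is legitimate. If the corollary instead requires a clique, I would fall back on the fact that its upper-bound proof only uses the bound $j\le U_{i-1}$ on the number of higher-priority uncoloured neighbours, and this bound continues to hold in any graph.
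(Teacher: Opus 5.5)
Your proposal is correct and follows essentially the paper's own argument. The first phase is RCTDEG on $G[\widehat{\mathcal C}]$, w.h.p.\ $|\widehat{\mathcal C}|\le(1+\xi_n)\chi$ for a deterministic $\xi_n=o(1)$, and $\varepsilon_n$ is taken as a constant multiple of $\xi_n+1/\log n$ so that the additive-slack guarantee gives $O(\log\log n)$ rounds. One small caveat concerns which result you cite: with $\varepsilon_n=2\xi_n+1/\log n$ the guaranteed slack is only about $N/\log n$, where $N=|\widehat{\mathcal C}|$. Since $\log N\approx(1-\alpha)\log n$, this can fall below the $\lceil N/\log N\rceil$ that Corollary~\ref{cor:near-tight-general-graph-colouring} literally uses. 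You should therefore invoke Lemma~\ref{lem:clique-colouring-additive-slack} directly, as your first paragraph effectively does, or enlarge the constant in $\varepsilon_n$ as the paper does with its $C_0$.
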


\begin{proof}
During the first phase of SRC no colours have been fixed yet, and the active region is precisely $\widehat{\mathcal C}$. Thus SRC restricted to $\widehat{\mathcal C}$ is exactly RCTDEG on the induced graph $G[\widehat{\mathcal C}]$.

By Corollary~\ref{cor:pseudo-region-size-comparison}, there exists a deterministic sequence \(\xi_n=o(1)\) such that, w.h.p., $|\widehat{\mathcal C}|\le (1+\xi_n)|\mathcal C|\le (1+\xi_n)\chi $. Moreover, \(\chi=\Theta(n^{1-\alpha})\) by Lemma~\ref{lem:chromatic-number-bound}. Hence the relative slack required by Corollary~\ref{cor:near-tight-general-graph-colouring} is \(O(1/\log|\widehat{\mathcal C}|)=O(1/\log\chi)=O(1/\log n)\). Choose a deterministic sequence \(\varepsilon_n=o(1)\) such that \(\varepsilon_n\ge C_0(\xi_n+1/\log n)\) for a sufficiently large constant \(C_0\). Then, on the high-probability event above,
\begin{align*}
|\Psi|
\ge (1+\varepsilon_n)\chi
\ge \frac{1+\varepsilon_n}{1+\xi_n}\,|\widehat{\mathcal C}|
\ge \left(1+\Omega(\varepsilon_n-\xi_n)\right)|\widehat{\mathcal C}|
\ge \left(1+\Omega(1/\log n)\right)|\widehat{\mathcal C}|.
\end{align*}

Thus the palette has sufficient slack for Corollary~\ref{cor:near-tight-general-graph-colouring}. Applying that corollary to \(G[\widehat{\mathcal C}]\) shows that SRC colours \(\widehat{\mathcal C}\) within \(O(\log\log n)\) rounds a.a.s.
\end{proof}

\subsection{Colouring the pseudo-bands}
\label{subsec:colouring-pseudo-bands}

We next analyse the colouring phases for the intermediate pseudo-bands $\widehat A_0,\dots,\widehat A_{L-1}$. For a band $\widehat A_h$, let $\widehat P_h$ be the union of the previously processed pseudo-regions. For $v\in\widehat A_h$, define
\begin{align*}
d_{\mathrm{pre}}(v)
&:=
|N(v)\cap\widehat P_h|,
&
d_{\mathrm{int}}(v)
&:=
|N(v)\cap\widehat A_h|.
\end{align*}
These quantities separate two sources of colour restrictions from neighbouring vertices in a pseudo-band. The previous-region degree $d_{\mathrm{pre}}$ accounts for colours already used by neighbours before the band becomes active, while the internal degree $d_{\mathrm{int}}$ controls conflicts with neighbours during the active phase. The single-band lemma isolates the active phase: if each vertex starts with $\chi$ available colours and $d_{\mathrm{int}}$ is sufficiently small, then Sequential Radial Colouring colours the band in $O(\log\log n)$ rounds a.a.s. The previous-region degree $d_{\mathrm{pre}}$ is only needed later to bound inherited colour-palette restrictions when pseudo-bands are activated sequentially.

\subsubsection*{Degree to previous pseudo-regions}

We next bound the number of neighbours that a vertex in a pseudo-band may have in previously processed pseudo-regions. Recall that $x_{\mathrm{crit}}$ was chosen so that, for every band starting at an offset at least $x_{\mathrm{crit}}$, this previous-region degree is at most $\chi$ w.h.p.

\begin{lemma}[Degree to previous pseudo-regions]
\label{lem:degree-previous-pseudo-bands}
There exists a constant $c=c(\alpha,C)>0$ such that, for a sufficiently large choice of $x_{\mathrm{crit}}$, w.h.p.,
\begin{align*}
d_{\mathrm{pre}}(v)
\le
\begin{cases}
c\chi, & \text{if } x_h<x_{\mathrm{crit}},\\
\chi, & \text{if } x_h\ge x_{\mathrm{crit}},
\end{cases}
\end{align*}
for every $h\in\{0,\dots,L-1\}$ and every $v\in\widehat A_h$.
\end{lemma}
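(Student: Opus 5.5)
We work on the event $\mathcal E$, so every vertex in a pseudo-region preceding $\widehat A_h$ has true radius at most $q_h:=R/2+x_h+\sqrt{\delta_n}$. This gives $d_{\mathrm{pre}}(v)\le |N(v)\cap B_0(q_h)|$ for every $v\in\widehat A_h$. Also $\widehat A_h\subseteq A_h^+$, so such $v$ has true radius $r(v)>R/2+x_h-\sqrt{\delta_n}$. The problem thus reduces to bounding the number of neighbours inside a centred ball for a vertex sitting at radius roughly the ball's radius.

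Conditional on $r(v)=r$, the count $|N(v)\cap B_0(q_h)|$ is Poisson with mean $D_{q_h}(r)$. By Lemma~\ref{lem:internal-ball-degree-monotone}, this mean is non-increasing in $r$, so it is at most $D_{q_h}(R/2+x_h-\sqrt{\delta_n})$. The bulk of the work is an asymptotic evaluation of $D_q(r)$ for $q=R/2+y$ and $r=R/2+y'$ with $|y-y'|=o(1)$. We split the integral at $s=R-r$, as in the proof of Lemma~\ref{lem:uniform-expected-degree-asymptotic}.

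For $s\le R-r=R/2-y'$, all points are adjacent to $v$. This part contributes $n\mu(B_0(R/2-y'))=(1+o(1))n^{1-\alpha}e^{-\alpha(C/2+y')}$ when $y'$ is constant, and less otherwise.

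For $s\in(R-r,q]$, Lemma~\ref{lem:angular-threshold-asymptotics} gives $\theta_R(r,s)=\Theta(e^{(R-r-s)/2})$. This yields a contribution
\begin{align*}
O\Bigl(ne^{-\alpha R}e^{(R-r)/2}\int_{R-r}^{q}e^{\beta s}\,ds\Bigr)=O\bigl(ne^{-\alpha R}e^{(R-r)/2}e^{\beta q}\bigr).
\end{align*}
Plugging in $q,r$, this is $O(n^{1-\alpha}e^{-(1-\alpha)y+o(1)})$. Together with the first part, $D_q(r)=O(n^{1-\alpha}e^{-(1-\alpha)y})$ uniformly in $y\ge -1$. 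Since $\sigma=\Theta(n^{1-\alpha})$ concentrates (Lemma~\ref{lem:true-region-concentration}), and $\chi\ge\sigma$ by Lemma~\ref{lem:chromatic-number-bound}, we get $D_{q_h}(r(v))\le c'e^{-(1-\alpha)x_h}\chi$ for a constant $c'$ depending on $\alpha,C$. This handles the case $x_h<x_{\mathrm{crit}}$ by taking $c:=2c'$ (any constant). For $x_h\ge x_{\mathrm{crit}}$, we choose $x_{\mathrm{crit}}$ so large that $c'e^{-(1-\alpha)x_{\mathrm{crit}}}\le 1/2$, and get mean at most $\chi/2$.

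Concentration then upgrades means to actual counts. The mean is $\Omega(n^{1-\alpha}e^{-(1-\alpha)x_h})$ as long as $x_h$ stays below $r_{\mathrm{out}}-R/2$. If $D\ge\log^2 n$, a Poisson Chernoff bound as in Lemma~\ref{lem:degree-concentration-above-cutoff} gives a count at most $2D$ with probability $1-n^{-3}$. Otherwise the count is at most $O(\log^2 n)\le\chi$ w.e.h.p. We then union-bound over all vertices (on $|V(G)|\le 2n$) and all $L=O(\log\log n)$ bands. Since the bound must hold with $\chi$ exactly, we use mean $\le\chi/2$ plus the factor-$2$ Chernoff slack; this is why we demanded $1/2$ above.

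The main obstacle is the conditioning subtlety. $v\in\widehat A_h$ is a random event that depends on $\deg(v)$, and hence on the very neighbours being counted. To avoid conditioning on pseudo-membership, we bound, for \emph{every} vertex with true radius above $R/2+x_h-\sqrt{\delta_n}$ and every $h$, its neighbour count in $B_0(q_h)$. We then intersect with $\mathcal E$, which places each $v\in\widehat A_h$ into this set. We use the Palm/Mecke formula: conditioning on a point at position $(r,\theta)$ leaves the rest Poisson. This makes the per-vertex Poisson tail legitimate before the union bound.
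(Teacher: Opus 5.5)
Your proposal is correct and follows essentially the same route as the paper. On $\mathcal E$ you dominate $d_{\mathrm{pre}}(v)$ by the neighbour count in the deterministic ball $B_0(R/2+x_h+\sqrt{\delta_n})$, bound its conditional Poisson mean by $O(n^{1-\alpha}e^{-(1-\alpha)x_h})$, compare with $\chi\ge c_0n^{1-\alpha}$, and finish with a Poisson Chernoff bound and a union bound; your Palm/Mecke remark just makes the paper's footnote on the conditioning rigorous.

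One small fix: Lemma~\ref{lem:internal-ball-degree-monotone} is stated only for $r\le q$, but a vertex $v\in\widehat A_h$ can have $r(v)>q_h$. The monotonicity of $p_R(r,s)$ in $r$ on all of $(0,R]$, shown in the proof of Lemma~\ref{lem:expected-degree-monotone}, extends the lemma to this range. Alternatively, you can bound the mean directly for every $x\ge x_h-\sqrt{\delta_n}$, as the paper does.
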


\begin{proof}
Fix $h\in\{0,\dots,L-1\}$ and $v\in\widehat A_h$, and write $x:=r(v)-R/2$. On the radius accuracy event \(\mathcal E\), which holds w.h.p., we have \(x\ge x_h-\sqrt{\delta_n}\), while every vertex in a pseudo-region preceding \(\widehat A_h\) has true radius at most \(R/2+x_h+\sqrt{\delta_n}\). Hence \(d_{\mathrm{pre}}(v)\) is bounded by the degree of \(v\) into the deterministic ball \(B_0(R/2+x_h+\sqrt{\delta_n})\).\footnote{The pseudo-region itself is degree-defined, so membership in it may be correlated with the edge to the fixed vertex. We therefore first dominate the pseudo-region degree by the degree into a deterministic enlarged radial region, obtained from the radius-accuracy event by shifting the relevant boundaries by \(\sqrt{\delta_n}\). Conditional on the fixed vertex, this dominating count is Poisson, and all expectation and Chernoff estimates are applied to it.} Conditioning on the position of \(v\), Lemma~\ref{lem:centered-ball-measure} and Lemma~\ref{lem:angular-threshold-asymptotics} yield
\begin{align*}
\mathbb E\left[|N(v)\cap B_0(R/2+x_h+\sqrt{\delta_n})|\mid r(v),\theta(v)\right]
&\le O\left(n^{1-\alpha}e^{-\alpha x}+n^{1-\alpha}e^{-x/2} \int_{-x}^{x_h+\sqrt{\delta_n}}e^{(\alpha-1/2)y}\,dy\right)\\
&=O\left(n^{1-\alpha}e^{-(1-\alpha)x_h}\right),
\end{align*}
where we used $x\ge x_h-\sqrt{\delta_n}$, $\sqrt{\delta_n}=o(1)$, and $x_h\ge0$.

By Lemma~\ref{lem:chromatic-number-bound} w.e.h.p. there exists a constant \(c_0>0\) such that \(\chi\ge c_0 n^{1-\alpha}\). Thus, after choosing the constant \(c=c(\alpha,C)>0\) in the statement sufficiently large, the deterministic threshold \(c\,c_0 n^{1-\alpha}\) is a fixed factor larger than the expectation $O\left(n^{1-\alpha}e^{-(1-\alpha)x_h}\right)$, uniformly over all relevant \(h\) and \(v\). A Poisson Chernoff bound gives
\begin{align*}
\Pr\left(
|N(v)\cap B_0(R/2+x_h+\sqrt{\delta_n})|>c\,c_0 n^{1-\alpha}
\,\middle|\,
r(v),\theta(v)
\right)
\le
\exp\left(-\Omega(n^{1-\alpha})\right).
\end{align*}


Together with \(\mathcal E\) and \(\chi\ge c_0n^{1-\alpha}\), this implies \(d_{\mathrm{pre}}(v)\le c\chi\) w.h.p. As the preceding expectation bound contains the factor \(e^{-(1-\alpha)x_h}\), we may choose a sufficiently large constant \(x_{\mathrm{crit}}=x_{\mathrm{crit}}(\alpha,C)\) such that, whenever \(x_h\ge x_{\mathrm{crit}}\), the conditional expectation is at most \(c_0 n^{1-\alpha}/4\). A Poisson Chernoff bound gives
\begin{align*}
\Pr\left(
|N(v)\cap B_0(R/2+x_h+\sqrt{\delta_n})|>c_0 n^{1-\alpha}
\,\middle|\,
r(v),\theta(v)
\right)
\le
\exp\left(-\Omega(n^{1-\alpha})\right).
\end{align*}

Together with \(\mathcal E\) and \(\chi\ge c_0n^{1-\alpha}\), this gives \(d_{\mathrm{pre}}(v)\le\chi\) w.h.p. for all \(h\) with \(x_h\ge x_{\mathrm{crit}}\) and all \(v\in\widehat A_h\). A union bound over all vertices and the \(L=O(\log\log n)\) pseudo-bands completes the proof.
\end{proof}

In the refined bands, the previous-region degree is essentially bounded by the degeneracy.

\begin{corollary}[Previous degree in refined bands]
\label{cor:previous-degree-degeneracy-refined}
With high probability, every $v\in\widehat A_h$ with $x_h<x_{\mathrm{crit}}$ satisfies
$d_{\mathrm{pre}}(v)\le(1+o(1))\kappa$.
\end{corollary}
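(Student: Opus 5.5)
We would compare $d_{\mathrm{pre}}(v)$ with the quantity $D_q(q)$ from Lemma~\ref{lem:internal-ball-degree-monotone}, for a suitable centred ball, and then invoke Lemma~\ref{lem:centred-ball-degeneracy-lower-bound} to convert this into a bound by $\kappa$. Fix $h$ with $x_h<x_{\mathrm{crit}}$ and $v\in\widehat A_h$. On the event $\mathcal E$, every vertex of $\widehat P_h$ has true radius at most $q_h:=R/2+x_h+\sqrt{\delta_n}$. Moreover, $v$ itself has true radius at least $q_h':=R/2+x_h-\sqrt{\delta_n}$. Hence $d_{\mathrm{pre}}(v)\le |N(v)\cap B_0(q_h)|$, and this count is dominated by a deterministic-region degree, as in the footnote of Lemma~\ref{lem:degree-previous-pseudo-bands}.

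Conditional on the position of $v$, this count is Poisson with mean $\mathbb E[|N(v)\cap B_0(q_h)|\mid r(v)]$. For the upper bound we want this mean to be at most $(1+o(1))D_{q_h}(q_h)$. Since $r(v)\ge q_h-2\sqrt{\delta_n}$, it would be enough to show that $D_{q}(r)$ changes only by a factor $1+o(1)$ when $r$ moves by $O(\sqrt{\delta_n})=o(1)$, uniformly for $q\in[R/2,R/2+x_{\mathrm{crit}}+1]$. If instead $r(v)\ge q_h$, monotonicity (Lemma~\ref{lem:internal-ball-degree-monotone}) gives the bound directly.

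For the continuity we write, as in the proof of Lemma~\ref{lem:uniform-expected-degree-asymptotic},
\[
D_q(r)=n\int_0^q p_R(r,s)f(s)\,ds .
\]
The dominant contribution comes from $s>R-r$, where $p_R(r,s)\approx \pi^{-1}\cdot 2e^{(R-r-s)/2}$ by Lemma~\ref{lem:angular-threshold-asymptotics}. Shifting $r$ by $o(1)$ multiplies this integrand by $e^{o(1)}$. It also moves the lower limit $R-r$ by $o(1)$, and the part $s\le R-r+o(1)$ contributes only $O(n e^{-\alpha r}e^{o(1)})=o(n^{1-\alpha})$. Together with $D_q(q)=\Theta(n^{1-\alpha})$, which follows from the same estimate as in Lemma~\ref{lem:degree-previous-pseudo-bands} with $x$ constant, this gives $D_q(r)=(1+o(1))D_q(q)$ for $|r-q|=o(1)$.

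Since $D_{q_h}(q_h)=\Theta(n^{1-\alpha})=\omega(\log n)$, a Poisson Chernoff bound with slack $\xi_n=o(1)$, chosen so that $\xi_n^2 n^{1-\alpha}\ge 4\log n$, gives $d_{\mathrm{pre}}(v)\le(1+o(1))D_{q_h}(q_h)$ with failure probability $n^{-2}$. A union bound over all vertices makes this hold for all of them simultaneously. Finally, Lemma~\ref{lem:centred-ball-degeneracy-lower-bound} applied with $q=q_h\ge R/2$ gives $\kappa\ge(1-o(1))D_{q_h}(q_h)$ w.h.p., so $d_{\mathrm{pre}}(v)\le(1+o(1))\kappa$.

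One caveat is that Lemma~\ref{lem:centred-ball-degeneracy-lower-bound} is stated for a single $q$, whereas we need it for all $O(\log\log n)$ fine offsets $q_h$. Its failure probability is $O(n^{-1})$, so a union bound over these offsets still leaves a w.h.p. statement. Alternatively, one can take the single worst $h$, since only the inequality for the maximising band is needed.

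The main obstacle is the uniform continuity estimate for $D_q(r)$ in $r$ near the boundary $r\approx q$ with $q\approx R/2$. The difficulty is that $R-r\approx q$ lies exactly where the angular threshold is not yet in its asymptotic regime. My first approach would be to cut out the window $s\in[R-r,R-r+m_n]$ as in Lemma~\ref{lem:uniform-expected-degree-asymptotic}. The question is whether its contribution, of order $n^{1-\alpha}e^{\alpha m_n}e^{-\alpha x}$, is negligible. It is not negligible when $x$ is constant, so this crude cut does not suffice. Instead I would bound the window contribution directly using $|\partial_r p_R(r,s)|$: by the computation of $\partial q/\partial r$ in Lemma~\ref{lem:expected-degree-monotone}, this derivative is bounded except within $O(\sqrt{\cdot})$ of the endpoint $s=R-r$. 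Combining this with the $o(1)$ radius shift then gives an additive error of $o(n^{1-\alpha})$.
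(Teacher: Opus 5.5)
Your argument is correct in outline. It takes a different route from the paper, and that difference is exactly what creates your ``main obstacle''. You fix the ball $B_0(q_h)$ and move the vertex radius from $r(v)\in[q_h-2\sqrt{\delta_n},q_h]$ up to $q_h$. That requires $(1+o(1))$-continuity of $r\mapsto D_q(r)$ in the regime $r+s-R=O(1)$, where Lemma~\ref{lem:angular-threshold-asymptotics} is useless.

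The paper keeps the vertex where it is and enlarges the ball instead. On $\mathcal E$ it uses $d_{\mathrm{pre}}(v)\le|N(v)\cap B_0(r(v)+2\sqrt{\delta_n})|$. Relative to $D_{r(v)}(r(v))$, the extra annulus of width $2\sqrt{\delta_n}$ at radius $R/2+O(1)$ contributes at most its expected vertex count, $O(\sqrt{\delta_n}\,n^{1-\alpha})=o(n^{1-\alpha})$, whatever the connection probabilities are. The paper then bounds $D_{r(v)}(r(v))$ by the deterministic quantity $M_n:=\max_{r\in[R/2,R/2+x_{\mathrm{crit}}+1]}D_r(r)=\Theta(n^{1-\alpha})$. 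It applies Lemma~\ref{lem:centred-ball-degeneracy-lower-bound} once, at the maximiser. So no regularity of $p_R$ is needed at all. That single application is your ``worst $h$'' variant, and you should prefer it to the union bound: $O(\log\log n)$ events of probability $O(1/n)$ do not give $O(1/n)$, so the union bound is not w.h.p.\ in the paper's sense.

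Your route does close, but only via the idea in your last paragraph, and that idea needs to be made precise. For $0<\rho+s-R=O(1)$ one has $|\partial_r p_R(\rho,s)|=O((\rho+s-R)^{-1/2})$. Cut out the window $s\in[R-\max\{r,r'\},\,R-\min\{r,r'\}+\eta]$, where you only use $0\le p_R\le 1$, and apply the derivative bound elsewhere. This gives $|D_q(r)-D_q(r')|=O\bigl((\eta+|r-r'|+|r-r'|\eta^{-1/2})\,n^{1-\alpha}\bigr)$. That is $o(n^{1-\alpha})$ for $|r-r'|\le 2\sqrt{\delta_n}$ if you take, say, $\eta=|r-r'|^{2/3}$.

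Your middle paragraph should be dropped, for two reasons:
\begin{itemize}
\item On the whole relevant range $r+s-R=O(1)$, so the approximation $p_R\approx\frac{2}{\pi}e^{(R-r-s)/2}$ is off by a constant factor there.
\item For $r=R/2+O(1)$, the bound $O(ne^{-\alpha r})$ is $\Theta(n^{1-\alpha})$, not $o(n^{1-\alpha})$.
\end{itemize}
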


\begin{proof}
Let $M_n$ denote the maximum of $\mathbb E[|N(u)\cap B_0(r)|\mid r(u)=r]$ over
$r\in[R/2,R/2+x_{\mathrm{crit}}+1]$. By continuity, this maximum is attained
at some $r_n$. Since $M_n=\Theta(n^{1-\alpha})=\omega(\log n)$,
Lemma~\ref{lem:centred-ball-degeneracy-lower-bound}, applied with $q=r_n$,
gives $M_n\le(1+o(1))\kappa$ w.h.p.

Fix a refined band $\widehat A_h$ and $v\in\widehat A_h$. On the radius accuracy event \(\mathcal E\), which holds w.h.p., the pseudo-previous region is contained in \(B_0(R/2+x_h+\sqrt{\delta_n})\), whereas \(r(v)\ge R/2+x_h-\sqrt{\delta_n}\). Hence \(d_{\mathrm{pre}}(v)\le |N(v)\cap B_0(r(v)+2\sqrt{\delta_n})|\). Since the additional annulus has width \(2\sqrt{\delta_n}=o(1)\), Lemmas~\ref{lem:centered-ball-measure} and~\ref{lem:angular-threshold-asymptotics} imply that
\begin{align*}
\mathbb E\left[|N(v)\cap B_0(r(v)+2\sqrt{\delta_n})|\mid r(v),\theta(v)\right]
\le
(1+o(1))M_n.
\end{align*}


A Poisson Chernoff bound applied to \(|N(v)\cap B_0(r(v)+2\sqrt{\delta_n})|\), followed by a union bound gives \(|N(v)\cap B_0(r(v)+2\sqrt{\delta_n})|\le(1+o(1))M_n\) w.h.p. simultaneously for all refined bands and all \(v\in\widehat A_h\). Together with \(\mathcal E\) and \(M_n\le(1+o(1))\kappa\), this proves \(d_{\mathrm{pre}}(v)\le(1+o(1))\kappa\) w.h.p.
\end{proof}

\subsubsection*{Internal degree within pseudo-bands}

The number of conflicts arising during an active phase is determined by the internal degree of the corresponding pseudo-band. Near the core, the refined decomposition yields a logarithmic reduction in this degree. Beyond $x_{\mathrm{crit}}$, the recursive choice of band widths ensures that, for sufficiently small $k$, the internal degree is an arbitrarily small constant fraction of $\chi$.

\begin{lemma}[Internal degree of pseudo-bands]
\label{lem:internal-degree-bands}
For every fixed $\eta>0$, there exists $k_0=k_0(\eta)>0$ such that, for every $k\in(0,k_0]$\footnote{Recall that $k$ is part of the band definition. The larger $k$, the wider the bands.}, w.h.p.,
\begin{align*}
d_{\mathrm{int}}(v)
\le
\begin{cases}
\chi/\sqrt{\log\log n}, & \text{if } x_h<x_{\mathrm{crit}},\\
\eta\chi, & \text{if } x_h\ge x_{\mathrm{crit}},
\end{cases}
\end{align*}
simultaneously for all $h\in\{0,\dots,L-1\}$ and all $v\in\widehat A_h$.
\end{lemma}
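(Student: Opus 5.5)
The plan follows the template of Lemma~\ref{lem:degree-previous-pseudo-bands}. We first dominate the pseudo-region count by a count into a deterministic annulus, then bound its Poisson mean, and finish with a Chernoff bound and a union bound. Fix $h$ and $v\in\widehat A_h$, and write $x:=r(v)-R/2$. On the event $\mathcal E$ we have $\widehat A_h\subseteq A_h^+$ and $x\in(x_h-\sqrt{\delta_n},x_{h+1}+\sqrt{\delta_n}]$. Hence $d_{\mathrm{int}}(v)\le |N(v)\cap A_h^+|$, and conditional on the position of $v$ this count is Poisson. Its mean splits into two parts. The first comes from radii $s\le R-r(v)$, where every vertex is adjacent to $v$; note $R-r(v)=R/2-x\le R/2+x_h-\sqrt{\delta_n}$ up to $o(1)$. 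The second comes from radii $s>R-r(v)$, where Lemma~\ref{lem:angular-threshold-asymptotics} gives connection probability $\Theta(e^{(R-r(v)-s)/2})/\pi$. Writing $s=R/2+y$ with $y\in(x_h-\sqrt{\delta_n},x_{h+1}+\sqrt{\delta_n}]$ and using the density $\Theta(n^{1-\alpha}e^{\alpha y})$ from Lemma~\ref{lem:centered-ball-measure}, the mean is bounded by
\begin{align*}
O\Bigl(n^{1-\alpha}\int_{x_h-\sqrt{\delta_n}}^{x_{h+1}+\sqrt{\delta_n}} e^{\alpha y}\min\{1,e^{-(x+y)/2}\}\,dy\Bigr).
\end{align*}

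\textbf{Fine bands} ($x_h<x_{\mathrm{crit}}$). All exponents are bounded by constants, and the integration range has length $w_n+2\sqrt{\delta_n}=(1+o(1))/\log\log n$. The mean is therefore $O(n^{1-\alpha}/\log\log n)$. By Lemma~\ref{lem:chromatic-number-bound}, $\chi\ge c_0n^{1-\alpha}$, so this is $o(\chi/\sqrt{\log\log n})$ with room to spare. Since $\chi/\sqrt{\log\log n}=n^{\Omega(1)}$, a Poisson Chernoff bound gives failure probability $\exp(-n^{\Omega(1)})$.

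\textbf{Coarse bands} ($x_h\ge x_{\mathrm{crit}}>0$). Here $x>0$, so the full-adjacency part needs $y\le -x<0$. This lies outside the range, up to the $\sqrt{\delta_n}$ slack, which contributes only $o(n^{1-\alpha})$. The remaining part is
\begin{align*}
O\Bigl(n^{1-\alpha}e^{-x/2}\int_{x_h-\sqrt{\delta_n}}^{x_{h+1}+\sqrt{\delta_n}} e^{\beta y}\,dy\Bigr)
=O\Bigl(n^{1-\alpha}e^{-x_h/2}\beta^{-1}\bigl(e^{\beta x_{h+1}}-e^{\beta x_h}\bigr)\Bigr),
\end{align*}
using $x\ge x_h-o(1)$. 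The recursion gives $e^{\beta x_{h+1}}=e^{\beta x_h}(1+ke^{(1-\alpha)x_h})$, so the bracket equals $ke^{\beta x_h}e^{(1-\alpha)x_h}$. The whole expression collapses to $O(k\,n^{1-\alpha}e^{(\beta-1/2+1-\alpha)x_h})=O(k\,n^{1-\alpha})$, because $\beta-1/2+1-\alpha=0$. The hidden constant depends only on $\alpha$ and $C$. Choosing $k_0$ so that this mean is at most $\eta c_0n^{1-\alpha}/2$ and applying a Poisson Chernoff bound yields $d_{\mathrm{int}}(v)\le\eta\chi$ with probability $1-\exp(-\Omega(n^{1-\alpha}))$. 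The possible final truncated band is only narrower, so the same bound applies to it.

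\textbf{Main obstacle.} The delicate point is this exact cancellation of exponents in the coarse regime. It is precisely what the recursion for $x_{h+1}$ is designed to produce. We must verify that the $\sqrt{\delta_n}$ enlargements and the error factors $(1+\Theta(e^{R-r-s}))$ from Lemma~\ref{lem:angular-threshold-asymptotics} change the bound only by a constant factor, uniformly in $h$. This holds because $e^{\beta\sqrt{\delta_n}}=1+o(1)$ and $e^{R-r-s}\le e^{-x-y}$ is bounded. We must also handle the domination issue noted in the footnote of Lemma~\ref{lem:degree-previous-pseudo-bands}, which is again resolved by working on $\mathcal E$. Finally, we take a union bound over the $O(n)$ vertices, which is valid on the event $|V(G)|\le 2n$, and over the $L=O(\log\log n)$ bands, together with the w.e.h.p.\ event $\chi\ge c_0n^{1-\alpha}$.
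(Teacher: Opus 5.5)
Your proposal is correct and follows the paper's proof: on $\mathcal E$ you dominate $d_{\mathrm{int}}(v)$ by the Poisson count $|N(v)\cap A_h^+|$ and bound its mean by $O\bigl(n^{1-\alpha}e^{-x/2}\int e^{\beta y}\,dy\bigr)$. You then get $O(n^{1-\alpha}/\log\log n)$ in the fine range, use the recursion-induced cancellation $e^{-(1-\alpha)x_h}\cdot k e^{(1-\alpha)x_h}=k$ in the coarse range, and finish with a Poisson Chernoff bound and a union bound, exactly as the paper does (the paper tracks the $\sqrt{\delta_n}$ enlargement as an additive error, $e^{\beta(\Delta_h+2\sqrt{\delta_n})}-1\le(1+o(1))ke^{(1-\alpha)x_h}+o(1)$, giving mean $O((k+o(1))n^{1-\alpha})$, which for fixed $k$ is the same as your $O(k\,n^{1-\alpha})$).
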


\begin{proof}
Fix $h\in\{0,\dots,L-1\}$ and $v\in\widehat A_h$. Recall that $\beta=\alpha-\frac12$, and write $x:=r(v)-R/2$ and $\Delta_h:=x_{h+1}-x_h$. On the radius accuracy event \(\mathcal E\), which holds w.h.p., we have \(\widehat A_h\subseteq A_h^+\), where \(A_h^+=\{w\in V(G):x_h-\sqrt{\delta_n}<x(w)\le x_{h+1}+\sqrt{\delta_n}\}\), and \(x\ge x_h-\sqrt{\delta_n}\). Hence \(d_{\mathrm{int}}(v)\le |N(v)\cap A_h^+|\). Conditioning on the position of \(v\), Lemmas~\ref{lem:centered-ball-measure} and~\ref{lem:angular-threshold-asymptotics} give
\begin{align*}
\mathbb E\left[|N(v)\cap A_h^+|\mid r(v),\theta(v)\right]
&\le
O\left(
n^{1-\alpha}e^{-x/2}
\int_{x_h-\sqrt{\delta_n}}^{x_{h+1}+\sqrt{\delta_n}}
e^{\beta y}\,dy
\right)\\
&\le
O\left(
n^{1-\alpha}e^{-(1-\alpha)x_h}
\left(e^{\beta(\Delta_h+2\sqrt{\delta_n})}-1\right)
\right).
\end{align*}

If \(x_h<x_{\mathrm{crit}}\), then \(\Delta_h=1/\log\log n\). Since \(\sqrt{\delta_n}=o(1/\log\log n)\) and \(x_h\ge0\), the above expectation satisfies
\begin{align*}
\mathbb E\left[|N(v)\cap A_h^+|\mid r(v),\theta(v)\right]
=
O\left(\frac{n^{1-\alpha}}{\log\log n}\right)
=
o\left(\frac{n^{1-\alpha}}{\sqrt{\log\log n}}\right).
\end{align*}

By Lemma~\ref{lem:chromatic-number-bound}, there is a deterministic constant \(c_0>0\) such that \(\chi\ge c_0n^{1-\alpha}\) w.e.h.p. Since the expectation above is \(o(n^{1-\alpha}/\sqrt{\log\log n})\), a Poisson Chernoff bound applied to \(|N(v)\cap A_h^+|\) with threshold \(c_0n^{1-\alpha}/\sqrt{\log\log n}\), followed by a union bound, gives \(|N(v)\cap A_h^+|\le c_0n^{1-\alpha}/\sqrt{\log\log n}\) w.h.p. simultaneously for all such \(h\) and \(v\). Together with the radius-accuracy event \(\mathcal E\) and the chromatic-number bound, which both hold w.h.p., this implies \(d_{\mathrm{int}}(v)\le \chi/\sqrt{\log\log n}\) w.h.p. simultaneously for all \(h\) with \(x_h<x_{\mathrm{crit}}\) and all \(v\in\widehat A_h\).

In the recursive range \(x_h\ge x_{\mathrm{crit}}\), the recursive construction gives \(\Delta_h\le\beta^{-1}\log(1+k e^{(1-\alpha)x_h})\), since the final band may be truncated and hence \(e^{\beta\Delta_h}-1\le k e^{(1-\alpha)x_h}\). Together with \(e^{2\beta\sqrt{\delta_n}}=1+o(1)\), this yields
\begin{align*}
e^{\beta(\Delta_h+2\sqrt{\delta_n})}-1
=
e^{2\beta\sqrt{\delta_n}}
\left(e^{\beta\Delta_h}-1\right)
+
e^{2\beta\sqrt{\delta_n}}-1
\le
(1+o(1))k e^{(1-\alpha)x_h}+o(1).
\end{align*}

It follows that $|N(v)\cap A_h^+|$ has conditional expectation \(O((k+o(1))n^{1-\alpha})\). Let \(c_0>0\) be such that \(\chi\ge c_0n^{1-\alpha}\) w.e.h.p. by Lemma~\ref{lem:chromatic-number-bound}. Choosing \(k_0(\eta)>0\) sufficiently small, this expectation is at most \(c_0\eta n^{1-\alpha}/4\) for every \(k\in(0,k_0(\eta)]\) and all sufficiently large \(n\). A Poisson Chernoff bound, followed by a union bound gives \(|N(v)\cap A_h^+|\le c_0\eta n^{1-\alpha}\) simultaneously for all such \(h\) and \(v\), w.h.p. Together with the radius-accuracy event \(\mathcal E\) and the chromatic-number bound, this implies \(d_{\mathrm{int}}(v)\le\eta\chi\) w.h.p. simultaneously for all \(h\) with \(x_h\ge x_{\mathrm{crit}}\) and all \(v\in\widehat A_h\).
\end{proof}

\subsubsection*{Band colouring process}

We next establish a colouring bound for each pseudo-band during its active phase. Once inherited restrictions from earlier pseudo-regions have been removed, the bands can be analysed independently: if every vertex starts with at least $\chi$ available colours upon activation, then Sequential Radial Colouring colours the band within $O(\log\log n)$ rounds a.a.s. For a pseudo-band $\widehat A_h$ with $h\in\{0,\dots,L-1\}$, define
\begin{align*}
\widetilde M_{\mathrm{int},h}
&:=
\max_{v\in\widehat A_h}d_{\mathrm{int}}(v),
&
M_{\mathrm{int},h}
&:=
\max\left\{
\widetilde M_{\mathrm{int},h},
\frac{\chi}{\sqrt{\log\log n}}
\right\}.
\end{align*}

\begin{lemma}[Colouring the pseudo-bands]
\label{lem:colouring-pseudo-bands}
Fix $C_{\mathrm{int}}>2$. Suppose that, upon activation of a pseudo-band $\widehat A_h$, every vertex in $\widehat A_h$ has at least $C_{\mathrm{int}}\cdot M_{\mathrm{int},h}$ available colours. Then, a.a.s., Sequential Radial Colouring colours every pseudo-band within $O(\log\log n)$ rounds of its activation.
\end{lemma}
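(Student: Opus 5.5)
We fix one pseudo-band $\widehat A_h$ and write $M:=M_{\mathrm{int},h}$. The aim is to show it is fully coloured within $T=O(\log\log n)$ rounds with failure probability $o(1/L)$. A union bound over the $L=O(\log\log n)$ bands (Corollary~\ref{cor:number-of-radial-bands}) then gives the statement.

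\textbf{Available colours.} Since $d_{\mathrm{int}}(v)\le M$, internal neighbours can remove at most $M$ colours during the active phase, even if they all use distinct colours. Every uncoloured $v\in\widehat A_h$ therefore keeps at least $a:=(C_{\mathrm{int}}-1)M=cM$ available colours, with $c>1$, in every round. Let $D_i$ be the maximum over $v\in\widehat A_h$ of the number of uncoloured neighbours of $v$ in $\widehat A_h$ after round $i$, so $D_0\le M$. In round $i+1$ each uncoloured vertex has at most $D_i$ higher-priority uncoloured neighbours. Order the uncoloured vertices by priority and take $S$ to be the uncoloured neighbours of a fixed $v$. Then Corollary~\ref{cor:sequential-stochastic-domination-sub} (with $b_j\le D_i$, $a_j\ge cM$) gives, conditional on the state after round $i$,
\begin{align*}
U_{i+1}(v)\preceq \operatorname{Bin}\bigl(D_i, D_i/(cM)\bigr).
\end{align*}
Here $U_{i+1}(v)$ is the number of uncoloured band-neighbours of $v$ after round $i+1$.

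\textbf{Geometric phase.} While $D_i\ge \delta M$ for a small constant $\delta$, the mean is at most $D_i/c$. Since $M\ge\chi/\sqrt{\log\log n}=n^{\Omega(1)}$, a Chernoff bound gives $U_{i+1}(v)\le D_i(1+1/c)/2$ with probability $1-e^{-n^{\Omega(1)}}$. A union bound over at most $2n$ vertices then shows $D_{i+1}\le \gamma D_i$ for a constant $\gamma<1$. After $t_0=O(1)$ rounds, $D_{t_0}\le\delta M$.

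\textbf{Quadratic phase.} We prove by induction that $D_{t_0+i}\le \delta M 2^{-(2^i-1)}$ whenever the right-hand side is at least $K_0\log n$. The mean of the dominating binomial is $D_i^2/(cM)\le(\delta/c)D_{t_0}2^{-2(2^i-1)}$, which is at most half the target bound for $\delta\le c/4$. Chernoff with the upper tail at twice the mean, or at $K_0\log n$ once the mean drops below that, holds with probability $1-n^{-3}$ per vertex. A union bound gives the step. After $O(\log\log n)$ rounds, $D_i\le K_0\log n$.

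\textbf{Main obstacle: wrap-up.} Once $D_i=O(\log n)$, each remaining vertex fails in a round with conditional probability at most $O(\log n)/(cM)=n^{-\Omega(1)}$. Lemma~\ref{lem:one-step-failure-bound} gives this bound conditional on any history, so the bound composes over rounds. After a constant number $q$ of further rounds, a fixed vertex survives with probability $n^{-q\Omega(1)}\le n^{-2}$, and a union bound over the band finishes it. The delicate point is uniformity. Every Chernoff and union-bound failure probability must be $o(1/(L\cdot T))$, so that the bound survives the union over all rounds and all $O(\log\log n)$ bands. The tail thresholds therefore have to be chosen as $\max\{2\mu, K_0\log n\}$ for a large $K_0$.
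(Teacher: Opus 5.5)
Your proposal is correct and matches the paper's proof essentially step for step. It uses the same sequential-domination bound $\operatorname{Bin}(D_i,D_i/(cM))$, then a constant-round geometric phase down to $\delta M$. It follows this with the doubly exponential induction $D_{t_0+i}\le\delta M\,2^{-(2^i-1)}$ down to $O(\log n)$, and finally constantly many wrap-up rounds before a union bound over the $O(\log\log n)$ bands. The only cosmetic difference is your threshold $\max\{2\mu,K_0\log n\}$, where the paper instead runs the induction while $\nu_i\ge B\log n$ and then applies one extra Chernoff step. Also, in your mean bound $D_{t_0}$ should read $\delta M$.
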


\begin{proof}
Fix $h\in\{0,\dots,L-1\}$ and for simplicity we write $M_{\mathrm{int}}:=M_{\mathrm{int},h}$. Since $M_{\mathrm{int}}\ge\chi/\sqrt{\log\log n}$ and $\chi=\Theta(n^{1-\alpha})$, we have $M_{\mathrm{int}}=n^{\Omega(1)}$. Set $\gamma:=C_{\mathrm{int}}-1>1$. For $i\ge0$, let $\mathcal U_i\subseteq\widehat A_h$ be the set of vertices remaining uncoloured after round $i$ of the active phase. Define the maximum uncoloured degree after round $i$ by $D_i:=\max_{x\in\widehat A_h}|N(x)\cap\mathcal U_i|$. 

Initially, $D_0\le M_{\mathrm{int}}$. Fix $i\ge1$ and condition on $\mathcal U_{i-1}$. By assumption, every vertex has at least $C_{\mathrm{int}}\cdot M_{\mathrm{int}}$ available colours when the band becomes active. During the active phase, only neighbours within the same pseudo-band can block additional colours. Since every vertex has at most $M_{\mathrm{int}}$ such neighbours, every $v\in\mathcal U_{i-1}$ has at least $C_{\mathrm{int}}\cdot M_{\mathrm{int}}-M_{\mathrm{int}}=\gamma M_{\mathrm{int}}$ available colours in round $i$.

Order the vertices of $\mathcal U_{i-1}$ by decreasing priority as $v_1^i,\dots,v_{U_{i-1}}^i$, where $U_{i-1}:=|\mathcal U_{i-1}|$, expose their colour choices in this order, and let $I_j^i$ be the indicator variable that $v_j^i$ remains uncoloured after round $i$. Then, for every $x\in\widehat A_h$,
\begin{align*}
|N(x)\cap\mathcal U_i|
=
\sum_{v_j^i\in N(x)\cap\,\mathcal U_{i-1}} I_j^i.
\end{align*}

Every vertex $v_j^i$ in the sum has at most $D_{i-1}$ higher-priority uncoloured neighbours and at least $\gamma M_{\mathrm{int}}$ available colours. Since $D_{i-1}\le D_0\le M_{\mathrm{int}}<\gamma M_{\mathrm{int}}$, Corollary~\ref{cor:sequential-stochastic-domination-sub}, applied with $a_j=\gamma M_{\mathrm{int}}$ and $b_j=D_{i-1}$, implies that, conditional on $\mathcal U_{i-1}$, $|N(x)\cap\mathcal U_i|$ is stochastically dominated by a sum of independent Bernoulli random variables with success probability $D_{i-1}/(\gamma M_{\mathrm{int}})$. Since $|N(x)\cap\mathcal U_{i-1}|\le D_{i-1}$, the random variable $|N(x)\cap\mathcal U_i|$ is stochastically dominated by $Z_i\sim\operatorname{Bin}(D_{i-1},D_{i-1}/(\gamma M_{\mathrm{int}}))$, whose conditional mean is $\nu_i:=\mathbb E[Z_i\mid\mathcal U_{i-1}]=D_{i-1}^2/(\gamma M_{\mathrm{int}})$. Hence, for every $t\ge0$ and every $x\in\widehat A_h$,
\begin{align*}
\Pr\left(|N(x)\cap\mathcal U_i|\ge t\mid\mathcal U_{i-1}\right)
\le
\Pr\left(Z_i\ge t\mid\mathcal U_{i-1}\right).
\end{align*}

A union bound over all $x\in\widehat A_h$ therefore gives $\Pr\left(D_i\ge t\mid\mathcal U_{i-1}\right) \le |\widehat A_h|\Pr\left(Z_i\ge t\mid\mathcal U_{i-1}\right)$. Fix $\lambda\in(1/\gamma,1)$, and let $\delta\in(0,1)$ be chosen later. Since $D_{i-1}\le M_{\mathrm{int}}$, we have $\nu_i\le D_{i-1}/\gamma$, so $\lambda D_{i-1}$ is a constant factor larger than $\nu_i$. Hence, for some constant $c_\lambda>0$ and every $x\in\widehat A_h$, a Chernoff bound gives
\begin{align*}
\Pr\left(|N(x)\cap\mathcal U_i|>\lambda D_{i-1}\mid\mathcal U_{i-1}\right)
\le
\Pr\left(Z_i>\lambda D_{i-1}\mid\mathcal U_{i-1}\right)
\le
\exp(-c_\lambda D_{i-1}).
\end{align*} 

We use this estimate to show that $D_i\le\delta M_{\mathrm{int}}$ within $O(1)$ rounds a.a.s. As long as $D_{i-1}>\delta M_{\mathrm{int}}$, we have $D_{i-1}=n^{\Omega(1)}$, and hence, for every fixed $c>0$ and all sufficiently large $n$, the preceding probability is at most $n^{-c-1}$. A union bound over all $x\in\widehat A_h$ therefore gives $\Pr\left(D_i>\lambda D_{i-1}\mid\mathcal U_{i-1}\right)\le n^{-c}$. Since $D_0\le M_{\mathrm{int}}$, after $t$ successful rounds we have $D_t\le\lambda^tM_{\mathrm{int}}$. 

Choose $t_0:=\left\lceil \log(1/\delta)/\log(1/\lambda)\right\rceil=O(1)$. If $D_i\le\delta M_{\mathrm{int}}$ for some $i<t_0$, then the same holds for $D_{t_0}$ by monotonicity. Otherwise, a union bound over the first $t_0$ rounds and $\lambda^{t_0}D_0\le\delta M_{\mathrm{int}}$ give $D_{t_0}\le\delta M_{\mathrm{int}}$ with failure probability at most $t_0n^{-c}=O(n^{-c})$.

We next show that, a.a.s., within $O(\log\log n)$ additional rounds, the conditional mean $\nu_i$ falls below $B\log n$ for a sufficiently large constant $B>0$. Whenever $\nu_i\ge B\log n$, a Chernoff bound implies that, for a sufficiently large constant $C_0=C_0(B,c)>1$ and every fixed $x\in\widehat A_h$,
\begin{align*}
\Pr\left(|N(x)\cap\mathcal U_i|>C_0\nu_i\mid\mathcal U_{i-1}\right)
\le
\Pr\left(Z_i>C_0\nu_i\mid\mathcal U_{i-1}\right)
\le
n^{-c-1}.
\end{align*}

A union bound over all $x\in\widehat A_h$ gives, with conditional failure probability at most $n^{-c}$,
\begin{align}
D_i
\le
C_0\nu_i
=
\frac{C_0}{\gamma}\frac{D_{i-1}^2}{M_{\mathrm{int}}}.
\label{eq:pseudoband-quadratic-decay}
\end{align}

Choose $\delta>0$ sufficiently small that $C_0\delta/\gamma\le1/2$. Starting from $D_{t_0}\le\delta M_{\mathrm{int}}$, we claim that, for every $r\ge0$, as long as $\nu_{t_0+j}\ge B\log n$ for all $j\in\{1,\dots,r\}$, $D_{t_0+r}\le \delta M_{\mathrm{int}}\,2^{-(2^r-1)}$ with failure probability at most $(t_0+r)n^{-c}$. The case $r=0$ is immediate. If the bound holds for $r$ and $\nu_{t_0+r+1}\ge B\log n$, then \eqref{eq:pseudoband-quadratic-decay} gives
\begin{align*}
D_{t_0+r+1}
\le
\frac{C_0}{\gamma}\frac{D_{t_0+r}^2}{M_{\mathrm{int}}}
\le
\delta M_{\mathrm{int}}\,2^{-(2^{r+1}-1)}
\end{align*}
with conditional failure probability at most $n^{-c}$. A union bound shows that the bound at round $t_0+r+1$ holds with failure probability at most $(t_0+r+1)n^{-c}$, completing the induction.

Choose $r=O(\log\log n)$ sufficiently large that $\left(\delta M_{\mathrm{int}}\,2^{-(2^r-1)}\right)^2/(\gamma M_{\mathrm{int}})<B\log n$.
If $\nu_i\ge B\log n$ for every $i\in\{t_0+1,\dots,t_0+r\}$, then the induction bound holds up to round $t_0+r$ with failure probability at most $(t_0+r)n^{-c}$ and yields
\begin{align*}
\nu_{t_0+r+1}
\le
\frac{\left(\delta M_{\mathrm{int}}\,2^{-(2^r-1)}\right)^2}{\gamma M_{\mathrm{int}}}
<
B\log n.
\end{align*}

Otherwise, $\nu_i<B\log n$ already holds for some $i\in\{t_0+1,\dots,t_0+r\}$. Thus, with failure probability at most $(t_0+r)n^{-c}=o(1)$, there exists a round $i\le t_0+r+1$ such that $\nu_i<B\log n$.

Let $i$ be the first round with $\nu_i<B\log n$. Conditional on $\mathcal U_{i-1}$, the random variable $|N(x)\cap\mathcal U_i|$ is stochastically dominated by $Z_i$ for every $x\in\widehat A_h$. Choose $C_1>eB$ sufficiently large that $C_1\log(C_1/(eB))\ge c+1$. Then, by Chernoff's bound~\cite[Theorem~1.10.1]{doerr2020probabilistic},
\begin{align*}
\Pr\left(|N(x)\cap\mathcal U_i|\ge C_1\log n\mid\mathcal U_{i-1}\right)
\le
\Pr\left(Z_i\ge C_1\log n\mid\mathcal U_{i-1}\right)
\le
\left(\frac{e\nu_i}{C_1\log n}\right)^{C_1\log n}
\le
n^{-c-1},
\end{align*}
where the last inequality follows from $\nu_i<B\log n$ and the choice of $C_1$. A union bound over all $x\in\widehat A_h$ gives $D_i\le C_1\log n=O(\log n)$ with conditional failure probability at most $n^{-c}$.


Finally, suppose that $D_i\le C_1\log n$. Since the uncoloured sets only shrink, every uncoloured vertex remains uncoloured in each subsequent round with conditional probability at most $C_1\log n/(\gamma M_{\mathrm{int}})$. Hence, for any fixed $q\ge1$, a union bound over all vertices gives
\begin{align*}
\Pr\left(\mathcal U_{i+q}\neq\emptyset\mid\mathcal U_i\right)
\le
O(n)\left(\frac{C_1\log n}{\gamma M_{\mathrm{int}}}\right)^q.
\end{align*}

Since $M_{\mathrm{int}}=n^{\Omega(1)}$, choosing a sufficiently large constant $q$ makes this probability $o(1/\log\log n)$. Thus, for a fixed pseudo-band $\widehat A_h$, the total failure probability is at most \begin{align*}
    O(\log\log n)n^{-c}+n\left(C_1\log n/(\gamma M_{\mathrm{int}})\right)^q =o(1/\log\log n) 
\end{align*}
for sufficiently large constants $c$ and $q$. Since $L=O(\log\log n)$, a union bound over all $h\in\{0,\dots,L-1\}$ shows that every pseudo-band is coloured a.a.s. within $O(\log\log n)$ rounds of its activation.
\end{proof}

\subsection{Colouring the pseudo-outer region}
\label{subsec:colouring-pseudo-outer-region}

It remains to analyse the pseudo-outer region $\widehat{\mathcal O}$. Unlike the intermediate pseudo-bands, no separate internal-degree estimate is required: by construction, every vertex in $\widehat{\mathcal O}$ has at most polylogarithmic degree. Consequently, a palette of size $\chi$ leaves all but a negligible fraction of the colours available at each vertex, which allows the region to be coloured in constantly many rounds a.a.s.

\begin{lemma}[Pseudo-outer region degree bound]
\label{lem:pseudo-outer-region-degree-bound}
Every vertex in $\widehat{\mathcal O}$ has degree $O(\log^2 n)$.
\end{lemma}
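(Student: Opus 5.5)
The plan is to read the bound off the degree threshold that defines $\widehat{\mathcal O}$, rather than from any geometric estimate. Membership in $\widehat{\mathcal O}$ is decided from the observed degree alone, through $\widehat r(v)=\bar d_{\mathrm{ext}}^{-1}(\deg(v))$. By definition, $v\in\widehat{\mathcal O}$ exactly when $\widehat x(v)>x_L$, that is, when $\widehat r(v)>r_{\mathrm{out}}$. So I would first turn this condition on $\widehat r$ into an upper bound on $\deg(v)$.

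I would use the monotonicity of the extended inverse. By Lemma~\ref{lem:expected-degree-monotone}, $\bar d$ is continuous and strictly decreasing on $[0,R]$. Hence $\bar d_{\mathrm{ext}}^{-1}$ is non-increasing on $[0,\infty)$, and it is strictly decreasing on $(\bar d(R),n)$. Suppose $\deg(v)\ge\bar d(r_{\mathrm{out}})$. Since $r_{\mathrm{out}}\in(R/2,R)$, we have $\bar d(r_{\mathrm{out}})\in(\bar d(R),n)$, and monotonicity gives $\widehat r(v)\le r_{\mathrm{out}}$, so $v\notin\widehat{\mathcal O}$. Contrapositively, every $v\in\widehat{\mathcal O}$ satisfies $\deg(v)<\bar d(r_{\mathrm{out}})$. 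The cases $\deg(v)\le\bar d(R)$ and $\deg(v)\ge n$ are harmless: in the first $\deg(v)=O(1)$, and the second is excluded by the argument just given.

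It remains to show that $\bar d(r_{\mathrm{out}})=O(\log^2 n)$. There are two cases, according to the definition of $L$.
\begin{itemize}
\item If $r_{\mathrm{out}}=r_{\mathrm{out}}'$, then $\bar d(r_{\mathrm{out}})=K'\log^2 n$ by the choice of $r_{\mathrm{out}}'$.
\item If $L=J$, then $r_{\mathrm{out}}=R/2+x_J$, which lies in the window $(r_{\mathrm{out}}'-\beta^{-1}\log(1+k),\,r_{\mathrm{out}}']$. In this case I would apply Lemma~\ref{lem:uniform-expected-degree-asymptotic}. Its range $[R/2-a,R-4\log\log n+b]$ contains $r_{\mathrm{out}}$ because $r_{\mathrm{out}}=R-4\log\log n+O(1)$. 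The lemma gives $\bar d(r)=\Theta(ne^{-r/2})$ there, so moving the radius inward by a constant changes $\bar d$ by at most a constant factor. This yields $\bar d(r_{\mathrm{out}})\le e^{\beta^{-1}\log(1+k)/2}(1+o(1))K'\log^2 n=O(\log^2 n)$.
\end{itemize}

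The argument is deterministic. It needs no probabilistic event, because $\widehat{\mathcal O}$ is defined through observed degrees rather than true radii. The only point needing care is the second case, where the final boundary has been shifted inward. I expect this to be straightforward, since the shift is bounded by the constant $\beta^{-1}\log(1+k)$.
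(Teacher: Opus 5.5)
Your proof is correct and follows the paper's argument. You show that membership in $\widehat{\mathcal O}$ forces $\deg(v)<\bar d(r_{\mathrm{out}})$ by monotonicity of $\bar d$, and then that $\bar d(r_{\mathrm{out}})=O(\log^2 n)$ by the uniform expected-degree asymptotic, since $r_{\mathrm{out}}=R-4\log\log n+O(1)$. Your explicit case split on $L$ simply unpacks the paper's one-line appeal to that $O(1)$ offset.
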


\begin{proof}
By the construction, the boundary radius $r_{\mathrm{out}}=R-4\log\log n+O(1)$. Hence, Lemma~\ref{lem:uniform-expected-degree-asymptotic} gives $\bar d(r_{\mathrm{out}})=O(\log^2 n)$. Fix $v\in\widehat{\mathcal O}$. If $\deg(v)\le\bar d(R)$, then $\deg(v)=O(1)$ by Lemma~\ref{lem:boundary-expected-degree}. Otherwise, since $\widehat r(v)>r_{\mathrm{out}}$, we have $\deg(v)=\bar d(\widehat r(v))<\bar d(r_{\mathrm{out}})=O(\log^2 n)$ by Lemma~\ref{lem:expected-degree-monotone}.
\end{proof}

\begin{lemma}[Colouring the pseudo-outer region]
\label{lem:colouring-pseudo-outer-region}
If $|\Psi|\ge\chi$, then, a.a.s., Sequential Radial Colouring colours $\widehat{\mathcal O}$ within $O(1)$ rounds.
\end{lemma}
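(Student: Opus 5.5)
We bound the per-round failure probability of each vertex of $\widehat{\mathcal O}$ using only its total degree, and then take a union bound over a constant number of rounds. The setting is a typical HRG, so Lemma~\ref{lem:pseudo-outer-region-degree-bound} and Lemma~\ref{lem:chromatic-number-bound} both hold. In particular, every $v\in\widehat{\mathcal O}$ has $\deg(v)\le K_0\log^2 n$ for some constant $K_0$, and $\chi\ge c_0 n^{1-\alpha}$.

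\textbf{Available colours.} When $\widehat{\mathcal O}$ becomes active, all previous neighbours of $v$ together forbid at most $\deg(v)$ colours. During the active phase, neighbours in $\widehat{\mathcal O}$ can forbid at most $\deg(v)$ colours in total, since each neighbour uses one colour. Hence in every round of the phase, every uncoloured $v$ has at least
\begin{align*}
|\Psi|-\deg(v)\ge \chi-K_0\log^2 n=(1-o(1))\chi
\end{align*}
available colours. This bound is deterministic on the typical event.

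\textbf{One round.} Fix a round $t$ of the phase and condition on the whole history up to its start. Let $v$ be uncoloured. It fails in round $t$ only if its sampled colour equals the colour sampled by one of its at most $\deg(v)$ higher-priority active neighbours. The colour of $v$ is chosen uniformly and independently of theirs, so Lemma~\ref{lem:one-step-failure-bound} (with $a_j=(1-o(1))\chi$ and $b_j=\deg(v)$) gives
\begin{align*}
\Pr(v \text{ stays uncoloured in round } t\mid \text{history})\le \frac{K_0\log^2 n}{(1-o(1))c_0n^{1-\alpha}}=n^{-(1-\alpha)+o(1)}.
\end{align*}

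\textbf{Constantly many rounds.} Multiplying these conditional bounds over $q$ rounds, the probability that $v$ is still uncoloured after $q$ rounds of the phase is at most $n^{-q(1-\alpha)+o(1)}$. We take a union bound over the at most $2n$ vertices (recall $|V(G)|\le 2n$ with probability $1-e^{-\Omega(n)}$). Choosing a fixed integer $q>1/(1-\alpha)$ makes the total failure probability $n^{1-q(1-\alpha)+o(1)}=o(1)$. Hence $\widehat{\mathcal O}$ is coloured within $q=O(1)$ rounds a.a.s.

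\textbf{Main point to check.} The delicate part is making sure the conditioning is legitimate. Failures of $v$ across rounds are dependent, and so are failures of different vertices. Bounding the conditional probability given the full history in each round handles the product over rounds, and the union bound over vertices requires no independence. The degree bound is deterministic once the graph is typical, and the a.a.s. statement combines the typicality event with this algorithmic failure bound.
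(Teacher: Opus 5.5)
Your proposal is correct and follows the paper's proof essentially step for step. The deterministic $O(\log^2 n)$ degree bound on $\widehat{\mathcal O}$ leaves $(1-o(1))\chi$ available colours throughout the phase, which gives a per-round conditional failure probability of $O(\log^2 n/n^{1-\alpha})$; a union bound over $O(n)$ vertices after a constant number $q>1/(1-\alpha)$ of rounds finishes, and your count of forbidden colours is valid since previously coloured neighbours and neighbours in $\widehat{\mathcal O}$ are disjoint subsets of $N(v)$, so together they forbid at most $\deg(v)$ colours.
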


\begin{proof}
By Lemma~\ref{lem:pseudo-outer-region-degree-bound}, every vertex in $\widehat{\mathcal O}$ has degree $O(\log^2 n)$. Since $|\Psi|\ge\chi=\Theta(n^{1-\alpha})$, every uncoloured vertex has $\Theta(n^{1-\alpha})$ available colours throughout the active phase and at most $O(\log^2 n)$ higher-priority uncoloured neighbours. Its conditional probability of remaining uncoloured in any round is therefore
$O\left(\log^2 n/n^{1-\alpha}\right)$.

Choose a constant integer $q$ such that $q(1-\alpha)>1$. The probability that a fixed vertex remains uncoloured for $q$ rounds is at most
$O\left((\log^2 n/n^{1-\alpha})^q\right)$. Hence, a union bound over all vertices gives
\begin{align*}
\Pr\left(\mathcal U_q\neq\emptyset\right)
\le
O\left(n \left(\frac{\log^2 n}{n^{1-\alpha}}\right)^q
\right)
=
o(1),
\end{align*}
where $\mathcal U_q$ denotes the set of vertices in $\widehat{\mathcal O}$ remaining uncoloured after $q$ rounds. Hence $\widehat{\mathcal O}$ is coloured in $O(1)$ rounds a.a.s.
\end{proof}

\subsection{Proof of the near-chromatic colour theorem}
\label{subsec:proof-near-chromatic-theorem}

We now prove the main theorem for Sequential Radial Colouring with a near-degeneracy palette. For every fixed $\varepsilon>0$, the algorithm uses $(1+\varepsilon)\kappa$ colours and terminates within $O((\log\log n)^2)$ rounds a.a.s. This gives a $(1+\varepsilon)$-approximation of the degeneracy. Since $\kappa\le((4/3)^\alpha+o(1))\chi$ by Lemma~\ref{lem:degeneracy-bounds}, the same palette is also a constant-factor approximation of the chromatic number, with a worst-case factor ranging from $\sqrt{4/3}$ to $4/3$ depending on the parameter $\alpha$.

\begin{theorem}[SRC with a near-degeneracy palette]
\label{thm:src-near-degeneracy-colours}
Let $G\sim \mathcal G(n,\alpha,C)$ be a typical hyperbolic random graph with
$\alpha\in(1/2,1)$. For every fixed $\varepsilon>0$, Sequential Radial Colouring with palette size $|\Psi|=\left\lceil(1+\varepsilon)\kappa\right\rceil$ colours $G$ in $O((\log\log n)^2)$ rounds a.a.s.
\end{theorem}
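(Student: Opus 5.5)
The plan is to combine the three colouring guarantees (pseudo-clique, intermediate pseudo-bands, pseudo-outer region) with the structural bounds on $d_{\mathrm{pre}}$ and $d_{\mathrm{int}}$, and then take a union bound over the $O(\log\log n)$ phases. Fix $\varepsilon>0$ and work on the intersection of the w.h.p./w.e.h.p. structural events. These are Lemmas~\ref{lem:degeneracy-bounds}, \ref{lem:chromatic-number-bound}, \ref{lem:degree-previous-pseudo-bands} and~\ref{lem:internal-degree-bands}, Corollaries~\ref{cor:previous-degree-degeneracy-refined} and~\ref{cor:pseudo-region-size-comparison}, and the event $\mathcal E$. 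All remaining randomness then comes from the algorithm.

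\textbf{Pseudo-clique.} Since $\kappa\ge c_1\sigma$ for an explicit constant from Lemma~\ref{lem:degeneracy-bounds}, we only get $\kappa\ge(1-o(1))\chi$ rather than more. A cleaner route is the bound $\kappa\ge\chi-1$, which holds for every graph via the greedy degeneracy colouring. Hence $|\Psi|\ge(1+\varepsilon)\kappa\ge(1+\varepsilon/2)\chi\ge(1+\varepsilon_n)\chi$ for large $n$, and Corollary~\ref{cor:near-tight-pseudo-clique-colouring} colours $\widehat{\mathcal C}$ in $O(\log\log n)$ rounds a.a.s.

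\textbf{Intermediate bands.} I verify the hypothesis of Lemma~\ref{lem:colouring-pseudo-bands} with some $C_{\mathrm{int}}>2$. The available colours at activation number at least $|\Psi|-d_{\mathrm{pre}}(v)$, since earlier regions are already fully coloured on the success event of earlier phases.

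In the fine range, Corollary~\ref{cor:previous-degree-degeneracy-refined} gives $|\Psi|-d_{\mathrm{pre}}(v)\ge(\varepsilon-o(1))\kappa=\Omega(\chi)$. By Lemma~\ref{lem:internal-degree-bands}, $M_{\mathrm{int},h}=\chi/\sqrt{\log\log n}$, so the ratio is $\Omega(\sqrt{\log\log n})>C_{\mathrm{int}}$.

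In the coarse range, Lemma~\ref{lem:degree-previous-pseudo-bands} gives $d_{\mathrm{pre}}\le\chi$. This alone leaves only $\varepsilon\kappa$ colours, which is a potential obstacle, because $\chi$ may be close to $\kappa$. I would fix it by choosing $x_{\mathrm{crit}}$ larger: the proof of Lemma~\ref{lem:degree-previous-pseudo-bands} shows $d_{\mathrm{pre}}=O(n^{1-\alpha}e^{-(1-\alpha)x_h})$, so for large $x_{\mathrm{crit}}$ we get $d_{\mathrm{pre}}\le\chi/2$. Alternatively, the same argument gives $d_{\mathrm{pre}}\le(1-\varepsilon')\kappa$ for some small constant $\varepsilon'$. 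Either way, at least $\Omega(\varepsilon\chi)$ colours remain available. Then pick $\eta$ with $C_{\mathrm{int}}\eta\chi$ below this bound, take $k\le k_0(\eta)$ in Lemma~\ref{lem:internal-degree-bands}, and get $M_{\mathrm{int},h}\le\eta\chi$, so the hypothesis holds. Lemma~\ref{lem:colouring-pseudo-bands} then colours every band within $T=O(\log\log n)$ rounds, a.a.s. jointly over all bands.

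\textbf{Outer region and assembly.} Since $|\Psi|\ge\kappa\ge\chi-1$, Lemma~\ref{lem:colouring-pseudo-outer-region} applies; the $-1$ is harmless because its proof only uses $|\Psi|=\Theta(n^{1-\alpha})$. The outer region is therefore coloured in $O(1)$ rounds.

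The main subtlety is the sequential conditioning. Each phase lemma must hold for the adversarial colouring left by previous phases. This is fine because they only use worst-case counts of forbidden colours, which is the list-colouring viewpoint. A union bound over the three a.a.s. events then completes the argument. Correctness holds because priorities are only resolved within the active region and previous colours are removed from palettes. By Corollary~\ref{cor:number-of-radial-bands}, there are $O(\log\log n)$ regions of $O(\log\log n)$ rounds each, giving $O((\log\log n)^2)$ rounds in total.
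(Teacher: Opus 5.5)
Your proposal is correct and follows essentially the same route as the paper: $\chi\le\kappa+1$ gives $|\Psi|\ge(1+\varepsilon/2)\chi$ for the pseudo-clique and the outer region. The intermediate bands are then handled by verifying the hypothesis of Lemma~\ref{lem:colouring-pseudo-bands} via Corollary~\ref{cor:previous-degree-degeneracy-refined}, Lemma~\ref{lem:degree-previous-pseudo-bands} and Lemma~\ref{lem:internal-degree-bands}. The one unnecessary step is your coarse-range detour of enlarging $x_{\mathrm{crit}}$: $d_{\mathrm{pre}}\le\chi$ already leaves $(1+\varepsilon)\kappa-\chi\ge\varepsilon\kappa-1\ge(\varepsilon/2)\chi$ colours, which is exactly what the paper uses with $\eta=\varepsilon/(2C_{\mathrm{int}})$, so the ``obstacle'' you flag does not arise.
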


\begin{proof}
Since $\chi\le\kappa+1$ and $\chi=\Theta(n^{1-\alpha})$ a.a.s., we have a.a.s., for all sufficiently large $n$, $|\Psi|\ge(1+\varepsilon/2)\chi$.

First, the algorithm processes the pseudo-clique $\widehat{\mathcal C}$. Since $|\Psi|\ge(1+\varepsilon/2)\chi$, Corollary~\ref{cor:near-tight-pseudo-clique-colouring} implies that $\widehat{\mathcal C}$ is coloured in $O(\log\log n)$ rounds a.a.s.

Next, consider the intermediate pseudo-bands. For each pseudo-band
$\widehat A_h$, recall
\begin{align*}
\widetilde M_{\mathrm{int},h}
&:=
\max_{v\in\widehat A_h}d_{\mathrm{int}}(v),
&
M_{\mathrm{int},h}
&:=
\max\left\{
\widetilde M_{\mathrm{int},h},
\frac{\chi}{\sqrt{\log\log n}}
\right\}.
\end{align*}

To apply Lemma~\ref{lem:colouring-pseudo-bands}, it remains to show that, upon activation of $\widehat A_h$, every vertex in $\widehat A_h$ has at least $C_{\mathrm{int}}\cdot M_{\mathrm{int},h}$ available colours.

We first consider the refined bands, where $x_h<x_{\mathrm{crit}}$. By Corollary~\ref{cor:previous-degree-degeneracy-refined}, every vertex in such a band has at most $(1+o(1))\kappa$ neighbours in previously processed regions. Hence, after removing colours used by previous regions, every vertex still has at least
\begin{align*}
|\Psi^1(v)|
\ge
(1+\varepsilon)\kappa-(1+o(1))\kappa
\ge
\frac{\varepsilon}{2}\kappa
\geq 
C_{\mathrm{int}}\chi/\sqrt{\log\log n}
=
C_{\mathrm{int}}M_{\mathrm{int},h}
\end{align*}
available colours, for all sufficiently large $n$, where we used that $\chi\leq \kappa+1$ and Lemma~\ref{lem:internal-degree-bands}, which gives $\widetilde M_{\mathrm{int},h}\leq \chi/\sqrt{\log\log n}$ in the refined range.

Now consider a band with $x_h\ge x_{\mathrm{crit}}$. By Lemma~\ref{lem:degree-previous-pseudo-bands}, every vertex in $\widehat A_h$ has at most $\chi$ neighbours in previously processed regions. Thus
\begin{align*}
|\Psi^1(v)|
\ge
|\Psi|-\chi
\ge
(1+\varepsilon/2)\chi-\chi
=
\frac{\varepsilon}{2}\chi.
\end{align*}

Set $\eta:=\varepsilon/(2C_{\mathrm{int}})$ and choose $k\le k_0(\eta)$ as in Lemma~\ref{lem:internal-degree-bands}. Then, w.h.p., $\widetilde M_{\mathrm{int},h}\le\eta\chi$ simultaneously for all such bands, and hence also $M_{\mathrm{int},h}\le\eta\chi$ implying that $|\Psi^1(v)|\ge C_{\mathrm{int}}M_{\mathrm{int},h}$.

Thus, the assumptions of Lemma~\ref{lem:colouring-pseudo-bands} hold for every intermediate pseudo-band. Since the $L=O(\log\log n)$ bands are processed sequentially and each requires $O(\log\log n)$ rounds, all intermediate pseudo-bands are coloured in $O((\log\log n)^2)$ rounds a.a.s.

Finally, the algorithm processes the pseudo-outer region $\widehat{\mathcal O}$. By Lemma~\ref{lem:pseudo-outer-region-degree-bound}, every vertex in $\widehat{\mathcal O}$ has degree at most $O(\log^2 n)$. Since $|\Psi|\ge(1+\varepsilon/2)\chi$ and $\chi=\Theta(n^{1-\alpha})$ a.a.s., we have $|\Psi|-O(\log^2 n)\ge\chi$ for all sufficiently large $n$. Thus Lemma~\ref{lem:colouring-pseudo-outer-region} applies, and $\widehat{\mathcal O}$ is coloured in $O(1)$ rounds a.a.s.

Combining the pseudo-clique, the intermediate pseudo-bands, and the pseudo-outer region, all vertices are coloured in $O((\log\log n)^2)$ rounds a.a.s.
\end{proof}

There are two possible ways to interpret Theorem~\ref{thm:src-near-degeneracy-colours}, depending on the relation between \(\kappa\) and \(\chi\), as summarized in \Cref{tab:kappa-chi-translation}. If \(\kappa=(1+o(1))\chi\), then the \((1+\eps)\kappa\)-palette can be read as a \((1+\eps+o(1))\chi\)-palette, so SRC gives an arbitrarily good constant approximation to the chromatic number. If instead, as conjectured in the introduction, \(\kappa\) and \(\chi\) differ by a genuine constant factor, then \(\kappa\) is itself the obstruction: the theorem gives an arbitrarily close approximation to \(\kappa\), but not necessarily to \(\chi\). Even in this case, the known comparison between \(\kappa\) and \(\chi\) gives a constant-factor approximation to the chromatic number, with factor between \(\sqrt{4/3}\) and \(4/3\).

\begin{table}[t]
\centering
\renewcommand{\arraystretch}{1.25}
\begin{tabular}{@{}ll@{}}
\toprule
\textbf{Relation} & \textbf{Palette in terms of \(\chi\)} \\
\midrule
\(\kappa=(\delta+o(1))\chi\), for some fixed \(\delta>1\)
& \((\delta+\eps)\chi\), for any fixed \(\eps>0\) \\
\addlinespace
\(\kappa=(1+o(1))\chi\)
& \((1+\eps)\chi\), for any fixed \(\eps>0\) \\
\bottomrule
\end{tabular}
\caption{How the near-degeneracy guarantee translates into a statement in terms of the chromatic number.}
\label{tab:kappa-chi-translation}
\end{table}

Theorem~\ref{thm:src-near-degeneracy-colours} implies the first item of Theorem~\ref{thm:summary}. Indeed, the theorem is stated for the palette size $\lceil(1+\varepsilon)\kappa\rceil$, and larger palettes can only increase the set of available colours at every vertex. Since the structural assumptions defining a typical HRG hold a.a.s., SRC therefore colours $G$ in $O((\log\log n)^2)$ rounds a.a.s. whenever $|\Psi|\ge(1+\varepsilon)\kappa$.

\section{Parallel Radial Colouring with \texorpdfstring{$O(\chi\log\log n)$}{O(χ(G) log log n)} colours}

The goal of this section is to reduce the round complexity by activating all pseudo-regions simultaneously. To avoid conflicts between different regions, Parallel Radial Colouring assigns each pseudo-region a distinct colour set. Consequently, the algorithm uses $O(\chi\log\log n)$ colours and terminates within $O(\log\log n)$ rounds a.a.s.

The input is a graph $G\sim\mathcal G(n,\alpha,C)$ with $\alpha\in(1/2,1)$, together with a colour palette $\Psi$.\footnote{The palette $\Psi$ is provided as input. In the parallel variant, its partition into subpalettes is also fixed globally.} Let $\varepsilon_{n}=o(1)$ be a deterministic sequence for which Corollary~\ref{cor:near-tight-pseudo-clique-colouring} applies. Assign pairwise disjoint palettes $\Psi_{\mathcal C},\Psi_0,\dots,\Psi_{L-1},\Psi_{\mathcal O}$ to $\widehat{\mathcal C},\widehat A_0,\dots,\widehat A_{L-1},\widehat{\mathcal O}$, respectively, with $|\Psi_{\mathcal C}|=\lceil(1+\varepsilon_{n})\chi\rceil$ and $|\Psi_h|=|\Psi_{\mathcal O}|=\chi$ for every $h\in\{0,\dots,L-1\}$. Since the number of pseudo-regions is $O(\log\log n)$ by Corollary~\ref{cor:number-of-radial-bands}, the combined palette size is $O(\chi\log\log n)$.

\begin{theorem}[Parallel Radial Colouring]
\label{th:parallel-radial-colouring}
Let $G\sim \mathcal G(n,\alpha,C)$ be a typical hyperbolic random graph with $\alpha\in(1/2,1)$. Then Parallel Radial Colouring colours $G$ in $O(\log\log n)$ rounds a.a.s. using $O(\chi\log\log n)$ colours.
\end{theorem}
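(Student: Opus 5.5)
The plan is to reduce Parallel Radial Colouring to the three single-region guarantees already established, using the fact that disjoint palettes decouple the regions. All pseudo-regions run RCTDEG simultaneously. A colour in $\Psi_h$ can only be chosen by vertices of $\widehat A_h$, and similarly for $\Psi_{\mathcal C}$ and $\Psi_{\mathcal O}$. Hence no vertex ever loses a colour because of a neighbour in another region, and conflicts across regions never occur. Consequently, the execution restricted to one region $\mathcal R$ with palette $\Psi_{\mathcal R}$ has exactly the same distribution as SRC's active phase on $\mathcal R$, with $d_{\mathrm{pre}}\equiv 0$, so that the initial palette is $\Psi^1(v)=\Psi_{\mathcal R}$. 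I would state this coupling explicitly first. The algorithmic randomness of different regions is independent, and priorities across regions are irrelevant because those vertices cannot collide.

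Next I check the hypotheses region by region on a typical HRG.

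\textbf{Pseudo-clique.} Since $|\Psi_{\mathcal C}|=\lceil(1+\varepsilon_n)\chi\rceil$, Corollary~\ref{cor:near-tight-pseudo-clique-colouring} applies verbatim: its proof only uses that the phase is RCTDEG on $G[\widehat{\mathcal C}]$ with no prior colours. This gives $O(\log\log n)$ rounds a.a.s.

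\textbf{Intermediate bands.} Every $v\in\widehat A_h$ starts with $\chi$ available colours. By Lemma~\ref{lem:internal-degree-bands}, choosing $\eta=1/C_{\mathrm{int}}$ and $k\le k_0(\eta)$, we have $M_{\mathrm{int},h}\le \max\{\eta\chi,\chi/\sqrt{\log\log n}\}\le \chi/C_{\mathrm{int}}$ for large $n$, in both the fine and the coarse range. So the hypothesis of Lemma~\ref{lem:colouring-pseudo-bands} holds. That lemma already unions over all $L$ bands, giving $O(\log\log n)$ rounds for all bands a.a.s. Its proof bounds each band's failure by $o(1/\log\log n)$, which does not depend on the bands being sequential.

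\textbf{Pseudo-outer region.} Here $|\Psi_{\mathcal O}|=\chi$, so Lemma~\ref{lem:colouring-pseudo-outer-region} gives $O(1)$ rounds.

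Finally, a union bound over the three a.a.s.\ events, together with the typicality events, shows that all regions finish within $O(\log\log n)$ rounds simultaneously. The global runtime is the maximum over regions, not the sum. For the colour count, Corollary~\ref{cor:number-of-radial-bands} gives $L+2=O(\log\log n)$ palettes, each of size at most $(1+o(1))\chi$, so the total is $O(\chi\log\log n)$. CONGEST compliance is as for SRC. One caveat is that $\chi$ is not locally known. As in SRC, I would treat the palette sizes as given inputs, or replace $\chi$ by a deterministic lower bound $c_0n^{1-\alpha}$ with a constant-factor enlargement; this only changes constants.

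The main obstacle is justifying the decoupling rigorously. A vertex in band $h$ may have neighbours in other regions that are simultaneously active, so one must argue that the quantities $a_j,b_j$ in Corollary~\ref{cor:sequential-stochastic-domination-sub} are unaffected. I would handle this by noting that the blocking event for $v$ depends only on neighbours sampling from the same subpalette, which are exactly the neighbours in $\widehat A_h$. The exposure order can then be taken within each region separately, and the per-band lemma's proof goes through unchanged.
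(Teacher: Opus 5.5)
Your proposal is correct and follows the paper's proof essentially step by step. The steps match: decoupling via pairwise disjoint subpalettes, Corollary~\ref{cor:near-tight-pseudo-clique-colouring} for $\widehat{\mathcal C}$, Lemma~\ref{lem:internal-degree-bands} with small $\eta$ to get $C_{\mathrm{int}}M_{\mathrm{int},h}\le\chi$ and hence Lemma~\ref{lem:colouring-pseudo-bands} for the bands, Lemma~\ref{lem:colouring-pseudo-outer-region} for $\widehat{\mathcal O}$, a union bound with runtime equal to the maximum over regions, and Corollary~\ref{cor:number-of-radial-bands} for the $(L+2+o(1))\chi$ colour count. Your argument that blocking events involve only neighbours sampling from the same subpalette just spells out what the paper states in two sentences.
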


\begin{proof}
Since the palettes are pairwise disjoint, vertices in distinct pseudo-regions cannot create colour conflicts. Moreover, colours used outside a pseudo-region impose no restrictions within it. Hence all pseudo-regions may be processed simultaneously.

By Corollary~\ref{cor:near-tight-pseudo-clique-colouring}, there exists a deterministic sequence $\varepsilon_{n}=o(1)$ such that a palette of size $\lceil(1+\varepsilon_{n})\chi\rceil$ colours $\widehat{\mathcal C}$ a.a.s. within $O(\log\log n)$ rounds.

For every \(h\in\{0,\dots,L-1\}\), all colours in \(\Psi_h\) are available upon activation of \(\widehat A_h\). Choose \(\eta>0\) sufficiently small and then choose the width parameter \(k\le k_0(\eta)\) as in Lemma~\ref{lem:internal-degree-bands}. Then \(M_{\mathrm{int},h}\le \chi/C_{\mathrm{int}}\) for every intermediate band. Hence \(|\Psi_h|=\chi\ge C_{\mathrm{int}}M_{\mathrm{int},h}\), so Lemma~\ref{lem:colouring-pseudo-bands} applies and each intermediate pseudo-band is coloured within \(O(\log\log n)\) rounds a.a.s. Finally, Lemma~\ref{lem:colouring-pseudo-outer-region} implies that $\widehat{\mathcal O}$ is coloured within $O(1)$ rounds a.a.s. A union bound over all the pseudo-regions shows that Parallel Radial Colouring terminates within $O(\log\log n)$ rounds a.a.s.
Corollary~\ref{cor:number-of-radial-bands} gives $O(\log\log n)$ pseudo-regions, and hence the total number of colours is
\begin{align*}
|\Psi_{\mathcal C}|+\sum_{h=0}^{L-1}|\Psi_h|+|\Psi_{\mathcal O}|
&=
(L+2+o(1))\chi
=
O(\chi\log\log n),
\end{align*}
which proves the desired claim.
\end{proof}

Thus, compared with Sequential Radial Colouring, Parallel Radial Colouring trades an additional factor of $O(\log\log n)$ in the palette size for reducing the round complexity from $O((\log\log n)^2)$ to $O(\log\log n)$ a.a.s.

Theorem~\ref{th:parallel-radial-colouring} gives the second item of Theorem~\ref{thm:summary}. Since $\chi\le\kappa+1$ and $\kappa=\Theta(\chi)$ a.a.s for HRGs, by Lemmas~\ref{lem:degeneracy-bounds} and~\ref{lem:chromatic-number-bound}, its palette bound $O(\chi\log\log n)$ is also $O(\kappa\log\log n)$. Conversely, if $|\Psi|\ge c\kappa\log\log n$ for any fixed constant $c>0$, we split the palette into $\Theta(\log\log n)$ subpalettes of size $\Theta(\chi)$; if this gives fewer subpalettes than pseudo-regions, we process a constant number of batches sequentially. Each batch takes $O(\log\log n)$ rounds by the same PRC argument, so the total time remains $O(\log\log n)$ a.a.s.

\section{Sequential Radial Colouring with \texorpdfstring{$O(\chi^{1+\varepsilon})$}{O(χ(G) raised to (1 + ε))} colours}
\label{sec:seq-col-constant-rounds}

In this section we analyse the polynomial-slack version of Sequential Radial Colouring, where the palette size is $|\Psi|=\lceil\chi^{1+\varepsilon}\rceil$ for a fixed constant $\varepsilon>0$. The additional colour space allows us to replace the fine/coarse decomposition from \Cref{subsec:sequential-radial-colouring} by a constant-band decomposition with only constantly many pseudo-regions. We define this decomposition and the corresponding colouring process in the next subsection. In \Cref{subsec:constant-pseudo-region-accuracy}, we show that the resulting pseudo-regions have the same asymptotic sizes as their true radial counterparts.

The colouring analysis again proceeds region by region. The inner pseudo-region is handled first in \Cref{subsec:constant-colouring-inner-region}, using the general constant-round RCTDEG guarantee from the clique-colouring section\footnote{Here, the inner region consists of the pseudo-clique together with a small radial buffer; the buffer ensures that vertices in the following bands already have sufficiently small higher-priority degree.}. The intermediate pseudo-bands are analysed in \Cref{subsec:constant-colouring-pseudo-bands}. There we control the three relevant bottlenecks in an active band: inherited restrictions from already coloured neighbours, internal restrictions from neighbours inside the active band, and priority conflicts caused by higher-priority neighbours choosing the same colour in a given round. The first two remove only $o(|\Psi|)$ colours, while the third gives each vertex only a polynomially small failure probability per round. Finally, \Cref{subsec:constant-colouring-pseudo-outer-region} treats the pseudo-outer region.

Combining these region-wise bounds in \Cref{subsec:proof-constant-round-colouring} shows that the constantly many pseudo-regions can be processed sequentially for a constant number of rounds each, yielding an $O(1)$-round colouring algorithm a.a.s.

\subsection{Sequential Radial Colouring algorithm}
\label{subsec:constant-sequential-radial-colouring-2}

The input is again a graph $G\sim\mathcal G(n,\alpha,C)$ with $\alpha\in(1/2,1)$, together with a colour palette $\Psi$, now of size $|\Psi|=\lceil\chi^{1+\varepsilon}\rceil$ for a fixed constant $\varepsilon>0$. The algorithm follows the Sequential Radial Colouring framework from Section~\ref{subsec:sequential-radial-colouring}: vertices estimate their radii, are assigned to pseudo-regions, and these regions are processed sequentially from the centre outwards using RCTDEG~\cite{mausruff2026distributed}. With palette size $|\Psi|=\lceil\chi^{1+\varepsilon}\rceil$, the decomposition of the disk is replaced by a constant number of larger regions, each processed for a constant number of rounds. This yields an $O(1)$-round colouring algorithm a.a.s.

\subsubsection*{Pseudo-regions}

We use the radius estimates $\widehat r(v)$ and the concentration interval $[r_{\mathrm L},r_{\mathrm U}]$ introduced in Section~\ref{subsec:sequential-radial-colouring}.  Let \(c_{\mathrm{in}}>0\) be a sufficiently large constant, and define \(r_{\mathrm{in}}:=R/2+c_{\mathrm{in}}\log\log n\) and \( r_{\mathrm{tail}}:=2\alpha\log n\). The innermost pseudo-region is \(\widehat{\mathcal I}:=\{v\in V(G):\widehat r(v)\le r_{\mathrm{in}}\}\).

The interval $(r_{\mathrm{in}},r_{\mathrm{tail}}]$ is partitioned into a constant number of intermediate pseudo-bands. Define $m:=\lceil\alpha(2\alpha-1)/(\varepsilon(1-\alpha))\rceil$\footnote{Unlike in the previous decomposition, the final band has width $\Theta(\log n)$, so no truncation case is needed.}, set $r_0:=r_{\mathrm{in}}$, and, for $i\in\{1,\dots,m\}$, let
\begin{align*}
r_i
:=
\min\left\{
r_{\mathrm{in}}+i\frac{\varepsilon(1-\alpha)}{\alpha}\log n,
r_{\mathrm{tail}}
\right\}.
\end{align*}

For $i\in\{1,\dots,m\}$, define $\widehat A_i:=\{v\in V(G):r_{i-1}<\widehat r(v)\le r_i\}$. Finally, the pseudo-outer region is $\widehat{\mathcal O}:=\{v\in V(G):\widehat r(v)>r_{\mathrm{tail}}\}$. The corresponding true regions $\mathcal I,A_1,\dots,A_m,\mathcal O$ are defined analogously using the true radius $r(v)$.

\subsubsection*{Colouring process}

The pseudo-regions are processed in increasing radial order:
$\widehat{\mathcal I},\widehat A_1,\dots,\widehat A_m,\widehat{\mathcal O}$.
Each region is active for
\begin{align*}
T
:=
1+\max\left\{
\left\lceil\frac{1}{\varepsilon(1-\alpha)}\right\rceil,
\left\lceil\frac{1}{\varepsilon}+m\right\rceil
\right\}
\end{align*}
rounds. The $j$th pseudo-region is activated after $jT$ rounds and then runs RCTDEG~\cite{mausruff2026distributed} for $T$ rounds. Since $m=O(1)$ and $T=O(1)$, all $m+2$ pseudo-regions are processed within $(m+2)T=O(1)$ rounds. Thus, a.a.s., the enlarged palette reduces the round complexity of Sequential Radial Colouring to $O(1)$.

\subsection{Accuracy of pseudo-regions}
\label{subsec:constant-pseudo-region-accuracy}

We next show that each pseudo-region in this constant-band decomposition has
asymptotically the same size as its corresponding true radial region. The comparison follows as in Section~\ref{subsec:pseudo-band-size-estimates}: the radius estimate is accurate up to $o(1)$ at all region boundaries, which lie in $[r_{\mathrm L},r_{\mathrm U}]$.

\subsubsection*{Size of pseudo-inner region}

The inner pseudo-region has only one relevant boundary, namely $r_{\mathrm{in}}$, so the radius-estimate accuracy sandwiches it between two centred balls with radii $r_{\mathrm{in}}\pm\sqrt{\delta_n}$.

\begin{lemma}[Size of the pseudo-inner region]
\label{lem:size-pseudo-inner-region}
With high probability, $|\widehat{\mathcal I}|=(1\pm o(1))\mathbb E[|\mathcal I|]$.
\end{lemma}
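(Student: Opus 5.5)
I will follow the proof of Lemma~\ref{lem:pseudoclique-size} almost verbatim, replacing the boundary $R/2$ by $r_{\mathrm{in}}=R/2+c_{\mathrm{in}}\log\log n$. The first step is to check that $r_{\mathrm{in}}\pm\sqrt{\delta_n}$ lies in the concentration interval $[r_{\mathrm L},r_{\mathrm U}]$. This holds because $r_{\mathrm L}=R/2-1$ and $r_{\mathrm U}=R-4\log\log n+O(1)$, so the accuracy guarantee of Lemma~\ref{lem:radius-estimate-accuracy} is available around this boundary.

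Define $\mathcal B_-:=V(G)\cap B_0(r_{\mathrm{in}}-\sqrt{\delta_n})$ and $\mathcal B_+:=V(G)\cap B_0(r_{\mathrm{in}}+\sqrt{\delta_n})$. The main step is to show that w.h.p.
\begin{align*}
\mathcal B_-\subseteq\widehat{\mathcal I}\subseteq\mathcal B_+ .
\end{align*}

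For the lower inclusion, take $v\in\mathcal B_-$. If $r(v)\le r_{\mathrm L}$, then Lemma~\ref{lem:assignment-inner-cutoff-vertices} gives $\widehat r(v)\le R/2\le r_{\mathrm{in}}$. Otherwise $r(v)\in[r_{\mathrm L},r_{\mathrm{in}}-\sqrt{\delta_n}]$, and Lemma~\ref{lem:radius-estimate-accuracy} gives $\widehat r(v)<r(v)+\sqrt{\delta_n}\le r_{\mathrm{in}}$.

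For the upper inclusion, take $v\in\widehat{\mathcal I}$. If $r(v)\ge r_{\mathrm U}$, then Lemma~\ref{lem:assignment-outer-cutoff-vertices} gives $\widehat r(v)>r_{\mathrm{out}}$. This is larger than $r_{\mathrm{in}}$, since $r_{\mathrm{out}}=R-4\log\log n+O(1)$, so $v\notin\widehat{\mathcal I}$. Hence $r(v)<r_{\mathrm U}$. If in addition $r(v)>r_{\mathrm{in}}+\sqrt{\delta_n}$, then $r(v)\in[r_{\mathrm L},r_{\mathrm U}]$, and accuracy gives $\widehat r(v)>r_{\mathrm{in}}$, a contradiction.

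It remains to compare sizes. Since $r_{\mathrm{in}}\pm\sqrt{\delta_n}=\omega(1)$, Lemma~\ref{lem:centered-ball-measure} gives
\begin{align*}
\mathbb E[|\mathcal B_\pm|]=(1+o(1))ne^{-\alpha(R-r_{\mathrm{in}}\mp\sqrt{\delta_n})}=(1+o(1))e^{\pm\alpha\sqrt{\delta_n}}\,\mathbb E[|\mathcal I|]=(1+o(1))\,\mathbb E[|\mathcal I|],
\end{align*}
because $\sqrt{\delta_n}=o(1)$. Both $|\mathcal B_-|$ and $|\mathcal B_+|$ are Poisson with mean $\Theta(n^{1-\alpha}(\log n)^{\alpha c_{\mathrm{in}}})=n^{\Omega(1)}$. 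A Chernoff bound with deviation parameter, for example, $n^{-(1-\alpha)/4}$ therefore gives $|\mathcal B_\pm|=(1\pm o(1))\mathbb E[|\mathcal I|]$ w.h.p. Combining this with the sandwich proves the claim.

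I expect no real obstacle here. The only point needing care is the justification that the upper cutoff vertices cannot land in $\widehat{\mathcal I}$. Lemma~\ref{lem:assignment-outer-cutoff-vertices} was stated relative to the earlier $r_{\mathrm{out}}$, so I would invoke it in the form $\widehat r(v)>r_{\mathrm{out}}'$, which is independent of the palette-dependent decomposition, and then use $r_{\mathrm{out}}'=R-4\log\log n+O(1)>r_{\mathrm{in}}$.
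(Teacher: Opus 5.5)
Your proposal is correct and follows essentially the same route as the paper: it proves the sandwich $V(G)\cap B_0(r_{\mathrm{in}}-\sqrt{\delta_n})\subseteq\widehat{\mathcal I}\subseteq V(G)\cap B_0(r_{\mathrm{in}}+\sqrt{\delta_n})$ via the inner/outer cutoff lemmas and radius accuracy, then applies the centred-ball measure estimate and Poisson concentration. Your extra care in invoking the outer-cutoff lemma through $r_{\mathrm{out}}'$, and your explicit Chernoff deviation parameter, spell out details the paper leaves implicit.
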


\begin{proof}
Recall that $\delta_n=\sqrt{48/(K\log n)}=o(1)$, and set $\mathcal I_-:=V(G)\cap B_0(r_{\mathrm{in}}-\sqrt{\delta_n})$ and $\mathcal I_+:=V(G)\cap B_0(r_{\mathrm{in}}+\sqrt{\delta_n})$. We first show that, w.h.p., $\mathcal I_-\subseteq\widehat{\mathcal I}\subseteq\mathcal I_+$.

For the lower inclusion, let $v\in\mathcal I_-$. If $r(v)\le r_{\mathrm L}$, then $v\in\widehat{\mathcal I}$ by Lemma~\ref{lem:assignment-inner-cutoff-vertices}. Otherwise, $r(v)\in[r_{\mathrm L},r_{\mathrm{in}}-\sqrt{\delta_n}]$, and Lemma~\ref{lem:radius-estimate-accuracy} gives $\widehat r(v)\le r(v)+\sqrt{\delta_n}\le r_{\mathrm{in}}$, so again $v\in\widehat{\mathcal I}$. For the upper inclusion, let $v\in\widehat{\mathcal I}$. By Lemma~\ref{lem:assignment-outer-cutoff-vertices}, w.h.p. we have $r(v)<r_{\mathrm U}$. If $r(v)>r_{\mathrm{in}}+\sqrt{\delta_n}$, then $r(v)\in[r_{\mathrm L},r_{\mathrm U}]$, and Lemma~\ref{lem:radius-estimate-accuracy} gives $\widehat r(v)>r_{\mathrm{in}}$, a contradiction. Hence $v\in\mathcal I_+$.

Since $\sqrt{\delta_n}=o(1)$, Lemma~\ref{lem:centered-ball-measure} implies $\mathbb E[|\mathcal I_-|]\ge(1-o(1))\mathbb E[|\mathcal I|]$ and $\mathbb E[|\mathcal I_+|]\le(1+o(1))\mathbb E[|\mathcal I|]$. Moreover, $\mathbb E[|\mathcal I|]=\Theta(n^{1-\alpha}(\log n)^{\alpha \cdot c_{\mathrm{in}}})$, so Poisson concentration for $|\mathcal I_-|$ and $|\mathcal I_+|$ yields $|\widehat{\mathcal I}|=(1\pm o(1))\mathbb E[|\mathcal I|]$ w.h.p.
\end{proof}

\subsubsection*{Size of pseudo-bands}

\begin{corollary}[Sizes of the pseudo-bands]
\label{cor:constant-pseudoband-sizes}
With high probability, $|\widehat A_i|=(1\pm o(1))\mathbb E[|A_i|]$ for all $i\in\{1,\dots,m\}$.
\end{corollary}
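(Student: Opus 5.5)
The plan is to reduce each band to Lemma~\ref{lem:pseudoband-size} and then take a union bound, as in Corollary~\ref{cor:pseudoband-sizes}. For each $i\in\{1,\dots,m\}$ set $I_i:=(r_{i-1},r_i]$, so that $A_i=\mathcal B(I_i)$ and $\widehat A_i=\widehat{\mathcal B}(I_i)$. Lemma~\ref{lem:pseudoband-size} has three hypotheses, and we check them in turn.

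\textbf{Containment.} We need $I_i\subseteq(R/2,r_{\mathrm{out}})$, or at least that every boundary $r_i$ lies in $[r_{\mathrm L},r_{\mathrm U}]$, since the proof of that lemma only uses this. The lower end is clear: $r_0=r_{\mathrm{in}}=R/2+c_{\mathrm{in}}\log\log n>R/2$. For the upper end, $r_m\le r_{\mathrm{tail}}=2\alpha\log n$. Because $\alpha<1$, this is at most $R-2(1-\alpha)\log n$, which lies well below $r_{\mathrm U}=R-4\log\log n+O(1)$. Here the relevant outer radius is the upper cutoff $r_{\mathrm U}$ of the new decomposition rather than the old $r_{\mathrm{out}}$. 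The inclusion argument in Lemma~\ref{lem:pseudoband-size} only uses Lemmas~\ref{lem:radius-estimate-accuracy}, \ref{lem:assignment-inner-cutoff-vertices} and~\ref{lem:assignment-outer-cutoff-vertices}. The last of these still applies: $\widehat r(v)>r_{\mathrm U}$-type degrees map beyond $r_{\mathrm{tail}}$, because $\deg(v)\le K\log^2 n<\bar d(r_{\mathrm{tail}})=\Theta(n^{1-\alpha})$, so such vertices land in $\widehat{\mathcal O}$. I would re-run that sandwich argument verbatim with $r_{\mathrm{tail}}$ in place of $r_{\mathrm{out}}$.

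\textbf{Width.} We need $r_i-r_{i-1}=\omega(\sqrt{\delta_n})$. Every untruncated band has width $\varepsilon(1-\alpha)\alpha^{-1}\log n$. The only delicate case is the final band, which is cut at $r_{\mathrm{tail}}$: if the truncation happened to leave a tiny gap, its width could be $o(1)$, or some $r_i$ could coincide with $r_{i-1}$. The footnote asserts that the final band has width $\Theta(\log n)$. If this failed for particular parameters, an empty or thin band would contribute only a negligible region. I would handle that case by noting that a band with $r_i=r_{i-1}$ is empty on both sides, and otherwise verify from the choice of $m$ that the gap $r_{\mathrm{tail}}-r_{m-1}$ is either $0$ or $\Theta(\log n)$. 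This width check is the main obstacle.

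\textbf{Expected size.} By Lemma~\ref{lem:centered-ball-measure}, $\mathbb E[|A_i|]=(1+o(1))ne^{-\alpha(R-r_i)}(1-e^{-\alpha(r_i-r_{i-1})})$. Since $r_i\ge R/2$ and the width factor is bounded below by a constant, this is $\Omega(n^{1-\alpha})=n^{\Omega(1)}$.

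With all three hypotheses in place, Lemma~\ref{lem:pseudoband-size} gives $|\widehat A_i|=(1\pm o(1))\mathbb E[|A_i|]$ w.h.p. for each $i$. Since $m=O(1)$, a union bound over the bands completes the argument.
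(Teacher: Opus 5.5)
Your proposal is correct and is essentially the paper's proof: apply Lemma~\ref{lem:pseudoband-size} to each interval $(r_{i-1},r_i]$, using width $\Theta(\log n)=\omega(\sqrt{\delta_n})$, expected size $n^{\Omega(1)}$, and containment (which holds as stated, since $r_{\mathrm{tail}}=2\alpha\log n<r_{\mathrm{out}}$), then take a union bound over the $m=O(1)$ bands. The final-band width check you flag as the main obstacle is routine: with $q:=\alpha(2\alpha-1)/(\varepsilon(1-\alpha))$ and $w:=\varepsilon(1-\alpha)\alpha^{-1}\log n$ one has $r_{\mathrm{tail}}-r_{\mathrm{in}}=qw-\Theta(\log\log n)$, so for large $n$ only band $m$ is truncated, its width is $(q-m+1)w-O(\log\log n)=\Theta(\log n)$ because $q-m+1\in(0,1]$ is a fixed constant, and no band is empty or thin.
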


\begin{proof}
For every $i\in\{1,\dots,m\}$, the pseudo-band $\widehat A_i$ corresponds to the radial interval $(r_{i-1},r_i]$, whose width is $\Theta(\log n)=\omega(\sqrt{\delta_n})$. Moreover, $\mathbb E[|A_i|]=n^{\Omega(1)}$. Hence, Lemma~\ref{lem:pseudoband-size} applies to each band, and a union bound over the constant number of bands proves the claim.
\end{proof}

\subsubsection*{Size of pseudo-outer region}

\begin{lemma}[Size of the pseudo-outer region]
\label{lem:size-pseudo-outer-region-constant}
With high probability, $|\widehat{\mathcal O}|=(1\pm o(1))\mathbb E[|\mathcal O|]$.
\end{lemma}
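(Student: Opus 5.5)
We follow the sandwich argument of Lemma~\ref{lem:size-pseudo-outer-region}, now with the boundary $r_{\mathrm{tail}}=2\alpha\log n$ in place of $r_{\mathrm{out}}$. Set $\mathcal O_-:=V(G)\setminus B_0(r_{\mathrm{tail}}+\sqrt{\delta_n})$ and $\mathcal O_+:=V(G)\setminus B_0(r_{\mathrm{tail}}-\sqrt{\delta_n})$. We first show that, w.h.p., $\mathcal O_-\subseteq\widehat{\mathcal O}\subseteq\mathcal O_+$.

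Since $\alpha<1$ and $R=2\log n+C$, we have $R-r_{\mathrm{tail}}=2(1-\alpha)\log n+C=\omega(\log\log n)$. Together with $r_{\mathrm{tail}}>R/2+1$, this gives $r_{\mathrm{tail}}\pm\sqrt{\delta_n}\in(r_{\mathrm L},r_{\mathrm U})$ for large $n$, because $r_{\mathrm U}=R-4\log\log n+O(1)$.

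For the lower inclusion, take $v$ with $r(v)\ge r_{\mathrm{tail}}+\sqrt{\delta_n}$. There are two cases:
\begin{itemize}
\item If $r(v)\ge r_{\mathrm U}$, then $\widehat r(v)>r_{\mathrm{out}}'$ by the proof of Lemma~\ref{lem:assignment-outer-cutoff-vertices}, i.e.\ $\deg(v)\le K\log^2n<\bar d(r_{\mathrm{out}}')$. Since $r_{\mathrm{out}}'=R-4\log\log n+O(1)>r_{\mathrm{tail}}$, we get $\widehat r(v)>r_{\mathrm{tail}}$.
\item Otherwise $r(v)\in[r_{\mathrm L},r_{\mathrm U}]$, and Lemma~\ref{lem:radius-estimate-accuracy} gives $\widehat r(v)>r(v)-\sqrt{\delta_n}\ge r_{\mathrm{tail}}$.
\end{itemize}
For the upper inclusion, take $v$ with $r(v)\le r_{\mathrm{tail}}-\sqrt{\delta_n}$:
\begin{itemize}
\item If $r(v)\le r_{\mathrm L}$, Lemma~\ref{lem:assignment-inner-cutoff-vertices} gives $\widehat r(v)\le R/2<r_{\mathrm{tail}}$.
\item Otherwise Lemma~\ref{lem:radius-estimate-accuracy} gives $\widehat r(v)<r(v)+\sqrt{\delta_n}\le r_{\mathrm{tail}}$.
\end{itemize}
In both cases $v\notin\widehat{\mathcal O}$.

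Next we compare expectations. By Lemma~\ref{lem:centered-ball-measure}, $\mu(B_0(r_{\mathrm{tail}}\pm\sqrt{\delta_n}))=(1+o(1))e^{-\alpha(R-r_{\mathrm{tail}})}=O(n^{-2\alpha(1-\alpha)})=o(1)$. Hence $\mathbb E[|\mathcal O|]$, $\mathbb E[|\mathcal O_-|]$ and $\mathbb E[|\mathcal O_+|]$ are all $(1-o(1))n$, and in particular they agree up to a factor $1\pm o(1)$. We do not need the more delicate relative comparison used for the inner boundary, because the removed ball has negligible mass.

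Finally, $|\mathcal O_\pm|$ are Poisson with mean $\Theta(n)$. A Chernoff bound gives $|\mathcal O_-|\ge(1-o(1))\mathbb E[|\mathcal O|]$ and $|\mathcal O_+|\le(1+o(1))\mathbb E[|\mathcal O|]$ w.h.p., and the sandwich then yields the claim.

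The only step needing care is the first case of the lower inclusion. Lemma~\ref{lem:assignment-outer-cutoff-vertices} is stated relative to $r_{\mathrm{out}}$, so we reuse its degree argument, $\deg(v)\le K\log^2 n$ w.h.p., together with the ordering $r_{\mathrm{tail}}<r_{\mathrm{out}}'$ to conclude $\widehat r(v)>r_{\mathrm{tail}}$. This is routine.
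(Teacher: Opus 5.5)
Your proof is correct and follows the paper's argument essentially step for step. You use the same sandwich $\mathcal O_-\subseteq\widehat{\mathcal O}\subseteq\mathcal O_+$ at radii $r_{\mathrm{tail}}\pm\sqrt{\delta_n}$, the same case split via Lemmas~\ref{lem:assignment-inner-cutoff-vertices}, \ref{lem:assignment-outer-cutoff-vertices} and~\ref{lem:radius-estimate-accuracy}, and the same Poisson concentration of $|\mathcal O_\pm|$ around $(1-o(1))n$. Your remark that the outer-cutoff lemma should be read through $r_{\mathrm{out}}'=R-4\log\log n+O(1)>r_{\mathrm{tail}}$ is just a slightly more careful version of a step the paper states directly.
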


\begin{proof}

Define $\mathcal O_-:=V(G)\setminus B_0(r_{\mathrm{tail}}+\sqrt{\delta_n})$ and $\mathcal O_+:=V(G)\setminus B_0(r_{\mathrm{tail}}-\sqrt{\delta_n})$. We first show that, w.h.p., $\mathcal O_-\subseteq\widehat{\mathcal O}\subseteq\mathcal O_+$. Let $v\in\mathcal O_-$. If $r(v)\le r_{\mathrm U}$, then Lemma~\ref{lem:radius-estimate-accuracy} gives $\widehat r(v)>r(v)-\sqrt{\delta_n}\ge r_{\mathrm{tail}}$. If $r(v)>r_{\mathrm U}$, then Lemma~\ref{lem:assignment-outer-cutoff-vertices} gives $\widehat r(v)>R-4\log\log n+O(1)>r_{\mathrm{tail}}$ for all sufficiently large $n$. Hence, $v\in\widehat{\mathcal O}$.

Conversely, let $v\in\widehat{\mathcal O}$. Lemma~\ref{lem:assignment-inner-cutoff-vertices} excludes $r(v)\le r_{\mathrm L}$, since those vertices have $\widehat r(v) \leq R/2$ w.h.p. Thus, if $r(v)\le r_{\mathrm{tail}}-\sqrt{\delta_n}$, then $r(v)\in[r_{\mathrm L},r_{\mathrm U}]$, and Lemma~\ref{lem:radius-estimate-accuracy} gives $\widehat r(v)<r(v)+\sqrt{\delta_n}\le r_{\mathrm{tail}}$, a contradiction. Therefore $v\in\mathcal O_+$.

By Lemma~\ref{lem:centered-ball-measure}, $\mathbb E[|\mathcal O|]=n(1-\mu(B_0(r_{\mathrm{tail}})))=(1-o(1))n$. Since $\sqrt{\delta_n}=o(1)$, the same lemma gives $\mathbb E[|\mathcal O_-|]\ge(1-o(1))\mathbb E[|\mathcal O|]$ and $\mathbb E[|\mathcal O_+|]\leq (1+o(1))\mathbb E[|\mathcal O|]$. Moreover, $|\mathcal O_-|$ and $|\mathcal O_+|$ are Poisson random variables with expectations $\Theta(n)$, so Chernoff concentration and the inclusions above yield $|\widehat{\mathcal O}|=(1\pm o(1))\mathbb E[|\mathcal O|]$ w.h.p.
\end{proof}

\subsubsection*{Comparison with true regions}

\begin{lemma}[Concentration of true regions]
\label{lem:constant-true-region-concentration}
With high probability, $|\mathcal I|=(1\pm o(1))\mathbb E[|\mathcal I|]$, $|A_i|=(1\pm o(1))\mathbb E[|A_i|]$ for all $i\in\{1,\dots,m\}$, and $|\mathcal O|=(1\pm o(1))\mathbb E[|\mathcal O|]$.
\end{lemma}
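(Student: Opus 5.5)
The plan is to mirror the proof of Lemma~\ref{lem:true-region-concentration}. Each of $|\mathcal I|$, $|A_1|,\dots,|A_m|$ and $|\mathcal O|$ counts the points of the Poisson process in a fixed, deterministic radial region. Each is therefore Poisson with mean equal to its expectation. It suffices to show that every such mean is polynomially large, $n^{\Omega(1)}$, and then apply a Poisson Chernoff bound with a deviation parameter $\xi_n=o(1)$. We choose $\xi_n$ so that $\xi_n^2\mu\ge 4\log n$ for every mean $\mu$ involved.

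First the inner region. By Lemma~\ref{lem:centered-ball-measure} with $r_{\mathrm{in}}=R/2+c_{\mathrm{in}}\log\log n=\omega(1)$, we get $\mathbb E[|\mathcal I|]=(1+o(1))ne^{-\alpha(R-r_{\mathrm{in}})}=\Theta(n^{1-\alpha}(\log n)^{\alpha c_{\mathrm{in}}})$, as already recorded in the proof of Lemma~\ref{lem:size-pseudo-inner-region}.

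Next the bands. For $A_i$, $i\in\{1,\dots,m\}$, we write $\mathbb E[|A_i|]=n(\mu(B_0(r_i))-\mu(B_0(r_{i-1})))=(1+o(1))ne^{-\alpha(R-r_i)}(1-e^{-\alpha(r_i-r_{i-1})})$. We need $r_i-r_{i-1}=\Theta(\log n)$, so that the last factor is $1-o(1)$. This is the one point requiring care, namely the possibly truncated last band. The factor is at least bounded away from zero as long as $r_i-r_{i-1}=\Omega(1)$. The footnote in the definition of $m$ asserts that the final band has width $\Theta(\log n)$. In any case, each band with $r_i>r_{i-1}$ has $r_i\ge r_{\mathrm{in}}\ge R/2$, so $ne^{-\alpha(R-r_i)}=\Omega(n^{1-\alpha})$. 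Any band that happens to be empty, with $r_i=r_{i-1}=r_{\mathrm{tail}}$, is trivially fine.

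Finally the outer region. Since $r_{\mathrm{tail}}=2\alpha\log n=R-(2-2\alpha)\log n-C$, we have $\mathbb E[|\mathcal O|]=n(1-\mu(B_0(r_{\mathrm{tail}})))=n(1-O(n^{-2\alpha(1-\alpha)}))=(1-o(1))n$, again by Lemma~\ref{lem:centered-ball-measure}.

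With all means at least $n^{\Omega(1)}$, the Poisson Chernoff bound \cite[Theorem~5.4]{mitzenmacher2005probabilit}, in the form $\Pr(|X-\mu|>\xi\mu)\le 2e^{-\xi^2\mu/3}$, gives failure probability $O(n^{-4/3})$ per region. A union bound over the $m+2=O(1)$ regions yields the claim w.h.p.

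I expect no real obstacle. The only step requiring attention is verifying uniformly that the truncated last band still has polynomially large expected size.
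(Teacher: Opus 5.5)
Your proposal is correct and follows essentially the same route as the paper: each of $\mathcal I$, $A_1,\dots,A_m$, $\mathcal O$ is a deterministic radial region whose size is Poisson, Lemma~\ref{lem:centered-ball-measure} gives $\mathbb E[|\mathcal I|]=\Theta(n^{1-\alpha}(\log n)^{\alpha c_{\mathrm{in}}})$, $\mathbb E[|A_i|]=n^{\Omega(1)}$ and $\mathbb E[|\mathcal O|]=(1-o(1))n$, and a Poisson Chernoff bound with a union bound over the $m+2=O(1)$ regions finishes. Your worry about the last band is unnecessary, though. Since $m-1<\alpha(2\alpha-1)/(\varepsilon(1-\alpha))$ strictly, the last band has width $\bigl((2\alpha-1)-(m-1)\varepsilon(1-\alpha)/\alpha\bigr)\log n-O(\log\log n)=\Theta(\log n)$, so no band is empty or thin.
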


\begin{proof}
The regions $\mathcal I,A_1,\dots,A_m,\mathcal O$ are deterministic radial regions, so their sizes are Poisson random variables with means equal to their expectations. By Lemma~\ref{lem:centered-ball-measure}, $\mathbb E[|\mathcal I|]=\Theta(n^{1-\alpha}(\log n)^{\alpha \cdot c_{\mathrm{in}}})$, $\mathbb E[|A_i|]=n^{\Omega(1)}$ for all $i\in\{1,\dots,m\}$, and $\mathbb E[|\mathcal O|]=(1-o(1))n$. Chernoff concentration gives the desired estimate for each region, and a union bound over the $m+2=O(1)$ regions proves the claim.
\end{proof}

\begin{corollary}[Pseudo-region size comparison]
\label{cor:constant-pseudo-region-size-comparison}
With high probability,
$|\widehat{\mathcal I}|=(1\pm o(1))|\mathcal I|$,
$|\widehat A_i|=(1\pm o(1))|A_i|$ for all $i\in\{1,\dots,m\}$, and
$|\widehat{\mathcal O}|=(1\pm o(1))|\mathcal O|$.
\end{corollary}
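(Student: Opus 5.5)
This follows directly from combining Lemma~\ref{lem:size-pseudo-inner-region}, Corollary~\ref{cor:constant-pseudoband-sizes} and Lemma~\ref{lem:size-pseudo-outer-region-constant} with Lemma~\ref{lem:constant-true-region-concentration}. This is exactly the argument that derived Corollary~\ref{cor:pseudo-region-size-comparison} from its constituent lemmas in the near-degeneracy decomposition.

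We intersect the constantly many high-probability events involved. These are the three pseudo-region size estimates, which compare each pseudo-region to the expectation of its true counterpart, and the concentration of the true regions. Since there are only $m+2=O(1)$ regions and each event holds w.h.p., a union bound shows the intersection still holds w.h.p.

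On this intersection, fix a region, say the band with index $i$. There exist deterministic sequences $\xi_n,\xi_n'=o(1)$ with
\begin{align*}
(1-\xi_n)\mathbb E[|A_i|]\le|\widehat A_i|\le(1+\xi_n)\mathbb E[|A_i|],\qquad (1-\xi_n')\mathbb E[|A_i|]\le|A_i|\le(1+\xi_n')\mathbb E[|A_i|].
\end{align*}
Dividing, we get $|\widehat A_i|/|A_i|\in\bigl[(1-\xi_n)/(1+\xi_n'),\,(1+\xi_n)/(1-\xi_n')\bigr]=1\pm o(1)$. Here we use that $\mathbb E[|A_i|]=n^{\Omega(1)}>0$, so $|A_i|>0$ on the event. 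The same computation applies verbatim to $\widehat{\mathcal I}$ versus $\mathcal I$ and to $\widehat{\mathcal O}$ versus $\mathcal O$, since their expectations are $\Theta(n^{1-\alpha}(\log n)^{\alpha c_{\mathrm{in}}})$ and $(1-o(1))n$ respectively.

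There is no genuine obstacle. The only point requiring care is that the $o(1)$ terms are uniform over the constantly many regions, which holds trivially because we take a maximum over $O(1)$ sequences.
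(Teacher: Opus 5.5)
Your proposal is correct and matches the paper's implicit argument. The paper states this corollary without a separate proof, just as it did for Corollary~\ref{cor:pseudo-region-size-comparison}. It follows by intersecting the $O(1)$ many w.h.p.\ events from Lemma~\ref{lem:size-pseudo-inner-region}, Corollary~\ref{cor:constant-pseudoband-sizes}, Lemma~\ref{lem:size-pseudo-outer-region-constant} and Lemma~\ref{lem:constant-true-region-concentration}, and then dividing the two $(1\pm o(1))$ comparisons with the common expectation, exactly as you do (your explicit non-emptiness check for the true regions is a harmless detail the paper leaves unstated).
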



Recall the event \(\mathcal E\). For the constant pseudo-decomposition, on \(\mathcal E\), for every \(i\in\{1,\dots,m\}\), every vertex in a pseudo-region preceding \(\widehat A_i\) has true radius at most \(r_{i-1}+\sqrt{\delta_n}\), while \(\widehat A_i\subseteq A_i^+\), where
\(A_i^+:=\{v\in V(G):r_{i-1}-\sqrt{\delta_n}<r(v)\le r_i+\sqrt{\delta_n}\}\).

Thus, also in this decomposition, all true-region sizes concentrate around their expectations, while the corresponding pseudo-regions have asymptotically equal sizes. Recall that we call an HRG \emph{typical} if it satisfies all structural properties that we have shown to hold a.a.s.; from this point on, we fix such a typical HRG and analyse only the randomness of the colouring algorithm.

\subsection{Colouring the inner pseudo-region}
\label{subsec:constant-colouring-inner-region}

Recall that $|\Psi|=\lceil\chi^{1+\varepsilon}\rceil$ for a fixed constant $\varepsilon>0$. The first active region is the inner pseudo-region $\widehat{\mathcal I}$, which contains the pseudo-clique together with the radial buffer up to $r_{\mathrm{in}}$. Since $|\widehat{\mathcal I}|=O(n^{1-\alpha}(\log n)^{\alpha \cdot c_{\mathrm{in}}})$ a.a.s., this palette is polynomially larger than the size of the whole inner pseudo-region. We can therefore colour $\widehat{\mathcal I}$ in constantly many rounds by applying the general constant-round RCTDEG guarantee proved in \Cref{subsec:clique-upper-bounds}.

\begin{corollary}[Constant-round colouring of the inner pseudo-region]
\label{cor:constant-round-colouring-inner-pseudo-region}
For every fixed $t>1/\varepsilon$, the first phase of SRC a.a.s. colours $\widehat{\mathcal I}$ within $t$ rounds.
\end{corollary}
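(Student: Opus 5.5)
The plan is to reduce the first phase to the general polynomial-slack RCTDEG guarantee on an arbitrary graph, as described in the roadmap: if a graph has $N$ vertices and the palette has size $\lceil N^{1+\varepsilon_N}\rceil$ with $N^{\varepsilon_N}\to\infty$, then RCTDEG finishes in $(1+o(1))/\varepsilon_N$ rounds a.a.s. During the first phase no colours have been fixed, so SRC restricted to $\widehat{\mathcal I}$ is exactly RCTDEG on $G[\widehat{\mathcal I}]$ with the full palette $\Psi$. It therefore suffices to write $|\Psi|=N^{1+\varepsilon_N}$ with $N:=|\widehat{\mathcal I}|$ and to show that $\liminf\varepsilon_N\ge\varepsilon$, up to $o(1)$.

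First I will size the inner region. By Lemma~\ref{lem:size-pseudo-inner-region}, w.h.p. $N=(1\pm o(1))\mathbb E[|\mathcal I|]=\Theta(n^{1-\alpha}(\log n)^{\alpha c_{\mathrm{in}}})$. By Lemma~\ref{lem:chromatic-number-bound}, $\chi=\Theta(n^{1-\alpha})$ w.e.h.p. Hence
\[
|\Psi|\ge\chi^{1+\varepsilon}=\Theta\bigl(n^{(1-\alpha)(1+\varepsilon)}\bigr),
\]
while $N=n^{1-\alpha+o(1)}$. Taking logarithms gives $\varepsilon_N=\varepsilon-O(\log\log n/\log n)=\varepsilon-o(1)$. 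In particular $N^{\varepsilon_N}=n^{\Omega(1)}\to\infty$, so the hypotheses of the general guarantee hold.

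Rather than quoting the guarantee as a black box, I will redo its short argument to get the exact round count $t$. Every uncoloured vertex always has at least $|\Psi|-N\ge(1-o(1))|\Psi|$ available colours, and fewer than $N$ higher-priority neighbours. So in each round, conditionally on the history, it stays uncoloured with probability at most $N/((1-o(1))|\Psi|)=N^{-\varepsilon_N+o(1)}$. Multiplying over $t$ rounds via the chain rule, which is the same conditioning as in Lemma~\ref{lem:one-step-failure-bound}, and taking a union bound over the $N$ vertices gives failure probability at most $N^{1-t(\varepsilon-o(1))}$. This is $o(1)$ because $t>1/\varepsilon$ is fixed, so $t\varepsilon>1$.

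The main obstacle is the bookkeeping of the lower-order factors, namely the $(\log n)^{\alpha c_{\mathrm{in}}}$ factor in $N$ and the constant in $\chi\ge c_0n^{1-\alpha}$. The concern is that these could push the required round count above $1/\varepsilon$. They only perturb the exponent by $o(1)$, and the strict inequality $t>1/\varepsilon$ absorbs this. Finally, since all the structural facts hold w.h.p., or hold because the HRG is typical, the conclusion is a.a.s.
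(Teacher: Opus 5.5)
Your proposal is correct and follows essentially the same route as the paper. You reduce the first phase to RCTDEG on $G[\widehat{\mathcal I}]$, combine $|\widehat{\mathcal I}|=O(n^{1-\alpha}(\log n)^{\alpha c_{\mathrm{in}}})$ with $\chi=\Theta(n^{1-\alpha})$, and finish with the polynomial-slack per-round bound and a union bound. The only difference is cosmetic: the paper picks $\varepsilon'\in(1/t,\varepsilon)$ with $|\Psi|\ge|\widehat{\mathcal I}|^{1+\varepsilon'}$ and cites Corollary~\ref{cor:constant-round-graph-colouring} directly, whereas you inline the argument of Lemma~\ref{lem:variable-slack-graph-colouring}.
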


\begin{proof}
During the first phase of SRC, no colours have been fixed yet and the active region is precisely $\widehat{\mathcal I}$. Thus SRC restricted to $\widehat{\mathcal I}$ is exactly RCTDEG on the induced graph $G[\widehat{\mathcal I}]$. By Corollary~\ref{cor:constant-pseudo-region-size-comparison} and Lemma~\ref{lem:constant-true-region-concentration}, w.h.p., $|\widehat{\mathcal I}| = O\bigl(n^{1-\alpha}(\log n)^{\alpha \cdot c_{\mathrm{in}}}\bigr)$. Moreover, $\chi=\Theta(n^{1-\alpha})$ a.a.s. by Lemma~\ref{lem:chromatic-number-bound}. Fix $t>1/\varepsilon$, and choose a constant $\varepsilon'\in(1/t,\varepsilon)$. Then
\begin{align*}
|\widehat{\mathcal I}|^{1+\varepsilon'}
=
O\left(
n^{(1-\alpha)(1+\varepsilon')}
(\log n)^{\alpha \cdot c_{\mathrm{in}}(1+\varepsilon')}
\right)
=
o\left(\chi^{1+\varepsilon}\right).
\end{align*}

Hence, for all sufficiently large $n$, $|\Psi| = \left\lceil \chi^{1+\varepsilon}\right\rceil \ge |\widehat{\mathcal I}|^{1+\varepsilon'}$. Corollary~\ref{cor:constant-round-graph-colouring}, applied to $G[\widehat{\mathcal I}]$ with parameter $\varepsilon'$, implies that RCTDEG colours $G[\widehat{\mathcal I}]$ within $t$ rounds a.a.s., because $t>1/\varepsilon'$. This proves the claim.
\end{proof}

\subsection{Colouring the pseudo-bands}
\label{subsec:constant-colouring-pseudo-bands}

Before proving the colouring lemma for the pseudo-bands, we isolate the three degree bounds governing their active phases. For a vertex in $\widehat A_i$, the degree to previous pseudo-regions bounds the number of colours already removed when the band becomes active, the internal degree bounds the number of additional colours that may be removed during the active phase, and the higher-priority degree controls the probability that the vertex remains uncoloured in a given round.

\subsubsection*{Degree to previous pseudo-regions}

Recall that, for $i\in\{1,\dots,m\}$, $\widehat P_i$ denotes the union of the pseudo-regions processed before $\widehat A_i$, and that $d_{\mathrm{pre}}(u):=|N(u)\cap\widehat P_i|$ for every $u\in\widehat A_i$.

\begin{lemma}[Degree to previous pseudo-regions]
\label{lem:constant-prev-degree-pseudo-bands}
W.h.p., for every $i\in\{1,\dots,m\}$ and every $u\in\widehat A_i$, we have $d_{\mathrm{pre}}(u)=o(n^{1-\alpha})$.
\end{lemma}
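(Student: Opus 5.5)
We follow the template of Lemma~\ref{lem:degree-previous-pseudo-bands}. First we replace the degree-defined previous region by a deterministic ball. Fix $i\in\{1,\dots,m\}$ and $u\in\widehat A_i$, and write $x:=r(u)-R/2$. On the event $\mathcal E$, which holds w.h.p., every vertex of $\widehat P_i$ has true radius at most $r_{i-1}+\sqrt{\delta_n}$, and $r(u)> r_{i-1}-\sqrt{\delta_n}$. Hence
\begin{align*}
d_{\mathrm{pre}}(u)\le |N(u)\cap B_0(r_{i-1}+\sqrt{\delta_n})|.
\end{align*}
Conditional on the position of $u$, the right-hand side is Poisson distributed. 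So it suffices to bound its conditional mean by $o(n^{1-\alpha})$ uniformly, and then to apply a Poisson Chernoff bound and a union bound.

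For the expectation we use the same computation as in Lemma~\ref{lem:degree-previous-pseudo-bands}, via Lemmas~\ref{lem:centered-ball-measure} and~\ref{lem:angular-threshold-asymptotics}. Write $y_{i}:=r_{i-1}-R/2$. The full-angle part contributes $O(n^{1-\alpha}e^{-\alpha x})$, and the partial-angle part contributes $O(n^{1-\alpha}e^{-x/2}\int_{-x}^{y_i+\sqrt{\delta_n}}e^{\beta y}\,dy)$. Since $x\ge y_i-\sqrt{\delta_n}$, this gives
\begin{align*}
\mathbb E\bigl[|N(u)\cap B_0(r_{i-1}+\sqrt{\delta_n})|\,\big|\,r(u),\theta(u)\bigr]
=O\bigl(n^{1-\alpha}e^{-(1-\alpha)y_i}\bigr).
\end{align*}
The key point is that $y_i\ge r_{\mathrm{in}}-R/2=c_{\mathrm{in}}\log\log n$ for every $i\ge1$. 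So the bound is $O(n^{1-\alpha}(\log n)^{-(1-\alpha)c_{\mathrm{in}}})=o(n^{1-\alpha})$. This is exactly what the buffer of width $c_{\mathrm{in}}\log\log n$ is for.

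One caveat must be checked. The estimate $e^{-\alpha x}\le e^{-(1-\alpha)y_i}$ and the integral bound use $x\ge0$ and $y_i\ge 0$. They also need $x\le R/2$ in order to use the partial-angle asymptotics. For bands reaching $r_{\mathrm{tail}}=2\alpha\log n<R$ this holds. Near $r_{\mathrm U}$ the accuracy event still applies, and vertices beyond $r_{\mathrm U}$ lie in $\widehat{\mathcal O}$ by Lemma~\ref{lem:assignment-outer-cutoff-vertices}, so they are not in any $\widehat A_i$.

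The remaining step is concentration, and we expect it to be the main obstacle. The mean may be much smaller than $n^{1-\alpha}$, for instance for vertices near $r_{\mathrm{tail}}$, where it can even be polylogarithmic. So we do not use a multiplicative Chernoff bound around the mean. Instead we fix a deterministic threshold $\tau_n:=n^{1-\alpha}(\log n)^{-(1-\alpha)c_{\mathrm{in}}/2}=o(n^{1-\alpha})$, which exceeds every conditional mean by a factor $\omega(1)$. The Poisson tail $\Pr(X\ge\tau)\le(e\mu/\tau)^{\tau}$ then gives failure probability $\exp(-\Omega(\tau_n))=n^{-\omega(1)}$, uniformly over positions. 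A union bound over the at most $2n$ vertices and the $m=O(1)$ bands, together with $\mathcal E$, yields $d_{\mathrm{pre}}(u)\le\tau_n=o(n^{1-\alpha})$ w.h.p. If the uniformity of the integral estimate near the boundary of the angular-threshold lemma causes trouble, we will instead bound the count by a crude $O(n^{1-\alpha}e^{-(1-\alpha)y_i}+\log n)$ expectation, which is still $o(\tau_n)$.
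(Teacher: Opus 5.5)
Your proposal is correct and follows the paper's proof essentially step for step. On $\mathcal E$ you dominate $d_{\mathrm{pre}}(u)$ by the Poisson count into the deterministic ball $B_0(r_{i-1}+\sqrt{\delta_n})$, use the buffer $r_{i-1}-R/2\ge c_{\mathrm{in}}\log\log n$ to bound its conditional mean by $O(n^{1-\alpha}(\log n)^{-c_{\mathrm{in}}(1-\alpha)})$, and then apply a Poisson tail bound at the same deterministic threshold $n^{1-\alpha}(\log n)^{-c_{\mathrm{in}}(1-\alpha)/2}$ as the paper, followed by a union bound over vertices and the $m=O(1)$ bands; your explicit use of $\Pr(X\ge\tau)\le(e\mu/\tau)^{\tau}$ to cover very small conditional means is a detail the paper leaves implicit.
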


\begin{proof}
Fix \(i\in\{1,\dots,m\}\) and \(u\in\widehat A_i\). Recall that
\(\widehat A_i=\{v:r_{i-1}<\widehat r(v)\le r_i\}\), and write
\(x_{i-1}:=r_{i-1}-R/2\) and \(x:=r(u)-R/2\).

We work on the high-probability radius-accuracy event \(\mathcal E\). Every vertex in the previously processed pseudo-regions \(\widehat P_i\) has true radius at most \(r_{i-1}+\sqrt{\delta_n}\): indeed, vertices with radius at least \(r_{\mathrm U}\) are assigned to the pseudo-outer region, while vertices in \([r_{\mathrm L},r_{\mathrm U}]\) satisfy the radius-accuracy estimate. Hence $d_{\mathrm{pre}}(u)\le |N(u)\cap B_0(r_{i-1}+\sqrt{\delta_n})|$. Moreover, \(u\in\widehat A_i\) implies \(r(u)\in(r_{\mathrm L},r_{\mathrm U})\) and, by radius accuracy, \(x\ge x_{i-1}-\sqrt{\delta_n}\).

Conditioning on the position of \(u\), Lemmas~\ref{lem:centered-ball-measure} and~\ref{lem:angular-threshold-asymptotics} give 
\begin{align*}
\mathbb E\left[|N(u)\cap B_0(r_{i-1}+\sqrt{\delta_n})|\mid r(u),\theta(u)\right]
&\le
O\left(n^{1-\alpha}e^{-\alpha x}\right)
+
O\left(
n^{1-\alpha}e^{-x/2}
\int_{-x}^{x_{i-1}+\sqrt{\delta_n}}
e^{(\alpha-1/2)y}\,dy
\right)\\
&\le
O\left(n^{1-\alpha}(\log n)^{-c_{\mathrm{in}}(1-\alpha)}\right),
\end{align*}
where we used \(x\ge x_{i-1}-\sqrt{\delta_n}\), \(x_{i-1}\ge c_{\mathrm{in}}\log\log n\), and \(\sqrt{\delta_n}=o(1)\).

A Poisson Chernoff bound applied to
\(|N(u)\cap B_0(r_{i-1}+\sqrt{\delta_n})|\) gives, for every fixed \(A>0\) and all sufficiently large \(n\),
\begin{align*}
\Pr\left(
|N(u)\cap B_0(r_{i-1}+\sqrt{\delta_n})|
>
n^{1-\alpha}(\log n)^{-c_{\mathrm{in}}(1-\alpha)/2}
\,\middle|\,
r(u),\theta(u)
\right)
\le n^{-A}.
\end{align*}

Since \(n^{1-\alpha}(\log n)^{-c_{\mathrm{in}}(1-\alpha)/2}=o(n^{1-\alpha})\), a union bound over all vertices and the \(m=O(1)\) pseudo-bands proves that \(d_{\mathrm{pre}}(u)=o(n^{1-\alpha})\) w.h.p. simultaneously for all \(i\in\{1,\dots,m\}\) and all \(u\in\widehat A_i\).
\end{proof}

\subsubsection*{Internal degree within pseudo-bands}

The next lemma bounds the internal degree of each pseudo-band, and hence the number of colours that may be removed by neighbours during its active phase.

\begin{lemma}[Internal degree of pseudo-bands]
\label{lem:constant-internal-degree-pseudo-bands}
With high probability, $d_{\mathrm{int}}(u)=o(|\Psi|)$ simultaneously for all $i\in\{1,\dots,m\}$ and all $u\in\widehat A_i$.
\end{lemma}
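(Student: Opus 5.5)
Following the proofs of Lemma~\ref{lem:internal-degree-bands} and Lemma~\ref{lem:constant-prev-degree-pseudo-bands}, we first pass from the degree-defined pseudo-band to a deterministic radial region. We work on the radius-accuracy event \(\mathcal E\). There \(\widehat A_i\subseteq A_i^+=\{v:r_{i-1}-\sqrt{\delta_n}<r(v)\le r_i+\sqrt{\delta_n}\}\), and every \(u\in\widehat A_i\) satisfies \(r(u)\ge r_{i-1}-\sqrt{\delta_n}\). Hence \(d_{\mathrm{int}}(u)\le|N(u)\cap B_0(r_i+\sqrt{\delta_n})|\). Conditional on the position of \(u\), this count is Poisson, so it suffices to bound its mean by \(o(|\Psi|)\) with polynomial room and then apply a Chernoff bound.

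For the expectation, write \(x:=r(u)-R/2\ge x_{i-1}-\sqrt{\delta_n}\), where \(x_{i-1}:=r_{i-1}-R/2\), and \(\Delta:=\varepsilon(1-\alpha)\alpha^{-1}\log n\ge r_i-r_{i-1}\). The same computation as in Lemma~\ref{lem:constant-prev-degree-pseudo-bands}, using Lemmas~\ref{lem:centered-ball-measure} and~\ref{lem:angular-threshold-asymptotics}, gives
\begin{align*}
\mathbb E\bigl[|N(u)\cap B_0(r_i+\sqrt{\delta_n})|\mid r(u),\theta(u)\bigr]
=O\Bigl(n^{1-\alpha}e^{-\alpha x}+n^{1-\alpha}e^{-x/2}e^{\beta(x_{i-1}+\Delta)}\Bigr)
=O\Bigl(n^{1-\alpha}e^{-(1-\alpha)x_{i-1}}e^{\beta\Delta}\Bigr).
\end{align*}
Since \(x_{i-1}\ge c_{\mathrm{in}}\log\log n\), this is at most \(n^{1-\alpha}(\log n)^{-c_{\mathrm{in}}(1-\alpha)}\,n^{\beta\varepsilon(1-\alpha)/\alpha}\).

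The key point is to compare this with \(|\Psi|\ge\chi^{1+\varepsilon}=\Theta(n^{(1-\alpha)(1+\varepsilon)})\), using Lemma~\ref{lem:chromatic-number-bound}. We need \(\beta/\alpha=(\alpha-1/2)/\alpha<1\), which holds because \(\alpha<1\). Then the exponent \(\varepsilon(1-\alpha)\beta/\alpha\) is strictly smaller than \(\varepsilon(1-\alpha)\), so the mean is at most \(|\Psi|\,n^{-\varepsilon(1-\alpha)/(2\alpha)}=o(|\Psi|)\). This is where the chosen band width enters, and it is the main step to check: the \(e^{\beta\Delta}\) growth within a band must stay polynomially below the multiplicative slack \(n^{\varepsilon(1-\alpha)}\).

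It remains to make the bound hold simultaneously. Because the threshold \(|\Psi|n^{-\varepsilon(1-\alpha)/(4\alpha)}\) is a polynomial factor larger than the mean and is itself \(n^{\Omega(1)}\), a Poisson Chernoff bound gives failure probability \(n^{-\omega(1)}\) for each vertex. We then take a union bound over all vertices (on \(|V(G)|\le2n\)) and over the \(m=O(1)\) bands, and intersect with \(\mathcal E\) and the chromatic-number bound. This gives \(d_{\mathrm{int}}(u)=o(|\Psi|)\) w.h.p.
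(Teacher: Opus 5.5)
Your proposal is correct. It uses the same framework as the paper: on $\mathcal E$ you dominate $d_{\mathrm{int}}(u)$ by a Poisson count into a deterministic radial region, bound its conditional mean via Lemmas~\ref{lem:centered-ball-measure} and~\ref{lem:angular-threshold-asymptotics}, and finish with a Poisson Chernoff bound and a union bound.

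The difference is in how you estimate the mean.
\begin{itemize}
\item \textbf{The paper's estimate.} It multiplies the expected size $O(ne^{-\alpha(R-r_i)})$ of the whole annulus $A_i^+$ by the connection probability at the annulus's inner edge, $O(e^{(R-2(r_{i-1}-\sqrt{\delta_n}))/2})$. In effect the band of width $\Delta$ then contributes a factor $e^{\alpha\Delta}=n^{\varepsilon(1-\alpha)}$, which uses up exactly the multiplicative slack. So for $i=1$ the bound $O(n^{(1-\alpha)(1+\varepsilon)}(\log n)^{-c_{\mathrm{in}}(1-\alpha)})$ is $o(|\Psi|)$ only because of the $c_{\mathrm{in}}\log\log n$ buffer.
\item \textbf{Your estimate.} You integrate the angular threshold against the radial density, reusing the computation from Lemma~\ref{lem:constant-prev-degree-pseudo-bands} with the upper limit moved to $x_i+\sqrt{\delta_n}$. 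This accounts for most of the annulus mass lying near its outer edge, where the connection probability is smaller. The band then contributes only $e^{\beta\Delta}$, and you get a polynomial margin $n^{-\varepsilon(1-\alpha)/(2\alpha)}$ that does not depend on the buffer.
\item Bounding via the full ball $B_0(r_i+\sqrt{\delta_n})$ instead of the annulus costs nothing, since the $e^{-\alpha x}$ term is dominated when $\alpha>1/2$.
\end{itemize}

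Two small corrections.
\begin{itemize}
\item $\beta/\alpha=1-1/(2\alpha)<1$ holds for every $\alpha>0$, not because $\alpha<1$.
\item $|\Psi|=\lceil\chi^{1+\varepsilon}\rceil$ is a random quantity, so the Chernoff threshold, applied conditionally on the position of $u$, should be deterministic. For example, use $n^{(1-\alpha)(1+\varepsilon)-\varepsilon(1-\alpha)/(4\alpha)}$, which is still a factor $n^{\varepsilon(1-\alpha)/(4\alpha)}$ above the mean. Then convert it to $o(|\Psi|)$ via $\chi\ge c_0n^{1-\alpha}$ w.e.h.p. Your final intersection with the chromatic-number bound is what does this, so only the phrasing needs adjusting.
\end{itemize}
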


\begin{proof}
Fix \(i\in\{1,\dots,m\}\) and \(u\in\widehat A_i\), and set
\(A_i^+:=\{v\in V(G):r_{i-1}-\sqrt{\delta_n}<r(v)\le r_i+\sqrt{\delta_n}\}\).
On the radius-accuracy event \(\mathcal E\), \(\widehat A_i\subseteq A_i^+\), and hence \(d_{\mathrm{int}}(u)\le |N(u)\cap A_i^+|\). By Lemma~\ref{lem:centered-ball-measure}, the expected size of this enlarged annulus satisfies
\begin{align}
\mathbb E[|A_i^+|]
&=
n\left(\mu(B_0(r_i+\sqrt{\delta_n}))-\mu(B_0(r_{i-1}-\sqrt{\delta_n}))\right) \nonumber\\
&=
O\left(ne^{-\alpha(R-r_i)}\right)
=
O\left(
n^{1-\alpha+i\varepsilon(1-\alpha)}
(\log n)^{\alpha c_{\mathrm{in}}}
\right).
\label{eq:expected-size-constant-pseudoband}
\end{align}

Moreover, by Lemmas~\ref{lem:assignment-inner-cutoff-vertices} and~\ref{lem:assignment-outer-cutoff-vertices}, w.h.p.\ every vertex in \(\widehat A_i\) has radius in \((r_{\mathrm L},r_{\mathrm U})\). Since \(\widehat r(u)>r_{i-1}\), Lemma~\ref{lem:radius-estimate-accuracy} gives \(r(u)\ge r_{i-1}-\sqrt{\delta_n}\). Thus, by Lemma~\ref{lem:angular-threshold-asymptotics}, the angular connection probability between \(u\) and any vertex in \(A_i^+\) is at most
\begin{align}
O\left(e^{(R-2(r_{i-1}-\sqrt{\delta_n}))/2}\right)
=
O\left(
(\log n)^{-c_{\mathrm{in}}}
n^{-(i-1)\varepsilon(1-\alpha)/\alpha}
\right).
\label{eq:angular-threshold-constant-pseudoband}
\end{align}

Conditional on the position of \(u\), the count \(|N(u)\cap A_i^+|\) is Poisson. Its expectation is bounded by the expected size in~\eqref{eq:expected-size-constant-pseudoband} multiplied by the angular bound in~\eqref{eq:angular-threshold-constant-pseudoband}. Hence
\begin{align*}
\mathbb E\left[|N(u)\cap A_i^+|\mid r(u),\theta(u)\right]
&\le
O\left(
n^{(1-\alpha)(1+\varepsilon)}
(\log n)^{-c_{\mathrm{in}}(1-\alpha)}
\right)
=
o\left(n^{(1-\alpha)(1+\varepsilon)}\right)
=
o\left(|\Psi|\right).
\end{align*}

A Poisson Chernoff bound with deterministic threshold \(n^{(1-\alpha)(1+\varepsilon)}(\log n)^{-c_{\mathrm{in}}(1-\alpha)/2}\) gives failure probability at most \(n^{-A}\) for every fixed \(A>0\). A union bound over all vertices and the \(m=O(1)\) pseudo-bands, together with \(\mathcal E\) and the palette-size bound, gives \(d_{\mathrm{int}}(u)=o(|\Psi|)\) simultaneously for all \(i\in\{1,\dots,m\}\) and all \(u\in\widehat A_i\), w.h.p.
\end{proof}

\subsubsection*{Higher-priority degree}

The internal degree controls the number of colours removed during the active phase, whereas the probability that a vertex remains uncoloured in a given round depends only on its higher-priority neighbours. Lemma~\ref{lem:larger-degree-neighbourhood} yields the required bound.

\begin{lemma}[Small higher-priority degree]
\label{lem:small-higher-priority-degree}
Let $c'>0$ be constant. Then, w.e.h.p., every vertex
$u\in V(G)\setminus B_0(R/2+c'\log\log n)$ satisfies
$\deg^+(u)=o(n^{1-\alpha})$.
\end{lemma}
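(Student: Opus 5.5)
The plan is to read off the statement from Lemma~\ref{lem:larger-degree-neighbourhood} by translating the radial condition into the parameter $\ell$ used there. A vertex $u$ with $r(u)>R/2+c'\log\log n$ lies in some unit annulus $(R-\ell-1,R-\ell]$ with $\ell\le R/2-c'\log\log n+1$, i.e.\ $\ell\le\lceil R/2\rceil$. So only the first two cases of Lemma~\ref{lem:larger-degree-neighbourhood} occur, and there are $O(\log n)$ relevant values of $\ell$.

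First case: $0\le\ell\le\lfloor\frac{2}{1-\alpha}\log\log n\rfloor$. The lemma gives $\deg^+(u)=O(e^{\ell/2}+\log n)=O((\log n)^{1/(1-\alpha)}+\log n)$. This is polylogarithmic, hence $o(n^{1-\alpha})$.

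Second case: $\lceil\frac{2}{1-\alpha}\log\log n\rceil\le\ell\le\lceil R/2\rceil$. The lemma gives $\deg^+(u)=O(e^{\ell(1-\alpha)})$. Since $\ell\le R/2-c'\log\log n+1$ and $R=2\log n+C$,
\begin{align*}
e^{\ell(1-\alpha)}\le e^{(1-\alpha)(\log n+C/2+1)}(\log n)^{-c'(1-\alpha)}=O\left(n^{1-\alpha}(\log n)^{-c'(1-\alpha)}\right)=o(n^{1-\alpha}),
\end{align*}
because $c'>0$ and $\alpha<1$. The constant in the $O(\cdot)$ from Lemma~\ref{lem:larger-degree-neighbourhood} does not depend on $\ell$, so the bound is uniform over all annuli.

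To get the statement simultaneously for all vertices w.e.h.p., I would take a union bound over the $O(\log n)$ values of $\ell$ and over all vertices. Work on the event $|V(G)|\le 2n$, which fails with probability $e^{-\Omega(n)}$. Each per-vertex bound holds with failure probability $n^{-\omega(1)}$, and multiplying by $O(n\log n)$ keeps it $n^{-\omega(1)}$.

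The main obstacle is the form in which Lemma~\ref{lem:larger-degree-neighbourhood} is stated: it is phrased for a single vertex $u$ in a given annulus. If its guarantee is only per vertex, I would apply it via the Mecke/Palm formula. This lets one treat $u$ as an added point at a fixed position, so the expected number of violating vertices is $n\int\Pr(\text{violation at }x)\,d\mu(x)=n\cdot n^{-\omega(1)}$, and Markov's inequality finishes the argument. I would also check the boundary annulus where $\ell$ equals $\lceil R/2\rceil$ exactly; it is covered by the second case anyway, so the argument goes through.
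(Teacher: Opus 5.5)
Your proposal is correct and follows essentially the same route as the paper. Both translate the radial condition into the layer index $\ell$, observe that only the first two cases of Lemma~\ref{lem:larger-degree-neighbourhood} apply, and bound $\deg^+(u)$ by a polylogarithm in the first case and by $O\bigl(n^{1-\alpha}(\log n)^{-c'(1-\alpha)}\bigr)$ in the second. Your explicit union bound over all vertices (on the event $|V(G)|\le 2n$, or via the Mecke formula) makes precise a step the paper leaves implicit.
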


\begin{proof}
Let $\ell(u):=\lfloor R-r(u)\rfloor$ be the layer as in Lemma~\ref{lem:larger-degree-neighbourhood}. If $u\notin B_0(R/2+c'\log\log n)$, then $\ell(u)< \lfloor R/2-c'\log\log n\rfloor$, so only the first two cases of Lemma~\ref{lem:larger-degree-neighbourhood} can apply.

If $0\le \ell(u)\le \frac{2}{1-\alpha}\log\log n$, then w.e.h.p.
\begin{align*}
\deg^+(u)
=
O\left(e^{\ell(u)/2}+\log n\right)
=
O\left((\log n)^{1/(1-\alpha)}+\log n\right)
=
o(n^{1-\alpha}).
\end{align*}

Otherwise,
$\frac{2}{1-\alpha}\log\log n\le \ell(u)\le R/2-c'\log\log n$, and w.e.h.p.
\begin{align*}
\deg^+(u)
=
O\left(e^{(1-\alpha)\ell(u)}\right)
=
O\left(n^{1-\alpha}(\log n)^{-c'(1-\alpha)}\right)
=
o(n^{1-\alpha}),
\end{align*}
where we used $R=2\log n+O(1)$. This proves the claim.
\end{proof}

\subsubsection*{Band colouring process}

Next, we establish the constant-round colouring guarantee for an intermediate pseudo-band. By Lemma~\ref{lem:constant-prev-degree-pseudo-bands}, previously coloured neighbours have removed at most $d_{\mathrm{pre}}(u)=o(|\Psi|)$ colours when $\widehat A_i$ becomes active, w.e.h.p. During the active phase, Lemma~\ref{lem:constant-internal-degree-pseudo-bands} ensures that neighbours within $\widehat A_i$ remove at most $d_{\mathrm{int}}(u)=o(|\Psi|)$ further colours w.h.p. Thus, each vertex retains almost the entire palette throughout the phase. Finally, $\deg^+(u)=o(n^{1-\alpha})$ w.e.h.p. by Lemma~\ref{lem:small-higher-priority-degree}, so only few competing neighbours can prevent $u$ from fixing its chosen colour in a given round.

\begin{lemma}[Constant-round colouring of a pseudo-band]
\label{lem:band-constant-rounds}
For every $i\in\{1,\dots,m\}$ and every fixed integer $t>1/\varepsilon+i$, a.a.s. Sequential Radial Colouring colours all vertices of $\widehat A_i$ within $t$ rounds after its activation.
\end{lemma}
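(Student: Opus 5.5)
We fix a typical HRG, so that the conclusions of Lemmas~\ref{lem:constant-prev-degree-pseudo-bands}, \ref{lem:constant-internal-degree-pseudo-bands} and~\ref{lem:small-higher-priority-degree} hold. Only the algorithmic randomness of the active phase of $\widehat A_i$ remains. The argument is a per-vertex survival bound iterated over $t$ rounds, followed by a union bound over the band.

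\textbf{Step 1: available colours.} Consider any round $s\in\{1,\dots,t\}$ of the active phase and any uncoloured $u\in\widehat A_i$. Its current palette $\Psi^s(u)$ loses at most $d_{\mathrm{pre}}(u)$ colours to earlier regions and at most $d_{\mathrm{int}}(u)$ colours to neighbours in the band. Both quantities are $o(|\Psi|)$, since $d_{\mathrm{pre}}(u)=o(n^{1-\alpha})=o(|\Psi|)$. Hence $|\Psi^s(u)|\ge(1-o(1))|\Psi|$ deterministically.

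\textbf{Step 2: one-round failure probability.} The vertex $u$ fails in round $s$ only if some higher-priority active neighbour picks the same colour. Conditional on the history and on all choices of those neighbours, $u$'s uniform choice hits one of at most $\deg^+(u)$ forbidden colours. This gives
\[
\Pr(u \text{ survives round } s\mid \text{history})\le \frac{\deg^+(u)}{(1-o(1))|\Psi|}.
\]
Since $\widehat A_i\subseteq A_i^+$ on $\mathcal E$, every $u\in\widehat A_i$ has $r(u)\ge r_{\mathrm{in}}-o(1)\ge R/2+(c_{\mathrm{in}}/2)\log\log n$. Lemma~\ref{lem:small-higher-priority-degree} then gives $\deg^+(u)=o(n^{1-\alpha})$. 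With $|\Psi|\ge\chi^{1+\varepsilon}=\Omega(n^{(1-\alpha)(1+\varepsilon)})$, the per-round survival probability is $o(n^{-\varepsilon(1-\alpha)})$.

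\textbf{Step 3: iteration and union bound.} Chaining the conditional bound over $t$ rounds gives
\[
\Pr(u \text{ uncoloured after } t \text{ rounds})\le p_i^{\,t}, \qquad p_i:=\max_{u\in\widehat A_i}\frac{\deg^+(u)}{(1-o(1))|\Psi|}.
\]
A union bound over $\widehat A_i$ yields failure probability at most $|\widehat A_i|\,p_i^{\,t}$. By~\eqref{eq:expected-size-constant-pseudoband} and Corollary~\ref{cor:constant-pseudoband-sizes}, we have $|\widehat A_i|=O(n^{(1-\alpha)(1+i\varepsilon)}\operatorname{polylog} n)$, which is at most $n^{(1-\alpha)(1+i\varepsilon)+o(1)}$.

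\textbf{Main obstacle: matching the threshold $t>1/\varepsilon+i$.} The crude bound $p_i\le n^{-\varepsilon(1-\alpha)}$ requires $t\varepsilon(1-\alpha)>(1-\alpha)(1+i\varepsilon)$, i.e.\ $t>1/\varepsilon+i$. This is exactly the stated condition, with the polylogarithmic factors absorbed because the inequality is strict. The crude bound also suffices in the layers where Lemma~\ref{lem:larger-degree-neighbourhood} gives smaller $\deg^+$, so a sharper layer-by-layer estimate is not needed.

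The point that needs care is uniformity. The same $o(n^{1-\alpha})$ bound on $\deg^+$ must hold across the whole band, and the $o(1)$ slack in Step 1 must hold simultaneously for all $u$. I would handle both by working on the intersection of the w.h.p.\ structural events before exposing any algorithmic randomness, so that the conditional bounds in Steps 2 and 3 apply to every vertex at once. If the $(\log n)^{-c_{\mathrm{in}}(1-\alpha)}$ gain in $\deg^+$ is needed to beat the polylog factor in $|\widehat A_i|$, it is available by taking $c_{\mathrm{in}}$ large.
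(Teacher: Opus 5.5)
Your proposal is correct and follows the paper's proof essentially step for step. It obtains $(1-o(1))|\Psi|$ available colours from the $d_{\mathrm{pre}}$ and $d_{\mathrm{int}}$ bounds, a per-round survival probability of $\deg^+(u)/((1-o(1))|\Psi|)=o(n^{-\varepsilon(1-\alpha)})$ via Lemma~\ref{lem:small-higher-priority-degree} with $c'=c_{\mathrm{in}}/2$, and then a union bound against $|\widehat A_i|=O(n^{(1-\alpha)(1+i\varepsilon)}(\log n)^{\alpha c_{\mathrm{in}}})$, where the strict inequality $t>1/\varepsilon+i$ absorbs the polylogarithmic factor. Your closing remark about enlarging $c_{\mathrm{in}}$ is unnecessary, and it would not be a clean fix anyway, since the factor $(\log n)^{\alpha c_{\mathrm{in}}}$ in $|\widehat A_i|$ also grows with $c_{\mathrm{in}}$.
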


\begin{proof}
Fix $i\in\{1,\dots,m\}$. By Corollary~\ref{cor:constant-pseudo-region-size-comparison}, we have $|\widehat A_i|=(1+o(1))|A_i|$. Together with Lemma~\ref{lem:constant-true-region-concentration}, this gives
\begin{align*}
|\widehat A_i|
=
O\left(n^{1-\alpha+i\varepsilon(1-\alpha)}(\log n)^{\alpha c_{\mathrm{in}}}\right).
\end{align*}

By Lemma~\ref{lem:constant-prev-degree-pseudo-bands}, w.h.p., previously coloured neighbours forbid at most $d_{\mathrm{pre}}(u)=o(n^{1-\alpha})=o(|\Psi|)$ colours when $\widehat A_i$ becomes active. Moreover, Lemma~\ref{lem:constant-internal-degree-pseudo-bands} implies that, w.h.p., at most $d_{\mathrm{int}}(u)=o(|\Psi|)$ further colours are forbidden during the active phase. Consequently, throughout this phase, every $u\in\widehat A_i$ has at least $|\Psi|-d_{\mathrm{pre}}(u)-d_{\mathrm{int}}(u)=(1-o(1))|\Psi|$ available colours.

Fix a round and condition on the preceding history. Since $\widehat r(u)>r_{\mathrm{in}}$, Lemma~\ref{lem:radius-estimate-accuracy} gives $r(u)\ge r_{\mathrm{in}}-\sqrt{\delta_n}\ge R/2+(c_{\mathrm{in}}/2)\log\log n$ for all sufficiently large $n$, w.h.p. 
Hence, applying Lemma~\ref{lem:small-higher-priority-degree} with \(c'=c_{\mathrm{in}}/2\), we obtain \(\deg^+(u)=o(n^{1-\alpha})\) w.e.h.p. Vertex $u$ remains uncoloured only if a higher-priority uncoloured neighbour chooses the same colour. Since every active vertex has $(1-o(1))|\Psi|$ available colours, we get
\begin{align*}
\Pr\left(u\text{ remains uncoloured}\mid\text{history}\right)
\le
\frac{\deg^+(u)}{(1-o(1))|\Psi|}
=
o\left(n^{-\varepsilon(1-\alpha)}\right).
\end{align*}

As this estimate holds conditionally in every round, the probability that $u$ remains uncoloured after $t$ rounds is $o\left(n^{-\varepsilon(1-\alpha)t}\right)$. Therefore, a union bound over $\widehat A_i$ gives
\begin{align*}
\Pr\left(\exists u\in\widehat A_i\text{ uncoloured after $t$ rounds}\right)
\le
o\left(
n^{1-\alpha+i\varepsilon(1-\alpha)-\varepsilon(1-\alpha)t}
(\log n)^{\alpha c_{\mathrm{in}}}
\right)
=
o(1),
\end{align*}
since $t>1/\varepsilon+i$. Thus, $\widehat A_i$ is coloured within $t$ rounds a.a.s.
\end{proof}

\subsection{Colouring the pseudo-outer region}
\label{subsec:constant-colouring-pseudo-outer-region}

The pseudo-outer region $\widehat{\mathcal O}$ is processed last, so its vertices may inherit colour restrictions from all preceding regions. The following degree bound shows that each vertex has only $O(n^{1-\alpha})$ forbidden colours, which is polynomially smaller than $|\Psi|$ and hence permits colouring in a constant number of rounds.

\begin{lemma}[Degree bound in the pseudo-outer region]
\label{lem:degree-pseudo-outer-region}
With high probability, every $u\in\widehat{\mathcal O}$ satisfies $\deg(u)=O(n^{1-\alpha})$.
\end{lemma}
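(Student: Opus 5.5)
We prove the degree bound by splitting $\widehat{\mathcal O}$ according to true radius and controlling each part with a deterministic comparison region. Recall $\widehat{\mathcal O}=\{v:\widehat r(v)>r_{\mathrm{tail}}\}$ with $r_{\mathrm{tail}}=2\alpha\log n$. Work on the high-probability event $\mathcal E$ together with the events of Lemmas~\ref{lem:degree-concentration-above-cutoff}, \ref{lem:assignment-inner-cutoff-vertices} and~\ref{lem:assignment-outer-cutoff-vertices}.

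\textbf{Vertices with $r(u)\ge r_{\mathrm U}$.} Lemma~\ref{lem:right-degree-cutoff} gives $\deg(u)\le K\log^2 n=O(n^{1-\alpha})$ w.h.p.

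\textbf{Vertices with $r(u)\in[r_{\mathrm L},r_{\mathrm U})$.} Lemma~\ref{lem:assignment-inner-cutoff-vertices} excludes $r(u)<r_{\mathrm L}$, so these are the only remaining vertices. Lemma~\ref{lem:radius-estimate-accuracy} gives $r(u)>\widehat r(u)-\sqrt{\delta_n}>r_{\mathrm{tail}}-\sqrt{\delta_n}$. By Lemma~\ref{lem:degree-concentration-above-cutoff} and the monotonicity of Lemma~\ref{lem:expected-degree-monotone},
\begin{align*}
\deg(u)\le(1+\delta_n)\,\bar d(r(u))\le(1+\delta_n)\,\bar d(r_{\mathrm{tail}}-\sqrt{\delta_n}).
\end{align*}
Alternatively, and more directly, $\widehat r(u)>r_{\mathrm{tail}}$ means $\deg(u)<\bar d(r_{\mathrm{tail}})$ whenever $\deg(u)$ lies in the image $(\bar d(R),n)$. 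If instead $\deg(u)\le\bar d(R)$, then $\deg(u)=O(1)$ by Lemma~\ref{lem:boundary-expected-degree}. This is exactly the argument of Lemma~\ref{lem:pseudo-outer-region-degree-bound}, so in fact every $u\in\widehat{\mathcal O}$ satisfies $\deg(u)\le\max\{\bar d(R),\bar d(r_{\mathrm{tail}})\}$ deterministically, and the case split by true radius is only a sanity check.

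\textbf{Evaluating the threshold.} We need $\bar d(r_{\mathrm{tail}})=O(n^{1-\alpha})$. Since $r_{\mathrm{tail}}=2\alpha\log n$ lies in the range $[R/2-a,\,R-4\log\log n+b]$ (because $\alpha<1$ makes $2\alpha\log n\le 2\log n-4\log\log n$ for large $n$, and $2\alpha\log n\ge\log n$), Lemma~\ref{lem:uniform-expected-degree-asymptotic} applies and gives
\begin{align*}
\bar d(r_{\mathrm{tail}})=\Theta(ne^{-r_{\mathrm{tail}}/2})=\Theta(n\cdot n^{-\alpha})=\Theta(n^{1-\alpha}).
\end{align*}

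\textbf{Main obstacle.} The only delicate point is that $\widehat r$ is defined through the extended inverse $\bar d^{-1}_{\mathrm{ext}}$, so we must check that $\widehat r(u)>r_{\mathrm{tail}}$ really forces $\deg(u)<\bar d(r_{\mathrm{tail}})$. If $\deg(u)\ge n$ then $\widehat r(u)=0$, which is impossible here. If $\deg(u)\le\bar d(R)$, the degree is $O(1)$. Otherwise strict monotonicity of $\bar d$ converts $\bar d^{-1}(\deg(u))>r_{\mathrm{tail}}$ into $\deg(u)<\bar d(r_{\mathrm{tail}})$. The bound therefore holds deterministically, and in particular w.h.p., which is what the statement requires.
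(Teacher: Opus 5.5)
Your proof is correct, but your main argument is not the one the paper gives for this lemma.

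\textbf{The paper's route.} The paper splits by true radius. For $r(u)>r_{\mathrm U}$ it invokes the upper degree cutoff. For $r(u)\le r_{\mathrm U}$ it uses radius accuracy to get $r(u)\ge r_{\mathrm{tail}}-\sqrt{\delta_n}$, and then degree concentration to bound $\deg(u)\le(1+o(1))\,\bar d(r_{\mathrm{tail}}-\sqrt{\delta_n})=O(n^{1-\alpha})$. This is exactly your ``sanity check'' case analysis. There you are slightly more careful than the paper: you explicitly exclude $r(u)<r_{\mathrm L}$ via Lemma~\ref{lem:assignment-inner-cutoff-vertices} before invoking radius accuracy, a step the paper leaves implicit.

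\textbf{Your main route.} Your main argument instead inverts the definition of $\widehat r$ directly. The condition $\widehat r(u)>r_{\mathrm{tail}}>0$ rules out $\deg(u)\ge n$. Otherwise, strict monotonicity of $\bar d$ gives $\deg(u)\le\max\{\bar d(R),\bar d(r_{\mathrm{tail}})\}=\bar d(r_{\mathrm{tail}})$. The uniform expected-degree asymptotic then gives $\bar d(r_{\mathrm{tail}})=\Theta(n^{1-\alpha})$. That lemma does apply at $r_{\mathrm{tail}}=2\alpha\log n$, since $(2\alpha-1)\log n\to\infty$ and $(2-2\alpha)\log n-4\log\log n\to\infty$. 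This is the same argument the paper itself uses for Lemma~\ref{lem:pseudo-outer-region-degree-bound} in the near-degeneracy section, so the paper's probabilistic proof here is more than the statement needs.

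\textbf{What each route buys.} Your route gives a deterministic bound, $\deg(u)<\bar d(r_{\mathrm{tail}})$, with an explicit leading constant. It does not depend on the degree-concentration, radius-accuracy or cutoff events. The paper's route fits the style of the neighbouring lemmas, which control vertices through their true radius on the event $\mathcal E$. However, the subsequent colouring lemma only uses $\deg(u)=O(n^{1-\alpha})$, so nothing is lost by taking your shortcut.
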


\begin{proof}
Fix $u\in\widehat{\mathcal O}$. If $r(u)\le r_{\mathrm U}$, then $\widehat r(u)>r_{\mathrm{tail}}$, and Lemma~\ref{lem:radius-estimate-accuracy} gives $r(u)\ge r_{\mathrm{tail}}-\sqrt{\delta_n}$ w.h.p. Hence, by Lemma~\ref{lem:degree-concentration-above-cutoff},
\begin{align*}
\deg(u)
\le
(1+o(1))\bar d(r(u))
\le
O\left(ne^{-(r_{\mathrm{tail}}-\sqrt{\delta_n})/2}\right)
=
O\left(
ne^{-\alpha\log n}e^{\sqrt{\delta_n}/2}
\right)
=
O\left(n^{1-\alpha}\right),
\end{align*}
where we used $r_{\mathrm{tail}}=2\alpha \log n$. If $r(u)>r_{\mathrm U}$, Lemma~\ref{lem:right-degree-cutoff} gives $\deg(u)\le K\log^2 n=O(n^{1-\alpha})$ w.h.p. Thus every vertex in $\widehat{\mathcal O}$ has degree $O(n^{1-\alpha})$ w.h.p.
\end{proof}

\begin{lemma}[Colouring the pseudo-outer region]
\label{lem:outer-colouring}
For every fixed number $t>1/(\varepsilon(1-\alpha))$ of rounds SRC colours $\widehat{\mathcal O}$ a.a.s. after $t$ rounds of its activation.
\end{lemma}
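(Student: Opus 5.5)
The argument copies the one for Lemma~\ref{lem:colouring-pseudo-outer-region}, using Lemma~\ref{lem:degree-pseudo-outer-region} in place of the polylogarithmic degree bound. We work on the typical-graph event, which includes the high-probability conclusion of Lemma~\ref{lem:degree-pseudo-outer-region}. On it, every $u\in\widehat{\mathcal O}$ satisfies $\deg(u)\le c_1 n^{1-\alpha}$ for a constant $c_1$. Each vertex loses at most one colour per neighbour, whether the neighbour was coloured in a previous region or inside $\widehat{\mathcal O}$ during the active phase. Hence, throughout the phase, every uncoloured $u\in\widehat{\mathcal O}$ has at least $|\Psi|-c_1n^{1-\alpha}$ available colours. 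Since $|\Psi|=\lceil\chi^{1+\varepsilon}\rceil$ and $\chi=\Theta(n^{1-\alpha})$ by Lemma~\ref{lem:chromatic-number-bound}, we have $|\Psi|=\Theta(n^{(1-\alpha)(1+\varepsilon)})$. So at least $(1-o(1))|\Psi|$ colours remain available.

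Next fix a round and condition on the full history. The vertex $u$ stays uncoloured only if one of its higher-priority uncoloured neighbours in $\widehat{\mathcal O}$ samples the same colour. There are at most $\deg(u)=O(n^{1-\alpha})$ such neighbours, and $u$ samples uniformly from at least $(1-o(1))|\Psi|$ colours. As in Lemma~\ref{lem:one-step-failure-bound}, the conditional failure probability is therefore
\begin{align*}
\frac{O(n^{1-\alpha})}{(1-o(1))|\Psi|}=O\left(n^{-\varepsilon(1-\alpha)}\right).
\end{align*}
This bound holds conditionally in every round. Multiplying over the $t$ rounds, $u$ survives all of them with probability $O(n^{-\varepsilon(1-\alpha)t})$.

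Finally we take a union bound over $\widehat{\mathcal O}$. On the event $|V(G)|\le 2n$ this has at most $O(n)$ vertices, which gives failure probability $O(n^{1-\varepsilon(1-\alpha)t})$. This is $o(1)$ precisely because $t>1/(\varepsilon(1-\alpha))$, and the $o(1)$ failure probabilities of the structural events are added on top.

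The only delicate point is the first step: making sure the bound $O(n^{1-\alpha})$ controls both the inherited restrictions and the competitors. This follows because Lemma~\ref{lem:degree-pseudo-outer-region} bounds the total degree, so no separate analysis of $d_{\mathrm{pre}}$ or $\deg^+$ is needed. Everything else is routine.
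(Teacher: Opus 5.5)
Your proposal is correct and follows essentially the same route as the paper: you bound the total degree of every $u\in\widehat{\mathcal O}$ by $O(n^{1-\alpha})$ via Lemma~\ref{lem:degree-pseudo-outer-region}, deduce that $(1-o(1))|\Psi|$ colours stay available, bound the per-round conditional failure probability by $O(n^{-\varepsilon(1-\alpha)})$, chain this over the $t$ rounds, and union-bound over the $O(n)$ vertices. Your remark that the single total-degree bound controls inherited restrictions, internal restrictions and higher-priority competitors at once is exactly why the paper needs no separate $d_{\mathrm{pre}}$ or $\deg^+$ analysis for this region.
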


\begin{proof}
Fix $u\in\widehat{\mathcal O}$. By Lemma~\ref{lem:degree-pseudo-outer-region}, $\deg(u)=O(n^{1-\alpha})$ w.h.p. Throughout the active phase, $u$ has at least $|\Psi|-\deg(u)=(1-o(1))|\Psi|$ available colours. 

Fix a round and condition on the preceding history. Vertex $u$ remains uncoloured only if a higher-priority uncoloured neighbour chooses the same colour. Consequently,
\begin{align*}
\Pr\left(u\text{ remains uncoloured}\mid\text{history}\right)
&\le
\frac{\deg(u)}{(1-o(1))|\Psi|}
=
O\left(n^{-\varepsilon(1-\alpha)}\right).
\end{align*}

Thus, $u$ remains uncoloured after $t$ rounds with probability $O(n^{-\varepsilon(1-\alpha)t})$. A union bound over at most $O(n)$ vertices gives $\Pr(\exists u\in\widehat{\mathcal O}\text{ uncoloured after $t$ rounds})\leq O(n^{1-\varepsilon(1-\alpha)t})=o(1)$, since $t>1/(\varepsilon(1-\alpha))$.
\end{proof}

\subsection{Proof of the \texorpdfstring{$O(\chi^{1+\varepsilon})$}{O(χ(G) raised to (1 + ε))}-colour theorem}
\label{subsec:proof-constant-round-colouring}

We now combine the preceding region-wise colouring bounds. Each pseudo-region is coloured within a constant number of rounds after its activation a.a.s. Since the pseudo-decomposition contains only constantly many regions, the total number of rounds is $O(1)$ a.a.s.

\begin{theorem}[SRC with $O(\chi^{1+\varepsilon})$ colours]
\label{thm:constant-round-colouring}
Let $G\sim \mathcal G(n,\alpha,C)$ be a typical hyperbolic random graph with $\alpha\in(1/2,1)$, and fix $\varepsilon>0$. Using a palette $\Psi$ of size $|\Psi|=\lceil\chi^{1+\varepsilon}\rceil$, Sequential Radial Colouring colours $G$ within $O(1)$ rounds a.a.s.
\end{theorem}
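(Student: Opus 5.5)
The proof assembles the region-wise guarantees of this section along the processing order $\widehat{\mathcal I},\widehat A_1,\dots,\widehat A_m,\widehat{\mathcal O}$. Since $G$ is typical, all structural w.h.p. statements hold simultaneously. These are Corollary~\ref{cor:constant-pseudo-region-size-comparison}, Lemmas~\ref{lem:constant-prev-degree-pseudo-bands}, \ref{lem:constant-internal-degree-pseudo-bands}, \ref{lem:small-higher-priority-degree} and \ref{lem:degree-pseudo-outer-region}, the radius-accuracy event $\mathcal E$, and $\chi=\Theta(n^{1-\alpha})$. Only the algorithmic randomness remains. Each region is active for exactly
\begin{align*}
T=1+\max\left\{\left\lceil\frac{1}{\varepsilon(1-\alpha)}\right\rceil,\left\lceil\frac1\varepsilon+m\right\rceil\right\}
\end{align*}
rounds, and $m=O(1)$. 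So the total running time is $(m+2)T=O(1)$ deterministically, and it suffices to show that a.a.s. every region is fully coloured within its $T$ active rounds.

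First, $T>1/\varepsilon$, so Corollary~\ref{cor:constant-round-colouring-inner-pseudo-region} applied with $t=T$ shows that $\widehat{\mathcal I}$ is coloured during its phase a.a.s. Next, for every $i\in\{1,\dots,m\}$ we have $T\ge 1+\lceil 1/\varepsilon+m\rceil>1/\varepsilon+i$. Hence Lemma~\ref{lem:band-constant-rounds} with $t=T$ shows that $\widehat A_i$ is coloured within its phase a.a.s. Finally, $T>1/(\varepsilon(1-\alpha))$, so Lemma~\ref{lem:outer-colouring} gives the same conclusion for $\widehat{\mathcal O}$.

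Each of these lemmas is stated under worst-case inherited restrictions. For the bands, the available-colour bound $|\Psi|-d_{\mathrm{pre}}(u)-d_{\mathrm{int}}(u)$ only uses that previously processed neighbours each block at most one colour. The per-round failure bound holds conditionally on the entire history. Therefore the guarantee for region $j$ holds conditionally on any outcome of the earlier phases in which those regions were completely coloured. Indeed, completeness of earlier regions is not even needed, since uncoloured earlier vertices only block fewer colours.

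The $m+2=O(1)$ failure events each have probability $o(1)$, so a union bound shows that a.a.s. every region is coloured within its phase. The resulting colouring is proper for two reasons. Within a phase, RCTDEG only makes a colour permanent if no higher-priority neighbour chose it in that round, and lower-priority vertices adopting the same colour are blocked. Across phases, each vertex removes colours used by already coloured neighbours from its palette. Since $\chi=\Theta(n^{1-\alpha})$, the palette size $\lceil\chi^{1+\varepsilon}\rceil$ also matches item~3 of Theorem~\ref{thm:summary} once it is rewritten in terms of $\kappa=\Theta(\chi)$; there one may pass to a slightly smaller $\varepsilon$.

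The main point needing care is the combination across phases. One must check that the per-region a.a.s. statements remain valid conditionally on the earlier phases, because the region lemmas are proved with the history fixed. The failure probabilities in Lemmas~\ref{lem:band-constant-rounds} and~\ref{lem:outer-colouring} are uniform over histories, so this conditioning causes no problem. Because there are only constantly many regions, no quantitative rate beyond $o(1)$ per region is needed.
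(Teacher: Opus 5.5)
Your proposal is correct and is essentially the paper's proof. The paper also takes $T$ to exceed $1/\varepsilon$, $1/\varepsilon+m$ and $1/(\varepsilon(1-\alpha))$, applies the inner-region corollary, the pseudo-band lemma and the outer-region lemma with $t=T$, and union-bounds over the $m+2=O(1)$ regions. Your remarks that the region lemmas hold uniformly over the history of earlier phases, and that the resulting colouring is proper, spell out what the paper leaves implicit.
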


\begin{proof}
We define $T:=1+\max\left\{\left\lceil 1/(\varepsilon(1-\alpha))\right\rceil, m+\left\lceil 1/\varepsilon\right\rceil\right\}$. Since $T>1/\varepsilon$, Corollary~\ref{cor:constant-round-colouring-inner-pseudo-region} applies to $\widehat{\mathcal I}$. Moreover, $T>1/\varepsilon+i$ for every $i\in\{1,\dots,m\}$ and $T>1/(\varepsilon(1-\alpha))$, so Lemmas~\ref{lem:band-constant-rounds} and~\ref{lem:outer-colouring} apply to all pseudo-bands and to $\widehat{\mathcal O}$, respectively. Since $m=O(1)$, we have $T=O_\varepsilon(1)$, and all $m+2$ pseudo-regions are processed within $(m+2)T=O(1)$ rounds. A union bound over their constantly many failure probabilities completes the proof.
\end{proof}

Theorem~\ref{thm:constant-round-colouring} implies the third item of Theorem~\ref{thm:summary}. Since $\chi\le\kappa+1$, for every fixed $\varepsilon>0$ we have $\kappa^{1+\varepsilon}\ge\chi^{1+\varepsilon/2}$ for all sufficiently large $n$. Thus any palette with $|\Psi|\ge\kappa^{1+\varepsilon}$ satisfies the hypothesis of Theorem~\ref{thm:constant-round-colouring} with parameter $\varepsilon/2$. Hence SRC colours $G$ in $O(1)$ rounds a.a.s.

\section{Clique colouring}


We give a self-contained analysis of RCTDEG on a clique $\mathcal C$ with $n$ vertices. Since all vertices have the same degree, the degree-based priority reduces to an arbitrary tie-breaking order, for instance one induced by vertex identifiers. In each round, every uncoloured vertex chooses uniformly at random from the colours not already used by coloured neighbours and keeps its choice unless a higher-priority uncoloured neighbour chooses the same colour.

This section establishes two upper bounds for distinct optimisation regimes. The first minimises the palette size, showing that an essentially optimal number of colours still permits an $O(\log\log n)$-round colouring a.a.s. The second minimises the round complexity, showing that polynomial multiplicative slack yields constant-round colouring. We then complement these results with lower bounds.

\subsection{Upper bounds}
\label{subsec:clique-upper-bounds}

We establish upper bounds for two palette regimes:
\begin{itemize}
    \item If $|\Psi|=n+s_n$ with $1\le s_n=o(n)$, then $\mathcal C$ is coloured within $O(\log(n/s_n)+\log\log n)$ rounds a.a.s.; in particular, $(1+o(1))n$ colours suffice for $O(\log\log n)$ rounds a.a.s.
    \item If $|\Psi|=\lceil n^{1+\varepsilon_n}\rceil$ with $\varepsilon_n>0$ and $n^{\varepsilon_n}\to\infty$, then $\mathcal C$ is coloured within $O(1/\varepsilon_n)$ rounds a.a.s.; in particular, fixed $\varepsilon_n=\varepsilon>0$ gives $O(1)$ rounds.
\end{itemize}

\subsubsection*{Colouring with additive slack}

We first establish an upper bound for a general graph $H$ on $n$ vertices when additive palette slack is available. The proof proceeds in three phases: geometric decay to a sufficiently small constant fraction of the slack, quadratic decay to $O(\log n)$ uncoloured vertices, and completion in one additional round.

\begin{lemma}[Graph colouring with additive slack]
\label{lem:clique-colouring-additive-slack}
Let $H$ be a graph on $n$ vertices, and let $1\le s_n=o(n)$. If $|\Psi|\ge n+s_n$, then RCTDEG colours $H$ a.a.s. within $O\left(\log(n/s_n)+\log\log n\right)$ rounds.
\end{lemma}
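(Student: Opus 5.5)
We track the number $U_i:=|\mathcal U_i|$ of uncoloured vertices of $H$ after round $i$, with $U_0=n$. The key observation is a deterministic palette bound. At the start of round $i$, each uncoloured vertex $v$ has lost at most $n-U_{i-1}$ colours to permanently coloured vertices, since only coloured vertices block colours and there are $n-U_{i-1}$ of them. Hence $v$ has at least $|\Psi|-(n-U_{i-1})\ge s_n+U_{i-1}$ available colours.

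Fix the ordering $v_1,\dots,v_{U_{i-1}}$ of $\mathcal U_{i-1}$ by decreasing priority. Vertex $v_j$ has at most $j-1$ higher-priority uncoloured neighbours. Corollary~\ref{cor:sequential-stochastic-domination-sub}, applied with $a_j=s_n+U_{i-1}$ and $b_j=j-1$, shows that conditional on $\mathcal U_{i-1}$ the count $U_i$ is dominated by a sum of independent Bernoullis with mean
\begin{align*}
\nu_i\le \frac{U_{i-1}^2}{2(s_n+U_{i-1})}.
\end{align*}

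\textbf{Phase 1 (geometric).} While $U_{i-1}\ge \delta s_n$ for a small constant $\delta$, we have $\nu_i\le U_{i-1}/2$. Since $U_{i-1}\ge\delta s_n$ need not be large when $s_n$ is small, I would use the Chernoff bound with target $\tfrac34U_{i-1}$. This gives failure probability $\exp(-\Omega(U_{i-1}))$, which is tiny as long as $U_{i-1}\ge B\log n$.

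So after $O(\log(n/s_n))$ rounds, either $U\le\delta s_n$ or $U\le B\log n$ already. The failure probabilities sum to $o(1)$ by a union bound over $O(\log n)$ rounds, since $\log(n/s_n)\le\log n$.

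\textbf{Phase 2 (quadratic).} Once $U_{i-1}\le\delta s_n$, we have $\nu_i\le U_{i-1}^2/(2s_n)$. Whenever $\nu_i\ge B\log n$, Chernoff gives $U_i\le C_0U_{i-1}^2/s_n$ with failure probability $n^{-c}$. With $C_0\delta\le 1/2$, the same induction as in Lemma~\ref{lem:colouring-pseudo-bands} gives
\begin{align*}
U_{t_0+r}\le \delta s_n 2^{-(2^r-1)}.
\end{align*}
After $O(\log\log n)$ rounds this makes $\nu<B\log n$. One more round with the tail bound $(e\nu/(C_1\log n))^{C_1\log n}$ then yields $U\le C_1\log n$.

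\textbf{Phase 3 (completion).} This is the main obstacle, because the slack may be as small as $s_n=1$: there is no polynomial slack, and the one-vertex failure probability $U/(s_n+U)$ can be constant. I would handle it with a pair-collision bound. Once $U\le C_1\log n$, the probability that some pair of the remaining vertices picks the same colour in a round is at most
\begin{align*}
\binom{U}{2}\frac{1}{s_n+U}=O(U).
\end{align*}
This is useless when $s_n$ is small.

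So instead I would continue Phase 2 further. While $U_{i-1}\ge 2$, the domination gives $\mathbb E[U_i]\le U_{i-1}^2/(2(s_n+U_{i-1}))$. With $s_n$ possibly tiny this is still $\le U_{i-1}/2$, so Markov's inequality gives a constant-factor decay in expectation, $\mathbb E[U_i]\le \mathbb E[U_{i-1}]/2$.

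From $U\le C_1\log n$, another $O(\log\log n)$ rounds bring $\mathbb E[U]$ down to $o(1)$, and Markov then gives $U=0$ a.a.s. When $s_n$ is large, the pair-collision bound finishes in one round instead. Either way the total is $O(\log(n/s_n)+\log\log n)$ rounds.
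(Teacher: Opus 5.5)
Your argument is correct, but it is organised differently from the paper's.

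\textbf{The paper's route.} The paper splits on the size of the slack.
\begin{itemize}
\item For $s_n=\omega(\log^2 n)$ it runs geometric decay down to $\delta s_n$, then quadratic decay down to $O(\log n)$ uncoloured vertices. It finishes with a single pair-collision round, whose failure probability $O(\log^2 n/s_n)$ is $o(1)$ precisely because of the case assumption.
\item For $s_n=O(\log^2 n)$ it skips concentration entirely. It iterates $\mathbb E[U_i\mid\mathcal U_{i-1}]\le U_{i-1}/2$ from $U_0=n$ for $\lceil 3\log_2 n\rceil$ rounds and applies Markov. This is affordable because $\log(n/s_n)=\Theta(\log n)$ in that case.
\end{itemize}

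\textbf{Your route.} You use one pipeline for every $s_n$.
\begin{itemize}
\item You stop the geometric phase at $\max\{\delta s_n,B\log n\}$, so the Chernoff bound stays effective even for tiny slack.
\item You run the quadratic phase only when $\delta s_n>B\log n$.
\item You finish with the expectation-halving argument started from $O(\log n)$ vertices. This costs only $O(\log\log n)$ rounds, with failure probability $O(\log n/2^{t})$.
\end{itemize}

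\textbf{Comparison.} Your approach buys uniformity: no $\log^2 n$ threshold is needed. It also shows that the $\log(n/s_n)$ term comes solely from the geometric phase. Since your final step works for every $s_n$, the pair-collision bound is superfluous, and you can drop it rather than switching between the two. The paper's split buys shorter individual arguments and a $1/n^2$ failure probability in the small-slack case.

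\textbf{Minor points.} The decay $\mathbb E[U_i]\le \mathbb E[U_{i-1}]/2$ comes from the tower property, not from Markov's inequality; Markov is used only at the very end. The random round at which $U\le C_1\log n$ is first reached is handled by monotonicity of $U_i$. You can condition on $\mathcal U$ at a fixed deterministic round and restrict to the event that $U\le C_1\log n$ there.
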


\begin{proof}
For $i\ge1$, let $\mathcal U_{i-1}\subseteq V(H)$ be the set of vertices uncoloured after round $i-1$, and write $U_{i-1}:=|\mathcal U_{i-1}|$. At the beginning of round $i$, at most $n-U_{i-1}$ colours are forbidden at any uncoloured vertex by coloured neighbours. Hence every vertex in $\mathcal U_{i-1}$ has at least $|\Psi|-(n-U_{i-1})\ge s_n+U_{i-1}$ available colours.

Order the vertices of $\mathcal U_{i-1}$ by decreasing priority as $v_1^i,\dots,v_{U_{i-1}}^i$, and let $I_j^i$ be the indicator that $v_j^i$ remains uncoloured after round $i$. The vertex $v_j^i$ has at most $j-1$ higher-priority uncoloured neighbours and at least $s_n+U_{i-1}$ available colours. Hence Corollary~\ref{cor:sequential-stochastic-domination-sub} applies with $a_j=s_n+U_{i-1}$ and $b_j=j-1$. 

Hence, conditional on $\mathcal U_{i-1}$, the number $U_i$ of vertices remaining uncoloured after round $i$ is stochastically dominated by $Z_i:=\sum_{j\le U_{i-1}}Z_j^i$, where the $Z_j^i$ are conditionally independent Bernoulli random variables with $\Pr(Z_j^i=1\mid\mathcal U_{i-1})=(j-1)/(s_n+U_{i-1})$. Consequently, we have
\begin{align*}
\nu_i:=\mathbb E[Z_i\mid\mathcal U_{i-1}]
=\sum_{j=1}^{U_{i-1}}\frac{j-1}{s_n+U_{i-1}}
=\frac{U_{i-1}(U_{i-1}-1)}{2(s_n+U_{i-1})}
\le\frac{U_{i-1}^2}{2(s_n+U_{i-1})}.
\end{align*}

We now distinguish two cases according to the size of $s_n$, namely whether $s_n=\omega(\log^2 n)$ or $s_n=O(\log^2 n)$.

\textbf{Case 1 $[s_n = \omega(\log^2 n)]$:} Fix $\lambda\in(1/2,1)$. Since $\nu_i\le U_{i-1}/2$, the threshold $\lambda U_{i-1}$ exceeds the conditional expectation of $Z_i$ by a constant factor. Hence, stochastic domination and a Chernoff bound imply that there exists a constant $c_\lambda>0$ such that
\begin{align*}
\Pr\left(U_i>\lambda U_{i-1}\mid\mathcal U_{i-1}\right)\le\Pr\left(Z_i>\lambda U_{i-1}\mid\mathcal U_{i-1}\right)\le\exp(-c_\lambda U_{i-1}).
\end{align*}

We use this estimate to show that, a.a.s., the number of uncoloured vertices falls below $\delta s_n$ within $O\left(\log\left(n/s_n\right)\right)$ rounds, where $\delta\in(0,1)$ is a constant chosen later. As long as $U_{i-1}\ge\delta s_n$, the assumption $s_n=\omega(\log^2 n)$ implies $U_{i-1}=\omega(\log^2 n)$. Hence, for every fixed $c>0$ and all sufficiently large $n$, the preceding estimate yields $\Pr\left(U_i>\lambda U_{i-1}\mid\mathcal U_{i-1}\right)\le\exp(-c_\lambda U_{i-1})\le n^{-c}$. Thus, in every such round, the bound $U_i\le\lambda U_{i-1}$ holds with conditional probability at least $1-n^{-c}$. Since $U_0=n$, after $t$ successful rounds we have $U_t\le\lambda^t n$. Therefore, choose
\begin{align*}
t_0:=\left\lceil\frac{\log\left(n/(\delta s_n)\right)}{\log(1/\lambda)}\right\rceil=O\left(\log\left(n/s_n\right)\right).
\end{align*}

If $U_{i-1}<\delta s_n$ for some $i\in\{1,\dots,t_0\}$, then $U_{t_0}<\delta s_n$ by monotonicity. Otherwise, $U_{i-1}\ge\delta s_n$ for every $i\in\{1,\dots,t_0\}$, and the bound $U_i\le\lambda U_{i-1}$ fails in any such round with probability at most $n^{-c}$. Since $t_0=O(\log n)$, a union bound shows that $U_{t_0}\le\lambda^{t_0}n=\delta s_n$ w.h.p.

Next, we show that, a.a.s., $U_i=O(\log n)$ after $O(\log\log n)$ additional rounds. For this, we first show that $\nu_i$ falls below $C_0\log n$, after which another Chernoff bound gives $U_i=O(\log n)$. While $\nu_i\ge C_0\log n$, stochastic domination and a Chernoff bound yield, for sufficiently large $C_1=C_1(C_0,c)>1$, $\Pr\left(U_i>C_1\nu_i\mid\mathcal U_{i-1}\right)\le\Pr\left(Z_i>C_1\nu_i\mid\mathcal U_{i-1}\right)\le n^{-c}$. Consequently, with conditional failure probability at most $n^{-c}$,
\begin{align}
U_i\le C_1\nu_i\le\frac{C_1}{2}\frac{U_{i-1}^2}{s_n}.
\label{eq:clique-quadratic-decay}
\end{align}

Choose $\delta>0$ sufficiently small that $C_1\cdot \delta\le1$. By the preceding argument, $U_{t_0}\le\delta s_n$ with failure probability at most $t_0n^{-c}$. We claim that, for every $r\ge0$, as long as $\nu_{t_0+j}\ge C_0\log n$ for all $j\in\{1,\dots,r\}$, the bound $U_{t_0+r}\le\delta s_n\,2^{-(2^r-1)}$ holds with failure probability at most $(t_0+r)n^{-c}$. We prove this by induction on $r$.

For $r=0$, the claim follows from the bound on $U_{t_0}$. Suppose that it holds for some $r\ge0$ and that $\nu_{t_0+r+1}\ge C_0\log n$. By \eqref{eq:clique-quadratic-decay}, conditional on the induction bound,
\begin{align*}
U_{t_0+r+1}
\le
\frac{C_1}{2}\frac{U_{t_0+r}^2}{s_n}
\le
\delta s_n\,2^{-(2^{r+1}-1)}
\end{align*}
with conditional failure probability at most $n^{-c}$. Hence, by a union bound, the bound at round $t_0+r+1$ holds with failure probability at most $(t_0+r+1)n^{-c}$, completing the induction.

Choose an integer $r=O(\log\log n)$ sufficiently large that $\left(\delta s_n\,2^{-(2^r-1)}\right)^2/(2s_n)<C_0\log n$. If $\nu_i\ge C_0\log n$ for every $i\in\{t_0+1,\dots,t_0+r\}$, then the induction bound holds up to round $t_0+r$ with failure probability at most $(t_0+r)n^{-c}$ and yields
\begin{align*}
\nu_{t_0+r+1}\le\frac{U_{t_0+r}^2}{2s_n}\le\frac{\left(\delta s_n\,2^{-(2^r-1)}\right)^2}{2s_n}<C_0\log n.
\end{align*}

Otherwise, $\nu_i<C_0\log n$ already holds for some $i\in\{t_0+1,\dots,t_0+r\}$. Thus, with failure probability at most $(t_0+r)n^{-c}=o(1)$, there exists a round $i\le t_0+r+1$ such that $\nu_i<C_0\log n$.

Once $\nu_i<C_0\log n$, stochastic domination converts this bound on the conditional mean into a logarithmic bound on $U_i$. Choose a constant $C_2>eC_0$ sufficiently large that $C_2\log(C_2/(eC_0))\ge c$. Conditional on $\mathcal U_{i-1}$, the random variable $U_i$ is stochastically dominated by $Z_i$, and Chernoff's bound~\cite[Theorem~1.10.1]{doerr2020probabilistic} gives
\begin{align*}
\Pr\left(Z_i\ge C_2\log n\mid\mathcal U_{i-1}\right)
\le
\left(\frac{e\nu_i}{C_2\log n}\right)^{C_2\log n}
\le
\left(\frac{eC_0}{C_2}\right)^{C_2\log n}
=
n^{-C_2\log(C_2/(eC_0))}
\le
n^{-c}.
\end{align*}

Therefore, in the first round with $\nu_i<C_0\log n$, we have $U_i=O(\log n)$ with conditional failure probability at most $n^{-c}$.


It remains to colour the final $O(\log n)$ uncoloured vertices. Conditional on $\mathcal U_i$, if some vertex remains uncoloured after round $i+1$, then some pair of vertices in $\mathcal U_i$ must have chosen the same colour. Each such vertex has at least $s_n+U_i$ available colours, so any fixed pair chooses the same colour with probability at most $1/(s_n+U_i)$. A union bound over all pairs
therefore gives
\begin{align*}
\Pr(\mathcal U_{i+1}\neq\emptyset\mid\mathcal U_i)
&\le
\binom{U_i}{2}\frac{1}{s_n+U_i}
\le
O\left(\frac{\log^2 n}{s_n}\right)
=
o(1),
\end{align*}
where we used $s_n=\omega(\log^2 n)$.


Combining the three phases and taking a union bound over the corresponding failure events, whose total probability is $o(1)$, all vertices in $H$ are coloured within $O\left(\log(n/s_n)+\log\log n\right)$ rounds a.a.s.

\textbf{Case 2 $[s_n = O(\log^2(n))]$:}  Recall that, conditional on
$\mathcal U_{i-1}$,
\begin{align*}
\nu_i
:=
\mathbb E[Z_i\mid\mathcal U_{i-1}]
\le
\frac{U_{i-1}^2}{2(s_n+U_{i-1})}
\le
\frac{U_{i-1}}{2}.
\end{align*}

Since $U_i$ is stochastically dominated by $Z_i$, we have
$\mathbb E[U_i\mid\mathcal U_{i-1}] \le \mathbb E[Z_i\mid\mathcal U_{i-1}] =\nu_i \le U_{i-1}/2$. Taking expectations and iterating yields
\begin{align*}
\mathbb E[U_t]
\le
\frac{\mathbb E[U_{t-1}]}{2}
\le
\cdots
\le
\frac{U_0}{2^t}
\le
\frac{n}{2^t}.
\end{align*}

Choose $t:=\lceil 3\log_2 n\rceil$. Then $\mathbb E[U_t]\le n/2^t\le 1/n^2$. By Markov's inequality, $\Pr(U_t\ge1)\le\mathbb E[U_t]\le 1/n^2=o(1)$. Hence $U_t=0$ a.a.s. Since $s_n=O(\log^2 n)$ implies
$\log(n/s_n)=\Theta(\log n)$, this is within
$O(\log(n/s_n)+\log\log n)$ rounds.
\end{proof}

The preceding lemma can be viewed as a slack-sensitive refinement of Johansson's random-colour-trial analysis~\cite{johansson1999simple}. For a clique $\mathcal C$, it recovers the standard $O(\log|\mathcal C|)$ bound when the additive slack is $s=1$. More generally, it makes explicit how the round complexity depends on $s$. In particular, choosing $s=o(n)$ appropriately yields a $(1+o(1))n$-colouring guarantee for every $n$-vertex graph in $O(\log\log n)$ rounds a.a.s.

\begin{corollary}[$(1+o(1))n$ colouring for general graphs]
\label{cor:near-tight-general-graph-colouring}
For every graph $H$ on $n$ vertices, there exists $\varepsilon_n=o(1)$ such that for $|\Psi|\ge(1+\varepsilon_n)n$ RCTDEG colours $H$ within $O(\log\log n)$ rounds a.a.s.
\end{corollary}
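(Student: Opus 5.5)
The plan is to derive the corollary directly from Lemma~\ref{lem:clique-colouring-additive-slack} by choosing an explicit additive slack $s_n$. Concretely, I would set $s_n:=\lceil n/\log n\rceil$ and $\varepsilon_n:=s_n/n$. Then $\varepsilon_n=\Theta(1/\log n)=o(1)$ is deterministic and depends only on $n$, not on $H$. Also $1\le s_n=o(n)$ for all $n\ge 2$, and the hypothesis $|\Psi|\ge(1+\varepsilon_n)n$ is exactly $|\Psi|\ge n+s_n$. The small cases $n\le 2$ are trivial and can be absorbed into ``sufficiently large $n$'', since a.a.s.\ statements concern $n\to\infty$.

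With this choice, Lemma~\ref{lem:clique-colouring-additive-slack} applies verbatim to $H$. It gives that RCTDEG colours $H$ a.a.s.\ within $O(\log(n/s_n)+\log\log n)$ rounds. Since $n/s_n\le\log n$, we have $\log(n/s_n)\le\log\log n$, so the bound is $O(\log\log n)$, as required.

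There is no real obstacle. The one point to check is that this $s_n=\omega(\log^2 n)$ puts us in Case~1 of the lemma, where the $O(\log(n/s_n)+\log\log n)$ bound is derived via the quadratic-decay phase. Case~2 would only give $O(\log n)$, so it must be avoided, and the choice $s_n=n/\log n$ does so. More generally, any $s_n$ with $s_n=o(n)$ and $n/s_n\le(\log n)^{O(1)}$ would also work.
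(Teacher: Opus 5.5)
Your proposal is correct and is exactly the paper's argument: set $s_n=\lceil n/\log n\rceil$ and $\varepsilon_n=s_n/n$, apply Lemma~\ref{lem:clique-colouring-additive-slack}, and use $\log(n/s_n)=O(\log\log n)$. Your Case~1 check is harmless but unnecessary, because the lemma's bound $O(\log(n/s_n)+\log\log n)$ is stated uniformly in $s_n$. In Case~2 the $O(\log n)$ bound already coincides with $O(\log(n/s_n))$.
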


\begin{proof}
Set $s_n:=\lceil n/\log n\rceil$. Then $s_n=o(n)$, so Lemma~\ref{lem:clique-colouring-additive-slack} applies to $H$. Since $\log(n/s_n)=O(\log\log n)$, it yields an $O(\log\log n)$-round colouring a.a.s. Finally, setting $\varepsilon_n:=s_n/n=o(1)$ gives $n+s_n=(1+\varepsilon_n)n$, proving the claim.
\end{proof}

Thus, RCTDEG colours every $n$-vertex graph $H$, and in particular any clique $\mathcal C$, within $O(\log\log n)$ rounds using $(1+o(1))n$ colours a.a.s., which is asymptotically optimal for cliques.

\subsubsection*{Colouring with multiplicative slack}

We next analyse multiplicative palette slack. For every graph $H$ on $n$ vertices, $|\Psi|=\lceil n^{1+\varepsilon_n}\rceil$ colours imply a $O(1/\varepsilon_n)$-round colouring when $n^{\varepsilon_n}\to\infty$. For cliques, polynomial slack therefore gives constant-round colouring.



\begin{lemma}[Colouring with variable multiplicative slack]
\label{lem:variable-slack-graph-colouring}
Let $H$ be a graph on $n$ vertices, and let $|\Psi|=\lceil n^{1+\varepsilon_n}\rceil$, where $\varepsilon_n>0$ and $n^{\varepsilon_n}\to\infty$. Then RCTDEG colours $H$ in $(1+o(1))/\varepsilon_n$ rounds a.a.s.
\end{lemma}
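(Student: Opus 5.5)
The plan is a direct per-vertex survival bound followed by a union bound, as sketched in the roadmap. At the start of any round, an uncoloured vertex $v$ of $H$ has at most $n-1$ neighbours. Hence at most $n-1$ colours are forbidden by coloured neighbours. The same holds for colours chosen in the current round by higher-priority uncoloured neighbours. So $v$ has at least $|\Psi|-(n-1)\ge n^{1+\varepsilon_n}-n = n(n^{\varepsilon_n}-1)$ available colours. Since $v$ can be blocked only by one of its at most $n-1$ higher-priority uncoloured neighbours choosing the same colour, conditional on the entire history up to the start of the round (and on the choices of all other vertices, as in Lemma~\ref{lem:one-step-failure-bound}), $v$ stays uncoloured with probability at most
\begin{align*}
p_n:=\frac{n-1}{n(n^{\varepsilon_n}-1)}\le \frac{1}{n^{\varepsilon_n}-1}.
\end{align*}
Chaining these conditional bounds over rounds gives $\Pr(v\text{ uncoloured after }t\text{ rounds})\le p_n^{t}$. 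A union bound over the $n$ vertices then yields failure probability at most $n\,(n^{\varepsilon_n}-1)^{-t}$.

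It then remains to choose $t=(1+o(1))/\varepsilon_n$ so that this is $o(1)$. Write $n^{\varepsilon_n}-1=n^{\varepsilon_n}(1-n^{-\varepsilon_n})$. The failure bound becomes $\exp\bigl(\log n - t\varepsilon_n\log n - t\log(1-n^{-\varepsilon_n})\bigr)$. The term $-t\log(1-n^{-\varepsilon_n})$ is at most $2t\,n^{-\varepsilon_n}$ for large $n$, since $n^{\varepsilon_n}\to\infty$. Set $M_n:=n^{\varepsilon_n}=e^{\varepsilon_n\log n}\to\infty$, and take
\begin{align*}
t:=\left\lceil \frac{1+\xi_n}{\varepsilon_n}\right\rceil
\end{align*}
for a slowly vanishing $\xi_n\to0$. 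The exponent is then at most $-\xi_n\log n + 2t/M_n$. We need $\xi_n\log n\to\infty$ and $t/M_n=o(\xi_n\log n)$. Note $t/M_n\approx \log n/(\log M_n\cdot M_n)$ and $\xi_n\log n$ compare as $1/(M_n\log M_n)$ versus $\xi_n$. So a choice such as $\xi_n:=\max\{1/\sqrt{\log M_n},\,1/\sqrt{\log\log n}\}$ works. This $\xi_n$ tends to $0$, dominates $1/(M_n\log M_n)$, and satisfies $\xi_n\log n\ge \log n/\sqrt{\log\log n}\to\infty$. The ceiling adds at most one round, and $1=o(1/\varepsilon_n)$ unless $\varepsilon_n$ is bounded below. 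In that case we can absorb it by noting that $t=\lceil(1+\xi_n)/\varepsilon_n\rceil$ is already $(1+o(1))/\varepsilon_n$ only up to $+1$. If $\varepsilon_n=\Theta(1)$, the integer-rounding issue is intrinsic, and I would phrase it as $\lceil (1+o(1))/\varepsilon_n\rceil$ rounds, matching how the statement is used in Corollary~\ref{cor:constant-round-colouring-inner-pseudo-region}, where any fixed $t>1/\varepsilon'$ suffices.

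The only delicate step is the careful bookkeeping of this exponent, in particular handling the regime where $n^{\varepsilon_n}\to\infty$ very slowly. In that regime the factor $(1-n^{-\varepsilon_n})^{-t}$ is not negligible a priori. The probabilistic part is straightforward, because the conditional per-round bound holds uniformly regardless of history, so no concentration argument is required.
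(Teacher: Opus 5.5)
Your proof is correct and follows the paper's argument: a per-round survival probability of at most $1/(n^{\varepsilon_n}-1)$, uniform over histories, chained over rounds and then union-bounded over the $n$ vertices. The only difference is cosmetic. The paper takes $t_n=\lceil(\log n+\gamma_n)/\log(n^{\varepsilon_n}-1)\rceil$ with $\gamma_n=\omega(1)$ and $\gamma_n=o(\log n)$, which makes the union bound $e^{-\gamma_n}$ immediately and replaces your exponent bookkeeping with the single observation $\log(n^{\varepsilon_n}-1)=(1+o(1))\varepsilon_n\log n$; your remark about the ceiling applies equally to the paper's $t_n$.
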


\begin{proof}
Let $t_n:=\left\lceil(\log n+\gamma_n)/\log(n^{\varepsilon_n}-1)\right\rceil$, where $\gamma_n=\omega(1)$ and $\gamma_n=o(\log n)$. Since every vertex has at most $n$ neighbours and at most $n$ colours can be forbidden by already coloured neighbours, every uncoloured vertex has at least $|\Psi|-n\ge n(n^{\varepsilon_n}-1)$ available colours in every round. Hence, conditional on the previous history, a fixed uncoloured vertex remains uncoloured in one round with probability at most $n/(|\Psi|-n)\le(n^{\varepsilon_n}-1)^{-1}$.

Therefore, the probability that a fixed vertex remains uncoloured after $t_n$ rounds is at most $(n^{\varepsilon_n}-1)^{-t_n}$. A union bound over all vertices
gives
\begin{align*}
\Pr\left(\exists v\in V(H)\text{ uncoloured after $t_n$ rounds}\right)
&\le
n(n^{\varepsilon_n}-1)^{-t_n}
\le
\exp(-\gamma_n)
=
o(1).
\end{align*}

Finally, since $n^{\varepsilon_n}\to\infty$, we have $\log(n^{\varepsilon_n}-1)=(1+o(1))\varepsilon_n\log n$. Since $\gamma_n=o(\log n)$, this gives $t_n=(1+o(1))/\varepsilon_n$.
\end{proof}

\begin{corollary}[Constant-round colouring]
\label{cor:constant-round-graph-colouring}
Let $H$ be a graph on $n$ vertices, and let $|\Psi|=\lceil n^{1+\varepsilon}\rceil$ for a fixed constant $\varepsilon>0$. Then RCTDEG colours $H$ a.a.s. within any fixed number $t>1/\varepsilon$ of rounds.
\end{corollary}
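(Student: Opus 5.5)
The plan is to derive the corollary directly from Lemma~\ref{lem:variable-slack-graph-colouring}, or equivalently to repeat its one-line argument with a fixed exponent. Fix $\varepsilon>0$ and an integer $t>1/\varepsilon$, and set $\varepsilon_n:=\varepsilon$ for all $n$. Then $n^{\varepsilon_n}=n^{\varepsilon}\to\infty$, so the hypotheses of Lemma~\ref{lem:variable-slack-graph-colouring} hold. That lemma shows RCTDEG colours $H$ within $(1+o(1))/\varepsilon$ rounds a.a.s. Because $t>1/\varepsilon$ is a fixed constant strictly above $1/\varepsilon$, we have $(1+o(1))/\varepsilon\le t$ for all sufficiently large $n$. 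Since RCTDEG only keeps colouring vertices (coloured vertices never become uncoloured), being finished by an earlier round implies being finished by round $t$.

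To avoid any subtlety about the rounding in $t_n$, I would instead redo the union bound directly with $t$ rounds. Every uncoloured vertex has at least $|\Psi|-n\ge n(n^{\varepsilon}-1)$ available colours in each round. It also has at most $n$ higher-priority neighbours, each of which forbids at most one colour. So, conditional on the history, the probability that it stays uncoloured in a given round is at most $(n^\varepsilon-1)^{-1}$. Multiplying these conditional bounds over $t$ rounds gives $(n^\varepsilon-1)^{-t}$ for a fixed vertex. A union bound over the $n$ vertices then gives failure probability at most $n(n^{\varepsilon}-1)^{-t}=O(n^{1-\varepsilon t})=o(1)$, since $\varepsilon t>1$.

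The only point needing care is that the per-round bound holds conditionally on the whole history, so the probabilities may be multiplied. This is exactly what the proof of Lemma~\ref{lem:variable-slack-graph-colouring} establishes. No real obstacle remains.
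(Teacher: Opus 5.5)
Your proposal is correct and matches the paper's proof, which applies Lemma~\ref{lem:variable-slack-graph-colouring} with $\varepsilon_n=\varepsilon$ and notes that $(1+o(1))/\varepsilon\le t$ for all sufficiently large $n$. Your direct union bound $n(n^{\varepsilon}-1)^{-t}=O(n^{1-\varepsilon t})=o(1)$ is that lemma's own argument specialised to a fixed $t$; it also avoids having to reason about the ceiling in the lemma's $t_n$.
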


\begin{proof}
Since $n^\varepsilon\to\infty$, we can apply Lemma~\ref{lem:variable-slack-graph-colouring} with $\varepsilon_n=\varepsilon$, and the number of rounds is $(1+o(1))/\varepsilon$ a.a.s. Hence, for every fixed $t>1/\varepsilon$, this is at most $t$ for all sufficiently large $n$.
\end{proof}

For a clique $\mathcal C$ on $n$ vertices and every fixed $\varepsilon>0$, $n^{1+\varepsilon}$ colours suffice for RCTDEG to colour $\mathcal C$ in any fixed number of rounds larger than $1/\varepsilon$ a.a.s. The resulting trade-off between palette size and round complexity is summarised in Table~\ref{tab:clique-upper-bounds}.

\begin{table}[t]
\centering
\renewcommand{\arraystretch}{1.35}
\begin{tabular}{l@{\hspace{1.5cm}}l}
\toprule
\textbf{Palette size} & \textbf{Round complexity} \\
\midrule
$n+s_n$, $1\le s_n=o(n)$
&
$O(\log(n/s_n)+\log\log n)$
\\
\addlinespace
$\left(1+\frac{1}{\log n}\right)n$
&
$O(\log\log n)$
\\
\addlinespace
$n^{1+\varepsilon_n}$, $n^{\varepsilon_n}\to\infty$
&
$(1+o(1))/\varepsilon_n$
\\
\addlinespace
$n^{1+\varepsilon}$, $\varepsilon>0$ constant
&
$O(1)$
\\
\bottomrule
\end{tabular}
\caption{Upper bounds for colouring a clique on $n$ vertices, where $s_n$ denotes additive slack.}
\label{tab:clique-upper-bounds}
\end{table}

\subsection{Lower bounds}

We complement the preceding upper bounds with lower bounds for the same clique process (see Table~\ref{tab:clique-lower-bounds}):
\begin{itemize}
    \item If $|\Psi|=n(\log n)^{O(1)}$, then RCTDEG requires $\Omega(\log\log n)$ rounds a.a.s.; in particular, this matches the $O(\log\log n)$ upper bound for near-optimal palettes up to constant factors.
    \item If $|\Psi|=n^{1+\varepsilon_n}$ with $n^{\varepsilon_n}\to\infty$, then RCTDEG needs $\Omega\!\left(\log(1+1/\varepsilon_n)\right)$ rounds a.a.s. If $\varepsilon_n=o(1)$, every constant number of rounds leaves $\omega(1)$ vertices uncoloured a.a.s.
\end{itemize}

Recall that $\mathcal U_r\subseteq\mathcal C$ denotes the set of vertices remaining uncoloured after round $r$, and let $U_r:=|\mathcal U_r|$. To prove these statements in a unified form, suppose that $|\Psi|=\Theta(nf(n))$ for some $f(n)\ge1$. We first use concentration for self-bounding functions to obtain a one-round lower-tail bound for $U_r$, and then derive a general recursive lower bound, which we instantiate with $f(n)=\Theta(1)$ and $f(n)=n^{o(1)}$.

\subsubsection*{Lower-tail concentration for one clique round}

Let $\mathcal X:=\prod_{i=1}^m\mathcal X_i$ and let $g\colon\mathcal X\to\mathbb R_{\ge0}$. For $x\in\mathcal X$ and $i\in\{1,\dots,m\}$, define
\begin{align*}
g_i(x)
:=
\inf\left\{
g(x'):
x'_j=x_j\text{ for every }j\neq i
\right\}.
\end{align*}

Thus, $g_i(x)$ is the infimum of the values obtained by changing only the $i$th coordinate of $x$, and hence $0\le g(x)-g_i(x)$. For $a\ge0$ and $b\in\mathbb R$, the function $g$ is called \emph{$(a,b)$-self-bounding} if, for every $x\in\mathcal X$,
\begin{align*}
g(x)-g_i(x)
\le
1
\quad\text{for every }i\in\{1,\dots,m\},
\qquad
\sum_{i=1}^m\bigl(g(x)-g_i(x)\bigr)
\le
ag(x)+b.
\end{align*}

We use the following concentration bound.

\begin{lemma}[Lower-tail concentration for self-bounding functions
{\cite[Theorem~1]{mcdiarmid2006concentration}}]
\label{lem:self-bounding-concentration}
Let $X_1,\dots,X_m$ be independent random variables, where $X_i$ takes values in $\mathcal X_i$, and let $Z:=g(X_1,\dots,X_m)$ have mean $\mu$. If $g$ is measurable and $(a,b)$-self-bounding for some $a\ge0$ and $b\in\mathbb R$, then, for every $t>0$,
\begin{align*}
\Pr(Z\le\mu-t)
\le
\exp\left(
-\frac{t^2}{2(a\mu+b+t/3)}
\right).
\end{align*}
\end{lemma}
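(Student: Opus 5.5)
The plan is the entropy method (a Herbst-type argument built on a modified logarithmic Sobolev inequality). We bound the Laplace transform of the lower tail of $Z$ and then optimise a Chernoff bound.

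Fix $\lambda>0$ and set $F(\lambda):=\mathbb E[e^{-\lambda Z}]$ and $G(\lambda):=\log \mathbb E[e^{-\lambda(Z-\mu)}]$.

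\textbf{Step 1 (tensorisation).} Since $X_1,\dots,X_m$ are independent, subadditivity of entropy together with the variational characterisation of entropy gives, for any $\mathcal X$-measurable $Z_i$ not depending on $X_i$,
\begin{align*}
\operatorname{Ent}\bigl(e^{-\lambda Z}\bigr)\le \sum_{i=1}^m \mathbb E\left[e^{-\lambda Z}\,\phi\bigl(\lambda(Z-Z_i)\bigr)\right],\qquad \phi(x):=e^{x}-x-1 .
\end{align*}
We take $Z_i:=g_i(X)$, the infimum over the $i$th coordinate. This is the step where the choice of $g_i$ as an infimum matters: it ensures $Z-Z_i\ge0$, so the argument of $\phi$ is $\lambda(Z-Z_i)\ge0$.

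\textbf{Step 2 (using self-boundedness).} By the first self-bounding condition, $0\le Z-Z_i\le 1$. Convexity of $\phi$ with $\phi(0)=0$ gives $\phi(\lambda x)\le x\,\phi(\lambda)$ for $x\in[0,1]$. Summing over $i$ and applying the second condition yields
\begin{align*}
\operatorname{Ent}\bigl(e^{-\lambda Z}\bigr)\le \phi(\lambda)\,\mathbb E\left[(aZ+b)e^{-\lambda Z}\right].
\end{align*}
Rewritten via $F$, using $\mathbb E[Ze^{-\lambda Z}]=-F'(\lambda)$, this becomes a linear differential inequality:
\begin{align*}
-\lambda F'-F\log F\le \phi(\lambda)\bigl(-aF'+bF\bigr).
\end{align*}
In terms of $G$ it reads
\begin{align*}
\bigl(\lambda-a\phi(\lambda)\bigr)G'(\lambda)-G(\lambda)\le \phi(\lambda)(a\mu+b),\qquad G(0)=G'(0)=0 .
\end{align*}

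\textbf{Step 3 (solving the inequality).} I expect this to be the main obstacle: the term $a\phi(\lambda)G'$ prevents the clean Herbst integration of the $a=0$ case. I would first restrict to $\lambda$ small enough that $\lambda-a\phi(\lambda)>0$. I would then compare $G$ with the explicit solution of the corresponding ODE, or alternatively bound $\phi(\lambda)\le \lambda^2/(2(1-\lambda/3))$ and verify directly that
\begin{align*}
G(\lambda)\le \frac{(a\mu+b)\lambda^2}{2(1-c\lambda)}
\end{align*}
for a suitable $c\le 1/3$ (in the regime where $a$ matters, $c$ may need to absorb $a$). The verification would show that this candidate function is a supersolution, i.e.\ that the difference between $G$ and the candidate cannot become positive at the first point where it would do so. If the direct route fails, the fallback is McDiarmid and Reed's trick of comparing with the Poisson-type bound $G(\lambda)\le (a\mu+b)\phi(\lambda)$-like expressions and then using $\phi(\lambda)\le\lambda^2/(2(1-\lambda/3))$.

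\textbf{Step 4 (Chernoff).} By Markov's inequality,
\begin{align*}
\Pr(Z\le\mu-t)\le \exp\bigl(G(\lambda)-\lambda t\bigr).
\end{align*}
Inserting the Bernstein-type bound from Step 3 and choosing $\lambda=t/(v+ct)$ with $v:=a\mu+b$ gives
\begin{align*}
\exp\left(-\frac{t^2}{2(a\mu+b+t/3)}\right),
\end{align*}
which is the stated inequality.
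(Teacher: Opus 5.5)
Your Step 1, the first inequality of Step 2, and Step 4 are fine, and the entropy method is a legitimate route here; the paper gives no proof and simply imports the bound from McDiarmid and Reed~\cite{mcdiarmid2006concentration}. The gap is in turning Step 2 into a differential inequality, and this is what makes your Step 3 look hard.

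Since $\mathbb E[e^{-\lambda Z}\log e^{-\lambda Z}]=-\lambda\,\mathbb E[Ze^{-\lambda Z}]=\lambda F'(\lambda)$, we have $\operatorname{Ent}(e^{-\lambda Z})=\lambda F'-F\log F$, not $-\lambda F'-F\log F$. Your displayed $F$-inequality is in fact false, e.g.\ for a constant $Z\equiv z>0$, which is $(0,0)$-self-bounding. Substituting $F'/F=G'-\mu$ and $\log F=G-\lambda\mu$ into the correct version gives
\[
\bigl(\lambda+a\phi(\lambda)\bigr)G'(\lambda)-G(\lambda)\le v\,\phi(\lambda),\qquad v:=a\mu+b,\ \lambda\ge0 .
\]
Your inequality with $\lambda-a\phi(\lambda)$ is only a weakening of this (it follows because $G'\ge0$), and it is too weak to close Step 3. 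The solution of $(\lambda-a\phi)y'-y=v\phi$ with $y=o(\lambda)$ expands as $y=\tfrac v2\lambda^2+\bigl(\tfrac a4+\tfrac1{12}\bigr)v\lambda^3+O(\lambda^4)$. By contrast, $\tfrac{v\lambda^2}{2(1-\lambda/3)}=\tfrac v2\lambda^2+\tfrac v6\lambda^3+O(\lambda^4)$. Since $y$ is itself the extremal function satisfying your inequality, for $a>1/3$ no supersolution or comparison argument based on it can give the target. Nor can it give the Poisson-type bound $v\phi(\lambda)$ from your fallback. Letting $c$ absorb $a$ only yields something like $\exp(-t^2/(2(a\mu+b+c_at)))$ with $c_a$ growing in $a$, which is not the statement. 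This is exactly the regime that matters: Lemma~\ref{lem:clique-round-lower-tail} uses the statement with $a=2$.

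Once the sign is fixed, the $a$-term has the favourable sign and can simply be dropped. $G$ is convex with $G'(0)=0$, so $G'\ge0$ on $[0,\infty)$. Hence $\lambda G'-G\le v\phi(\lambda)$, i.e.\ $(G/\lambda)'\le v\phi(\lambda)/\lambda^2$. Integrate from $0$ using $G(\lambda)/\lambda\to G'(0)=0$, and use that $\phi(s)/s^2$ is increasing:
\[
G(\lambda)\le v\lambda\int_0^\lambda\frac{\phi(s)}{s^2}\,ds\le v\,\phi(\lambda)\le\frac{v\lambda^2}{2(1-\lambda/3)}\qquad(0\le\lambda<3).
\]
The last step uses $k!\ge2\cdot3^{k-2}$ for $k\ge2$. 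Your Step 4 with $\lambda=t/(v+t/3)$ then gives exactly the claimed bound. No restriction on $\lambda$ beyond $\lambda<3$ is needed, and the constant does not depend on $a$. So your approach works, but only after this correction; as written, Step 3 is left open and cannot be completed from the inequality you derived.
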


We now apply Lemma~\ref{lem:self-bounding-concentration} to $U_r$ by showing that, conditional on $\mathcal U_{r-1}$, it is a self-bounding function of the independent colour choices made in round $r$.

\begin{lemma}[Lower-tail concentration in one clique round]
\label{lem:clique-round-lower-tail}
Let $\mathcal C$ be a clique on $n$ vertices with $|\Psi|\ge n$. Fix a clique round $r$ and condition on $\mathcal U_{r-1}$. Let
$\mu_r:=\mathbb E\left[U_r\,\middle|\,\mathcal U_{r-1}\right]$. Then
\begin{align*}
\Pr\left(
U_r\le\frac{\mu_r}{2}
\,\middle|\,
\mathcal U_{r-1}
\right)
\le
\exp\left(-\frac{3\mu_r}{52}\right).
\end{align*}
\end{lemma}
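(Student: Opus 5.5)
Conditional on $\mathcal U_{r-1}$, the only remaining randomness is the vector of independent colour choices $X_1,\dots,X_m$ of the $m:=U_{r-1}$ uncoloured vertices, each uniform on its current palette. In a clique all uncoloured vertices have the same coloured neighbours, so all of them share one common palette $\Psi^r$. A vertex stays uncoloured exactly when some higher-priority vertex picked the same colour. Hence, writing $N_c(x)$ for the number of vertices choosing colour $c$, we get
\begin{align*}
U_r=g(X_1,\dots,X_m):=\sum_{c\in\Psi^r}\bigl(N_c(x)-1\bigr)_+ .
\end{align*}
This is because among the $N_c$ vertices choosing $c$, exactly the highest-priority one keeps it. The plan is to show that $g$ is $(a,b)$-self-bounding with small constants, and then apply Lemma~\ref{lem:self-bounding-concentration} with $t=\mu_r/2$.

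\textbf{Bounded differences.} Changing the $i$th coordinate moves one vertex from colour $c$ to colour $c'$. This lowers $(N_c-1)_+$ by at most one and raises $(N_{c'}-1)_+$ by at most one, so $g(x)-g_i(x)\le1$. Next we compute $g_i(x)$ exactly. Since $|\Psi^r|\ge |\Psi|-(n-m)\ge m$, and the $m$ vertices occupy at most $m-1$ colours whenever $g(x)>0$, some colour of $\Psi^r$ is unused by the other $m-1$ vertices. (If $g(x)=0$ everything is trivial.) Moving $X_i$ to such a colour gives the infimum. Therefore $g(x)-g_i(x)=1$ if $x_i$ lies in a colour class of size $N_{x_i}\ge2$, and $g(x)-g_i(x)=0$ otherwise.

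\textbf{Self-bounding sum.} Summing over $i$ gives
\begin{align*}
\sum_i\bigl(g(x)-g_i(x)\bigr)=\sum_{c:N_c\ge2}N_c\le 2\sum_{c:N_c\ge2}(N_c-1)=2g(x),
\end{align*}
using $N\le 2(N-1)$ for $N\ge2$. So $g$ is $(2,0)$-self-bounding.

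\textbf{Applying the concentration bound.} Lemma~\ref{lem:self-bounding-concentration} with $a=2$, $b=0$ and $t=\mu_r/2$ yields
\begin{align*}
\Pr\bigl(U_r\le\mu_r/2\mid\mathcal U_{r-1}\bigr)\le\exp\!\left(-\frac{\mu_r^2/4}{2(2\mu_r+\mu_r/6)}\right)=\exp\!\left(-\frac{3\mu_r}{52}\right),
\end{align*}
which is exactly the stated constant, since $2(2+1/6)=13/3$ and $(1/4)/(13/3)=3/52$. If $\mu_r=0$ the bound is trivial.

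\textbf{Main obstacle.} The delicate point is the exact identification of $g_i$. It needs the existence of a free colour in the common palette, which uses the clique structure (so all uncoloured vertices share one palette) together with $|\Psi|\ge n$. Measurability is immediate because the domain is finite.
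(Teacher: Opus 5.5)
Your proposal is correct and follows the paper's proof essentially step for step. It uses the same representation $U_r=\sum_{c}(N_c-1)_+$ and the same exact evaluation of $g_i$ via a colour of the common palette that the other $m-1$ vertices leave unused, which exists because $|\Psi^r|\ge m$. It then makes the same $(2,0)$-self-bounding check and the same application of Lemma~\ref{lem:self-bounding-concentration} with $t=\mu_r/2$. Your case split on $g(x)>0$ is unnecessary, since the other $m-1$ vertices always occupy at most $m-1$ colours, but it does no harm.
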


\begin{proof}
Condition on $\mathcal U_{r-1}$ and write $m:=U_{r-1}$. Let $\Psi_r$ denote the set of colours not used before round $r$, which is the common set of colours available to the vertices in $\mathcal U_{r-1}$. Since the $n-m$ already coloured vertices use pairwise distinct colours, we have $|\Psi_r|=|\Psi|-(n-m)\ge m$.

Since only equality between colour choices matters, we may relabel the colours in $\Psi_r$ arbitrarily. For every $\mathbf c=(c_1,\dots,c_m)\in\Psi_r^m$ and $c\in\Psi_r$, let
$N_c(\mathbf c):=|\{j\in[m]:c_j=c\}|$. Define, for all $i\in[m]$,
\begin{align*}
f(\mathbf c)
&:=
\sum_{c\in\Psi_r}\left(N_c(\mathbf c)-1\right)_+,
&
f_i(\mathbf c)
&:=
\inf_{\widehat c\in\Psi_r}
f(c_1,\dots,c_{i-1},\widehat c,c_{i+1},\dots,c_m),
\end{align*}
where $x_+:=\max\{0,x\}$. To evaluate the infimum defining $f_i$, fix all coordinates other than $i$. Assigning a colour $\widehat c\in\Psi_r$ to the $i$th coordinate increases only the count $N_{\widehat c}$ by one; hence it can increase $f$ by at most one. Since the other $m-1$ coordinates use at most $m-1$ colours and $|\Psi_r|\ge m$, some colour is unused. Choosing such a colour does not increase any term and therefore attains the infimum. Consequently, $f_i(\mathbf c)=\sum_{c\in\Psi_r}\left(N_c(\mathbf c)-\mathbf 1_{\{c_i=c\}}-1\right)_+$, where $\mathbf 1_{\{c_i=c\}}$ denotes the indicator that $c_i=c$. It follows that removing the $i$th colour choice decreases $f$ by one precisely when at least one other vertex has colour $c_i$. Thus,
\begin{align*}
f(\mathbf c)-f_i(\mathbf c)
=
\begin{cases}
1, & \text{if }N_{c_i}(\mathbf c)\ge2,\\
0, & \text{if }N_{c_i}(\mathbf c)=1,
\end{cases}
\end{align*}
and hence $0\le f(\mathbf c)-f_i(\mathbf c)\le1$. Moreover,
\begin{align*}
\sum_{i=1}^m\bigl(f(\mathbf c)-f_i(\mathbf c)\bigr)
&=
\sum_{\substack{c\in\Psi_r\\N_c(\mathbf c)\ge2}}
N_c(\mathbf c)
\le
2\sum_{\substack{c\in\Psi_r\\N_c(\mathbf c)\ge2}}
\left(N_c(\mathbf c)-1\right)
=
2f(\mathbf c),
\end{align*}
where we used $N_c(\mathbf c)\le2(N_c(\mathbf c)-1)$ whenever
$N_c(\mathbf c)\ge2$. Therefore, $f$ is $(2,0)$-self-bounding.

Let $C_1,\dots,C_m$ denote the colours chosen by the uncoloured vertices in round $r$, and let $\mathbf C:=(C_1,\dots,C_m)$. Conditional on $\mathcal U_{r-1}$, these choices are independent and uniformly distributed over $\Psi_r$. Among the vertices choosing the same colour, precisely the vertex of highest priority becomes coloured. Hence, a colour chosen by $N_c(\mathbf C)$ vertices leaves $(N_c(\mathbf C)-1)_+$ vertices uncoloured, and therefore, $U_r=\sum_{c\in\Psi_r}(N_c(\mathbf C)-1)_+=f(\mathbf C)$.

If $\mu_r=0$, the claimed bound holds trivially. Hence, assume that $\mu_r>0$. Since $\mathbb E[f(\mathbf C)\mid\mathcal U_{r-1}]=\mu_r$, Lemma~\ref{lem:self-bounding-concentration}, applied conditionally on $\mathcal U_{r-1}$ with $a=2$, $b=0$, and $t=\mu_r/2$, gives
\begin{align*}
\Pr\left(
U_r\le\frac{\mu_r}{2}
\,\middle|\,
\mathcal U_{r-1}
\right)
=
\Pr\left(
f(\mathbf C)\le
\frac{\mathbb E\left[f(\mathbf C)\,\middle|\,\mathcal U_{r-1}\right]}{2}
\,\middle|\,
\mathcal U_{r-1}
\right) 
\leq 
\exp\left(-\frac{3\mu_r}{52}\right),
\end{align*}
which proves the claim.
\end{proof}

\subsubsection*{General clique lower bound}

We first state a general lower bound that tracks the number of uncoloured vertices after an arbitrary number of rounds, via a deterministic sequence whose failure probability is explicitly controlled. For constants $c_0,C_0>0$, to be fixed in the proof, define $(a_r)_{r=0}^t$ by
\begin{align*}
a_0:=c_0n,
\qquad
a_r:=\frac{C_0}{2}\cdot\frac{a_{r-1}^2}{|\Psi|}
\quad\text{for }r\in\{1,\dots,t\}.
\end{align*}

\begin{lemma}[General clique lower bound]
\label{lem:clique-general-lower-bound}
Let $\mathcal C$ be a clique on $n$ vertices and let $|\Psi|=\Theta(nf(n))$ for some $f(n)\ge1$. There exist constants $c_0,C_0>0$ and $D\ge1$ such that
\begin{align*}
a_r\ge c_0\frac{n}{(Df(n))^{2^r-1}}
\end{align*}
for every $r\in\{0,\dots,t\}$. Moreover, if $a_{t-1}\ge2$, then, for all sufficiently large $n$,
\begin{align*}
\Pr(U_t\ge a_t)
\ge
\prod_{r=1}^t
\left(1-\exp\left(-\frac{3a_r}{26}\right)\right).
\end{align*}
\end{lemma}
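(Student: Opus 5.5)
Two things need proving: the deterministic lower bound on $a_r$, and the probabilistic chain bound. For the first, write $|\Psi|\le K n f(n)$ with a constant $K$, and set $b_r:=a_r/(c_0n)$. The recursion becomes
\begin{align*}
b_r=\frac{C_0c_0n}{2|\Psi|}\,b_{r-1}^2\ge \frac{1}{Df(n)}\,b_{r-1}^2,
\qquad D:=\max\left\{1,\frac{2K}{C_0c_0}\right\}.
\end{align*}
Induction from $b_0=1$ then gives $b_r\ge (Df(n))^{-(2^r-1)}$, because $2\cdot(2^{r-1}-1)+1=2^r-1$. This step is purely algebraic.

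For the probabilistic statement, work with the events $E_r:=\{U_r\ge a_r\}$ and show that $\Pr(E_r\mid\mathcal U_{r-1})\ge 1-\exp(-3a_r/26)$ on $E_{r-1}$. Chaining these conditional bounds by the tower property gives the product. The base case $E_0$ is trivial: $U_0=n\ge a_0$ as long as $c_0\le1$.

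The key step is a lower bound on the conditional mean $\mu_r=\mathbb E[U_r\mid\mathcal U_{r-1}]$ in terms of $m:=U_{r-1}$. Condition on $\mathcal U_{r-1}$. The $m$ uncoloured vertices choose independently and uniformly from $\Psi_r$, where $|\Psi_r|=|\Psi|-(n-m)\le|\Psi|$. Moreover, $U_r=\sum_c (N_c-1)_+$ by the proof of Lemma~\ref{lem:clique-round-lower-tail}. Since $(N-1)_+\ge N-1+\mathbf 1_{\{N=0\}}$, we get
\begin{align*}
\mu_r=m-|\Psi_r|+|\Psi_r|\left(1-\frac{1}{|\Psi_r|}\right)^m .
\end{align*}
Expanding to second order, or counting ordered pairs with the same colour minus triples via Bonferroni, gives
\begin{align*}
\mu_r\ge \binom m2\frac1{|\Psi_r|}-\binom m3\frac1{|\Psi_r|^2}.
\end{align*}
This is at least $c\,m^2/|\Psi|$ for an absolute constant $c$, provided $m\le |\Psi_r|$; that condition always holds because $|\Psi_r|\ge m$. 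For $m\ge 2$, the bound $\binom m3/|\Psi_r|^2\le \frac{m}{3|\Psi_r|}\binom m2/|\Psi_r|$ leaves at least $\frac23\binom m2/|\Psi_r|\ge m^2/(6|\Psi|)$. On $E_{r-1}$ we have $m\ge a_{r-1}\ge 2$. This is where the hypothesis $a_{t-1}\ge2$ enters, noting that the $a_r$ are decreasing once $C_0a_0/(2|\Psi|)<1$. Hence
\begin{align*}
\mu_r\ge \frac{a_{r-1}^2}{6|\Psi|}=C_0\,\frac{a_{r-1}^2}{2|\Psi|}\cdot\frac{1}{3C_0}.
\end{align*}
Choose $C_0:=1/3$, so that $\mu_r\ge a_r$, up to absorbing a factor $2$: more precisely, pick $C_0$ so that $\mu_r\ge 2a_r$. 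Fix $c_0\le1$ accordingly. The monotonicity of $\mu_r$ in $m$ is not needed, since the estimate is applied with the actual $m\ge a_{r-1}$.

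With $\mu_r\ge 2a_r$ in hand, Lemma~\ref{lem:clique-round-lower-tail} gives
\begin{align*}
\Pr(U_r< a_r\mid\mathcal U_{r-1})\le\Pr(U_r\le \mu_r/2\mid\mathcal U_{r-1})\le \exp(-3\mu_r/52)\le\exp(-3a_r/26).
\end{align*}
Multiplying these conditional bounds over $r=1,\dots,t$ yields the claim.

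I expect the main obstacle to be the clean second-order lower bound on $\mu_r$ with explicit constants, so that $\mu_r\ge 2a_r$ holds uniformly, including when $m$ is comparable to $|\Psi_r|$. I would handle this with the pair-minus-triple inequality together with $|\Psi_r|\ge m$.
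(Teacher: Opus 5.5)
Your proof is correct, and its skeleton is the paper's: unfold the recursion for the deterministic bound, show $\mu_r\ge 2a_r$ on $\{U_{r-1}\ge a_{r-1}\}$, invoke Lemma~\ref{lem:clique-round-lower-tail}, and chain over the rounds. The one genuine difference is how you get the quadratic lower bound on the conditional mean.

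The paper works vertex by vertex in priority order. It uses $\Pr(I_j^r=1\mid\mathcal U_{r-1})\ge 1-(1-1/|\Psi|)^{j-1}$, sums over $j$, and applies $1-e^{-x}\ge e^{-M}x$ on $x\le M=1/(2c_\Psi)$. For the mean bound this only needs $U_{r-1}\le n$ and the lower palette constant, but it yields $C_0=e^{-M}/16$, which depends on $c_\Psi$.

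You instead compute $\mu_r=m-|\Psi_r|+|\Psi_r|(1-1/|\Psi_r|)^m$ exactly, from the representation $U_r=\sum_c(N_c-1)_+$ already used in Lemma~\ref{lem:clique-round-lower-tail}. The third-order Bonferroni truncation together with $m\le|\Psi_r|\le|\Psi|$ then gives $\mu_r\ge m^2/(6|\Psi|)$ with an absolute constant. So $c_0=1$, $C_0=1/6$, $D=\max\{1,12K\}$ work, independently of the lower palette constant.

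Two small remarks:
\begin{itemize}
\item Your stated choice $C_0=1/3$ only gives $\mu_r\ge a_r$, so you need $C_0\le 1/6$, as your ``more precisely'' already indicates.
\item $|\Psi_r|\ge m$ is equivalent to $|\Psi|\ge n$, which is not literally implied by $|\Psi|=\Theta(nf(n))$. However, the paper needs exactly the same assumption to apply Lemma~\ref{lem:clique-round-lower-tail}, so nothing is lost.
\end{itemize}
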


\begin{proof}
Since $U_0=|\mathcal C|=n$ and $|\Psi|=\Theta(nf(n))$ with $f(n) \geq 1$, there exist constants $c_\Psi,C_\Psi>0$ such that $c_\Psi nf(n)\le|\Psi|\le C_\Psi nf(n)$.

Fix $r\in\{1,\dots,t\}$ and condition on $\mathcal U_{r-1}$. Order the vertices of $\mathcal U_{r-1}$ by decreasing priority, and let $I_j^r$ be the indicator variable whether the $j$th vertex remains uncoloured after round $r$. Then $U_r=\sum_{j\leq U_{r-1}}I_j^r$. All active vertices use the same palette\footnote{Since in a clique the colour restrictions are the same for every vertex.} $\Psi_r\subseteq\Psi$, and the $j$th vertex remains uncoloured if one of the $j-1$ higher-priority vertices chooses the same colour. Hence,
\begin{align*}
\Pr\left(I_j^r=1\mid\mathcal U_{r-1}\right)
=
1-\left(1-\frac{1}{|\Psi_r|}\right)^{j-1}
\ge
1-\left(1-\frac{1}{|\Psi|}\right)^{j-1}.
\end{align*}

Writing $\mu_r:=\mathbb E[U_r\mid\mathcal U_{r-1}]$, we obtain
\begin{align*}
\mu_r
\ge
\sum_{s=0}^{U_{r-1}-1}
\left(1-\left(1-\frac{1}{|\Psi|}\right)^s\right)
\ge
\frac{U_{r-1}-1}{2}
\left(1-\exp\left(-\frac{U_{r-1}-1}{2|\Psi|}\right)\right).
\end{align*}


Set $x:=(U_{r-1}-1)/(2|\Psi|)$. Since $U_{r-1}\le n$, $f(n)\ge1$, and $|\Psi|\ge c_\Psi nf(n)$, we have $0\le x\le M:=1/(2c_\Psi)$. By convexity of $s\mapsto e^{-s}$ and $e^M\ge1+M$, we have $1-e^{-x}\ge e^{-M}x$. Hence, whenever $U_{r-1}\ge2$,
\begin{align*}
\mu_r
\ge
\frac{e^{-M}(U_{r-1}-1)^2}{4|\Psi|}
\ge
\frac{e^{-M}}{16}\frac{U_{r-1}^2}{|\Psi|}
=
C_0 \cdot \frac{U_{r-1}^2}{|\Psi|},
\end{align*}

where $C_0:=e^{-M}/16$. Choose $c_0=\min\{1,2c_\Psi/C_0\}$. Then $a_0=c_0n\le U_0$ and, since $|\Psi|\ge c_\Psi n$, also $a_0\le2|\Psi|/C_0$. Inductively, if $a_{r-1}\le2|\Psi|/C_0$, then
\begin{align*}
a_r
=
a_{r-1}\frac{C_0a_{r-1}}{2|\Psi|}
\le
a_{r-1}
\le
\frac{2|\Psi|}{C_0}.
\end{align*}

Thus $(a_r)$ is non-increasing. Set $D:=2C_\Psi/(C_0c_0)\ge1$, using $c_0\le2C_\Psi/C_0$. Unfolding the recursion gives
\begin{align*}
a_r
&=
\left(\frac{C_0}{2}\right)^{2^r-1}
\frac{a_0^{2^r}}{|\Psi|^{2^r-1}}
\ge
c_0
\left(\frac{C_0c_0}{2C_\Psi}\right)^{2^r-1}
\frac{n}{f(n)^{2^r-1}}
=
c_0\frac{n}{(Df(n))^{2^r-1}}.
\end{align*}

Next, we relate $a_r$ to the conditional mean $\mu_r$. Condition on $U_{r-1}\ge a_{r-1}$. Since $a_{t-1}\ge2$ and $(a_r)$ is non-increasing, we have $a_{r-1}\ge2$, and hence
\begin{align*}
\mu_r
\ge
C_0\frac{U_{r-1}^2}{|\Psi|}
\ge
C_0\frac{a_{r-1}^2}{|\Psi|}
=
2a_r.
\end{align*}

Using $\mu_r \geq 2a_r$, Lemma~\ref{lem:clique-round-lower-tail} implies $\Pr(U_r<a_r\mid U_{r-1}\ge a_{r-1})\le\exp(-3a_r/26)$, and consequently $\Pr(U_r\ge a_r\mid U_{r-1}\ge a_{r-1})\ge1-\exp(-3a_r/26)$. Since $U_0=n\ge a_0$, chaining these estimates over $r=1,\dots,t$ yields the desired probability bound.
\end{proof} 




The preceding lemma gives a general lower bound on the number of uncoloured vertices after any number of rounds. We next apply this bound in two palette-slack regimes. First, polylogarithmic palette slack still requires essentially $\log\log n$ rounds. Second, subpolynomial palette slack leaves $\omega(1)$ vertices uncoloured after every constant number of rounds a.a.s.

\subsubsection*{Polylogarithmic palette slack}

For a clique on $n$ vertices, the following lemma shows that even polylogarithmic multiplicative slack still requires $\Omega(\log\log n)$ rounds a.a.s. In particular, Corollary~\ref{cor:tight-round-complexity-near-optimal-palette} records the matching lower bound for $|\Psi|=\Theta(n)$. Thus, in the clique case, the general-graph upper bound from Corollary~\ref{cor:near-tight-general-graph-colouring} is tight up to constant factors.

\begin{lemma}[Round lower bound with polylogarithmic palettes]
\label{lem:round-lower-bound-polylogarithmic-palette}
Let $\mathcal C$ be a clique on $n$ vertices and let $|\Psi|=n(\log n)^{O(1)}$. Then RCTDEG requires $\Omega(\log\log n)$ rounds a.a.s.
\end{lemma}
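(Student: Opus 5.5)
We apply Lemma~\ref{lem:clique-general-lower-bound} with $f(n):=|\Psi|/n$. By assumption $1\le f(n)\le(\log n)^{K}$ for some constant $K>0$ and all large $n$. (If $|\Psi|<n$ the process never terminates, and since $f\ge1$ is required we may assume $|\Psi|\ge n$.) Lemma~\ref{lem:clique-general-lower-bound} provides constants $c_0,C_0>0$ and $D\ge1$ with
\begin{align*}
a_r\ge c_0\frac{n}{(Df(n))^{2^r-1}}\ge c_0\,n\,\bigl(D(\log n)^{K}\bigr)^{-(2^r-1)}.
\end{align*}
We choose $t:=\lfloor \kappa_0\log\log n\rfloor$ for a small constant $\kappa_0>0$. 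The aim is to show that $a_t$ is still polynomially large, say $a_t\ge n^{1/2}$.

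For this we need $(2^t-1)\log\bigl(D(\log n)^K\bigr)\le \tfrac12\log n-\log(1/c_0)$. With our choice of $t$ we have $2^t\le(\log n)^{\kappa_0\log 2}$, while $\log(D(\log n)^K)=O(\log\log n)$. Hence the left-hand side is $O\bigl((\log n)^{\kappa_0\log2}\log\log n\bigr)=o(\log n)$ as soon as $\kappa_0\log 2<1$, and any $\kappa_0<1/\log 2$ works. It follows that $a_t\ge n^{1/2}$ for all large $n$. Since $(a_r)$ is non-increasing (shown in the proof of the lemma), $a_r\ge n^{1/2}$ for every $r\le t$, and in particular $a_{t-1}\ge2$, so the hypothesis of the probability statement holds.

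The probability bound of Lemma~\ref{lem:clique-general-lower-bound} then gives
\begin{align*}
\Pr(U_t\ge a_t)\ge\prod_{r=1}^t\Bigl(1-\exp\bigl(-\tfrac{3a_r}{26}\bigr)\Bigr)\ge 1-t\exp\bigl(-\tfrac{3}{26}n^{1/2}\bigr)=1-o(1),
\end{align*}
using $\prod(1-x_r)\ge1-\sum x_r$ and $t=O(\log\log n)$. So a.a.s.\ $U_t\ge n^{1/2}>0$ after $t=\Omega(\log\log n)$ rounds. Because $U_r$ is non-increasing in $r$, the clique is not fully coloured within any fewer rounds, which proves the lemma.

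The main obstacle is bookkeeping rather than probability: the double-exponential exponent must be calibrated so that $2^t\cdot O(\log\log n)=o(\log n)$. This is exactly why only a constant fraction of $\log\log n$ rounds is obtained, and why polylogarithmic slack in $f$ costs only a constant factor. A secondary point is that Lemma~\ref{lem:clique-general-lower-bound} assumes $|\Psi|=\Theta(nf(n))$ with $f$ given. Taking $f(n)=|\Psi|/n$ exactly makes the implicit constants absolute ($c_\Psi=C_\Psi=1$ up to ceilings), so the constants $c_0,C_0,D$ do not depend on $n$.
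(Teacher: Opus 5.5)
Your proof is correct and follows the paper's argument: apply Lemma~\ref{lem:clique-general-lower-bound} with $f(n)=|\Psi|/n$, choose the round number so that $2^t\log(Df(n))=o(\log n)$, deduce $a_t\ge\sqrt n$, and conclude via the product bound and the monotonicity of $(a_r)$. You take $t=\lfloor\kappa_0\log\log n\rfloor$ with $\kappa_0\log 2<1$ where the paper takes $\lfloor\log_2\log n-\log_2\log\log n\rfloor-C_3$. Your explicit handling of the case $|\Psi|<n$, and of why $c_0,C_0,D$ do not depend on $n$, are small points the paper leaves implicit.
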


\begin{proof}
Write $|\Psi|\le C_1n(\log n)^{C_2}$ for constants $C_1,C_2>0$, and set $f(n):=|\Psi|/n$. Let $D,c_0>0$ be the constants from Lemma~\ref{lem:clique-general-lower-bound}. Choose a sufficiently large constant $C_3>0$, and set $r:=\lfloor\log_2\log n-\log_2\log\log n\rfloor-C_3$. Then $2^r\le 2^{-C_3}\log n/\log\log n$. Moreover, for all sufficiently large $n$, $\log(Df(n))\le(C_2+1)\log\log n$. Hence
\begin{align*}
(Df(n))^{2^r-1}
\le
\exp\left(2^r\log(Df(n))\right)
\le
\exp\left(2^{-C_3}(C_2+1)\log n\right)
\le
n^{1/2},
\end{align*}
where the last inequality holds by the choice of $C_3$. 

Lemma~\ref{lem:clique-general-lower-bound} therefore gives
$a_r\ge c_0n/(Df(n))^{2^r-1}\ge c_0\sqrt n$. Since $(a_j)$ is non-increasing, this also implies $a_{r-1}\ge a_r\ge2$ for all sufficiently large $n$, and hence $r\exp(-3a_r/26)=o(1)$. Thus, with probability $1-o(1)$, we have $U_r\ge a_r=\omega(1)$. Consequently, RCTDEG cannot have coloured $\mathcal C$ before round $r$, which proves the claim.
\end{proof}

The near-optimal palette case follows immediately.

\begin{corollary}[Tight round complexity with near-optimal palettes]
\label{cor:tight-round-complexity-near-optimal-palette}
Let $\mathcal C$ be a clique on $n$ vertices and let $|\Psi|=\Theta(n)$. Then RCTDEG requires $\Omega(\log\log n)$ rounds a.a.s.
\end{corollary}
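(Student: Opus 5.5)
This follows directly from Lemma~\ref{lem:round-lower-bound-polylogarithmic-palette}, and I would prove it by reduction. If $|\Psi|=\Theta(n)$, there is a constant $C_1>0$ with $|\Psi|\le C_1 n = C_1 n(\log n)^{0}$ for all sufficiently large $n$. Hence $|\Psi|=n(\log n)^{O(1)}$, with the exponent of the logarithm equal to $0$.

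There is one small technicality. The proof of Lemma~\ref{lem:round-lower-bound-polylogarithmic-palette} writes $|\Psi|\le C_1 n(\log n)^{C_2}$ with $C_2>0$. Replacing $C_2=0$ by any $C_2>0$, say $C_2=1$, only weakens the upper bound. So the hypothesis holds verbatim.

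It remains to check the assumptions of Lemma~\ref{lem:clique-general-lower-bound} used inside that proof. That lemma needs $|\Psi|=\Theta(nf(n))$ with $f(n)\ge1$. Take $f(n):=|\Psi|/n$. Then $f(n)=\Theta(1)$, and $f(n)\ge1$ provided $|\Psi|\ge n$. This is implicit: with fewer than $n$ colours a clique on $n$ vertices cannot be coloured at all, so the statement is vacuous, or trivially true since the process never terminates.

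Invoking Lemma~\ref{lem:round-lower-bound-polylogarithmic-palette} then gives that, a.a.s., $U_r=\omega(1)$ for some $r=\Omega(\log\log n)$, so RCTDEG requires $\Omega(\log\log n)$ rounds. One can also see this directly: with $f(n)=\Theta(1)$, Lemma~\ref{lem:clique-general-lower-bound} gives $a_r\ge c_0 n/D'^{\,2^r-1}\ge c_0\sqrt n$ for $r=\lfloor\log_2\log n\rfloor-O(1)$. The product of success probabilities is then $1-r e^{-\Omega(\sqrt n)}=1-o(1)$.

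The only step needing care is the edge case $|\Psi|<n$ just discussed. There is no real obstacle.
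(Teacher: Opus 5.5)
Your proposal is correct and is the paper's own argument: the paper proves this corollary in one line, as the special case $|\Psi|=n(\log n)^{O(1)}$ of Lemma~\ref{lem:round-lower-bound-polylogarithmic-palette}. Your extra remarks are sound and make explicit a point the paper leaves implicit: you note that $f(n)=|\Psi|/n\ge 1$ is needed for Lemma~\ref{lem:clique-general-lower-bound}, that the case $|\Psi|<n$ is trivial because the clique can then never be fully coloured, and you check the bound directly with $r=\lfloor\log_2\log n\rfloor-O(1)$.
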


\begin{proof}
This is the special case $|\Psi|=n(\log n)^{O(1)}$ of Lemma~\ref{lem:round-lower-bound-polylogarithmic-palette}.
\end{proof}

Together with Corollary~\ref{cor:near-tight-general-graph-colouring}, this gives a tight a.a.s. round bound of $\Theta(\log\log n)$ for RCTDEG on cliques with $|\Psi|=(1+o(1))n$ colours.

\subsubsection*{Subpolynomial palette slack}

Theorem~\ref{thm:constant-round-colouring} shows that Sequential Radial Colouring terminates within a constant number of rounds when $|\Psi|=\lceil\chi^{1+\varepsilon}\rceil$ for a fixed constant $\varepsilon>0$. The following result shows that no constant number of rounds suffices with subpolynomial multiplicative slack: for every constant $t\ge1$, $\omega(1)$ vertices of the inner clique remain uncoloured after $t$ rounds a.a.s.

\begin{lemma}[Clique lower bound with multiplicative slack]
\label{lem:clique-multiplicative-lower-bound}
Let $\mathcal C$ be a clique on $n$ vertices, let $\chi \leq |\Psi|\le n^{1+\varepsilon_n}$ with $n^{\varepsilon_n}\to\infty$, and let $t_n$ satisfy $\varepsilon_n(2^{t_n}-1)\le1-\delta$ for some fixed $\delta\in (0,1)$. Then $U_{t_n}\ge n^{\delta-o(1)}$ a.a.s.; in particular, $U_{t_n}=\omega(1)$ a.a.s.
\end{lemma}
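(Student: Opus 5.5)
The plan is to instantiate Lemma~\ref{lem:clique-general-lower-bound} with $f(n):=|\Psi|/n$ and $t:=t_n$, and then read off the bound on $U_{t_n}$. The hypothesis $\chi\le|\Psi|$ gives $|\Psi|\ge n$ for the clique, so $f(n)\ge1$. Also $f(n)\le n^{\varepsilon_n}$, so $|\Psi|=\Theta(nf(n))$ holds trivially with constants $c_\Psi=C_\Psi=1$. The lemma then supplies constants $c_0,C_0>0$ and $D\ge1$, independent of $n$, with
\begin{align*}
a_r\ \ge\ c_0\,\frac{n}{(Df(n))^{2^r-1}}\ \ge\ c_0\,\frac{n}{D^{2^r-1}\,n^{\varepsilon_n(2^r-1)}}\qquad\text{for all } r\le t_n .
\end{align*}
Since $2^r-1\le 2^{t_n}-1$ and $\varepsilon_n(2^{t_n}-1)\le1-\delta$, the factor $n^{\varepsilon_n(2^r-1)}$ is at most $n^{1-\delta}$. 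Thus $a_r\ge c_0n^{\delta}D^{-(2^{t_n}-1)}$ for every $r\le t_n$.

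The main obstacle is the factor $D^{2^{t_n}-1}$, which we must show is $n^{o(1)}$. The condition $n^{\varepsilon_n}\to\infty$ means $\varepsilon_n\log n\to\infty$. Combined with $2^{t_n}-1\le(1-\delta)/\varepsilon_n$, this gives $2^{t_n}=O(1/\varepsilon_n)=o(\log n)$. Hence $D^{2^{t_n}}=\exp(O(2^{t_n}))=n^{o(1)}$, and so $a_r\ge n^{\delta-o(1)}$ uniformly for $r\le t_n$.

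In particular $a_{t_n-1}\ge2$ for large $n$. Here we assume $t_n\ge1$; the case $t_n=0$ is trivial because $U_0=n$. So the probability part of Lemma~\ref{lem:clique-general-lower-bound} applies and gives
\begin{align*}
\Pr(U_{t_n}\ge a_{t_n})\ \ge\ \prod_{r=1}^{t_n}\Bigl(1-e^{-3a_r/26}\Bigr)\ \ge\ 1-t_n\exp\bigl(-\tfrac{3}{26}n^{\delta-o(1)}\bigr),
\end{align*}
using a union-bound style inequality for the product. Since $t_n=O(\log(1/\varepsilon_n))=O(\log\log n)$, the error term is $o(1)$.

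Therefore $U_{t_n}\ge a_{t_n}\ge n^{\delta-o(1)}$ a.a.s. Because $\delta>0$ is fixed, this is $\omega(1)$. The only delicate points are verifying that the constants from the general lemma do not depend on $f$, which holds because $c_\Psi=C_\Psi=1$ here, and the $2^{t_n}=o(\log n)$ estimate above.
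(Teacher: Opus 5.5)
Your proof is correct and follows essentially the same route as the paper: you instantiate the general clique lower bound with $f=|\Psi|/n$, use $2^{t_n}\le 1+(1-\delta)/\varepsilon_n=o(\log n)$ to get $D^{2^{t_n}-1}=n^{o(1)}$ and hence $a_{t_n}\ge n^{\delta-o(1)}$ with $a_{t_n-1}\ge 2$, and then bound the product by $1-t_n\exp(-3a_{t_n}/26)$. Your explicit remarks that the constants from the general lemma are uniform (since $c_\Psi=C_\Psi=1$) and that $t_n=O(\log\log n)$ fill in small points the paper leaves implicit.
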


\begin{proof}
Since $\chi \leq |\Psi|$, we have that $|\Psi|\geq n$. We set $f_\Psi(n):=|\Psi|/n$. Then $1\le f_\Psi(n)\le n^{\varepsilon_n}$, and Lemma~\ref{lem:clique-general-lower-bound} applies with $f=f_\Psi$. Hence there are constants $c_0,D>0$ such that the lower-bound sequence satisfies
\begin{align*}
a_{t_n}
\ge
c_0\frac{n}{(D f_\Psi(n))^{2^{t_n}-1}}.
\end{align*}

By assumption, $\varepsilon_n(2^{t_n}-1)\le1-\delta$. Moreover, $2^{t_n}\le1+(1-\delta)/\varepsilon_n=o(\log n)$, since $\varepsilon_n\log n\to\infty$. Hence
\begin{align*}
(D f_\Psi(n))^{2^{t_n}-1}
\le
D^{2^{t_n}-1}n^{\varepsilon_n(2^{t_n}-1)}
\le
n^{o(1)}n^{1-\delta}
=
n^{1-\delta+o(1)}
\end{align*}
and $a_{t_n}\ge n^{\delta-o(1)}=\omega(1)$. Since $(a_r)$ is non-increasing, the hypothesis $a_{t_n-1}\ge2$ of Lemma~\ref{lem:clique-general-lower-bound} holds for all sufficiently large $n$. The lemma then gives $\Pr(U_{t_n}\ge a_{t_n})\ge1-t_n\exp(-3a_{t_n}/26)=1-o(1)$, and hence $U_{t_n}\ge n^{\delta-o(1)}$ a.a.s.
\end{proof}

\begin{corollary}[No constant-round colouring with subpolynomial slack]
\label{cor:no-constant-round-subpoly-slack}
For a clique $\mathcal C$ on $n$ vertices with $|\Psi|=n^{1+o(1)}$, every constant number of rounds $t\ge1$ leaves $U_t=\omega(1)$ a.a.s.; in particular, RCTDEG does not colour $\mathcal C$ in $O(1)$ rounds a.a.s.
\end{corollary}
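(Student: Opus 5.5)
The plan is to obtain this directly from Lemma~\ref{lem:clique-multiplicative-lower-bound} by choosing suitable parameters. Write $|\Psi|=n^{1+\varepsilon_n}$ with $\varepsilon_n=o(1)$, which we may do after redefining $\varepsilon_n:=\log(|\Psi|/n)/\log n$. Here we use $|\Psi|\ge n$, which holds because a clique on $n$ vertices cannot be coloured with fewer colours; this matches the hypothesis $\chi\le|\Psi|$ of the lemma.

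The one delicate point is the hypothesis $n^{\varepsilon_n}\to\infty$, which may fail for palettes such as $|\Psi|=n$ or $|\Psi|=2n$. To avoid a case distinction, I will not apply the lemma to $\varepsilon_n$ itself. Instead I use the upper bound $|\Psi|\le n^{1+\varepsilon'_n}$ with $\varepsilon'_n:=\max\{\varepsilon_n,1/\sqrt{\log n}\}$. This choice still satisfies $\varepsilon'_n=o(1)$, and now $n^{\varepsilon'_n}\ge e^{\sqrt{\log n}}\to\infty$. Since the lemma only requires $\chi\le|\Psi|\le n^{1+\varepsilon_n}$, an upper bound on the palette is all it uses.

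Next, fix a constant $t\ge1$ and set $t_n:=t$ and $\delta:=1/2$. Because $\varepsilon'_n=o(1)$ and $2^t-1$ is a constant, we have $\varepsilon'_n(2^{t}-1)\le 1/2=1-\delta$ for all sufficiently large $n$. Lemma~\ref{lem:clique-multiplicative-lower-bound} then gives $U_t\ge n^{1/2-o(1)}=\omega(1)$ a.a.s.

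Finally, since this holds for every fixed $t$, there is no constant $t$ with $\Pr(U_t=0)\to1$, so RCTDEG does not colour $\mathcal C$ in $O(1)$ rounds a.a.s. The only potential obstacle is the $n^{\varepsilon_n}\to\infty$ hypothesis for palettes of size $(1+O(1))n$, and the padding of $\varepsilon_n$ above handles it.
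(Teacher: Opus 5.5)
Your proof is correct and is essentially the paper's own argument: the paper likewise pads the exponent to $\max\{\rho_n,1/\sqrt{\log n}\}$ so that $n^{\varepsilon_n}\to\infty$, and then applies Lemma~\ref{lem:clique-multiplicative-lower-bound} with $\delta=1/2$ and $t_n=t$. One small caveat: $|\Psi|\ge n$ is not derived from the fact that a clique needs $n$ colours. Like the paper, you are taking it as a standing assumption, and it is needed to meet the lemma's hypothesis $\chi\le|\Psi|$.
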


\begin{proof}
Write $|\Psi|=n^{1+\rho_n}$ with $\rho_n=o(1)$, and set $\varepsilon_n:=\max\{\rho_n,1/\sqrt{\log n}\}$. Then $\varepsilon_n=o(1)$, $n^{\varepsilon_n}\to\infty$, and $|\Psi|\le n^{1+\varepsilon_n}$. Fix a constant $t\ge1$. Since $\varepsilon_n=o(1)$, we have $\varepsilon_n(2^t-1)\le1/2$ for all sufficiently large $n$. Applying Lemma~\ref{lem:clique-multiplicative-lower-bound} with $\delta=1/2$ and $t_n=t$ gives $U_t\ge n^{1/2-o(1)}=\omega(1)$ a.a.s.
\end{proof}

Thus, a polynomial multiplicative increase in the palette size is necessary for RCTDEG to terminate within a constant number of rounds. Table~\ref{tab:clique-lower-bounds} summarises the resulting lower bounds for the clique. 

\begin{table}[t]
\centering
\renewcommand{\arraystretch}{1.35}
\begin{tabular}{l@{\hspace{1.5cm}}l}
\toprule
\textbf{Palette size} & \textbf{Lower bound} \\
\midrule
$\left(1+\frac{1}{\log n}\right)n$
&
$\Omega(\log\log n)$
\\
\addlinespace
$n^{1+\varepsilon_n}$, $n^{\varepsilon_n}\to\infty$
&
$\Omega\!\left(\log\left(1+\frac{1}{\varepsilon_n}\right)\right)$
\\
\addlinespace
$n^{1+o(1)}$
&
$\omega(1)$
\\
\bottomrule
\end{tabular}
\caption{Lower bounds for colouring a clique on $n$ vertices.}
\label{tab:clique-lower-bounds}
\end{table}

\bibliographystyle{plain}
\bibliography{references} 

\appendix
\section{Omitted proofs}\label{sec:proofs}

In this appendix we give proofs of some technical lemmas, which were omitted from the main document. 

\subsection{Degree properties of HRG}
\printProofs[degree]

\subsection{One-round domination tool}
\printProofs[oneRoundDom]



\end{document}